  \newcounter{constant} 
  \newcommand{\newconstant}[1]{\refstepcounter{constant}\label{#1}} 
  \newcommand{\useconstant}[1]{C_{\ref{#1}}}
\def\namedlabel#1#2{\begingroup
    #2%
    \def\@currentlabel{#2}%
    \phantomsection\label{#1}\endgroup
}
\renewcommand*{\backref}[1]{}
\renewcommand*{\backrefalt}[4]{%
    \ifcase #1 %
    \or     #2%
    \else   #2%
    \fi
}
\DeclareSymbolFont{rmlargesymbols}{OMX}{mdbch}{m}{n}
\DeclareMathSymbol{\rmintop}{\mathop}{rmlargesymbols}{82}
\newcommand{\rmint}{\rmintop\nolimits}
  \DeclareFontShape{T1}{lmr}{m}{scit}{<->ssub*lmr/m/scsl}{}%
\let\originalleft\left
\let\originalright\right
\renewcommand{\left}{\mathopen{}\mathclose\bgroup\originalleft}
\renewcommand{\right}{\aftergroup\egroup\originalright}
\definecolor{codegreen}{rgb}{0,0.6,0}
\definecolor{codegray}{rgb}{0.5,0.5,0.5}
\definecolor{codepurple}{rgb}{0.58,0,0.82}
\definecolor{backcolour}{rgb}{0.95,0.95,0.92}
\lstdefinestyle{mystyle}{
    backgroundcolor=\color{backcolour},   
    commentstyle=\color{codegreen},
    keywordstyle=\color{magenta},
    numberstyle=\tiny\color{codegray},
    stringstyle=\color{codepurple},
    basicstyle=\footnotesize,
    breakatwhitespace=false,         
    breaklines=true,                 
    captionpos=b,                    
    keepspaces=true,                 
    numbers=left,                    
    numbersep=5pt,                  
    showspaces=false,                
    showstringspaces=false,
    showtabs=false,                  
    tabsize=2
}
\newcommand{\AlgoResetCount}{\renewcommand{\@ResetCounterIfNeeded}{\setcounter{AlgoLine}{0}}}
\newcommand{\AlgoNoResetCount}{\renewcommand{\@ResetCounterIfNeeded}{}}
\newcounter{AlgoSavedLineCount}
\tikzset{every picture/.style={line width=0.75pt}} %set default line width to 0.75pt        
\newcommand{\N}{{\rm I\!N}}
\newcommand{\bR}{\mathbb{R}}
\newcommand{\bS}{\mathbb{S}}
\newcommand{\id}{\mathbb{I}}
\newcommand{\cB}{\mathcal{B}}
\newcommand{\cD}{\mathcal{D}}
\newcommand{\cM}{\mathcal{M}}
\newcommand{\cN}{\mathcal{N}}
\newcommand{\cO}{\mathcal{O}}
\newcommand{\cU}{\mathcal{U}}
\newcommand{\cX}{\mathcal{X}}
\newcommand{\cY}{\mathcal{Y}}
\newcommand{\Beta}{\mathrm{Beta}}
\newcommand{\diag}{\mathrm{diag}}
\newcommand{\Score}{\mathrm{Score}}
\newcommand{\supp}{\mathrm{supp}}
\newcommand{\tr}{\mathrm{tr}}
\newcommand{\TV}{d_{\mathrm{TV}}}
\newcommand{\CovAwMeanEst}{\textsc{CovAwMeanEst}}
\newcommand{\UnbndGaussSamp}{\textsc{UnbndGaussSampl}}
\newcommand{\StableCov}{\textsc{StableCov}}
\newcommand{\StableMean}{\textsc{StableMean}}
\newcommand{\LargestGS}{\textsc{LargestGoodSubset}}
\newcommand{\LargestCore}{\textsc{LargestCore}}
\newcommand{\Pass}{\texttt{Pass}}
\newcommand{\Fail}{\texttt{Fail}}
\newcommand{\Out}{\texttt{OUT}}
\newcommand{\iprod}[1]{\left\langle #1 \right\rangle}
\newcommand{\paren}[1]{\left( #1 \right)}
\newcommand{\brk}[1]{\left[ #1 \right]}
\newcommand{\brc}[1]{\left\{ #1 \right\}}
\newcommand{\cl}[1]{\left\lceil #1 \right\rceil}
\newcommand{\fl}[1]{\left\lfloor #1 \right\rfloor}
\newcommand{\pr}[2]{
    \ifthenelse{\equal{#1}{}}{
        \mathbb{P}\left[ #2 \right]
    }
    {
        \underset{#1}{\mathbb{P}}\left[ #2 \right]
    }
}
\newcommand{\ex}[2]{
    \ifthenelse{\equal{#1}{}}{
        \mathbb{E}\left[ #2 \right]
    }
    {
        \underset{#1}{\mathbb{E}}\left[ #2 \right]
    }
}
\newcommand{\var}[2]{
    \ifthenelse{\equal{#1}{}}{
        \mathrm{Var}\left( #2 \right)
    }
    {
        \underset{#1}{\mathrm{Var}}\left( #2 \right)
    }
}
\newcommand{\llnorm}[1]{\left\| #1 \right\|_2}
\newcommand{\fnorm}[1]{\left\| #1 \right\|_F}
\newcommand{\mnorm}[2]{\left\| #2 \right\|_{#1}}
\newcommand{\abs}[1]{\left| #1 \right|}
\newcommand{\eps}{\varepsilon}
\definecolor{shadecolor}{rgb}{0.83, 0.83, 0.83}
\theoremstyle{plain}
\newtheorem{thm}{Theorem}[section]
\newtheorem{theorem}[thm]{Theorem}
\newtheorem{fact}[thm]{Fact}
\newtheorem{proposition}[thm]{Proposition}
\newtheorem{corollary}[thm]{Corollary}
\newtheorem{lemma}[thm]{Lemma}
\theoremstyle{definition}
\newtheorem{definition}[thm]{Definition}
\newtheorem{remark}[thm]{Remark}
\title{Optimal Differentially Private Sampling of Unbounded Gaussians\thanks{Authors are listed in alphabetical order.}}
\author{
    Valentio Iverson\thanks{Cheriton School of Computer Science, University of Waterloo. {\tt viverson@uwaterloo.ca}.}
\and
    Gautam Kamath\thanks{Cheriton School of Computer Science, University of Waterloo and Vector Institute. {\tt g@csail.mit.edu}. Supported by a Canada CIFAR AI Grant, an NSERC Discovery Grant, and an unrestricted gift from Google.}
\and
    Argyris Mouzakis\thanks{Cheriton School of Computer Science, University of Waterloo and Vector Institute. {\tt amouzaki@uwaterloo.ca}. Supported by a Canada CIFAR AI Grant, an NSERC Discovery Grant, an unrestricted gift from Google, and the Onassis Foundation (Scholarship ID: F ZT 053-1/2023-2024).}
}
\begin{document}

\maketitle

\begin{abstract}
We provide the first $\widetilde{\cO}\paren{d}$-sample algorithm for sampling from unbounded Gaussian distributions under the constraint of $\paren{\eps, \delta}$-differential privacy. This is a quadratic improvement over previous results for the same problem, settling an open question of~\cite{GhaziHKM23}.
\end{abstract}

\section{Introduction}
\label{sec:intro}

Given a sample from a probability distribution, output a sample from the same probability distribution. 
As stated, this problem is trivial: one can simply output the sample received.
However, when \emph{privacy} is a concern, this is clearly inappropriate, as the sample may contain sensitive personal information.
To address this issue, one may turn to the concept of \emph{differential privacy} (DP)~\cite{DworkMNS06}, a rigorous notion of privacy that ensures that the output of a procedure does not reveal too much information about any individual's data.

One well-studied option is to privately \emph{learn} the underlying distribution.
That is, given samples from a distribution, output a (differentially private) estimate of it.
After this estimated distribution is released, one may freely output any desired number of samples from the distribution without any further privacy concerns (appealing to the post-processing property of differential privacy).

However, this strategy may be costly.
Indeed, in both the private and non-private setting, learning a distribution is clearly a much harder problem than simply outputting a single sample from it. 
In an influential paper, Raskhodnikova, Sivakumar, Smith, and Swanberg~\cite{RaskhodnikovaSSS21} introduced the problem of privately \emph{sampling} from a distribution. 
Given a collection of samples from a distribution, privately output a single sample from (approximately) the same underlying distribution.
The authors provided upper and lower bounds for private sampling on discrete distributions and binary product distributions, showing that, in various cases, private sampling is either no easier or dramatically easier than private learning.

We note that this problem has immediate implications for machine learning in the context of private generative models.
Rather than privately training a generative model (which may be data-intensive), one could instead use private sampling if only a limited number of samples (e.g., prompt completions) are needed.

Subsequently, Ghazi, Hu, Kumar, and Manurangsi~\cite{GhaziHKM23} studied private sampling of Gaussian distributions.
Technically speaking, the most interesting case is when we are sampling from distributions with unknown covariance matrix $\Sigma$.
The authors provide two algorithms for privately sampling from $d$-dimensional Gaussians subject to $\paren{\eps, \delta}$-DP:
\begin{enumerate}
    \item If $\id \preceq \Sigma \preceq K \id$ (i.e., the covariance is bounded), they give an algorithm with sample complexity $\widetilde{\cO}\paren{\frac{d K^2}{\eps}}$;
    \item Without any assumptions (i.e., a potentially unbounded covariance), they give an algorithm with sample complexity $\widetilde{\cO}\paren{\frac{d^2}{\eps}}$.
\end{enumerate}
They also prove a sample complexity lower bound of $\widetilde{\Omega}\paren{\frac{d}{\eps}}$ (for the case $\id \preceq \Sigma \preceq 2 \id$), thus establishing that, when the covariance is bounded up to constant factors, their algorithm is near-optimal.
However, this leaves open the case where $K = \omega\paren{1}$.
For example, if $K = \Omega\paren{\sqrt{d}}$, then the best known upper bound of $\widetilde{\cO}\paren{d^2}$ is off from the lower bound of $\widetilde{\Omega}\paren{d}$ by a quadratic factor.
Is it always a choice between an algorithm whose sample complexity depends linearly on the dimension $d$ (but also suffers from a dependence on the ``range'' of the covariance $K$), and one whose sample complexity is quadratic in $d$?
Or is there an $\widetilde{\cO}\paren{d}$-sample algorithm, regardless of how large $K$ may be?

\subsection{Results and Techniques}
\label{subsec:results_techniques}

Our main result is an algorithm with $\widetilde{\cO}\paren{d}$ sample complexity, for private sampling from Gaussians with arbitrary (i.e., potentially unbounded) covariance.

\begin{theorem}
\label{thm:main_thm_informal}
There exists an $\paren{\eps, \delta}$-differentially private algorithm such that, given:
\[
    n = \cO\paren{\frac{\log\paren{\frac{1}{\delta}}}{\eps} \paren{d + \log\paren{\frac{\log\paren{\frac{1}{\delta}}}{\alpha \eps}}}},
\]
samples drawn i.i.d.\ from a distribution $\cN\paren{\mu, \Sigma}$, outputs a sample from a Gaussian distribution which is $\alpha$-close to $\cN\paren{\mu, \Sigma}$.
\end{theorem}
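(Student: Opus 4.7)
The natural strategy is to reduce the unbounded-covariance problem to the bounded-covariance case already solved by Ghazi--Hu--Kumar--Manurangsi: if we had a (data-dependent) matrix $A$ such that $A\Sigma A^{\top}$ has constant condition number, then transforming all samples by $A$, invoking their bounded-covariance sampler with $K = O(1)$, and applying $A^{-1}$ to the output would yield a sample from a Gaussian spectrally close to $\cN(\mu,\Sigma)$. The resulting outer algorithm would be $\widetilde{\cO}(d/\eps)$-sample, matching the theorem. Hence the entire problem reduces to producing such a preconditioner $A$ under $(\eps,\delta)$-DP with only $\widetilde{\cO}(d/\eps)$ samples.

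The plan is to split the dataset into two independent halves $S_1,S_2$ of equal size. On $S_1$, I would compute the empirical covariance $\widehat{\Sigma}_{S_1}$, which by standard sub-Gaussian concentration already satisfies $\tfrac{1}{2}\Sigma \preceq \widehat{\Sigma}_{S_1} \preceq 2\Sigma$ once $|S_1|=\widetilde{\Omega}(d)$, and then release $A=\widehat{\Sigma}_{S_1}^{-1/2}$ only after a stability certificate along the lines of a propose-test-release or largest-good-subset routine (as suggested by the \StableCov and \LargestGS subroutines in the preamble): the check verifies that no single-sample replacement in $S_1$ can move the spectrum of $\widehat{\Sigma}_{S_1}$ by more than a constant factor. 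On passing, $A$ is released essentially noiselessly; on failure the algorithm aborts, which occurs with probability at most $\delta$ on a genuine Gaussian sample. Independently, one can handle the (much easier) mean by a covariance-aware private mean estimator (\CovAwMeanEst), or by folding the mean-shift into the same reduction.

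Second, I would apply $A$ to all samples in $S_2$, obtaining a dataset from $\cN(A\mu,A\Sigma A^{\top})$ with condition number $O(1)$, and run the bounded-covariance private sampler on it; applying $A^{-1}$ to the sampler's output yields the final sample. Privacy is obtained by composition of the PTR release of $A$ (which is $(\eps/2,\delta/2)$-DP) with the bounded-covariance sampler on $A\cdot S_2$. Accuracy in total variation follows because $A\Sigma A^{\top}$ has constant condition number, so standard KL/Pinsker comparisons between Gaussians with spectrally close covariances give the $\alpha$-closeness guarantee.

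I expect the main obstacle to be precisely the privacy analysis of the preconditioning stage. Naively, asking for any usable covariance estimate under $(\eps,\delta)$-DP with only $\widetilde{\cO}(d)$ samples seems to conflict with the $\widetilde{\Omega}(d^2)$ barrier that governs additive-accuracy private covariance estimation. The argument must exploit the fact that the downstream sampler needs only \emph{multiplicative} spectral approximation of $\Sigma$, and that the non-private empirical covariance already satisfies such a bound at $\widetilde{\Omega}(d)$ samples with a very sharp single-sample stability (since replacing one of $n$ Gaussian samples perturbs $\widehat{\Sigma}$ in operator norm only on a low-probability event once $n\gg d$). Assembling a PTR-style test around this stability, paying the $\log(1/\delta)/\eps$ price of the test rather than the $d/\eps^2$ price of Gaussian-mechanism covariance estimation, is where the quadratic savings relative to prior work arises, and where the bulk of the technical work must be done.
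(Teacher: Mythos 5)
There is a genuine gap, and it sits exactly where you predicted the ``bulk of the technical work'' would be: the step ``On passing, $A$ is released essentially noiselessly'' is not differentially private, and no PTR-style certificate can make it so. Propose-test-release permits the exact release of a statistic only when that statistic is \emph{locally constant} on good datasets, i.e., literally identical across neighboring inputs. The empirical covariance $\widehat{\Sigma}_{S_1}$ (and hence $A=\widehat{\Sigma}_{S_1}^{-1/2}$) is a continuous, essentially injective function of the data: replacing one sample always changes it by a nonzero amount, so an adversary observing the exact value of $A$ distinguishes neighboring datasets with probability $1$. A stability certificate bounding the spectral movement by a constant factor does not help; it bounds the \emph{magnitude} of the leak, not its \emph{detectability}, and $(\eps,\delta)$-DP requires the latter. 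The standard fix is to add noise calibrated to the certified sensitivity, but privately releasing any matrix $A$ with $\id \preceq A\Sigma A^{\top} \preceq \cO(1)\,\id$ is precisely the private spectral-norm covariance estimation problem, which is not known to be solvable with $\widetilde{\cO}(d)$ samples (and the known techniques cost $\widetilde{\Omega}(d^{3/2})$ or $\widetilde{\Omega}(d^2)$). If your reduction worked, it would yield an $\widetilde{\cO}(d)$-sample private preconditioner as a byproduct, which would be a stronger result than the theorem itself. Your intuition that the downstream task ``needs only multiplicative spectral approximation'' does not rescue the argument, because the object you propose to publish still carries the full (non-private) covariance information.

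The paper's proof avoids ever releasing a covariance estimate. It keeps the framework of~\cite{GhaziHKM23} in which the output is a \emph{single vector} $Z = \widehat{\mu} + \sqrt{(1-\tfrac{1}{n_1})n_2}\, W z$ with $z \sim \cU(\bS^{n_2-1})$, so that $\widehat{\Sigma}=WW^{\top}$ enters only through one Gaussian-distributed sample rather than through a published matrix. Clipping is replaced by the \StableCov/\StableMean\ weights of~\cite{BrownHS23}, whose stability guarantees (Lemmas~\ref{lem:stable_cov} and~\ref{lem:stable_mean}) show that a single-sample change perturbs $WW^{\top}$ and $\widehat{\mu}$ only by $\cO(\lambda_0/n)$ multiplicatively in Mahalanobis norm after the PTR gate. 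The privacy proof then bounds the Hockey-Stick divergence between the two output distributions directly, by reducing (via rotational invariance of $z$) to comparing a Beta-type law $z_{\le d}$ with a slightly rescaled and shifted copy of itself, and controlling the log-density ratio using Beta concentration and the Hanson--Wright inequality. This ``privacy of one sample from a data-dependent Gaussian-like law'' argument is the mechanism that buys the quadratic savings; a reduction that publishes a preconditioner cannot reach $\widetilde{\cO}(d)$.
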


Observe that, in terms of its dependence on $d$ in the sample complexity, this nearly matches the $\Omega\paren{\frac{d}{\eps \sqrt{\log d}}}$ lower bound of~\cite{GhaziHKM23} for the same problem.
Additionally, it is quadratically less than the sample complexity of privately \emph{learning} Gaussians in the same setting~\cite{KamathLSU19,AdenAliAK21,KamathMSSU22,AshtianiL22,KothariMV22,KamathMS22}, which is $\widetilde{\Theta}\paren{\frac{d^2}{\alpha^2} + \frac{d^2}{\alpha \eps}}$.

We give a brief technical overview of our approach.
As a starting point, we use the algorithm of~\cite{GhaziHKM23} which privately samples from a Gaussian with covariance $\id \preceq \Sigma \preceq K \id$ using $\widetilde{\cO}\paren{d K^2}$ samples.
Recall the following basic property of Gaussians: given $n$ samples $X_1, \dots, X_n$ from $\cN\paren{\mu, \Sigma}$, their average $\frac{1}{n} \sum\limits_{i \in \brk{n}} X_i$ is distributed as $\cN\paren{\mu, \frac{1}{n} \Sigma}$.
Simply outputting this average is insufficient for two reasons: i) the variance of the resulting distribution is too small by a factor of $n$, and ii) the algorithm is not differentially private. 
Ideally, one could solve both problems at once by adding $\cN\paren{0, \frac{n-1}{n} \Sigma}$ noise, but recall that $\Sigma$ itself is unknown.
As a proxy for this,~\cite{GhaziHKM23} uses the \emph{samples themselves} to sample from this distribution.
Another basic property of Gaussians is $2$-stability: it implies that, if $z \in \bR^n$ is a random unit vector and $X_1, \dots, X_n$ are samples from a Gaussian, then $\sum\limits_{i \in \brk{n}} z_i X_i$ is a sample from the same Gaussian.\footnote{See Lemma~\ref{lem:stability_gaussian} for a proof of this property.}
Thus, letting $Y_i \coloneqq \frac{1}{\sqrt{2}} \paren{X_{n + i} - X_{n + m + i}} \sim \cN\paren{0, \Sigma}$, the random variable $\frac{1}{n} \sum\limits_{i \in \brk{n}} X_i + \sqrt{1 - \frac{1}{n}} \sum\limits_{i \in \brk{m}} z_i Y_i$ is distributed as $\cN\paren{\mu, \Sigma}$.
To complete the privacy analysis, two additional components are needed: a) ``clipping'' the datapoints to a predetermined range (based on the upper bound $\Sigma \preceq K \id$) and b) a ``Propose-Test-Release'' (PTR) step~\cite{DworkL09} to ensure that the collection of $Y_i$'s have large (empirical) variance in all directions (which is implied with high probability given the lower bound $\id \preceq \Sigma$). 
Unfortunately, due to these two steps, the resulting sample complexity is $\Omega\paren{d K^2}$.

We modify this general framework, turning to recent results of Brown, Hopkins, and Smith~\cite{BrownHS23}.\footnote{A concurrent work to theirs of Kuditipudi, Duchi, and Haque~\cite{KuditipudiDH23} obtains a similar result for covariance-aware mean estimation via similar techniques. We focus on~\cite{BrownHS23} since we directly appeal to their primitives; it is conceivable that methods in~\cite{KuditipudiDH23} could have been used instead.}
Their main result is an $\widetilde{\cO}\paren{d}$ sample algorithm for covariance-aware private mean estimation -- that is, mean estimation where the error in the estimate may scale proportional to the underlying distribution's variance in that direction. 
To do this, they design algorithms \StableMean\ and \StableCov.
Focusing on \StableMean\ (\StableCov\ is similar), this algorithm takes in a dataset $X_1, \dots, X_n$, and outputs a vector of weights $v_1, \dots, v_n$ and a score $s$.\footnote{As in given in~\cite{BrownHS23}, \StableMean\ outputs a non-private estimate for the mean of the distribution which is a weighted version of the empirical mean that uses the weight vector $v$.
We have slightly changed the format of the output, since it facilitates aspects of our exposition.
For more details on this, we point readers to Section~\ref{sec:bhs23}.}
This algorithm has two key properties: a) if the data is truly Gaussian, then the weights will be uniform (i.e., $v_i = \frac{1}{n}$ for all $i$) with high probability, and b) if the score $s$ is sufficiently small (which can be verified using a PTR step), then the weighted empirical mean $\sum\limits_{i \in \brk{n}} v_i X_i$ will have bounded sensitivity when modifying one datapoint.
We highlight that bounded sensitivity is a property of the overall sum, and individual summands may change dramatically if a single datapoint is changed -- this is in contrast to clipping-based strategies where individual summands also have bounded sensitivity.
These elegant primitives allow one to perform private mean estimation without any explicit clipping.

We employ these primitives in the framework of~\cite{GhaziHKM23}: we output the quantity $\sum\limits_{i \in \brk{n}} v_i X_i + \sqrt{m \paren{1 - \frac{1}{n}}} \sum\limits_{i \in \brk{m}} z_i \sqrt{w_i} Y_i$, where the $v_i$'s and $w_i$'s are the weights outputtted by \StableMean\ and \StableCov, respectively. 
That is, we incorporate these weights to bound the sensitivity, as an alternative to clipping. 
We also require a preceding PTR step in order to ensure that the scores output by \StableMean\ and \StableCov\ are both sufficiently small. 
To perform the utility analysis, it suffices to show that the weights are uniform with high probability, and the analysis is similar to before.
The privacy analysis however requires newfound care.

We illustrate in the univariate case (i.e., $Y_i \in \bR$), focusing exclusively on the term $\sum\limits_{i \in \brk{m}} z_i U_i$, where we use $U_i$ to denote $\sqrt{w_i} Y_i$. 
We accordingly define $U_i'$, which is defined with respect to a neighboring dataset $Y_1', \dots, Y_n'$.
~\cite{GhaziHKM23} previously used a coupling-based argument to reduce to examining only the term involving the $Y_i$ that differs (i.e., the datapoint such that $Y_i \neq Y_i'$).
But, as mentioned before, changing a single $Y_i$ may result in changes to \emph{all} $U_i$'s.
Since the stability property of \StableCov\ tells us that the empirical variance with these weights has bounded sensitivity, all we know is that $\sum\limits_{i \in \brk{m}} U_i^2 \in \paren{1 \pm \frac{1}{m}} \sum\limits_{i \in \brk{m}} U_i'^2$.
Our goal is to argue that, for fixed $U$ and $U'$, $\iprod{z, U}$ is similar in distribution to $\iprod{z, U'}$, where all the randomness is in the choice of the random unit vector $z \in \bR^m$. 
Perhaps surprisingly, if $U$ and $U'$ satisfy the aforementioned bounded sensitivity property, then this suffices to prove the desired statement.
To this end, it is convenient to think of $U \in \bR^m$ as a vector.
By rotational symmetry, $\iprod{z, U}$ is equal in distribution to $\llnorm{U} \iprod{z, e_1}$, where $e_1$ is the first standard basis vector, and similarly, $\iprod{z, U'}$ is equal to $\llnorm{U'} \iprod{z, e_1}$.
It is not too hard to argue that the two transformed distributions are close to each other (as required in Hockey-Stick divergence), as $\iprod{z, e_1}$ can be seen to be a Beta distribution, and the two distributions are scalings of this distribution by similar constants. 
This sketch exposes some of the key ideas in the univariate case for the ``variance'' term of the output, handling the general case requires additional care.

\subsection{Related Work}
\label{subsec:related_work}

Private sampling is still a relatively nascent topic. 
In addition to the aforementioned works~\cite{RaskhodnikovaSSS21, GhaziHKM23}, Cheu and Nayak~\cite{CheuN25} study private \emph{multi-sampling}, which aims to privately draw \emph{multiple} samples from a distribution, rather than just one.

The related problem of private distribution learning is a much more mature topic of study. 
Specifically, a long line of works investigates private learning of Gaussian distributions, e.g.,~\cite{KarwaV18, BunKSW19, BunS19, KamathLSU19, BiswasDKU20, DuFMBG20, CaiWZ21, HuangLY21, HopkinsKMN23, BieKS22, BenDavidBCKS23, AumullerLNP23, AlabiKTVZ23, AsiUZ23, LiuKKO21, LiuKO22, KamathMRSSU23, AdenAliAK21, KamathMSSU22, AshtianiL22, KothariMV22, KamathMS22, Narayanan23, PortellaH24}.
Some works focus on the case where the covariance matrix is potentially unbounded.
While packing lower bounds imply that learning is not possible for $\paren{\eps, 0}$-DP~\cite{HardtT10, BunKSW19, HopkinsKM22}, several works successfully show that learning is possible (despite unbounded covariance) under $\paren{\eps, \delta}$-DP, including~\cite{KarwaV18} in the univariate case, and~\cite{AdenAliAK21, KamathMSSU22, KothariMV22, AshtianiL22, AlabiKTVZ23} for the multivariate case. 
Our work is most related to the aforementioned works on covariance-aware mean estimation~\cite{BrownHS23,KuditipudiDH23}, which can accurately estimate the mean with $\widetilde{\cO}\paren{d}$ samples even when the covariance is unknown.
That is, we rely heavily on the tools introduced by~\cite{BrownHS23}, hinting at the potential of deeper technical connections between private estimation and private sampling.

As described above, private sampling can be seen as a less cumbersome alternative to private distribution learning, when only a limited number of samples are required.
There is an analogous line of work on private prediction (see, e.g.,~\cite{PapernotAEGT17, DworkF18, BassilyTT18}), which can be seen as a more lightweight alternative to private PAC-learning, when only a limited number of predictions are required.

A long line of empirical work studies private learning of deep generative models, see, e.g.,~\cite{XieLWWZ18, BeaulieuJonesWWLBBG19, CaoBVFK21, BieKZ23, DockhornCVK23, HarderJSP23}.
Some recent empirical works study private sampling from generative models~\cite{LinGKNY24, XieLBGYINJZLLY24}, a morally similar problem to ours, though these works generally require having enough data to train an effective non-private model. 

An interesting related question in the non-private setting is that of \emph{sample amplification}~\cite{AxelrodGSV20}.
Given $n$ samples from a distribution, can one generate $m \gg n$ samples from (approximately) the same distribution? 
This line of work served as an inspiration for private sampling~\cite{RaskhodnikovaSSS21}.

\subsection{Organization of the Paper}
\label{subsec:organization}

We start by covering the necessary background and definitions in Section~\ref{sec:prelim}, with particular emphasis on linear algebra, statistics, and privacy preliminaries.
Then, in Section~\ref{sec:bhs23}, we present the covariance-aware mean estimator of~\cite{BrownHS23} (Algorithm~\ref{alg:bhs23}).
% and comment on why a vanilla application of that algorithm alone does not suffice to resolve our question optimally.
Finally, in Section~\ref{sec:sampling_algo}, we present our near-optimal algorithm for privately sampling from unbounded Gaussian distributions (Algorithm~\ref{alg:sampling}), and establish its guarantees.

\section{Preliminaries}
\label{sec:prelim}

This section is split into two parts.
Section~\ref{subsec:math_prelim} introduces some basic notation, while also including statements of basic facts from linear algebra and probability \& statistics to which we will appeal later, while Section~\ref{subsec:priv_prelim} introduces the necessary background on privacy-related notions.

\subsection{General Notation and Basic Facts}
\label{subsec:math_prelim}

\textbf{Linear Algebra Preliminaries.}
We write $\brk{n} \coloneqq \brc{1, \dots, n}$ and $\brk{\alpha \pm R} \coloneqq \brk{\alpha - R, \alpha + R}$.
Also, given a vector $v \in \bR^d$, $v_i$ refers to its $i$-th component, while $v_{- i} \in \bR^{d - 1}$ refers to the vector one gets by removing the $i$-th component, and $v_{\le i} \in \bR^{i}$ ($v_{> i} \in \bR^{d - i}$) describes the vector one gets by keeping (removing) the first $i$ components.
The set of all $d$-dimensional unit vectors is denoted by $\bS^{d - 1}$, while the set of the standard basis vectors of $\bR^d$ is denoted by $\brc{e_i}_{i \in \brk{d}}$.
For a matrix $M \in \bR^{n \times m}$, $M_{i j}$ describes the element at the $i$-th row and $j$-th column, and by \emph{singular values} of $M$ we refer to the square roots of the eigenvalues of the matrix $M^{\top} M \in \bR^{m \times m}$.
Furthermore, $\llnorm{M}$ is the \emph{spectral norm} (the largest singular value of the matrix), $\fnorm{M}$ is the \emph{Frobenius norm} (the square root of the sum of the squares of the singular values), and $\mnorm{\tr}{M}$ is the \emph{trace norm} (the sum of the singular values).
Additionally, for $M \in \bR^{d \times d}$, given a positive-definite matrix $\Sigma \in \bR^{d \times d}$, the \emph{Mahalanobis norm} of $M$ with respect to $\Sigma$ is $\mnorm{\Sigma}{M} \coloneqq \fnorm{\Sigma^{- \frac{1}{2}} M \Sigma^{- \frac{1}{2}}}$.
Given a pair of symmetric matrices $A, B \in \bR^{d \times d}$, we write $A \succeq B$ if and only if $x^{\top} \paren{A - B} x \geq 0, \forall x \in \bR^d$.
Finally, by $\id_{n \times n}$ we will denote the identity matrix of size $n \times n$, and by $0_{n \times m}$ we will denote the all $0$s matrix of size $n \times m$, where the subscripts will be dropped if the size is clear from the context.
For additional background on linear algebra, we refer readers to Appendix~\ref{sec:lin_alg_facts}.

\medskip

\textbf{Probability \& Statistics Preliminaries.}
Given a sample space $\cX$, we denote the set of all distributions over $\cX$ by $\Delta\paren{\cX}$.
By $\supp\paren{\cD}$ we denote the \emph{support} of $\cD$, i.e., the set over which $\cD$ assigns positive density.
For any distribution $\cD \in \Delta\paren{\cX}$, $\cD^{\otimes n}$ denotes the product distribution over $\cX^n$, where each marginal distribution is $\cD$.
Thus, given a dataset $X \coloneqq \paren{X_1, \dots, X_n}$ drawn i.i.d.\ from $\cD$, we write $X \sim \cD^{\otimes n}$.
Whenever using the symbol of probability, we may use a subscript what the randomness is over in cases where it might not be clear from the context, e.g., $\pr{X}{\cdot}$.
Some of the distributions that we will encounter are the Gaussian distribution with mean $\mu$ and covariance $\Sigma$ (denoted by $\cN\paren{\mu, \Sigma}$), the uniform distribution over a set $\cX$ (denoted by $\cU\paren{\cX}$), and, given a vector $v \sim \cU\paren{\bS^{n - 1}}$, the distribution of its first $i$ components (denoted by $v_{\le i} \sim \cU\paren{\bS^{n - 1}}_{\le i}$).
We will use the either $X \overset{d}{=} Y$ or $X \sim Y$ when the random variables $X$ and $Y$ are \emph{equal in distribution}.

A fundamental concept that will feature heavily in this work is that of \emph{$f$-divergences}, which are used to quantify the distance between distributions.
We now give the formal definition of this notion.

\begin{definition}
\label{def:f_div}
Let $P, Q \in \Delta\paren{\cX}$ such that $\supp\paren{P} \subseteq \supp\paren{Q}$.\footnote{If $\supp\paren{P} \not\subseteq \supp\paren{Q}$ the definition is still valid, but can result in $D_f\paren{P \middle\| Q} = \infty$.
However, for the $f$-divergences considered in present work, this will not be the case, and we will be able to calculate them using simplified versions of the general definition without concern about the distributions' supports.}
Also, let $f \colon \bR_{\geq 0} \to \left(- \infty, \infty\right]$ be a convex function such that $\abs{f\paren{x}} < \infty, \forall x > 0$, $f\paren{1} = 0$, and $\lim\limits_{x \to 0^+} f\paren{x} = 0$.
Then, the $f$-divergence from $P$ to $Q$ is defined as:
\[
    D_f\paren{P \middle\| Q} \coloneqq \rmint\limits_{\cX} f\paren{\frac{dP}{dQ}} dQ. 
\]
We call $f$ the \emph{generator} of $D_f$.
\end{definition}

An $f$-divergence that will be of particular importance to this work is the \emph{Hockey-Stick divergence}.
We define it below.

\begin{definition}
\label{def:hs_div}
Let $P, Q \in \Delta\paren{\cX}$ with density functions $p, q$, respectively.
For $\eps \geq 0$, we define the Hockey-Stick divergence from $P$ to $Q$ is defined as:
\[
    D_{e^{\eps}}\paren{P \middle\| Q} \coloneqq \max\limits_{S \subseteq \cX} \brc{P\paren{S} - e^{\eps} Q\paren{S}} = \frac{1}{2} \rmint\limits_{\cX} \abs{p\paren{x} - e^{\eps} q\paren{x}} \, dx - \frac{1}{2} \paren{e^{\eps} - 1}.
\]
\end{definition}

\begin{remark}
\label{rem:hs_div_equiv}
The two equivalent expressions given above for the Hockey-Stick divergence correspond to different generators $f$.
Specifically, the former definition corresponds to $f\paren{x} \coloneqq \max\brc{x - e^{\eps}, 0}, \forall x \in \bR_{\geq 0}$, whereas the latter corresponds to $f\paren{x} \coloneqq \frac{1}{2} \abs{x - e^{\eps}} - \frac{1}{2} \paren{e^{\eps} - 1}, \forall x \in \bR$.
Verifying the equivalence of the two definitions is a simple exercise, where it suffices to consider the second definition and partition the domain depending on whether $p\paren{x} > e^{\eps} q\paren{x}$ or otherwise.
Additionally, both definitions reduce to \emph{TV-distance} when $\eps = 0$.
\end{remark}

For additional facts from probability and statistics, we point readers to Appendix~\ref{sec:prob_facts}.

\subsection{Privacy Preliminaries}
\label{subsec:priv_prelim}

We define differential privacy here and introduce some of its most fundamental properties.

\begin{definition}[Approximate Differential Privacy (DP)~\cite{DworkMNS06}]
\label{def:dp}
A randomized algorithm $M \colon \cX^n \rightarrow \cY$ satisfies $\paren{\eps, \delta}$-DP if for every pair of neighboring datasets $X, X' \in \cX^n$ (i.e., datasets that differ in at most one entry), we have:
\[
    \pr{M}{M\paren{X} \in Y} \le e^{\eps} \pr{M}{M\paren{X'} \in Y} + \delta, \forall Y \subseteq \cY.
\]
\end{definition}

\begin{remark}
\label{rem:pure_dp}
When $\delta = 0$, we say that $M$ satisfies \emph{pure-DP}.
\end{remark}

\begin{remark}
\label{rem:dp_hs}
We note that both the definition of DP can be captured using the Hockey-Stick divergence introduced earlier.
Indeed, the definition of $\paren{\eps, \delta}$-DP is equivalent to saying that, for every pair of adjacent datasets $X, X'$, we have $D_{e^{\eps}}\paren{M\paren{X} \middle\| M\paren{X'}} \le \delta$.
\end{remark}

\begin{definition}[DP under condition~\cite{KothariMV22}]
\label{def:dp_under_condition}
Let $\Psi \colon \cX^n \to \brc{0, 1}$ be a predicate.
An algorithm $M \colon \cX^n \to \cY$ is \emph{$\paren{\eps, \delta}$-DP under condition $\Psi$} for $\eps, \delta > 0$ if, for every neighboring datasets $X, X' \in \cX^n$ both satisfying $\Psi$, we have:
\[
    \pr{M}{M\paren{X} \in Y} \le e^{\eps} \pr{M}{M\paren{X'} \in Y} + \delta, \forall Y \subseteq \cY.
\]
\end{definition}

\begin{lemma}[Composition for Algorithm with Halting~\cite{KothariMV22}]
\label{lem:composition_halting}
Let $M_1: \cX^n \to \cY_1 \cup \brc{\Fail}, M_2: \cY_1 \times \cX^n \to \cY_2$ be algorithms.
Furthermore, let $M$ denote the following algorithm:  let $y_1 \coloneqq M_1\paren{X}$ and, if $y_1 = \Fail$, then halt and output \Fail\ or else, output $y_2 \coloneqq M_2\paren{y_1, X}$.

Let $\Psi$ be any condition such that, if $X$ does not satisfy $\Psi$, then $M_1\paren{X}$ always returns $\Fail$.
Suppose that $M_1$ is $\paren{\eps_1, \delta_1}$-DP, and $M_2$ is $\paren{\eps_2, \delta_2}$-DP under condition $\Psi$.
Then, $M$ is $\paren{\eps_1 + \eps_2, \delta_1 + \delta_2}$-DP.
\end{lemma}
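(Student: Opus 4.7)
The plan is to establish $\paren{\eps_1 + \eps_2, \delta_1 + \delta_2}$-DP for $M$ via the Hockey-Stick divergence characterization (Remark~\ref{rem:dp_hs}), by a case analysis on whether the neighboring datasets $X, X'$ satisfy the predicate $\Psi$. Fix any such pair and any event $Y \subseteq \cY_2 \cup \brc{\Fail}$; the target inequality is
\[
    \pr{}{M\paren{X} \in Y} \le e^{\eps_1 + \eps_2} \pr{}{M\paren{X'} \in Y} + \delta_1 + \delta_2.
\]

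\emph{Case 1: Both $X$ and $X'$ satisfy $\Psi$.} Introduce $q\paren{y_1} \coloneqq \pr{}{M_2\paren{y_1, X} \in Y}$ and $q'\paren{y_1} \coloneqq \pr{}{M_2\paren{y_1, X'} \in Y}$ for $y_1 \in \cY_1$, with $q\paren{\Fail} = q'\paren{\Fail} \coloneqq \one\brk{\Fail \in Y}$. Then $\pr{}{M\paren{X} \in Y} = \ex{y_1 \sim M_1\paren{X}}{q\paren{y_1}}$, and similarly for $X'$. The $\paren{\eps_2, \delta_2}$-DP-under-condition hypothesis on $M_2$ (which applies because both datasets satisfy $\Psi$) gives $q\paren{y_1} \le e^{\eps_2} q'\paren{y_1} + \delta_2$ pointwise. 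Substituting and then applying the $\paren{\eps_1, \delta_1}$-DP of $M_1$ to the $\brk{0, 1}$-valued function $q'$ via the layer-cake representation $\ex{y_1 \sim M_1\paren{X}}{q'\paren{y_1}} = \rmint_0^1 \pr{y_1 \sim M_1\paren{X}}{q'\paren{y_1} > t}\, dt$ recovers the standard adaptive-composition bound $\paren{\eps_1 + \eps_2, \delta_1 + \delta_2}$.

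\emph{Case 2: At least one of $X, X'$ violates $\Psi$.} Suppose first that $X'$ violates $\Psi$. Then $M_1\paren{X'} = \Fail$ almost surely, hence $M\paren{X'} = \Fail$ almost surely. If $\Fail \in Y$, the claim is immediate, since the right-hand side is at least $e^{\eps_1 + \eps_2} \ge 1 \ge \pr{}{M\paren{X} \in Y}$. If $\Fail \notin Y$, then $\brc{M\paren{X} \in Y} \subseteq \brc{M_1\paren{X} \neq \Fail}$, so applying $\paren{\eps_1, \delta_1}$-DP of $M_1$ to this event yields $\pr{}{M\paren{X} \in Y} \le \pr{}{M_1\paren{X} \neq \Fail} \le e^{\eps_1} \cdot 0 + \delta_1 \le \delta_1 + \delta_2$. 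The sub-case where $X$ violates $\Psi$ is analogous: the $\Fail \notin Y$ side is trivial, and for $\Fail \in Y$ the reverse direction of DP on $M_1$ gives $\pr{}{M_1\paren{X'} = \Fail} \ge e^{-\eps_1}\paren{1 - \delta_1}$, from which the desired inequality follows by a short algebraic check using $\pr{}{M\paren{X'} \in Y} \ge \pr{}{M_1\paren{X'} = \Fail}$.

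The only genuinely delicate step is the adaptive-composition calculation in Case 1: naively chaining the two DP guarantees yields only $\paren{\eps_1 + \eps_2, e^{\eps_2}\delta_1 + \delta_2}$, and recovering the tighter constant $\delta_1 + \delta_2$ relies on the standard layer-cake trick (equivalently, the pointwise characterization of approximate DP). Everything else is routine book-keeping driven by the almost-sure halting behavior of $M$ whenever $\Psi$ is violated.
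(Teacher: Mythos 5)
The paper does not prove this lemma: it is imported verbatim from~\cite{KothariMV22}, so there is no in-paper argument to compare against. Your overall plan — a case split on whether $X, X'$ satisfy $\Psi$, with Case 2 driven by the fact that $M_1$ outputs \Fail\ almost surely on datasets violating $\Psi$ — is the standard and correct one, and your Case 2 analysis (both sub-cases) goes through.

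The one place the argument does not yet close is precisely the step you flag as the delicate one. After substituting $q\paren{y_1} \le e^{\eps_2} q'\paren{y_1} + \delta_2$ and applying the layer-cake bound to $q'$ itself, you obtain
\[
    \pr{}{M\paren{X} \in Y} \le e^{\eps_2} \ex{y_1 \sim M_1\paren{X}}{q'\paren{y_1}} + \delta_2 \le e^{\eps_2}\paren{e^{\eps_1} \ex{y_1 \sim M_1\paren{X'}}{q'\paren{y_1}} + \delta_1} + \delta_2,
\]
which is still the naive $\paren{\eps_1 + \eps_2, e^{\eps_2}\delta_1 + \delta_2}$: the layer-cake identity applied to $q'$ alone does not strip the $e^{\eps_2}$ factor off $\delta_1$, and reversing the order of the two guarantees merely trades it for $e^{\eps_1}\delta_2$. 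The missing ingredient is a truncation. Since $q \le 1$, you have $q\paren{y_1} \le \min\brc{1, e^{\eps_2} q'\paren{y_1} + \delta_2} \le \min\brc{1, e^{\eps_2} q'\paren{y_1}} + \delta_2$, and the layer-cake argument must be applied to the $\brk{0, 1}$-valued function $h\paren{y_1} \coloneqq \min\brc{1, e^{\eps_2} q'\paren{y_1}}$ rather than to $q'$. This gives $\ex{y_1 \sim M_1\paren{X}}{h} \le e^{\eps_1} \ex{y_1 \sim M_1\paren{X'}}{h} + \delta_1 \le e^{\eps_1 + \eps_2} \pr{}{M\paren{X'} \in Y} + \delta_1$, and summing the two displays yields $\delta_1 + \delta_2$ as claimed. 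The repair is one line, but as written your Case 1 delivers only the weaker constant you were explicitly trying to avoid.
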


\begin{comment}
\section{The Mean Estimator of~\texorpdfstring{\cite{BrownHS23}}{} and its Sampling Guarantees}
\label{sec:bhs23_failure}

This section is divided into two parts.
First, in Section~\ref{subsec:bhs23}, we present the algorithm of~\cite{BrownHS23} for covariance-aware mean estimation, and some of its guarantees that will prove useful in subsequent sections.
Then, in Section~\ref{subsec:failure}, we discuss why the algorithm, without some non-trivial modification, cannot be used to resolve the question of privately sampling from a Gaussian $\cN\paren{\mu, \Sigma}$ with no bounds on its parameters with fewer than $\widetilde{\cO}\paren{\mathrm{poly}\paren{\frac{1}{\alpha}}}$ samples.
\end{comment}

\section{The Algorithm of~\texorpdfstring{\cite{BrownHS23}}{} and its Properties}
\label{sec:bhs23}

We now present the main algorithm given in~\cite{BrownHS23} for covariance-aware mean estimation.
The algorithm's goal is to privately obtain an estimate $\widetilde{\mu}$ for the mean of a Gaussian (or sub-Gaussian) distribution with mean vector $\mu$ and covariance matrix $\Sigma$ that satisfies $\mnorm{\Sigma}{\widetilde{\mu} - \mu} \le \alpha$ with high probability.
Roughly speaking, the algorithm involves producing a pair of non-private estimates $\paren{\widehat{\mu}, \widehat{\Sigma}}$ for the mean and the covariance of the distribution, respectively.
These estimates come from \emph{stable} estimators.
This is a class of algorithms which, given well-behaved input datasets (i.e., datasets that do not contain significant outliers), produce outputs which are resilient to small perturbations of the input dataset, despite having no formal robustness and/or privacy guarantees.
A PTR step follows, which privately determines whether the results produced by the algorithms correspond to a ``good'' dataset.
If the estimates fail the test, the algorithm aborts.
Conversely, if the test is successful, a point is sampled from the distribution $\cN\paren{\widehat{\mu}, c^2 \widehat{\Sigma}}$,\footnote{$c$ is a multiplicative factor that exists so that the magnitude of the noise is large enough to ensure privacy.}
which comes with the desired utility and privacy guarantees when the dataset is appropriately large.\footnote{Sampling a point from $\cN\paren{\widehat{\mu}, c^2 \widehat{\Sigma}}$ essentially amounts to an implementation of the Gaussian mechanism with noise which, instead of being spherical, depends on the shape of $\widehat{\Sigma}$, i.e., $\cN\paren{\widehat{\mu}, c^2 \widehat{\Sigma}} = \widehat{\mu} + \widehat{\Sigma}^{\frac{1}{2}} \cN\paren{0, c^2 \id}$.
This first appeared in~\cite{BrownGSUZ21} as a means of performing covariance-aware mean estimation (termed the \emph{Emprically Rescaled Gaussian Mechanism}), and then in~\cite{AlabiKTVZ23} for covariance estimation as part of the \emph{Gaussian Sampling Mechanism}.}

\begin{algorithm}[H]
    \caption{Covariance-Aware Mean Estimator}\label{alg:bhs23}
    \hspace*{\algorithmicindent} \textbf{Input:} Dataset $X \coloneqq \paren{X_1, \dots, X_n}^{\top} \in \bR^{n \times d}$; privacy parameters $\eps, \delta > 0$; outlier threshold $\lambda_0 \geq 1$. \\
    \hspace*{\algorithmicindent} \textbf{Output:} $\widetilde{\mu} \in \bR^d \cup \brc{\Fail}$.
    \begin{algorithmic}[1]
    \Procedure{\CovAwMeanEst$_{\paren{\eps, \delta}, \lambda_0}$}{$X$}
        \State \Comment{Initialize.}
        \State Let $k \gets \cl{\frac{6 \log\paren{\frac{6}{\delta}}}{\eps}} + 4$; $M \gets 6 k + \cl{18 \log\paren{\frac{16 n}{\delta}}}$; $c^2 \gets \frac{720 e^2 \lambda_0 \log\paren{\frac{12}{\delta}}}{\eps^2 n^2}$.
        \State \Comment{Compute stable estimates.}
        \State Let $\paren{W, \Score_1} \gets \StableCov_{\lambda_0, k}\paren{X}$.\Comment{See Appendix~\ref{sec:bhs23_details} for details.}
        \State Let $\widehat{\Sigma} \gets W W^{\top}$.
        \State Let $R \sim \cU\paren{\brc{R' \subseteq \brk{n} \colon \abs{R'} = M}}$.
        \State Let $\paren{v, \Score_2} \gets \StableMean_{\widehat{\Sigma}, \lambda_0, k, R}\paren{X}$.\Comment{See Appendix~\ref{sec:bhs23_details} for details.}
        \State Let $\widehat{\mu} \gets \sum\limits_{i \in \brk{n}} v_i X_i$.
        \State \Comment{PTR step.}
        \If {$\cM_{\mathrm{PTR}}^{\paren{\frac{\eps}{3}, \frac{\delta}{6}}}\paren{\max\brc{\Score_1, \Score_2}} = \Pass$}
            \State Let $\widetilde{\mu} \sim \cN\paren{\widehat{\mu}, c^2 \widehat{\Sigma}}$.
            \State \Return $\widetilde{\mu}$.
        \Else
            \State Let $\widetilde{\mu} \gets \Fail$.
            \State \Return $\widetilde{\mu}$.
        \EndIf
    \EndProcedure
    \end{algorithmic}
\end{algorithm}

Algorithm~\ref{alg:bhs23} uses a pair of subroutines (\StableCov\ and \StableMean), which correspond to the stable estimators mentioned earlier.
These algorithms function by assigning weights to points depending on how far they are from other points in the dataset (essentially performing a \emph{soft outlier removal} operation).
For brevity of exposition, we will not give the pseudocode of these algorithms here, instead deferring the full presentation to Appendix~\ref{sec:bhs23_details}.
However, we will note that we have made a slight modification to the format of the output of these algorithms.
Rather than outputting the weighted sums that correspond to the stable estimates, \StableCov\ outputs a matrix $W \in \bR^{d \times n}$ whose columns are of the form $\sqrt{w_i} Y_i$ where $Y_i \coloneqq \frac{1}{\sqrt{2}} \paren{X_i - X_{i + \fl{\frac{n}{2}}}}$, while \StableMean\ outputs a vector of weights, and the estimates $\paren{\widehat{\mu}, \widehat{\Sigma}}$ are constructed separately in a post-processing step.
The goal of this modification is to facilitate the use of these algorithms in subsequent sections.

For the rest of this section, we will focus on some lemmas which will prove to be particularly important for our analysis in Section~\ref{sec:sampling_algo}.
These are related to the \emph{stability} guarantees of \StableCov\ and \StableMean, as well as the way they interact with the PTR step.
They are crucial to the analysis of~\cite{BrownHS23} for establishing the privacy guarantees in their main result.
We note that lemmas related to the accuracy guarantees given in~\cite{BrownHS23} will not be discussed at any point in this manuscript, the reason being that the utility guarantee in the present work (closeness of output distribution to input distribution in TV-distance) differs significantly from the target guarantee in~\cite{BrownHS23}, and we will not be applying their results.

We stress that the versions of the statements we will give here are slightly different from the original ones.
Indeed, the original formulations assume that \StableMean\ and \StableCov\ are fed the same dataset.
However, in Section~\ref{sec:sampling_algo}, this will not be the case.
For that reason, we have made the smallest possible number of changes to the statements to ensure that they are compatible with how they are used in subsequent sections.

We first state a result of~\cite{BrownHS23} that captures the effect of changing one input datapoint has on the output of \StableCov, under the assumption that both the original dataset and the new one are well-concentrated (which is quantified by the score outputted by \StableCov).

\begin{lemma}[Lemma $11$ from~\cite{BrownHS23}]
\label{lem:stable_cov}
Fix dataset size $2 n_2$, dimension $d$, outlier threshold $\lambda_0 \geq 1$, and privacy parameters $\eps, \delta \in \brk{0, 1}$.
Set the discretization parameter $k$ to be $\cl{\frac{6 \log\paren{\frac{6}{\delta}}}{\eps}} + 4$ (as in Algorithm~\ref{alg:bhs23}).
Let $X, X'$ be adjacent $d$-dimensional datasets of size $2 n_2$.
Assume $\gamma \coloneqq \frac{8 e^2 \lambda_0}{n_2} \le \frac{1}{2 k}$ (i.e., $n_2 \geq 16 e^2 \lambda_0 k$).
Let:
\begin{align*}
    \paren{W_1, \Score}
    &\coloneqq \StableCov\paren{X, \lambda_0, k}, \\
    \paren{W_2, \Score'}
    &\coloneqq \StableCov\paren{X', \lambda_0, k}.
\end{align*}
Assume $\Score, \Score' < k$.
Then $\Sigma_1 \coloneqq W_1 W_1^{\top}, \Sigma_2 \coloneqq W_2 W_2^{\top} \succ 0$ and $\paren{1 - \gamma} \Sigma_1 \preceq \Sigma_2 \preceq \frac{1}{1 - \gamma} \Sigma_1$.
Additionally, we have:
\[
    \mnorm{\tr}{\Sigma_1^{- \frac{1}{2}} \Sigma_2 \Sigma_1^{- \frac{1}{2}} - \id}, \mnorm{\tr}{\Sigma_2^{- \frac{1}{2}} \Sigma_1 \Sigma_2^{- \frac{1}{2}} - \id} \le \paren{1 + 2 \gamma} \gamma.
\]
\end{lemma}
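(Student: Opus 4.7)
The plan is to leverage the structural properties of \StableCov\ that make the weights nearly insensitive to single-datapoint perturbations when the score is small. First I would unpack that each output covariance takes the form $\Sigma_i = W_i W_i^{\top} = \sum_{j \in \brk{n_2}} w_{i, j}\, Y_{i, j} Y_{i, j}^{\top}$, where the $Y$'s are the difference vectors $\frac{1}{\sqrt{2}}(X_j - X_{j + n_2})$ and the weights $w_{i, j} \in [0, 1]$ are produced by a soft-outlier-removal routine calibrated to the threshold $\lambda_0$. Two structural facts drive the analysis: (a) every index $j$ that receives positive weight satisfies $Y_{i, j}^{\top} \Sigma_i^{-1} Y_{i, j} = O(\lambda_0)$, so each rank-one contribution is spectrally bounded by $O(\lambda_0 / n_2)\, \Sigma_i$; and (b) the total weight is $\Theta(n_2)$, so the normalization is correct.

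Next I would exploit the assumption $\Score, \Score' < k$. The score is designed as a discretized upper bound on how many weighting decisions lie near a boundary where a perturbation of the input could flip them. The bound $< k$ on both $X$ and $X'$ implies that all but $O(1)$ of the weights are locked to the same discretized values on the two datasets. Combined with $X$ and $X'$ differing in a single entry, this means $w_{1, j} = w_{2, j}$ for all but $O(1)$ indices, with the surviving discrepancies concentrated on the swapped index and a handful of near-boundary indices. This is the crux of the argument, and also the main obstacle: establishing it carefully requires tracking through the iterative reweighting defining \StableCov\ and verifying that the discretization prevents small perturbations from cascading across iterations. I would import this bookkeeping from~\cite{BrownHS23} rather than reprove it.

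With the weight-stability statement in hand, the remainder is an additive-to-multiplicative conversion. Writing $\Sigma_2 - \Sigma_1$ as a sum over only those $O(1)$ indices where the rank-one contributions disagree, and applying property (a) to each such summand, yields $-\gamma\, \Sigma_1 \preceq \Sigma_2 - \Sigma_1 \preceq \gamma\, \Sigma_1$ with $\gamma = O(\lambda_0 / n_2)$. The hypothesis $\gamma \le \frac{1}{2 k}$ then upgrades this to the multiplicative sandwich $(1 - \gamma) \Sigma_1 \preceq \Sigma_2 \preceq \Sigma_1 / (1 - \gamma)$, and in particular both matrices are strictly positive definite. For the trace-norm claim, I would expand $\Sigma_1^{- \frac{1}{2}} \Sigma_2 \Sigma_1^{- \frac{1}{2}} - \id$ as a sum of the same $O(1)$ rank-one residuals measured in the $\Sigma_1$-Mahalanobis norm; each contributes at most $O(\lambda_0 / n_2)$ in trace norm, and the extra factor $(1 + 2 \gamma)$ absorbs the second-order correction that appears when symmetrizing between $\Sigma_1$ and $\Sigma_2$. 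The symmetric bound with the roles of $\Sigma_1$ and $\Sigma_2$ swapped follows by repeating the argument, using the already established two-sided spectral comparison to convert $\Sigma_1$-norms into $\Sigma_2$-norms at cost $(1 \pm \gamma)$.
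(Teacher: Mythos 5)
The first thing to note is that the paper does not prove this statement at all: it is imported verbatim (as ``Lemma 11 from~\cite{BrownHS23}'') and used as a black box, so there is no in-paper argument to compare yours against. Your proposal can only be judged as a reconstruction of the proof in the cited source, and as such it is a reasonable high-level sketch of the right mechanism --- bounded Mahalanobis norm of every positively weighted point, plus stability of the weights under a single-entry change, converted into a multiplicative spectral sandwich --- but it is not a self-contained proof, because you explicitly defer the crux (the weight-stability claim) to~\cite{BrownHS23}, i.e.\ to the very bookkeeping the lemma is asking for.

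Beyond that structural issue, your account of the weight stability is not quite the right mechanism. It is not true that all but $\cO\paren{1}$ weights are ``locked'': the weights are $w_i = \frac{1}{k m} \sum_{\ell = k + 1}^{2 k} \mathds{1}\brc{i \in S_{\ell}}$, and the correct statement is that the nested good subsets for $X$ and $X'$ interleave across discretization levels ($S_{\ell - 1}\paren{X} \setminus \brc{j^*} \subseteq S_{\ell}\paren{X'}$ and vice versa), so that \emph{up to $k$} indices can have differing weights, each by at most $\frac{1}{k m}$; the condition $\Score, \Score' < k$ is what forces the complement of $S_k$ to have size $< k$, and the product $k \cdot \frac{1}{k m} \cdot e^2 \lambda_0$ is what yields $\gamma = \Theta\paren{\frac{\lambda_0}{n_2}}$. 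Your accounting ($\cO\paren{1}$ indices each changing by $\cO\paren{\frac{1}{n_2}}$) happens to give the same order, but it is not what the algorithm's discretization actually guarantees, and the precise constant $8 e^2$ in $\gamma$ comes from the correct version. Two smaller gaps: the outlier test in \LargestGS\ is against the \emph{unweighted} empirical covariance $A$ of the surviving set, not against $\Sigma_i$ itself, so your property (a) needs an extra comparison step; and positive definiteness of $\Sigma_1, \Sigma_2$ does not follow from the sandwich --- it must be established separately from $\Score < k$ (a singular $A$ would eject all points and force $\Score = k$).
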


The previous is complemented by an analogous guarantee for \StableMean.
However, before stating the guarantee for \StableMean, we first need to introduce the notion of a \emph{degree-representative} subset of a dataset that was given in~\cite{BrownHS23}.
The motivation behind this definition has to do with the fact that Algorithm~\ref{alg:bhs23} uses a \emph{reference set $R$} in \StableMean.
This is a subset of the dataset that is used to efficiently implement the soft outlier removal process described earlier.

\begin{definition}[Definition $32$ from~\cite{BrownHS23}]
\label{def:deg_rep}
Fix $d$-dimensional dataset $X$ of size $n_1$, covariance $\Sigma \succ 0$, outlier threshold $\lambda_0 \geq 1$, and reference set $R \subseteq \brk{n_1}$.
For all $i$, let:
\begin{align*}
    N_i
    &\coloneqq
    \brc{j \in \brk{n_1} \colon \mnorm{\Sigma}{X_i - X_j}^2\le e^2 \lambda_0}, \\
    \widetilde{N}_i
    &\coloneqq \brc{j \in R \colon \mnorm{\Sigma}{X_i - X_j}^2 \le e^2 \lambda_0}.
\end{align*}
Let $z_i \coloneqq \abs{N_i}$ and $\widetilde{z}_i \coloneqq \abs{\widetilde{N}_i}$ be the sets' sizes.
We say that $R$ is \emph{degree-representative for $X$} if for every index $i$ we have $\abs{\frac{1}{\abs{R}} \widetilde{z}_i - \frac{1}{n_1} z_i} \le \frac{1}{6}$.
\end{definition}

Having given the above definition, we now proceed to state the lemma capturing the stability guarantees of \StableMean.

\begin{lemma}[Lemma $13$ from~\cite{BrownHS23}]
\label{lem:stable_mean}
Fix dataset sizes $n_1$ and $2 n_2$, dimension $d$, outlier threshold $\lambda_0 \geq 1$, and privacy parameters $\eps, \delta \in \brk{0, 1}$.
Set the discretization parameter $k$ to be $\cl{\frac{6 \log\paren{\frac{6}{\delta}}}{\eps}} + 4$ (as in Algorithm~\ref{alg:bhs23}).
Use reference set $R \subseteq \brk{n_1}$ with $\abs{R} > 6 k$.
Let $X$ and $X'$ be adjacent $d$-dimensional datasets of size $n_1$.
Let $\Sigma_1, \Sigma_2 \in \bR^{d \times d}$ be positive definite matrices satisfying $\paren{1 - \gamma} \Sigma_1 \preceq \Sigma_2 \preceq \frac{1}{1 - \gamma} \Sigma_1$ for $\gamma \coloneqq \frac{8 e^2 \lambda_0}{n_2}$.
Assume $n_1 \geq 32 e^2 k$ and $\gamma \le \frac{1}{2 k}$ (i.e., $n_2 \geq 16 e^2 \lambda_0 k$).
Let:
\begin{align*}
    \paren{v, \Score}
    &\coloneqq \StableMean\paren{X, \Sigma_1, \lambda_0, k, R}, \\
    \paren{v', \Score'}
    &\coloneqq \StableMean\paren{X', \Sigma_2, \lambda_0, k, R}.
\end{align*}
If $\Score, \Score' < k$ and $R$ is degree-representative for both $X$ and $X'$, then:
\[
    \mnorm{\Sigma_1}{\widehat{\mu} - \widehat{\mu}'}^2 \le \frac{\paren{1 + 2 \gamma} 38 e^2 \lambda_0}{n_1^2},
\]
where $\widehat{\mu} \coloneqq \sum\limits_{i \in \brk{n}} v_i X_i$ and $\widehat{\mu}' \coloneqq \sum\limits_{i \in \brk{n}} v_i' X_i'$.
\end{lemma}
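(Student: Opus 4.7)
The plan is a sensitivity analysis of the weighted empirical mean produced by \StableMean, separating the single-index data swap from the covariance change and then combining the two contributions. Let $i^\star$ be the index at which $X$ and $X'$ differ. The starting decomposition is
\[
    \widehat{\mu} - \widehat{\mu}' \;=\; \paren{v_{i^\star} X_{i^\star} - v'_{i^\star} X'_{i^\star}} \;+\; \sum_{i \neq i^\star} \paren{v_i - v'_i}\, X_i,
\]
and I would control both pieces in the $\mnorm{\Sigma_1}{\cdot}$ norm.

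For the swap term, the soft-outlier-removal structure of \StableMean\ together with the cap $\Score, \Score' < k$ implies that any single weight is at most $O(1/n_1)$, and that both $X_{i^\star}$ and $X'_{i^\star}$ lie within Mahalanobis distance $O(\sqrt{\lambda_0})$ of the bulk of their respective datasets. This is where the degree-representative hypothesis (Definition~\ref{def:deg_rep}) first enters: it guarantees that any point receiving positive weight has many true neighbors within the outlier threshold, so neither $X_{i^\star}$ nor $X'_{i^\star}$ can be a severe outlier. Centering both points at a shared reference and using the triangle inequality turns this into an $O(\sqrt{\lambda_0}/n_1)$ bound in $\mnorm{\Sigma_1}{\cdot}$.

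For the reweighting term, the point $X_i$ is fixed for every $i \neq i^\star$, so only $v_i - v'_i$ matters. The sources of this shift are (i) the single-index change within $R$ (when $i^\star \in R$), which can move the count $\widetilde{z}_i$ from Definition~\ref{def:deg_rep} by at most one; and (ii) the covariance change, which rescales every Mahalanobis distance by a factor in $\brk{\sqrt{1 - \gamma}, \paren{1 - \gamma}^{-1/2}}$. The hypothesis $\gamma \le \tfrac{1}{2 k}$ is calibrated so that (ii) can push any individual distance across at most one of the discrete thresholds used internally by \StableMean, and $\Score, \Score' < k$ then caps the total $\ell_1$ change $\sum_i \abs{v_i - v'_i}$ at $O(1/n_1)$. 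A Cauchy--Schwarz step in the $\Sigma_1$-inner product against the bulk bound $\sum_i v'_i \mnorm{\Sigma_1}{X_i - \widehat{\mu}'}^2 = O(\lambda_0)$ (again a consequence of the score cap, and where degree-representativeness for $X'$ is used) converts the $\ell_1$ shift into a second $O(\sqrt{\lambda_0}/n_1)$ contribution.

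The main obstacle is coupling the two perturbations cleanly: the covariance change $\Sigma_1 \to \Sigma_2$ reshuffles the neighborhoods used inside \StableMean, and a priori this could cascade across all $n_1$ indices. The discretization built into the algorithm is what contains the cascade, and the bookkeeping showing that \emph{at most one discrete level can change per index under $\gamma \le 1/(2k)$} is the technical heart of the argument; this is also the source of the multiplicative $(1 + 2 \gamma)$ slack in the final constant, since each distance is allowed to drift by the full $(1 - \gamma)^{-1/2}$ factor before being compared to the threshold grid. Combining the two $O(\sqrt{\lambda_0}/n_1)$ pieces, squaring, and tracking constants faithfully through Cauchy--Schwarz then yields the stated bound $\tfrac{\paren{1 + 2 \gamma} \cdot 38 e^2 \lambda_0}{n_1^2}$.
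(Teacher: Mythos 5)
This statement is quoted (as Lemma 13 of~\cite{BrownHS23}) and used as a black box; the paper contains no proof of it to compare against, so your proposal can only be measured against the original argument in~\cite{BrownHS23}. Your outline does track that argument's strategy: split off the modified index, control the reweighting of the remaining indices via the discretized level sets $S_\ell$, use $\gamma \le \frac{1}{2k}$ to show the covariance perturbation moves each point across at most one level, and use degree-representativeness to convert counts over $R$ into a Mahalanobis-diameter bound of order $\sqrt{\lambda_0}$ on the weighted points.

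There is, however, a genuine gap at the step you yourself identify as the technical heart. The per-index statement ``each $c_i$ shifts by at most one level'' does \emph{not} by itself cap $\sum_i \abs{v_i - v_i'}$ at $O\paren{\frac{1}{n_1}}$: with $c_i \in \brc{0, \dots, k}$ and $Z \approx n_1 k$, a one-unit change in every $c_i$ would give $\sum_i \abs{v_i - v_i'} \approx \frac{n_1}{n_1 k} = O\paren{\frac{1}{k}}$, which after multiplying by the $O\paren{\sqrt{\lambda_0}}$ diameter and squaring yields $O\paren{\frac{\lambda_0}{k^2}}$ rather than $O\paren{\frac{\lambda_0}{n_1^2}}$. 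The missing ingredient is a counting argument using the score cap together with the nestedness $S_{k+1} \subseteq \cdots \subseteq S_{2k}$: $\Score < k$ forces $\abs{S_{k+1}} \geq n_1 - k$ (and likewise for $X'$), so $c_i = c_i' = k$ for all but $O\paren{k}$ indices, and only those $O\paren{k}$ indices contribute to $\sum_i \abs{c_i - c_i'}$ (each by at most $1$, by the one-level sandwiching). It is this ``few indices move, and each by little'' combination, not either piece alone, that yields $\sum_i \abs{v_i - v_i'} = O\paren{\frac{k}{Z}} = O\paren{\frac{1}{n_1}}$ and hence the claimed $\frac{\paren{1 + 2\gamma}\, 38 e^2 \lambda_0}{n_1^2}$. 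A smaller issue: your Cauchy--Schwarz step pairs the $\ell_1$ weight shift with the $v'$-weighted second moment $\sum_i v_i' \mnorm{\Sigma_1}{X_i - \widehat{\mu}'}^2 = O\paren{\lambda_0}$, but Cauchy--Schwarz in that form needs the second moment weighted by $\abs{v_i - v_i'}$, not by $v_i'$; the clean route is the uniform bound $\mnorm{\Sigma_1}{X_i - X_j}^2 \le e^2 \lambda_0$ (via shared neighbors in $R$) on every point carrying positive weight, combined with the $\ell_1$ bound.
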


We conclude this section with two results that capture the function of the PTR step of Algorithm~\ref{alg:bhs23}.
The first result describes the conditions under which the PTR step works.

\begin{lemma}[Claim $10$ from~\cite{BrownHS23}]
\label{lem:bhs_ptr}
Fix $0 < \eps \le 1$ and $0< \delta \le \frac{\eps}{10}$.
There is an algorithm $\cM_{\mathrm{PTR}}^{\paren{\eps, \delta}} \colon \bR \to \brc{\Pass, \Fail}$ that satisfies the following conditions:
\begin{enumerate}
    \item Let $\cU$ be a set and $g \colon \cU^n \to \bR_{\ge 0}$ a function.
    If, for all $X, X' \in \cU^n$ that differ in one entry,  $\abs{g\paren{X} - g\paren{X'}} \le 2$, then $\cM_{\mathrm{PTR}}^{\paren{\eps, \delta}}\paren{g\paren{\cdot}}$ is $\paren{\eps, \delta}$-DP.
    \item $\cM_{\mathrm{PTR}}^{\paren{\eps, \delta}}\paren{0} = \Pass$.
    \item For all $z \geq \frac{2 \log\paren{\frac{1}{\delta}}}{\eps} + 4$, $\cM_{\mathrm{PTR}}^{\paren{\eps, \delta}}\paren{z} = \Fail$.
\end{enumerate}
\end{lemma}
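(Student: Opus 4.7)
The plan is to construct $\cM_{\mathrm{PTR}}^{\paren{\eps, \delta}}$ as a bounded-noise threshold test and then verify each of the three conditions in turn. Specifically, I would sample a noise variable $\eta$ from a Laplace distribution of scale $2/\eps$ truncated to the interval $\brk{-B, B}$ for $B \coloneqq \frac{\log\paren{1/\delta}}{\eps} + 2$, set the threshold $T \coloneqq B$, and output $\Pass$ if $z + \eta < T$ (and $\Fail$ otherwise). The intuition is that the bounded support of the truncated noise will yield the deterministic guarantees in conditions $(2)$ and $(3)$, while the Laplace-like density of the noise in the bulk gives the $\paren{\eps, \delta}$-DP guarantee of condition $(1)$, with the $\delta$ slack arising from the loss of mass at the truncation boundary.

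Conditions $(2)$ and $(3)$ then follow essentially immediately from the bounded support of $\eta$. For $(2)$, if $z = 0$, then $z + \eta = \eta \le B = T$ with probability $1$, so $\cM_{\mathrm{PTR}}^{\paren{\eps, \delta}}\paren{0} = \Pass$. For $(3)$, if $z \ge \frac{2 \log\paren{1/\delta}}{\eps} + 4 = T + B$, then $z + \eta \ge z - B \ge T$ always, so the mechanism outputs $\Fail$ deterministically.

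The main step is verifying condition $(1)$. For neighboring inputs $z, z'$ with $\abs{z - z'} \le 2$, the two output distributions over $\brc{\Pass, \Fail}$ can be compared via $\Pr\brk{\cM\paren{z} = \Pass} = F\paren{T - z}$, where $F$ is the CDF of the truncated Laplace; shifting $z$ by $2$ translates the argument of $F$ by $2$. In the bulk of the noise distribution, the density ratio under a shift of $2$ is at most $e^{2 / \paren{2/\eps}} = e^\eps$, giving the $\eps$-factor in the DP bound. The deviation from pure $\eps$-DP arises from mass near the truncation boundary that is lost under the shift; this excess mass is on the order of $e^{-B\eps/2}$, which our choice of $B$ bounds by $\delta$. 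Combining these contributions bounds the Hockey-Stick divergence of the two output distributions by $\delta$, delivering $\paren{\eps, \delta}$-DP (note that by Remark~\ref{rem:dp_hs}, bounding $D_{e^\eps}\paren{\cM\paren{z} \middle\| \cM\paren{z'}}$ is exactly what we need).

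The main obstacle is the delicate balancing of the three parameters $\eps$, $\delta$, and $B$ so that all three conditions are satisfied simultaneously, which requires careful analysis of the truncated Laplace CDF near its endpoints. The assumption $\delta \le \eps / 10$ provides the necessary slack to make the bounded-noise approximation of the Laplace mechanism retain its $\paren{\eps, \delta}$-DP guarantee; absent this (or in the pursuit of sharper constants), a slight variation of the mechanism, such as adjusting the truncation width, the noise scale, or the tie-breaking rule at $\eta = B$, may be needed to make the constants match exactly, but the overall approach is standard.
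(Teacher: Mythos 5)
The paper offers no proof of this lemma (it is quoted verbatim from~\cite{BrownHS23}), so the only question is whether your construction actually delivers the three stated properties. Your mechanism (a noisy threshold test with truncated Laplace noise) is the standard template, and conditions $(2)$ and $(3)$ do follow from your parameters. The gap is in condition $(1)$: the key numerical claim --- that the boundary mass of order $e^{-B\eps/2}$ is at most $\delta$ for $B = \frac{\log\paren{1/\delta}}{\eps}+2$ --- is false. With noise scale $\frac{2}{\eps}$ you get $e^{-B\eps/2} = e^{-\eps}\sqrt{\delta}$, i.e., $\sqrt{\delta}$ rather than $\delta$. Concretely, take $z' = 2B$ and $z = 2B-2$; then $\pr{}{\cM\paren{z'}=\Pass}=0$ while
\[
\pr{}{\cM\paren{z}=\Pass}=\frac{e^{-B\eps/2}\paren{e^{\eps}-1}}{2\paren{1-e^{-B\eps/2}}}\geq\frac{\sqrt{\delta}\paren{1-e^{-\eps}}}{2},
\]
which exceeds $e^{\eps}\cdot 0+\delta=\delta$ whenever $\delta<\paren{1-e^{-\eps}}^2/4\approx\eps^2/4$. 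Since the lemma is asserted for all $\delta\le\frac{\eps}{10}$, this is a genuine failure of $\paren{\eps,\delta}$-DP, not a constant to be absorbed.

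More importantly, the failure cannot be repaired by adjusting the truncation width, the scale, or the tie-breaking, as you suggest at the end. Conditions $(2)$ and $(3)$, read deterministically as stated, confine the noise support to a window of total width $\frac{2\log\paren{1/\delta}}{\eps}+4$. On the other hand, writing $P\paren{z}\coloneqq\pr{}{\cM\paren{z}=\Pass}$ and iterating the two hockey-stick constraints $P\paren{z}\le e^{\eps}P\paren{z+2}+\delta$ (starting from $P\paren{z_0}=0$) and $1-P\paren{z}\le e^{\eps}\paren{1-P\paren{z-2}}+\delta$ (starting from $P\paren{0}=1$) shows that \emph{any} $\paren{\eps,\delta}$-DP threshold test needs width roughly $\frac{4}{\eps}\log\paren{\frac{\eps}{\delta}}$ to pass from certain \Pass\ to certain \Fail; the two requirements are compatible only when $\delta\gtrsim\eps^2$. (For instance, with $\eps=1$ and $\delta=e^{-100}$ the two chains force $P\paren{102}\le e^{-49}$ and $P\paren{102}\ge 1-e^{-49}$ simultaneously.) So no mechanism of any form satisfies the three conditions exactly as transcribed here for very small $\delta$. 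The standard way out --- and what the cited construction amounts to --- is to use untruncated $\Lap\paren{\frac{2}{\eps}}$ noise, which makes condition $(1)$ hold with $\delta=0$ and turns conditions $(2)$ and $(3)$ into guarantees holding with probability $1-\delta$ rather than with certainty (which is all the downstream arguments in this paper actually use); a fully deterministic version requires enlarging the threshold in condition $(3)$ to roughly $\frac{4\log\paren{1/\delta}}{\eps}$.
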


The second result describes how the guarantees of Lemmas~\ref{lem:stable_cov} and~\ref{lem:stable_mean} suffice to satisfy the conditions of Lemma~\ref{lem:bhs_ptr}.

\begin{lemma}[Lemma $9$ from~\cite{BrownHS23}]
\label{lem:stable_conditions_ptr}
Fix dataset size $n$, dimension $d$, outlier threshold $\lambda_0 \geq 1$, reference set $R \subseteq \brk{n_1}$, and privacy parameters $\eps, \delta \in \brk{0, 1}$.
Set $k \coloneqq \cl{\frac{6 \log\paren{\frac{6}{\delta}}}{\eps}} + 4$ (as in Algorithm~\ref{alg:bhs23}).
Assume $n \coloneqq n_1 + 2 n_2$ with $n_1 \geq 32 e^2 k$ and $n_2 \geq 16 e^2 \lambda_0 k$.
Let $X$ and $X'$ be adjacent $d$-dimensional datasets of size $n$.
Let:
\begin{align*}
    \paren{W, \Score_1}
    &\coloneqq \StableCov\paren{X_{> n_1}, \lambda_0, k}, \\
    \paren{W', \Score_1'}
    &\coloneqq \StableCov\paren{X_{> n_1}', \lambda_0, k}, \\
    \paren{v, \Score_2}
    &\coloneqq \StableMean\paren{X_{\le n_1}, W W^{\top}, \lambda_0, k, R}, \\
    \paren{v', \Score_2'}
    &\coloneqq \StableMean\paren{X_{\le n_1}', W' W'^{\top}, \lambda_0, k, R}.
\end{align*}
Then $\abs{\max\brc{\Score_1, \Score_2} - \max\brc{\Score_1', \Score_2'}} \le 2$.
\end{lemma}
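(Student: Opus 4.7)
The plan is to split the argument into two cases depending on which half of the dataset, $X_{\le n_1}$ or $X_{> n_1}$, contains the single position at which $X$ and $X'$ disagree. Since neighbors differ in exactly one entry, precisely one of the two halves is modified while the other is copied verbatim, so in each case one of the two sub-estimators sees identical input in both runs.

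\textbf{Case 1} (the change lies in $X_{> n_1}$). Then \StableMean\ receives the same dataset $X_{\le n_1} = X_{\le n_1}'$ in both runs, so the only way $\Score_2$ can differ from $\Score_2'$ is through the covariance argument supplied to it, namely $W W^{\top}$ versus $W' W'^{\top}$. For $\Score_1$ versus $\Score_1'$, the direct dataset-sensitivity of \StableCov's score (one changed point shifts the discretized outlier-count by at most one quantum) gives $\abs{\Score_1 - \Score_1'} \le 1$. For the covariance side, assuming we are in the non-trivial regime where both $\Score_1, \Score_1' < k$ (otherwise the corresponding maxes already exceed $k$ and a separate, easier clipping argument applies), Lemma~\ref{lem:stable_cov} tells us that $W W^{\top}$ and $W' W'^{\top}$ are $(1 \pm \gamma)$-close in the Loewner order, with $\gamma \le 1/(2k)$. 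I would then unpack the definition of the score inside \StableMean\ (Appendix~\ref{sec:bhs23_details}), which is a discretized count of adversarial modifications needed to violate certain Mahalanobis-thresholds, and argue that a $(1\pm\gamma)$ perturbation of the Mahalanobis metric shifts each discretization bin by at most one index — this is exactly what the parameter choice $\gamma \le 1/(2k)$ is calibrated for. Hence $\abs{\Score_2 - \Score_2'} \le 1$ in this case.

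\textbf{Case 2} (the change lies in $X_{\le n_1}$). Then $X_{> n_1} = X_{> n_1}'$, so \StableCov\ produces the \emph{identical} output on both sides: $\Score_1 = \Score_1'$ and $W = W'$. The two calls to \StableMean\ therefore share the same covariance argument and reference set $R$, but run on adjacent datasets. Applying the direct dataset-sensitivity of \StableMean's score — modifying a single $X_i$ perturbs both the self-contribution of $X_i$ and its contributions to the reference-set distance counts, each by at most one quantum in the discretization — yields $\abs{\Score_2 - \Score_2'} \le 2$.

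To finish, combine the per-coordinate bounds via the elementary inequality
\[
    \abs{\max\brc{a, b} - \max\brc{c, d}} \le \max\brc{\abs{a - c}, \abs{b - d}},
\]
which in Case 1 gives a bound of $1$ and in Case 2 a bound of $2$, so the overall sensitivity of $\max\brc{\Score_1, \Score_2}$ is at most $2$ as claimed. The main obstacle is Case 1: translating the multiplicative Loewner-closeness of $W W^{\top}$ and $W' W'^{\top}$ from Lemma~\ref{lem:stable_cov} into a single-quantum shift of the discretized \StableMean-score requires carefully tracking how the Mahalanobis thresholds inside \StableMean\ deform under a $(1 \pm \gamma)$-rescaling, and this is where the relation between the discretization parameter $k = \lceil 6 \log(6/\delta)/\eps \rceil + 4$ and the perturbation scale $\gamma = 8 e^2 \lambda_0 / n_2$ is used in an essential way. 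Case 2, by contrast, reduces to a purely combinatorial sensitivity check that does not invoke the stability lemmas.
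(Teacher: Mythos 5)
The paper does not actually prove this lemma; it is imported verbatim (modulo the split of the dataset into two disjoint halves) as Lemma $9$ of~\cite{BrownHS23}, so there is no in-paper proof to compare against. Your case split on which half of the dataset contains the modified entry is exactly the natural adaptation of the~\cite{BrownHS23} argument to this two-half formulation, and the combining inequality $\abs{\max\brc{a,b} - \max\brc{c,d}} \le \max\brc{\abs{a-c}, \abs{b-d}}$ is correct, so the overall architecture is sound.

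Two caveats. First, your Case~1 claim that $\abs{\Score_1 - \Score_1'} \le 1$ is almost certainly too strong: changing one point of $X_{> n_1}$ changes one $Y_j$, which perturbs the empirical covariance used inside \LargestGS\ and hence the Mahalanobis norms of \emph{all} remaining points. The standard argument absorbs this by a containment of the form $S_{\ell - 1}\paren{Y} \setminus \brc{j} \subseteq S_{\ell}\paren{Y'}$, which costs $1$ for the deleted index \emph{and} $1$ for the shift in $\ell$, i.e., $\abs{\Score_1 - \Score_1'} \le 2$. This does not break your conclusion, since $\max\brc{2, 1} \le 2$, but the bookkeeping should reflect it. Second, the heart of the lemma --- showing that a one-point (or one-quantum metric) perturbation induces only a single index shift in the discretized score, including the boundary behaviour at $\ell = k$ where the clipping $\min\brc{k, \cdot}$ and the fact that $S_{\ell}$ is computed up to $\ell = 2k$ matter --- is only gestured at in your write-up ("I would then unpack the definition\dots"). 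That nesting argument is where essentially all the work lies, so as written the proposal is a correct plan rather than a complete proof.
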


\begin{comment}
\subsection{Covariance-Aware Mean Estimation and Sampling}
\label{subsec:failure}

Given that, near the end of its execution, Algorithm~\ref{alg:bhs23} involves sampling a point from the Gaussian $\cN\paren{\widehat{\mu}, \widehat{\Sigma}}$, a natural question that arises here is about the extent to which it is possible to directly use the algorithm to resolve the question of sampling, instead of developing a new algorithm.
In this section, we argue it is possible to do so, but it will result in sub-optimal sample complexity.
Indeed, it was shown in~\cite{GhaziHKM23} that, if we want to privately output a sample from a distribution that is $\alpha$-close to $\cN\paren{\mu, \Sigma}$ in TV-distance, the dependence on $\alpha$ in the sample complexity is of the form $\widetilde{\cO}\paren{\mathrm{polylog}\paren{\frac{1}{\alpha}}}$.
Conversely, we will provide evidence that is impossible to achieve
\end{comment}

\section{Near-Optimal Sampling from Unbounded Gaussian Distributions}
\label{sec:sampling_algo}

In this section, we present our sampler, and establish its properties.
The section is split into three parts.
In Section~\ref{subsec:sampler} we present the algorithm and state its guarantees.
Then, in Section~\ref{subsec:uti_proof}, we establish the utility guarantee.
Subsequently, in Section~\ref{subsec:priv_proof}, we show the privacy guarantee.
Finally, in Section~\ref{subsec:wrap}, we tie everything together and identify the final sample complexity.

\subsection{The Sampler and its Guarantees}
\label{subsec:sampler}

Here, we give our sampler for Gaussian distributions with unbounded parameters.
Our algorithm combines elements of Algorithm~\ref{alg:bhs23} with Algorithm $2$ of~\cite{GhaziHKM23}, as was highlighted in Section~\ref{subsec:results_techniques}.
We start by giving the pseudocode here.
\newconstant{C1}
\newconstant{C2}

\begin{algorithm}[H]
    \caption{Unbounded Gaussian Sampler}\label{alg:sampling}
    \hspace*{\algorithmicindent} \textbf{Input:} Dataset $X \coloneqq \paren{X_1, \dots, X_n}^{\top} \in \bR^{n \times d}$; error $\alpha \in \brk{0, 1}$; privacy parameters $\eps, \delta > 0$. \\
    \hspace*{\algorithmicindent} \textbf{Output:} $Z \in \bR^d \cup \brc{\Fail}$.
    \begin{algorithmic}[1]
    \Procedure{\UnbndGaussSamp$_{\paren{\eps, \delta}, \alpha}$}{$X$}
        \State \Comment{Initialize.}
        \State Let $\lambda_0 \gets 4 d + 8 \sqrt{d \ln\paren{\frac{3 n}{\alpha}}} + 8 \ln\paren{\frac{3 n}{\alpha}}$.
        \State Let $n_1 \gets \useconstant{C1} \frac{\sqrt{\lambda_0} \log\paren{\frac{1}{\delta}}}{\eps}$ for a sufficiently large constant $\useconstant{C1} \geq 1$.\Comment{Require $n = n_1 + 2 n_2$.}
        \State Let $n_2 \gets \useconstant{C2} \frac{\lambda_0 \log\paren{\frac{1}{\delta}}}{\eps}$ for a sufficiently large constant $\useconstant{C2} \geq 1$.
        \State Let $k \gets \cl{\frac{6 \log\paren{\frac{6}{\delta}}}{\eps}} + 4$; $M \gets 6 k + \cl{18 \log\paren{\frac{16 n}{\delta}}}$.\label{ln:init}
        \State \Comment{Calculate stable estimates.}
        \State Let $\paren{W, \Score_1} \gets \StableCov_{\lambda_0, k}\paren{X_{> n_1}}$.
        \State Let $\widehat{\Sigma} \gets W W^{\top}$.
        \State Let $R \sim \cU\paren{\brc{R' \subseteq \brk{n} \colon \abs{R'} = M}}$.\label{ln:repr_set}
        \State Let $\paren{v, \Score_2} \gets \StableMean_{\widehat{\Sigma}, \lambda_0, k, R}\paren{X_{\le n_1}}$.
        \State Let $\widehat{\mu} \gets \sum\limits_{i \in \brk{n_1}} v_i X_i$.
        \State \Comment{PTR step.}
        \If {$\cM_{\mathrm{PTR}}^{\paren{\frac{\eps}{3}, \frac{\delta}{6}}}\paren{\max\brc{\Score_1, \Score_2}} = \Pass$}\label{ln:ptr}
            \State Let $z \sim \cU\paren{\bS^{n_2 - 1}}$.\label{ln:post_ptr1}
            \State Let $Z \gets \widehat{\mu} + \sqrt{\paren{1 - \frac{1}{n_1}} n_2} W z$.\Comment{$W z = \sum\limits_{i \in \brk{n_2}} z_i \sqrt{w_i} Y_i$ for $Y_i \coloneqq \frac{1}{\sqrt{2}} \paren{X_{n_1 + i} - X_{n_1 + n_2 + i}}$.}
            \State \Return $Z$.\label{ln:post_ptr2}
        \Else
            \State Let $Z \gets \Fail$.
            \State \Return $Z$.
        \EndIf
    \EndProcedure
    \end{algorithmic}
\end{algorithm}

We now formally state the guarantees of Algorithm~\ref{alg:sampling}.

\begin{theorem}
\label{thm:main_thm_formal}
For any $\alpha \in \brk{0, 1}, \eps \in \brk{0, 1}, \delta \in \brk{0, \frac{\eps}{10}}$, when given:
\[
    n = \cO\paren{\frac{\log\paren{\frac{1}{\delta}}}{\eps} \paren{d + \log\paren{\frac{\log\paren{\frac{1}{\delta}}}{\alpha \eps}}}},
\]
input samples, Algorithm~\ref{alg:sampling} has the following guarantees:
\begin{itemize}
    \item \textbf{Privacy.}
    Algorithm~\ref{alg:sampling} is $(\eps, \delta)$-DP.
    
    \item \textbf{Utility.}
    Let $\cN(\mu, \Sigma)$ be any Gaussian distribution in $d$-dimensions with full-rank covariance.
    Then, given a dataset $X \sim \cN(\mu, \Sigma)^{\otimes n}$, Algorithm~\ref{alg:sampling} has runtime polynomial with respect to the size of the input, and produces output $Z \in \bR^d \cup \brc{\Fail}$ such that $\TV(Z, \cN(\mu, \Sigma)) \le \alpha$.
\end{itemize}
\end{theorem}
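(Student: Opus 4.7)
I would split the proof into the same three subsections that the paper commits to: utility, privacy, and wrap-up. The privacy claim decomposes via Lemma~\ref{lem:composition_halting} into (a) an $\paren{\eps/3, \delta/6}$-DP guarantee for the PTR step and (b) a $\paren{2\eps/3, 5\delta/6}$-DP guarantee for the sampling step under the condition $\Psi$ that PTR passes. The utility claim follows from identifying a high-probability ``good'' event on which the algorithm produces \emph{exactly} a draw from $\cN\paren{\mu, \Sigma}$.

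\textbf{Utility.} I would define an event $E$ over $X \sim \cN\paren{\mu, \Sigma}^{\otimes n}$ and the random reference set $R$ on which: (i) $\Score_1 = \Score_2 = 0$, so PTR passes by Lemma~\ref{lem:bhs_ptr}(2); (ii) the weights produced by \StableCov and \StableMean are uniform, making $\widehat{\mu} = \frac{1}{n_1} \sum_{i \in \brk{n_1}} X_i \sim \cN\paren{\mu, \Sigma / n_1}$ and the columns of $W$ equal to $\frac{1}{\sqrt{n_2}} Y_i$ with $Y_i \overset{\iid}{\sim} \cN\paren{0, \Sigma}$; and (iii) $R$ is degree-representative for $X_{\le n_1}$. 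A union bound over the standard Gaussian concentration estimates of~\cite{BrownHS23}, using the choice $\lambda_0 = \Theta\paren{d + \log\paren{n/\alpha}}$ which forces all pairwise Mahalanobis squared distances to lie below $e^2 \lambda_0$, would give $\pr{}{E^c} \le \alpha$. Conditional on $E$, Gaussian $2$-stability (Lemma~\ref{lem:stability_gaussian}) applied to the unit vector $z$ yields $\sqrt{(1 - 1/n_1) n_2}\, W z = \sqrt{1 - 1/n_1}\, \sum_{i \in \brk{n_2}} z_i Y_i \sim \cN\paren{0, (1 - 1/n_1) \Sigma}$; combining with the independent $\widehat{\mu}$ term gives $Z \mid E \sim \cN\paren{\mu, \Sigma}$, so $\TV\paren{Z, \cN\paren{\mu, \Sigma}} \le \pr{}{E^c} \le \alpha$.

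\textbf{Privacy and main obstacle.} Lemma~\ref{lem:bhs_ptr}(1) together with the sensitivity bound of Lemma~\ref{lem:stable_conditions_ptr} immediately certifies the $\paren{\eps/3, \delta/6}$-DP guarantee of the PTR step. Under $\Psi$, Lemma~\ref{lem:bhs_ptr}(3) forces $\max\brc{\Score_1, \Score_2} < k$, so for neighboring $X, X'$ both satisfying $\Psi$, Lemmas~\ref{lem:stable_cov} and~\ref{lem:stable_mean} yield $\paren{1 - \gamma} \widehat{\Sigma} \preceq \widehat{\Sigma}' \preceq \widehat{\Sigma}/(1 - \gamma)$ with $\gamma = O\paren{\lambda_0/n_2}$ and $\mnorm{\widehat{\Sigma}}{\widehat{\mu} - \widehat{\mu}'}^2 = O\paren{\lambda_0/n_1^2}$. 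The hard step is bounding $D_{e^{2\eps/3}}\paren{Z \| Z'} \le 5\delta/6$ when the ``noise'' $W z$ is not Gaussian but uniform on the ellipsoid $W \bS^{n_2 - 1}$. Following the technical overview, I would couple $z \overset{d}{=} g/\llnorm{g}$ with $g \sim \cN\paren{0, \id_{n_2}}$: then $W g \sim \cN\paren{0, \widehat{\Sigma}}$ is honestly Gaussian, and the noise equals $(\sqrt{n_2}/\llnorm{g}) \cdot \sqrt{1 - 1/n_1}\, W g$. Conditional on $\llnorm{g}^2 = t$, the sampling step becomes an instance of the empirically rescaled Gaussian mechanism of~\cite{BrownGSUZ21, BrownHS23, AlabiKTVZ23} with mean $\widehat{\mu}$ and covariance $\tfrac{(1 - 1/n_1) n_2}{t}\widehat{\Sigma}$, so the standard Hockey-Stick calculation for the Gaussian mechanism feeds directly off the spectral/Mahalanobis bounds above; I would then marginalize over $\llnorm{g}^2 \sim \chi^2_{n_2}$ and absorb its sub-exponential tail into the $\delta$ slack. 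The two cases — the differing datapoint lying in $X_{\le n_1}$ (affecting only $v$ and $\widehat{\mu}$) versus in $X_{> n_1}$ (affecting $W$, and thus $\widehat{\Sigma}$, $v$, and $\widehat{\mu}$) — must be treated separately, the second being exactly where the Beta-distribution scaling argument sketched in the overview is needed, lifted from the univariate to the ellipsoidal setting. The sample complexity then drops out by plugging in $\lambda_0 = \Theta\paren{d + \log\paren{n/\alpha}}$, $n_1 = \Theta\paren{\sqrt{\lambda_0} \log\paren{1/\delta}/\eps}$, $n_2 = \Theta\paren{\lambda_0 \log\paren{1/\delta}/\eps}$, and solving the resulting implicit equation for $n$.
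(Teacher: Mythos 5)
Your utility argument and the overall skeleton (composition-with-halting, PTR via Lemmas~\ref{lem:bhs_ptr} and~\ref{lem:stable_conditions_ptr}, stability of the weighted estimates via Lemmas~\ref{lem:stable_cov} and~\ref{lem:stable_mean}, sample-complexity bookkeeping) match the paper. The gap is in the core of the privacy analysis for the post-PTR step.

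Your proposed reduction to the Gaussian mechanism does not work. Writing $z = g / \llnorm{g}$ with $g \sim \cN\paren{0, \id_{n_2}}$, the direction $g/\llnorm{g}$ and the radius $\llnorm{g}$ are \emph{independent}, so conditioning on $\llnorm{g}^2 = t$ leaves the law of $z$ — and hence of the noise $\sqrt{\paren{1 - \frac{1}{n_1}} n_2}\, W z$ — completely unchanged: conditionally on $t$ you still have $\paren{\sqrt{n_2}/\llnorm{g}} W g = \sqrt{n_2}\, W z$ with $z$ uniform on $\bS^{n_2 - 1}$, which is \emph{not} $\cN\paren{0, \frac{n_2}{t} \widehat{\Sigma}}$ or any other Gaussian. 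The conditioning therefore does not "Gaussianize" the mechanism, and the standard Hockey-Stick calculation for the (empirically rescaled) Gaussian mechanism cannot be invoked. This is precisely why the paper instead works with the exact marginal density of $z_{\le d}$, which is the Beta-type density $\propto \paren{1 - \llnorm{x}^2}^{\frac{n_2 - d}{2} - 1}$ of Fact~\ref{fact:unit_sphere_proj_dist}: the entire argument of Sections~\ref{subsubsec:priv_anal_multi1}--\ref{subsubsec:priv_anal2} consists of computing the density of the transformed variable $T$ (Lemma~\ref{lem:t_density}), identifying pointwise sufficient conditions for the log-density ratio to be at most $\frac{\eps}{2}$ (Lemmas~\ref{lem:priv_anal_multi1_suff_cond} and~\ref{lem:priv_anal2_suff_cond}), and bounding the failure probability of those conditions. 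The Gaussian representation $z = g/\llnorm{g}$ is used in the paper only once and for a much narrower purpose: to convert the tail event $s^{\top} \paren{\widehat{\Sigma}^{\frac{1}{2}} \widehat{\Sigma}'^{-1} \widehat{\Sigma}^{\frac{1}{2}} - \id} s > \frac{\eps}{4 n_2}$ into the scale-free event $G^{\top} A G > 0$ so that Hanson--Wright (Fact~\ref{fact:hanson_wright}) applies; this step, together with the trace/spectral/Frobenius bounds on $A$ (Lemma~\ref{lem:matrix_bounds}), is what delivers the optimal $\widetilde{\cO}\paren{\lambda_0}$ dependence, and none of it is subsumed by your sketch. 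Two smaller omissions: you do not say how to split $D_{e^{\eps}}\paren{\widehat{\mu} + c W z \,\|\, \widehat{\mu}' + c W' z}$ into a mean-shift term and a covariance term — the paper needs the weak triangle inequality for the Hockey-Stick divergence (Lemma~\ref{lem:hs_div_triangle_ineq}), which is not automatic since $D_{e^{\eps}}$ is not a metric — and your budget split $\paren{2\eps/3, 5\delta/6}$ ignores the $\delta/6$ the paper reserves for the reference set $R$ being degree-representative.
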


\begin{remark}
\label{rem:main_thm_full_rank}
Theorem~\ref{thm:main_thm_formal} is stated in terms of Gaussians with full-rank covariances $\Sigma$.
However, the result holds even for degenerate ones, with a slight modification of the algorithm.
Specifically, we would have to add a pre-processing step that employs a subspace-finding algorithm (e.g.,~\cite{AshtianiL22}) to identify the subspace corresponding to the directions of non-zero variance, and then project everything to that subspace.
These extra steps require $\cO\paren{\frac{d \log\paren{\frac{1}{\delta}}}{\eps}}$ samples, and run in polynomial time (see Theorem $4.1$ of~\cite{AshtianiL22}), while all subsequent steps will require the same number of samples that Theorem~\ref{thm:main_thm_formal} predicts, but $d$ will be replaced by the rank of $\Sigma$.
Thus, the sample and time complexity guarantees of the resulting algorithm are within the same order of magnitude as the ones given in Theorem~\ref{thm:main_thm_formal}.
\end{remark}

\begin{remark}
\label{rem:main_thm_time}
We do not perform an explicit analysis of the time complexity of our algorithm, but the claim of polynomial runtime should be clear.
Indeed, Algorithm~\ref{alg:sampling} can essentially be viewed as a direct combination of the operations performed by the algorithms given in~\cite{BrownHS23} and~\cite{GhaziHKM23}, both of which have polynomial runtime.
\end{remark}

The rest of this work will be devoted to proving Theorem~\ref{thm:main_thm_formal}.
The proof of the utility guarantee will be given in Section~\ref{subsec:uti_proof} (Theorem~\ref{thm:uti_guarantee}).
Then, the proof of the privacy guarantee will be given in Section~\ref{subsec:priv_proof} (Theorem~\ref{thm:priv_guarantee}).
Finally, in Section~\ref{subsec:wrap}, we will put things together to identify the final sample complexity, and complete the proof of Theorem~\ref{thm:main_thm_formal}.

\subsection{Establishing the Utility Guarantee}
\label{subsec:uti_proof}

In this section, we will show that, if the input dataset is drawn from $\cN\paren{\mu, \Sigma}$, and $\lambda_0 \geq 4 d + 8 \sqrt{d \ln\paren{\frac{3 n}{\alpha}}} + 8 \ln\paren{\frac{3 n}{\alpha}}$, the output $Z$ of Algorithm~\ref{alg:sampling} will satisfy $\TV\paren{Z, \cN\paren{\mu, \Sigma}} \le \alpha$.
Showing this will require us to argue that, with high probability, there will be no significant outliers in the dataset.
This follows from the strong concentration properties of the Gaussian distribution.
In that case, the weights assigned to datapoints by \StableCov\ and \StableMean\ will be uniform.
Intuitively, this yields:
\[
    \widehat{\mu} = \frac{1}{n_1} \sum\limits_{i \in \brk{n_1}} X_i \sim \cN\paren{\mu, \frac{1}{n_1} \Sigma} \text{ and } W z = \frac{1}{\sqrt{n_2}} \sum\limits_{i \in \brk{n_2}} z_i Y_i \sim \cN\paren{0, \frac{1}{n_2} \Sigma} \implies Z \sim \cN\paren{\mu, \Sigma}.
\]
We stress that the above argument is not formal by any means.
Indeed, as mentioned above, for the weights to be uniform, we conditioned on the dataset being well-concentrated.
This conditioning affects the input distribution, making it a \emph{truncated Gaussian} instead of a regular Gaussian.
Despite this, as we will see, the intuition is correct, and will lead us to a formal result that goes through an application of the \emph{joint convexity} property of $f$-divergences (Fact~\ref{fact:joint_convexity}).

We start working now towards making the above intuition more formal.
Let $\widebar{\mu}$ and $\widebar{\Sigma}$ denote the standard empirical mean and empirical covariance of $X$ and $Y$ respectively, i.e., $\widebar{\mu} \coloneqq \frac{1}{n_1} \sum\limits_{i \in \brk{n_1}} X_i$ and $\widebar{\Sigma} \coloneqq \frac{1}{n_2} \sum\limits_{i \in \brk{n_2}} Y_i Y_i^{\top}$.
In a series of lemmas, we will identify sufficient conditions which establish that, with probability at least $1 - \alpha$, we have $\widehat{\mu} = \widebar{\mu}$ and $\widehat{\Sigma} = \widebar{\Sigma}$, i.e., $v_i = \frac{1}{n_1}, \forall i \in \brk{n_1}$ and $w_i = \frac{1}{n_2}, \forall i \in \brk{n_2}$, and the PTR step of Line~\ref{ln:ptr} succeeds.
To help us in the proof, we define the following three events of interest:
\begin{align*}
    E_1
    &\coloneqq \brc{\mnorm{\Sigma}{X_i - \mu}^2 \le \frac{\lambda_0}{4}, \forall i \in \brk{n_1}}, \\
    E_2
    &\coloneqq \brc{\mnorm{\Sigma}{Y_i}^2 \le \frac{\lambda_0}{4}, \forall i \in \brk{n_2}}, \\  
    E_3
    &\coloneqq \brc{\llnorm{\Sigma^{\frac{1}{2}} \widebar{\Sigma}^{-1} \Sigma^{\frac{1}{2}} - \id} \le 2}, \\
    E
    &\coloneqq E_1 \cap E_2 \cap E_3.
\end{align*}
Our goal is to show that $E$ is a sufficient condition for the weights outputted by both \StableCov\ and \StableMean\ to be uniform, and to have $\Score_1 = \Score_2 = 0$.
However, before doing so, we identify sufficient conditions for each of the events $E_1, E_2$, and $E_3$ to hold with probability at least $1 - \frac{\alpha}{3}$.
A union bound then implies that $E$ must hold with probability at least $1 - \alpha$.

We start by identifying a bound on the outlier threshold $\lambda_0$ that is used by \StableCov\ and \StableMean\ that ensures that each of the events $E_1$ and $E_2$ occurs with probability at least $1 - \frac{\alpha}{3}$.

\begin{lemma}
\label{lem:utility1}
Let $X \coloneqq \paren{X_1, \dots, X_{n_1}} \sim \cN\paren{\mu, \Sigma}^{\otimes n_1}$ and $Y \coloneqq \paren{Y_1, \dots, Y_{n_2}} \sim \cN\paren{0, \Sigma}^{\otimes n_2}$.
Then, for $n \coloneqq n_1 + 2 n_2$, given $\lambda_0 \geq 4 d + 8 \sqrt{d \ln\paren{\frac{3 n}{\alpha}}} + 8 \ln\paren{\frac{3 n}{\alpha}}$, we have $\pr{}{E_1} \geq 1 - \frac{\alpha}{3}$ and $\pr{}{E_2} \geq 1 - \frac{\alpha}{3}$.
\end{lemma}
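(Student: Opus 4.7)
The plan is to reduce each event to a standard chi-squared tail bound and then apply a union bound. Observe that if $X_i \sim \cN(\mu, \Sigma)$, then $\Sigma^{-1/2}(X_i - \mu) \sim \cN(0, \id_d)$, so the quantity $\mnorm{\Sigma}{X_i - \mu}^2 = \llnorm{\Sigma^{-1/2}(X_i - \mu)}^2$ is distributed as $\chi^2_d$. Likewise, since the construction $Y_i = \frac{1}{\sqrt{2}}(X_{n_1+i} - X_{n_1+n_2+i})$ yields $Y_i \sim \cN(0, \Sigma)$, we have $\mnorm{\Sigma}{Y_i}^2 \sim \chi^2_d$ as well.

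Next, I would invoke the Laurent--Massart tail bound (a standard concentration fact which I expect is already included in Appendix~\ref{sec:prob_facts}): for $Z \sim \chi^2_d$ and any $t \geq 0$,
\[
    \pr{}{Z \geq d + 2\sqrt{dt} + 2t} \leq e^{-t}.
\]
Setting $t = \ln(3n/\alpha)$, the threshold becomes $d + 2\sqrt{d\ln(3n/\alpha)} + 2\ln(3n/\alpha)$, and the failure probability is at most $\alpha/(3n)$. Under the assumption $\lambda_0 \geq 4d + 8\sqrt{d\ln(3n/\alpha)} + 8\ln(3n/\alpha)$, we have $\lambda_0/4 \geq d + 2\sqrt{d\ln(3n/\alpha)} + 2\ln(3n/\alpha)$, so each single-index deviation event occurs with probability at most $\alpha/(3n)$.

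Finally, I would union bound over the $n_1 \leq n$ indices in the definition of $E_1$ to conclude $\pr{}{E_1^c} \leq n_1 \cdot \alpha/(3n) \leq \alpha/3$, and analogously over the $n_2 \leq n$ indices in $E_2$. I do not anticipate any real obstacle here: the only minor item to verify carefully is that $Y_i \sim \cN(0, \Sigma)$ (since $X_{n_1+i}, X_{n_1+n_2+i}$ are independent draws from $\cN(\mu, \Sigma)$, their difference has covariance $2\Sigma$, and the $1/\sqrt{2}$ rescaling brings the covariance back to $\Sigma$), and that $n_1, n_2 \leq n$ so that the union bound constant $3n$ in the tail parameter suffices for both events.
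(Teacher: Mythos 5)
Your proposal is correct and follows essentially the same route as the paper: reduce $\mnorm{\Sigma}{X_i-\mu}^2$ to a $\chi^2_d$ variable, apply the Laurent--Massart tail bound (the paper's Fact~\ref{fact:hd_gaussian_tail}, which states both forms you and the paper use as equivalent), and union bound over at most $n$ indices. The arithmetic checking that $\lambda_0/4$ dominates the $t=\ln(3n/\alpha)$ threshold is exactly what the paper's choice of $\lambda_0$ is designed for.
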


\begin{proof}
We will show the result for $E_1$, and the proof for $E_2$ is entirely analogous.
We note that, for all $i \in \brk{n_1}$, we have $X_i - \mu \sim \cN\paren{0, \Sigma} \implies \Sigma^{- \frac{1}{2}} \paren{X_i - \mu} \sim \cN\paren{0, \id}$.
By a union bound and Fact~\ref{fact:hd_gaussian_tail}, we get:
\[
    \pr{}{E_1^c} \le n_1 \pr{Z \sim \cN\paren{0, \id}}{\llnorm{Z}^2 \geq \frac{\lambda_0}{4}} \le n_1 e^{-\frac{\paren{\sqrt{\frac{\lambda_0}{2} - d} - \sqrt{d}}^2}{4}} \le \frac{\alpha}{3},
\]
where the last inequality follows from our bound on $\lambda_0$.
\end{proof}

We now proceed to identify a bound on $n_2$ such that $E_3$ holds.

\begin{lemma}
\label{lem:utility2}
Let $Y \coloneqq \paren{Y_1, \dots, Y_{n_2}} \sim \cN\paren{0, \id}^{\otimes n_2}$.
For $n_2 \geq \cO\paren{d + \log\paren{\frac{1}{\alpha}}}$, we have $\pr{}{E_3} \geq 1 - \frac{\alpha}{3}$.
\end{lemma}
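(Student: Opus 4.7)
The plan is to reduce the event $E_3$ to a standard Gaussian sample-covariance concentration bound, and then convert the resulting spectral closeness of $\bar{\Sigma}$ to $\id$ into the desired bound on $\bar{\Sigma}^{-1}$. Note that, since the lemma is stated with $Y_i \sim \cN\paren{0, \id}$ (i.e., we have effectively whitened away $\Sigma$, which leaves $E_3$ rotation-invariant), it suffices to show that
\[
    \llnorm{\bar{\Sigma}^{-1} - \id} \le 2, \quad \text{where } \bar{\Sigma} \coloneqq \tfrac{1}{n_2} \sum\limits_{i \in \brk{n_2}} Y_i Y_i^{\top},
\]
holds with probability at least $1 - \frac{\alpha}{3}$.

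First, I would invoke a standard Gaussian sample-covariance concentration inequality (e.g., the Davidson--Szarek bound on the extremal singular values of a Gaussian matrix, applied to the $n_2 \times d$ matrix whose rows are $Y_i^{\top}$, which presumably appears among the facts collected in Appendix~\ref{sec:prob_facts}). For $Y_1, \dots, Y_{n_2}$ drawn $\iid$ from $\cN\paren{0, \id}$ and $n_2 \geq C \paren{d + \log\paren{\frac{1}{\alpha}}}$ with a sufficiently large absolute constant $C$, this yields
\[
    \pr{}{\llnorm{\bar{\Sigma} - \id} \le \tfrac{1}{2}} \geq 1 - \tfrac{\alpha}{3}.
\]

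Second, I would convert the spectral bound on $\bar{\Sigma}$ into the spectral bound on $\bar{\Sigma}^{-1}$ required by $E_3$ via elementary eigenvalue inversion. On the event $\llnorm{\bar{\Sigma} - \id} \le \frac{1}{2}$, all eigenvalues of $\bar{\Sigma}$ lie in $\brk{\frac{1}{2}, \frac{3}{2}}$, so all eigenvalues of $\bar{\Sigma}^{-1}$ lie in $\brk{\frac{2}{3}, 2}$, and therefore the eigenvalues of $\bar{\Sigma}^{-1} - \id$ lie in $\brk{-\frac{1}{3}, 1}$. Hence $\llnorm{\bar{\Sigma}^{-1} - \id} \le 1 \le 2$, establishing $E_3$.

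I do not anticipate a real obstacle here: the argument is a one-line application of a standard Gaussian covariance concentration bound followed by an elementary eigenvalue manipulation. The only points of care are (i) tracking the numerical constants so that the $\cO\paren{d + \log\paren{\frac{1}{\alpha}}}$ rate in $n_2$ really suffices to drive the operator-norm deviation below $\frac{1}{2}$, and (ii) confirming that whichever form of the concentration bound the paper has available (in Appendix~\ref{sec:prob_facts}) delivers a deviation of at most $\frac{1}{2}$, so that the resulting $1$-bound on $\bar{\Sigma}^{-1} - \id$ comfortably satisfies the $2$-bound appearing in the definition of $E_3$.
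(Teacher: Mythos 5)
Your proposal is correct and matches the paper's argument: both invoke the standard Gaussian sample-covariance concentration bound (Fact~\ref{fact:gaussian_spec_conc} in Appendix~\ref{sec:prob_facts}) to get $\bar{\Sigma}$ spectrally close to $\id$ with probability $1 - \frac{\alpha}{3}$ under $n_2 \geq \cO\paren{d + \log\paren{\frac{1}{\alpha}}}$, and then invert eigenvalues to bound $\llnorm{\bar{\Sigma}^{-1} - \id}$. The paper tolerates a deviation of $\frac{2}{3}$ (eigenvalues of $\bar{\Sigma}$ in $\brk{\frac{1}{3}, \frac{5}{3}}$, hence $\llnorm{\bar{\Sigma}^{-1} - \id} \le 2$ exactly) where you use $\frac{1}{2}$, but this is an immaterial difference in constants.
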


\begin{proof}
We note that $\Sigma^{- \frac{1}{2}} \widebar{\Sigma} \Sigma^{- \frac{1}{2}} = \frac{1}{n_2} \sum\limits_{i \in \brk{n_2}} \paren{\Sigma^{- \frac{1}{2}} X_i} \paren{\Sigma^{- \frac{1}{2}} X_i}^{\top}$, where $\Sigma^{- \frac{1}{2}} X_i \sim \cN\paren{0, \id}$.
By Fact~\ref{fact:gaussian_spec_conc} and our sample complexity bound, we have with probability at least $1 - \frac{\alpha}{3}$:
\begin{align*}
    \frac{1}{3} \id \preceq \Sigma^{- \frac{1}{2}} \widebar{\Sigma} \Sigma^{- \frac{1}{2}} \preceq \frac{5}{3} \id \iff \frac{3}{5} \id \preceq \Sigma^{\frac{1}{2}} \widebar{\Sigma}^{- 1} \Sigma^{\frac{1}{2}} \preceq 3 \id
    &\iff - \frac{2}{5} \id \preceq \Sigma^{\frac{1}{2}} \widebar{\Sigma}^{- 1} \Sigma^{\frac{1}{2}} - \id \preceq 2 \id \\
    &\implies \llnorm{\Sigma^{\frac{1}{2}} \widebar{\Sigma}^{- 1} \Sigma^{\frac{1}{2}} - \id} \le 2,
\end{align*}
yielding the desired result.
\end{proof}

We now leverage the previous two lemmas to establish that, with high probability, the output of \StableCov\ will coincide with the standard empirical covariance $\widebar{\Sigma}$ and $\Score_1$ will be $0$.

\begin{lemma}
\label{lem:stable_cov_uti}
Let $X \coloneqq \paren{X_1, \dots, X_n} \sim \cN\paren{\mu, \Sigma}^{\otimes n}$, where $n \coloneqq n_1 + 2 n_2$ and $n_2 \geq \cO\paren{d + \log\paren{\frac{1}{\alpha}}}$.
Also, let $\lambda_0 \geq 4 d + 8 \sqrt{d \ln\paren{\frac{3 n}{\alpha}}} + 8 \ln\paren{\frac{3 n}{\alpha}}$.
Then, except with probability $\frac{2 \alpha}{3}$, $E_2 \cap E_3$ occurs, and the weights $w_i$ assigned to datapoints by $\StableCov_{\lambda_0, k}\paren{X_{> n_1}}$ will satisfy $w_i = \frac{1}{n_2}, \forall i \in \brk{n_2}$, and we will have $\widebar{\Sigma} = \widehat{\Sigma}$, and $\Score_1 = 0$.
\end{lemma}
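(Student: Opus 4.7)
The plan is to break the proof into two stages: (i) establish the probability bound on $E_2 \cap E_3$, and (ii) verify that, on this event, the algorithmic claims (uniform weights, $\widehat{\Sigma} = \widebar{\Sigma}$, $\Score_1 = 0$) follow from the stability structure of \StableCov.

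First, I would observe that the inputs to \StableCov\ are $X_{>n_1}$, which the algorithm internally pairs to form $Y_i \coloneqq \frac{1}{\sqrt 2}(X_{n_1+i} - X_{n_1+n_2+i})$. By Gaussian $2$-stability, these $Y_i$'s are i.i.d.\ $\cN(0,\Sigma)$, so Lemma~\ref{lem:utility1} directly gives $\pr{}{E_2}\ge 1 - \alpha/3$ under the given bound on $\lambda_0$, and Lemma~\ref{lem:utility2} gives $\pr{}{E_3}\ge 1 - \alpha/3$ under the given bound on $n_2$. A union bound then yields $\pr{}{E_2\cap E_3}\ge 1 - 2\alpha/3$, handling the probabilistic content of the statement.

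Next, I would argue deterministically on the event $E_2 \cap E_3$. The event $E_2$ says $\mnorm{\Sigma}{Y_i}^2 \le \lambda_0/4$ for every $i$, so by the triangle inequality $\mnorm{\Sigma}{Y_i - Y_j} \le \sqrt{\lambda_0}$, i.e.\ $\mnorm{\Sigma}{Y_i-Y_j}^2 \le \lambda_0$ for every pair $i,j$. The event $E_3$, via $\Sigma^{1/2}\widebar{\Sigma}^{-1}\Sigma^{1/2} \preceq 3\id$, translates this to $\mnorm{\widebar{\Sigma}}{Y_i - Y_j}^2 \le 3\lambda_0 < e^2\lambda_0$, which is precisely the threshold governing the neighborhood sets used inside \StableCov\ (cf.\ Definition~\ref{def:deg_rep} for the analogous notion in \StableMean). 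Thus every pair $(i,j)$ is a neighbor in the geometry determined by the algorithm, and in particular every point is ``deep in the interior'' of the dataset.

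Finally, I would appeal to the soft-outlier-removal structure of \StableCov\ detailed in Appendix~\ref{sec:bhs23_details}: the weights are downweighted only when some point has too few neighbors relative to the threshold, and the score $\Score_1$ counts (in a discretized manner) the number of points whose neighborhoods are deficient. When every point is a neighbor of every other point, neither condition triggers any deviation from the baseline, which sets $w_i = 1/n_2$ and $\Score_1 = 0$. With uniform weights, each column of $W$ equals $\frac{1}{\sqrt{n_2}} Y_i$, so $\widehat{\Sigma} = W W^{\top} = \frac{1}{n_2}\sum_{i\in[n_2]} Y_i Y_i^{\top} = \widebar{\Sigma}$, completing the proof. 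The main obstacle here is a bookkeeping one: one must carefully reconcile the normalizations and neighborhood thresholds actually used by \StableCov\ (as specified in the appendix) with the Mahalanobis bounds obtained from $E_2$ and $E_3$, so that ``no outliers'' really does translate, per the internal logic of the algorithm, to uniform weights and a vanishing score.
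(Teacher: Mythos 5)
Your first stage (the union bound giving $\pr{}{E_2 \cap E_3} \geq 1 - \frac{2\alpha}{3}$ via Lemmas~\ref{lem:utility1} and~\ref{lem:utility2}) is correct and is exactly what the paper does. The gap is in your deterministic stage: you analyze the wrong outlier criterion. The pairwise-distance condition $\mnorm{\widehat{\Sigma}}{Y_i - Y_j}^2 \le e^2\lambda_0$ that you verify is the neighborhood test used by \LargestCore\ inside \StableMean\ (and in Definition~\ref{def:deg_rep}); it is not what \StableCov\ checks. \StableCov\ calls \LargestGS\ (Algorithm~\ref{alg:largest_gs}), which iteratively removes any point with $\llnorm{A^{-\frac{1}{2}} Y_i}^2 > \lambda$ where $A$ is the empirical second-moment matrix of the currently retained points --- a per-point, origin-centered Mahalanobis test, not a pairwise one. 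So the quantity you must control is $\llnorm{\widebar{\Sigma}^{-\frac{1}{2}} Y_i}^2$, and the correct chain is $\llnorm{\widebar{\Sigma}^{-\frac{1}{2}} Y_i}^2 \le \llnorm{\widebar{\Sigma}^{-\frac{1}{2}}\Sigma^{\frac{1}{2}}}^2 \mnorm{\Sigma}{Y_i}^2 \le 3 \cdot \frac{\lambda_0}{4} < \lambda_0$, using $E_2$ for the second factor and $E_3$ (via $\llnorm{\Sigma^{\frac{1}{2}}\widebar{\Sigma}^{-1}\Sigma^{\frac{1}{2}}} \le 3$) for the first.

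The distinction is not mere bookkeeping, because the constants are tight in a way your route cannot absorb. For $\Score_1 = 0$ you need $|S_0| = n_2$, i.e.\ the entire set must survive \LargestGS\ at the \emph{smallest} threshold $e^{0}\lambda_0 = \lambda_0$, not at $e^2\lambda_0$. The paper's bound $3 \cdot \frac{\lambda_0}{4} < \lambda_0$ clears this threshold precisely because $E_2$ was defined with the factor $\frac{1}{4}$; your pairwise bound of $3\lambda_0$ only clears $e^2\lambda_0$ and would not give a vanishing score even if pairwise distances were the relevant statistic. Once the correct per-point bound is in place, the rest of your argument goes through: no point is ever removed in the first pass of \LargestGS, so $A$ stays equal to $\widebar{\Sigma}$ and the loop terminates with $S_\ell = \brk{n_2}$ for every $\ell$, which forces $w_i = \frac{1}{n_2}$, $\Score_1 = 0$, and your computation $\widehat{\Sigma} = W W^{\top} = \frac{1}{n_2}\sum_{i \in \brk{n_2}} Y_i Y_i^{\top} = \widebar{\Sigma}$ is correct.
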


\begin{proof}
The assumptions of Lemmas~\ref{lem:utility1} and~\ref{lem:utility2} are satisfied, thus a union bound implies that $E_2 \cap E_3$ is realized, except with probability $\frac{2 \alpha}{3}$.
For the following, we condition over this event.
Under this assumption, it suffices to prove that \LargestGS\ (Algorithm~\ref{alg:largest_gs} in Appendix~\ref{sec:bhs23_details}) will return the entire set $\brk{n_2}$ every time it is called by \StableCov\ (Algorithm~\ref{alg:stablecov} in Appendix~\ref{sec:bhs23_details}).
For this to happen, it suffices to have $\llnorm{\widebar{\Sigma}^{- \frac{1}{2}} Y_i}^2 \le \lambda_0, \forall i \in \brk{n_2}$, where $Y_i \coloneqq \frac{1}{\sqrt{2}} \paren{X_{n_1 + i} - X_{n_1 + n_2 + i}} \sim \cN\paren{0, \Sigma}$.
We note that the definition of the spectral norm implies the upper bound:
\[
    \llnorm{\widebar{\Sigma}^{- \frac{1}{2}} Y_i}^2 = \llnorm{\paren{\widebar{\Sigma}^{- \frac{1}{2}} \Sigma^{\frac{1}{2}}} \paren{\Sigma^{- \frac{1}{2}} Y_i}}^2 \le \llnorm{\widebar{\Sigma}^{- \frac{1}{2}} \Sigma^{\frac{1}{2}}}^2 \llnorm{\Sigma^{- \frac{1}{2}} Y_i}^2 = \llnorm{\widebar{\Sigma}^{- \frac{1}{2}} \Sigma^{\frac{1}{2}}}^2 \mnorm{\Sigma}{Y_i}^2.
\]
Thus, to establish the desired result, we upper-bound each term in the above separately.
The fact that $E_2$ is realized implies that the second term is upper-bounded by $\frac{\lambda_0}{4}$.
As for the first term, we have again by the definition of the spectral norm:
\begin{equation}
    \llnorm{\widebar{\Sigma}^{- \frac{1}{2}} \Sigma^{\frac{1}{2}}}^2 = \llnorm{\Sigma^{\frac{1}{2}} \widebar{\Sigma}^{- 1} \Sigma^{\frac{1}{2}}} \le \llnorm{\Sigma^{\frac{1}{2}} \widebar{\Sigma}^{- 1} \Sigma^{\frac{1}{2}} - \id} + \llnorm{\id} \le 3, \label{eq:matrix_prod_spec_ub}
\end{equation}
where we appealed to the triangle inequality and the fact that $E_3$ is assumed to occur.

The desired results follow directly by combining the previous upper bounds.
\end{proof}

The previous is complemented by a lemma which establishes that, with high probability, the output of \StableMean\ will coincide with the standard empirical mean $\widebar{\mu}$ and $\Score_2$ will be $0$.

\begin{lemma}
\label{lem:stable_mean_uti}
Let $X \coloneqq \paren{X_1, \dots, X_n} \sim \cN\paren{\mu, \Sigma}^{\otimes n}$, where $n \coloneqq n_1 + 2 n_2$ and $n_2 \geq \cO\paren{d + \log\paren{\frac{1}{\alpha}}}$.
Also, let $\lambda_0 \geq 4 d + 8 \sqrt{d \ln\paren{\frac{3 n}{\alpha}}} + 8 \ln\paren{\frac{3 n}{\alpha}}$.
Then, except with probability $\alpha$, $E$ occurs, implying that the weight vector $v$ outputted by $\StableMean_{\widehat{\Sigma}, \lambda_0, k, R}\paren{X_{\le n_1}}$ will satisfy $v_i = \frac{1}{n_1}, \forall i \in \brk{n_1}$, and we will have $\widebar{\mu} = \widehat{\mu}$, and $\Score_2 = 0$.
\end{lemma}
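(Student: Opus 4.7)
The plan is to first perform a union bound establishing that the event $E$ holds with probability at least $1-\alpha$, and then to argue that on $E$, every datapoint in $X_{\le n_1}$ is mutually within $e \sqrt{\lambda_0}$ in the Mahalanobis metric induced by $\widehat{\Sigma}$, which in turn will force \StableMean\ to assign uniform weights and to report $\Score_2 = 0$. The probability bound is immediate: Lemma~\ref{lem:utility1} gives $\pr{}{E_1^c} \le \alpha / 3$, while Lemma~\ref{lem:stable_cov_uti} gives $\pr{}{\paren{E_2 \cap E_3}^c} \le 2 \alpha / 3$, so a union bound yields $\pr{}{E^c} \le \alpha$. Moreover, Lemma~\ref{lem:stable_cov_uti} already ensures that on $E_2 \cap E_3$ we have $\widehat{\Sigma} = \widebar{\Sigma}$, $w_i = 1 / n_2$ for all $i \in \brk{n_2}$, and $\Score_1 = 0$, so it only remains to analyze \StableMean\ conditionally on $E$.

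Conditioning on $E$, since $\widehat{\Sigma} = \widebar{\Sigma}$, the computation in~\eqref{eq:matrix_prod_spec_ub} (which only used $E_3$) yields $\llnorm{\widehat{\Sigma}^{- \frac{1}{2}} \Sigma^{\frac{1}{2}}}^2 \le 3$. Combined with the triangle inequality for the Mahalanobis norm and the bound $\mnorm{\Sigma}{X_i - \mu}^2 \le \lambda_0 / 4$ guaranteed by $E_1$, this yields, for every pair $i, j \in \brk{n_1}$,
\[
    \mnorm{\widehat{\Sigma}}{X_i - X_j}^2 \le \llnorm{\widehat{\Sigma}^{- \frac{1}{2}} \Sigma^{\frac{1}{2}}}^2 \paren{\mnorm{\Sigma}{X_i - \mu} + \mnorm{\Sigma}{X_j - \mu}}^2 \le 3 \paren{2 \sqrt{\lambda_0 / 4}}^2 = 3 \lambda_0 \le e^2 \lambda_0.
\]
In particular, in the notation of Definition~\ref{def:deg_rep} applied with covariance $\widehat{\Sigma}$, the neighborhood sets satisfy $N_i = \brk{n_1}$ and $\widetilde{N}_i = R$ for every $i \in \brk{n_1}$, and hence $\abs{\frac{1}{\abs{R}} \widetilde{z}_i - \frac{1}{n_1} z_i} = 0$ trivially, so that $R$ is degree-representative and no datapoint appears as a Mahalanobis outlier relative to the others.

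I expect the main obstacle to be the final bookkeeping step: verifying directly from the pseudocode of \StableMean\ in Appendix~\ref{sec:bhs23_details} that, when every datapoint has maximal neighborhood degree under $\widehat{\Sigma}$, the algorithm assigns $v_i = 1 / n_1$ uniformly and reports $\Score_2 = 0$. Since the score of~\cite{BrownHS23} is designed to vanish precisely when no soft outlier removal is triggered, this should be a direct (if notation-heavy) unwinding of the definitions, entirely analogous to the \StableCov\ argument already carried out in Lemma~\ref{lem:stable_cov_uti}. Once this is in place, $\widehat{\mu} = \sum\limits_{i \in \brk{n_1}} v_i X_i = \widebar{\mu}$ follows immediately, completing the proof.
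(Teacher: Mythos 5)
Your proposal is correct and follows essentially the same route as the paper's proof: a union bound (via Lemmas~\ref{lem:utility1} and~\ref{lem:utility2}, as packaged in Lemma~\ref{lem:stable_cov_uti}) gives $\pr{}{E^c} \le \alpha$, and on $E$ the bound $\llnorm{\widehat{\Sigma}^{-\frac{1}{2}}\Sigma^{\frac{1}{2}}}^2 \le 3$ from~(\ref{eq:matrix_prod_spec_ub}) combined with $E_1$ shows all pairwise Mahalanobis distances under $\widehat{\Sigma}$ are $O(\lambda_0)$, so \LargestCore\ never removes a point, forcing uniform weights and $\Score_2 = 0$. The only cosmetic difference is that you compare your bound of $3\lambda_0$ against the threshold $e^2\lambda_0$ of Definition~\ref{def:deg_rep} while the paper targets $\lambda_0$; both rest on the same spectral-norm-plus-triangle-inequality computation and defer the final unwinding of the \StableMean\ pseudocode in the same way.
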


\begin{proof}
The assumptions of Lemmas~\ref{lem:utility1} and~\ref{lem:utility2} are satisfied, thus a union bound implies that $E$ is realized, except with probability $\alpha$.
For the following, we condition over this event.
It suffices to prove that \LargestCore\ (Algorithm~\ref{alg:largest_core} Appendix~\ref{sec:bhs23_details}) will return the entire set $\brk{n_1}$ every time it is called by \StableMean\ (Algorithm~\ref{alg:stablemean} in Appendix~\ref{sec:bhs23_details}).
For this to hold, it suffices to have $\mnorm{\widehat{\Sigma}}{X_i - \mu}^2 \le \frac{\lambda_0}{4}, \forall i \in \brk{n_1}$.
If the previous is satisfied, the choice of the subset $R$ chosen for the execution of \StableMean\ in Line~\ref{ln:repr_set} will not matter as all the points are close to one another.
Indeed, for any $i \in \brk{n_1}$ and $j \in R$, the triangle inequality yields: 
\[
    \mnorm{\widehat{\Sigma}}{X_i - X_j}^2 \le \paren{\mnorm{\widehat{\Sigma}}{X_i - \mu} + \mnorm{\widehat{\Sigma}}{X_j - \mu}}^2 \le \lambda_0,
\]
and therefore $N_i = R$ for all $i \in [n_1]$, and thus \LargestCore\ will always return $\brk{n_1}$, and \StableMean\ will assign uniform weights to every $X_i$.
Thus, the rest of the proof is devoted to showing that $E$ implies $\mnorm{\widehat{\Sigma}}{X_i - \mu}^2 \le \frac{\lambda_0}{4}, \forall i \in \brk{n_1}$.
By the definition of the spectral norm, we have:
\[
    \mnorm{\widehat{\Sigma}}{X_i - \mu}^2 = \llnorm{\paren{\widehat{\Sigma}^{- \frac{1}{2}} \Sigma^{\frac{1}{2}}} \brk{\Sigma^{- \frac{1}{2}} \paren{X_i - \mu}}}^2 \le \llnorm{\widehat{\Sigma}^{- \frac{1}{2}} \Sigma^{\frac{1}{2}}}^2 \llnorm{\Sigma^{- \frac{1}{2}} \paren{X_i - \mu}}^2 = \llnorm{\widehat{\Sigma}^{- \frac{1}{2}} \Sigma^{\frac{1}{2}}}^2 \mnorm{\Sigma}{X_i - \mu}^2.
\]
Since we have assumed that $E_1$ is realized, we immediately get that the second term is upper-bounded by $\frac{\lambda_0}{4}$.
For the first term, since $E_2 \cap E_3$ is also assumed to occur, Lemma~\ref{lem:stable_cov_uti} implies that $\widehat{\Sigma} = \widebar{\Sigma}$.
This allows us to work as in Lemma~\ref{lem:stable_cov_uti} (namely (\ref{eq:matrix_prod_spec_ub})), and get $\llnorm{\widehat{\Sigma}^{- \frac{1}{2}} \Sigma^{\frac{1}{2}}}^2 \le 3$.
Then, the desired results follow directly.
\end{proof}

We are now ready to complete the proof of the utility guarantee.
Our proof will proceed by writing the output $Z$ of Algorithm~\ref{alg:sampling} as a mixture of two components, with one corresponding to when $E$ is realized, and the other corresponding to when $E^c$ is realized.
Then, the desired result will follow directly from Fact~\ref{fact:joint_convexity}.

\begin{theorem}
\label{thm:uti_guarantee}
Let $X \coloneqq \paren{X_1, \dots, X_n} \sim \cN\paren{\mu, \Sigma}^{\otimes n}$, where $n \coloneqq n_1 + 2 n_2$ and $n_2 \geq \cO\paren{d + \log\paren{\frac{1}{\alpha}}}$.
Also, let $\lambda_0 \geq 4 d + 8 \sqrt{d \ln\paren{\frac{3 n}{\alpha}}} + 8 \ln\paren{\frac{3 n}{\alpha}}$.
Then, we have $\TV\paren{Z, \cN\paren{\mu, \Sigma}} \le \alpha$.
\end{theorem}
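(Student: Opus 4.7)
The plan is to define an idealized sampler $\widetilde{Z}$ that (a) is exactly $\cN\paren{\mu, \Sigma}$-distributed when $X \sim \cN\paren{\mu, \Sigma}^{\otimes n}$, and (b) coincides with the algorithm's output $Z$ whenever the good event $E \coloneqq E_1 \cap E_2 \cap E_3$ occurs. Joint convexity of the TV-distance (Fact~\ref{fact:joint_convexity}), combined with the bound $\pr{}{E^c} \le \alpha$ already established in the lead-up to Lemma~\ref{lem:stable_mean_uti}, will then immediately give the conclusion.

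Concretely, drawing $z \sim \cU\paren{\bS^{n_2 - 1}}$ independently of $X$ and setting $Y_i \coloneqq \frac{1}{\sqrt{2}}\paren{X_{n_1 + i} - X_{n_1 + n_2 + i}}$, I would define
$$\widetilde{Z} \coloneqq \widebar{\mu} + \sqrt{1 - \frac{1}{n_1}} \sum_{i \in \brk{n_2}} z_i Y_i.$$
The first step is to verify $\widetilde{Z} \sim \cN\paren{\mu, \Sigma}$ marginally. The empirical mean $\widebar{\mu}$ of $n_1$ i.i.d.\ $\cN\paren{\mu, \Sigma}$ draws is distributed as $\cN\paren{\mu, \frac{1}{n_1} \Sigma}$, while each $Y_i \sim \cN\paren{0, \Sigma}$, so Lemma~\ref{lem:stability_gaussian} (the $2$-stability of Gaussians under a uniformly random unit vector) gives $\sum_{i} z_i Y_i \sim \cN\paren{0, \Sigma}$. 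Independence of $\widebar{\mu}$ and $\sum_{i} z_i Y_i$ --- they depend on disjoint halves of $X$ --- lets the two covariances add to $\frac{1}{n_1} \Sigma + \paren{1 - \frac{1}{n_1}} \Sigma = \Sigma$.

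Next, I would couple $Z$ and $\widetilde{Z}$ on the same probability space by feeding both the same $X$ and the same $z$. Lemmas~\ref{lem:stable_cov_uti} and~\ref{lem:stable_mean_uti} guarantee that, on $E$, the weights output by \StableCov\ and \StableMean\ are uniform ($w_i = \frac{1}{n_2}$, $v_i = \frac{1}{n_1}$) and $\Score_1 = \Score_2 = 0$, so the PTR check of Line~\ref{ln:ptr} passes. Substituting these weights into Line~\ref{ln:post_ptr2} gives
$$Z = \widebar{\mu} + \sqrt{\paren{1 - \frac{1}{n_1}} n_2} \cdot \frac{1}{\sqrt{n_2}} \sum_{i \in \brk{n_2}} z_i Y_i = \widetilde{Z},$$
so $Z = \widetilde{Z}$ pointwise on $E$, and in particular $P_{Z \mid E} = P_{\widetilde{Z} \mid E}$. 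Writing both $P_Z$ and $P_{\widetilde{Z}}$ as mixtures indexed by $E$ versus $E^c$ and invoking joint convexity,
$$\TV\paren{Z, \widetilde{Z}} \le \pr{}{E} \cdot \TV\paren{P_{Z \mid E}, P_{\widetilde{Z} \mid E}} + \pr{}{E^c} \cdot \TV\paren{P_{Z \mid E^c}, P_{\widetilde{Z} \mid E^c}} \le 0 + \pr{}{E^c} \le \alpha,$$
and since $\widetilde{Z} \sim \cN\paren{\mu, \Sigma}$ the desired bound $\TV\paren{Z, \cN\paren{\mu, \Sigma}} \le \alpha$ follows.

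The key conceptual subtlety to flag is that one \emph{cannot} argue directly that ``conditional on $E$, the algorithm's output is $\cN\paren{\mu, \Sigma}$'', since conditioning on $E$ distorts each $X_i$ into a truncated Gaussian rather than a genuine Gaussian draw. The coupling to $\widetilde{Z}$ circumvents this because $\widetilde{Z}$ is marginally $\cN\paren{\mu, \Sigma}$ \emph{unconditionally}, and the only cost of the coupling is the probability $\pr{}{E^c}$ that the weights fail to be uniform, which Lemmas~\ref{lem:utility1} and~\ref{lem:utility2} control at $\alpha$ via the sample-size lower bound $n_2 \geq \cO\paren{d + \log\paren{\frac{1}{\alpha}}}$ and the choice of $\lambda_0$.
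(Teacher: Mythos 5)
Your proposal is correct and follows essentially the same route as the paper's proof: both define the idealized quantity $\widebar{\mu} + \sqrt{1 - \frac{1}{n_1}} \sum_{i} z_i Y_i$, verify via Lemma~\ref{lem:stability_gaussian} that it is marginally $\cN\paren{\mu, \Sigma}$, observe via Lemmas~\ref{lem:stable_cov_uti} and~\ref{lem:stable_mean_uti} that it coincides with $Z$ on the event $E$, and conclude by decomposing into the $E$/$E^c$ mixture and applying joint convexity (Fact~\ref{fact:joint_convexity}) together with $\pr{}{E^c} \le \alpha$. The truncation subtlety you flag is the same one the paper discusses before its formal argument.
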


\begin{proof}
First, we note that Lemmas~\ref{lem:stable_cov_uti} and~\ref{lem:stable_mean_uti} imply:
\[
    \paren{Z \middle| E} = \paren{\frac{1}{n_1} \sum\limits_{i \in \brk{n_1}} X_i + \sqrt{\paren{1 - \frac{1}{n_1}} n_2} \frac{1}{\sqrt{n_2}} \sum\limits_{i \in \brk{n_2}} z_i Y_i \middle| E} \text{ and } \pr{}{E^c} \le \alpha.
\]
Additionally, observe that Lemma~\ref{lem:stability_gaussian} implies that:
\[
    \frac{1}{n_1} \sum\limits_{i \in \brk{n_1}} X_i + \sqrt{\paren{1 - \frac{1}{n_1}} n_2} \frac{1}{\sqrt{n_2}} \sum\limits_{i \in \brk{n_2}} z_i Y_i \overset{d}{=} \cN\paren{\mu, \Sigma}.
\]
Combining the previous two remarks and applying Fact~\ref{fact:joint_convexity}, we get:
\begin{align*}
    \TV\paren{Z, \cN\paren{\mu, \Sigma}}
    &= \TV\paren{Z, \frac{1}{n_1} \sum\limits_{i \in \brk{n_1}} X_i + \sqrt{\paren{1 - \frac{1}{n_1}} n_2} \frac{1}{\sqrt{n_2}} \sum\limits_{i \in \brk{n_2}} z_i Y_i} \\
    &= \TV\paren{
        \mathds{1}\brc{E} Z + \mathds{1}\brc{E^c} Z, 
        \begin{array}{c}
            \mathds{1}\brc{E} \frac{1}{n_1} \sum\limits_{i \in \brk{n_1}} X_i + \sqrt{\paren{1 - \frac{1}{n_1}} n_2} \frac{1}{\sqrt{n_2}} \sum\limits_{i \in \brk{n_2}} z_i Y_i \\
            + \mathds{1}\brc{E^c} \frac{1}{n_1} \sum\limits_{i \in \brk{n_1}} X_i + \sqrt{\paren{1 - \frac{1}{n_1}} n_2} \frac{1}{\sqrt{n_2}} \sum\limits_{i \in \brk{n_2}} z_i Y_i
        \end{array}
    } \\
    &\le \pr{}{E} \cancelto{0}{\TV\paren{\paren{Z \middle| E}, \paren{\frac{1}{n_1} \sum\limits_{i \in \brk{n_1}} X_i + \sqrt{\paren{1 - \frac{1}{n_1}} n_2} \frac{1}{\sqrt{n_2}} \sum\limits_{i \in \brk{n_2}} z_i Y_i \middle| E}}} \\
    & \\
    &\quad\ + \pr{}{E^c} \TV\paren{\paren{Z \middle| E^c}, \paren{\frac{1}{n_1} \sum\limits_{i \in \brk{n_1}} X_i + \sqrt{\paren{1 - \frac{1}{n_1}} n_2} \frac{1}{\sqrt{n_2}} \sum\limits_{i \in \brk{n_2}} z_i Y_i \middle| E^c}} \\
    &\le \pr{}{E^c} \\
    &\le \alpha.
\end{align*}
\end{proof}

\subsection{Establishing the Privacy Guarantee}
\label{subsec:priv_proof}

Let $Z$ and $Z'$ be the outputs of Algorithm~\ref{alg:sampling} on adjacent input datasets $X$ and $X'$, respectively.
In this section, we will establish that, when $n = \cO\paren{\frac{\lambda_0 \log\paren{\frac{1}{\delta}}}{\eps}}$, we get $D_{e^{\eps}}\paren{Z \middle\| Z'} \le \delta$ (essentially relying on Remark~\ref{rem:dp_hs} to establish privacy).

We note that the structure of Algorithm~\ref{alg:sampling} falls into the framework of Definition~\ref{def:dp_under_condition} and Lemma~\ref{lem:composition_halting} due to the use of a PTR argument.
Given that the PTR step is the same as in~\cite{BrownHS23}, and its analysis is essentially covered by Lemmas~\ref{lem:bhs_ptr} and~\ref{lem:stable_conditions_ptr}, we will mainly focus on establishing the privacy guarantee of the subsequent steps under the condition that the PTR step has succeeded.
Furthermore, we note that Lemma~\ref{lem:stable_mean} relies on the set $R$ that is sampled in Line~\ref{ln:repr_set} being degree-representative (Definition~\ref{def:deg_rep}).
Thus, in addition to the PTR step succeeding, we also need to condition on $R$ being degree-representative.

Taking into account the above discussion, let us fix an input dataset $X$, and a representation set $R$.
For brevity, we write $\Psi\paren{X, R} = 1$ when both the outcome of the PTR step is \Pass\ and $R$ is degree-representative, and $\Psi\paren{X, R} = 0$ if either of the previous fails.\footnote{The assumption about $R$ being degree-representative will only matter in Section~\ref{subsubsec:priv_anal2} really, since that is the only section that relies on Lemma~\ref{lem:stable_mean}.}
Thus, we want to study the quantity $D_{e^{\eps}}\paren{\paren{Z \middle| \Psi\paren{X, R} = 1} \middle\| \paren{Z' \middle| \Psi\paren{X', R} = 1}}$.
This can be written as:
\[
    D_{e^{\eps}}\paren{\widehat{\mu} + \sqrt{\paren{1 - \frac{1}{n_1}} n_2} W z \middle\| \widehat{\mu}' + \sqrt{\paren{1 - \frac{1}{n_1}} n_2} W' z}.
\]
Observe that, in the above, we assumed without loss of generality that the random bits of Algorithm~\ref{alg:sampling} related to the sampling of $R$ and $z$ are coupled in the two executions of the algorithm under input $X$ and $X'$.
This allows us to take $R' = R$ and $z' = z$.

Observe that, by Lemma~\ref{lem:hs_div_triangle_ineq}, we get:
\begin{align}
    &\quad\ D_{e^{\eps}}\paren{\widehat{\mu} + \sqrt{\paren{1 - \frac{1}{n_1}} n_2} W z \middle\| \widehat{\mu}' + \sqrt{\paren{1 - \frac{1}{n_1}} n_2} W' z} \nonumber \\
    &\le D_{e^{\frac{\eps}{2}}}\paren{\widehat{\mu} + \sqrt{\paren{1 - \frac{1}{n_1}} n_2} W z \middle\| \widehat{\mu}' + \sqrt{\paren{1 - \frac{1}{n_1}} n_2} W z} \nonumber \\
    &\quad\ + e^{\frac{\eps}{2}} D_{e^{\frac{\eps}{2}}}\paren{\widehat{\mu}' + \sqrt{\paren{1 - \frac{1}{n_1}} n_2} W z \middle\| \widehat{\mu}' + \sqrt{\paren{1 - \frac{1}{n_1}} n_2} W' z} \nonumber \\
    &\le 2
    \paren{
        \begin{array}{c}
            D_{e^{\frac{\eps}{2}}}\paren{\widehat{\mu} + \sqrt{\paren{1 - \frac{1}{n_1}} n_2} W z \middle\| \widehat{\mu}' + \sqrt{\paren{1 - \frac{1}{n_1}} n_2} W z} \\
            + D_{e^{\frac{\eps}{2}}}\paren{\widehat{\mu}' + \sqrt{\paren{1 - \frac{1}{n_1}} n_2} W z \middle\| \widehat{\mu}' + \sqrt{\paren{1 - \frac{1}{n_1}} n_2} W' z}
        \end{array}
    }, \label{eq:weak_triangle_ineq}
\end{align}
where the last inequality used the assumption that $\eps \le 1$.

We stress that the application of Lemma~\ref{lem:hs_div_triangle_ineq} represents a clear difference from the privacy analysis of~\cite{GhaziHKM23}.
The reason for that has to do with the fact that $\widehat{\mu}$ and $\widehat{\mu}'$ are weighted sums whose weights depend on the output of \StableCov.
Thus, although the subset of the dataset that is fed to \StableMean\ is disjoint from the one that is fed to \StableCov, changing one of the points that are given to \StableCov\ will affect the execution of \StableMean\ due to $\widehat{\Sigma}$ and $\widehat{\Sigma}'$ being different.
Conversely, $\widehat{\mu}$ and $\widehat{\mu}'$ in~\cite{GhaziHKM23} are just the truncated empirical means.

The analysis will now be done in four steps, with a separate section devoted to each:
\begin{enumerate}
    \item as a warm-up, we will analyze the second (and more interesting term) of (\ref{eq:weak_triangle_ineq}) in the univariate setting (Section~\ref{subsubsec:priv_anal_uni1}),
    \item we will generalize the previous analysis to the multivariate setting (Section~\ref{subsubsec:priv_anal_multi1}),
    \item we will analyze the first term of (\ref{eq:weak_triangle_ineq}) in the multivariate setting (Section~\ref{subsubsec:priv_anal2}),
    \item we will tie everything together to complete the proof of the privacy guarantee (Section~\ref{subsubsec:priv_guarantee}).
\end{enumerate}

\subsubsection{Warm-Up: Analyzing the Second Term of (\ref{eq:weak_triangle_ineq}) for Univariate Data}
\label{subsubsec:priv_anal_uni1}

We start by analyzing the second term of (\ref{eq:weak_triangle_ineq}) in the univariate setting.
The proof makes crucial use of Lemma~\ref{lem:stable_cov} to argue that $\frac{\llnorm{W'}}{\llnorm{W}}$ will be close to $1$, which in turn implies that the Hockey-Stick divergence must be small.
We will give the proof in the three steps, with one statement devoted to each step of the proof:
\begin{enumerate}
    \item we will first show that the Hockey-Stick divergence term that we are considering can be equivalently expressed as the divergence between $z_1$ and $T \coloneqq \frac{\llnorm{W'}}{\llnorm{W}} z_1$ (Lemma~\ref{lem:priv_anal_uni1_dpi}),
    \item subsequently, we will identify a sufficient condition for points $t \in \supp\paren{z_1}$ that implies that the log-density ratio $\ln\paren{\frac{f_{z_1}\paren{t}}{f_T\paren{t}}}$ is at most $\frac{\eps}{2}$ (Lemma~\ref{lem:priv_anal_uni1_suff_cond}),
    \item finally, we will show that our sample complexity suffices for the second term of (\ref{eq:weak_triangle_ineq}) to be at most $\delta$ (Proposition~\ref{prop:priv_anal_uni1}).
\end{enumerate}
We start with the first of the three lemmas.
The proof combines the \emph{Data-Processing Inequality} (DPI - Fact~\ref{fact:dpi}) with the guarantees implied by Lemma~\ref{lem:stable_cov}.

\begin{lemma}
\label{lem:priv_anal_uni1_dpi}
Let $\eps \in \brk{0, 1}$ and $\delta \in \brk{0, \frac{\eps}{10}}, \lambda_0 \geq 1$.
Also, assume that $n_2 \geq \useconstant{C2} \frac{\lambda_0 \log\paren{\frac{1}{\delta}}}{\eps}$ for some appropriately large absolute constant $\useconstant{C2} \geq 1$.
Finally, for $n \coloneqq n_1 + 2 n_2$, let $X, X' \in \bR^n$ be adjacent datasets and $R \subseteq \brk{n_1}$ be a representation set such that $\Psi\paren{X, R} = \Psi\paren{X', R} = 1$.
Then, for $z \sim \cU\paren{\bS^{n_2 - 1}}$, we have:
\[
    D_{e^{\frac{\eps}{2}}}\paren{\widehat{\mu}' + \sqrt{\paren{1 - \frac{1}{n_1}} n_2} W z \middle\| \widehat{\mu}' + \sqrt{\paren{1 - \frac{1}{n_1}} n_2} W' z} = D_{e^{\frac{\eps}{2}}}\paren{z_1 \middle\| T},
\]
where $T \coloneqq \frac{\llnorm{W'}}{\llnorm{W}} z_1$.
\end{lemma}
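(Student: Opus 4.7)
The plan is to strip away all the auxiliary constants via two applications of the Data-Processing Inequality (Fact~\ref{fact:dpi}), which acts as an equality on bijective measurable maps, and then exploit the rotational invariance of $z \sim \cU\paren{\bS^{n_2 - 1}}$ to convert the inner products against $W$ and $W'$ into a scaling by their spectral norms. Writing $c \coloneqq \sqrt{\paren{1 - \frac{1}{n_1}} n_2}$, the quantities $\widehat{\mu}', W, W'$ are all deterministic once we condition on $X, X', R$, so the map $x \mapsto \paren{x - \widehat{\mu}'}/c$ is a bijection of $\bR$. Applying DPI once in each direction gives
\[
    D_{e^{\eps/2}}\paren{\widehat{\mu}' + c W z \middle\| \widehat{\mu}' + c W' z} = D_{e^{\eps/2}}\paren{W z \middle\| W' z}.
\]

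In the univariate setting, $W, W' \in \bR^{1 \times n_2}$ are row vectors, so $W z = \iprod{W, z}$ and $W' z = \iprod{W', z}$, where the inner products are taken in $\bR^{n_2}$. Since $z$ is uniform on $\bS^{n_2 - 1}$ its distribution is invariant under orthogonal transformations, so for any nonzero $v \in \bR^{n_2}$ one has $\iprod{v, z} \overset{d}{=} \llnorm{v} \iprod{e_1, z} = \llnorm{v} z_1$ (applying an orthogonal map sending $v/\llnorm{v}$ to $e_1$). Applying this marginally to $W$ and to $W'$ yields $W z \overset{d}{=} \llnorm{W} z_1$ and $W' z \overset{d}{=} \llnorm{W'} z_1$; since the Hockey-Stick divergence depends only on the two marginal laws, we obtain
\[
    D_{e^{\eps/2}}\paren{W z \middle\| W' z} = D_{e^{\eps/2}}\paren{\llnorm{W} z_1 \middle\| \llnorm{W'} z_1}.
\]
A second DPI invocation, with the bijection $x \mapsto x/\llnorm{W}$, converts the right-hand side to $D_{e^{\eps/2}}\paren{z_1 \middle\| T}$ with $T = \paren{\llnorm{W'}/\llnorm{W}} z_1$, which is exactly the claimed identity.

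The only subtlety is justifying that each division by $\llnorm{W}$ and $\llnorm{W'}$ is meaningful, i.e., that both norms are strictly positive. This is where the condition $\Psi\paren{X, R} = \Psi\paren{X', R} = 1$ combines with the sample size assumption: the former guarantees the PTR step passes, so Lemma~\ref{lem:bhs_ptr} forces $\Score_1, \Score_1' < k$; the latter, for $\useconstant{C2}$ chosen large enough, yields $n_2 \geq 16 e^2 \lambda_0 k$, which together with the previous bound places us in the regime of Lemma~\ref{lem:stable_cov}. That lemma certifies $W W^{\top}, W' W'^{\top} \succ 0$, and in one dimension this reduces precisely to $\llnorm{W}, \llnorm{W'} > 0$. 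Beyond this non-degeneracy check there is no real obstacle in the present lemma: the heavy lifting (using the quantitative closeness $\paren{1 - \gamma} \Sigma_1 \preceq \Sigma_2 \preceq \Sigma_1/\paren{1 - \gamma}$ from Lemma~\ref{lem:stable_cov} to bound the ratio $\llnorm{W'}/\llnorm{W}$ and in turn bound $D_{e^{\eps/2}}\paren{z_1 \| T}$) is deferred to the subsequent lemmas in Section~\ref{subsubsec:priv_anal_uni1}.
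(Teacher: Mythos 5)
Your proof is correct and follows essentially the same route as the paper's: DPI (as an equality, for bijective affine maps) to strip the shift and scaling, rotational invariance of $z$ to replace $Wz$ and $W'z$ by $\llnorm{W} z_1$ and $\llnorm{W'} z_1$ (the paper phrases this via the SVD of the row vectors, which is the same rotation argument), and Lemma~\ref{lem:stable_cov} under $\Psi = 1$ to certify $\llnorm{W}, \llnorm{W'} > 0$. No gaps.
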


\begin{proof}
First, by Fact~\ref{fact:dpi} (which is satisfied as an equality in this case), we get:
\begin{equation}
    D_{e^{\frac{\eps}{2}}}\paren{\widehat{\mu}' + \sqrt{\paren{1 - \frac{1}{n_1}} n_2} W z \middle\| \widehat{\mu}' + \sqrt{\paren{1 - \frac{1}{n_1}} n_2} W' z} = D_{e^{\frac{\eps}{2}}}\paren{W z \middle\| W' z}. \label{eq:uni_dpi_app1}
\end{equation}
We assume that $X$ and $X'$ differ at one of the points that are fed to \StableCov, since otherwise (\ref{eq:uni_dpi_app1}) is trivially $0$.
We note that, since we are dealing with univariate data, the shape of the matrix $W$ will be $1 \times n_2$, i.e., it is a row vector whose elements are of the form $\sqrt{w_i} Y_i$ for $Y_i \coloneqq \frac{1}{\sqrt{2}} \paren{X_{n_1 + i} - X_{n_1 + n_2 + i}}, \forall i \in \brk{n_2}$.
By the second part of Fact~\ref{fact:matrix_fact} and Remark~\ref{rem:svd_comment}, we get that $W$ and $W'$ can be written in the form $W = \llnorm{W} e_1^{\top} V^{\top}$ and $W' = \llnorm{W'} e_1^{\top} V'^{\top}$, where $V, V' \in \bR^{n_2 \times n_2}$ are rotation matrices.
However, Fact~\ref{fact:rotational_invariance} implies that $V^{\top} z \overset{d}{=} z$, yielding:
\begin{equation}
    D_{e^{\frac{\eps}{2}}}\paren{W z \middle\| W' z} = D_{e^{\frac{\eps}{2}}}\paren{\llnorm{W} e_1^{\top} z \middle\| \llnorm{W'} e_1^{\top} z} = D_{e^{\frac{\eps}{2}}}\paren{\llnorm{W} z_1 \middle\| \llnorm{W'} z_1}. \label{eq:uni_hs_div_simp1}
\end{equation}
We note now that $\llnorm{W} = \widehat{\sigma}, \llnorm{W'} = \widehat{\sigma}'$.
Given that, in Algorithm~\ref{alg:sampling}, $k$ is set to be equal to $\cl{\frac{6 \log\paren{\frac{6}{\delta}}}{\eps}} + 4$, our bound on $n_2$ implies that the condition $n_2 \geq 16 e^2 \lambda_0 k$ from Lemma~\ref{lem:stable_cov} is satisfied.
Since we assumed that $\Psi\paren{X, R} = \Psi\paren{X', R} = 1$, Lemma~\ref{lem:stable_cov} implies that $\llnorm{W}, \llnorm{W'} > 0$.
Thus, applying Fact~\ref{fact:dpi} (again as an equality) yields the desired result.
\end{proof}

We note now that, by Definition~\ref{def:hs_div}, we have the upper bound:
\begin{equation}
    D_{e^{\frac{\eps}{2}}}\paren{z_1 \middle\| T} \le \pr{t \sim z_1}{f_{z_1}\paren{t} > e^{\frac{\eps}{2}} f_T\paren{t}}, \label{eq:uni_hs_div_ub1}
\end{equation}
where by $f_{z_1}$ and $f_T$ we denote the densities of $z_1$ and $T$.
Thus, to reason about the above, it suffices to identify a set of points $t$ in the support of $z_1$ can lead to the condition $\ln\paren{\frac{f_{z_1}\paren{t}}{f_T\paren{t}}} \le \frac{\eps}{2}$ being satisfied for the log-density ratio.
Then, the probability of the event $f_{z_1}\paren{t} > e^{\frac{\eps}{2}} f_T\paren{t}$ will be upper-bounded by the probability of the aforementioned condition failing.
We identify the desired condition in the following lemma.

\begin{lemma}
\label{lem:priv_anal_uni1_suff_cond}
In the setting of Lemma~\ref{lem:priv_anal_uni1_dpi}, we have that $\abs{t} \le \sqrt{\frac{2}{3} \cdot \frac{\eps}{\eps + 16 e^2 \lambda_0}} \implies \ln\paren{\frac{f_{z_1}\paren{t}}{f_T\paren{t}}} \le \frac{\eps}{2}$.
\end{lemma}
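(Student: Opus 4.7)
The plan is to reduce the Hockey-Stick divergence to a pointwise bound on the log-density ratio of two rescaled marginals of the uniform measure on a sphere, and then control that ratio via the closeness of $c \coloneqq \llnorm{W'}/\llnorm{W}$ to $1$ guaranteed by Lemma~\ref{lem:stable_cov}. Concretely, I would begin by recording the marginal density of a single coordinate of $z \sim \cU(\bS^{n_2-1})$: a Dirichlet--Beta calculation (or, equivalently, $z_1^2 \sim \Beta(1/2, (n_2-1)/2)$) gives
$$f_{z_1}(t) \;=\; \frac{1}{B(1/2,\,(n_2-1)/2)} \paren{1-t^2}^{(n_2-3)/2} \cdot \mathds{1}\brc{|t|\le 1}.$$
Since $T = c z_1$, a linear rescaling gives $f_T(t) = c^{-1} f_{z_1}(t/c)$ on $[-c, c]$, and so for $|t| < c$,
$$\ln\frac{f_{z_1}(t)}{f_T(t)} \;=\; \ln c \;+\; \frac{n_2-3}{2}\,\ln\frac{1-t^2}{1-t^2/c^2}.$$

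Next, I would invoke Lemma~\ref{lem:stable_cov} in the univariate setting. Its hypotheses are met: $\Psi(X, R) = \Psi(X', R) = 1$ forces both stability scores below $k$, and the sample bound $n_2 \geq \useconstant{C2}\lambda_0\log(1/\delta)/\eps$ ensures $n_2 \geq 16 e^2 \lambda_0 k$. The lemma then gives $c^2 \in [1 - \gamma,\,1/(1-\gamma)]$ for $\gamma \coloneqq 8 e^2 \lambda_0/n_2$. Taking $\useconstant{C2}$ sufficiently large makes $\gamma$ an arbitrarily small multiple of $\eps$, say $\gamma \le \eps/12$. The hypothesis on $|t|$ easily implies $t^2 \le \tfrac{2}{3}\cdot\tfrac{1}{1 + 16 e^2} \ll 1 - \gamma \le c^2$, so the log-ratio above is well-defined at $t$.

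I would then bound the two terms separately. For the first, $|\ln c| \le -\tfrac{1}{2}\ln(1-\gamma) \le \gamma \le \eps/12$. For the second, applying the mean value theorem to $x \mapsto \ln(1-x)$ at $a = t^2$, $b = t^2/c^2$ yields
$$\left|\ln\frac{1-t^2}{1-t^2/c^2}\right| \;\le\; \frac{|a-b|}{1 - \max(a,b)} \;\le\; \frac{t^2\,\gamma/(1-\gamma)}{1 - t^2/(1-\gamma)},$$
using $|1 - 1/c^2| \le \gamma/(1-\gamma)$ and $c^2 \ge 1-\gamma$. Multiplying by $(n_2-3)/2$ and plugging in $n_2\gamma = 8 e^2 \lambda_0$ gives a bound of essentially $8 e^2 \lambda_0 t^2$, up to a factor arbitrarily close to $1$ by the smallness of $\gamma$ and $t^2$. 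The hypothesis $t^2 \le (2/3)\eps/(\eps + 16 e^2 \lambda_0)$ then bounds this second term by roughly $\tfrac{5}{12}\eps$, and summing the two pieces gives a total below $\eps/2$.

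The only genuinely delicate step is the constant-tracking in the second term: ensuring that the multiplicative slack from the $(1-\gamma)^{-1}$ factors, together with $|\ln c|$, fits within the $\eps/2$ budget. This is handled by choosing $\useconstant{C2}$ large enough that $\gamma$ is a controlled fraction of $\eps$, so that $(1-\gamma)$ is essentially $1$ throughout the calculation. Everything else is a routine density computation plus the direct invocation of Lemma~\ref{lem:stable_cov}.
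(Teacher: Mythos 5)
Your proposal is correct and follows essentially the same route as the paper: reduce to the explicit Beta-type marginal density of $z_1$, write the log-density ratio as $\ln c + \tfrac{n_2-3}{2}\ln\tfrac{1-t^2}{1-t^2/c^2}$, control $c$ via Lemma~\ref{lem:stable_cov}, and bound the two terms separately using the smallness of $\gamma$ relative to $\eps$. The only cosmetic differences are that the paper uses $\ln x \le x-1$ where you use the mean value theorem (yielding the identical bound $\tfrac{n_2}{2}\cdot\tfrac{\gamma t^2/(1-\gamma)}{1-t^2/(1-\gamma)}$) and that it splits the $\eps/2$ budget as $\eps/4+\eps/4$ rather than your $\eps/12$ plus the remainder.
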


\begin{proof}
We start by observing that Fact~\ref{fact:transformations} implies that:
\[
    f_T\paren{t} = \frac{\llnorm{W}}{\llnorm{W'}} f_{z_1}\paren{\frac{\llnorm{W}}{\llnorm{W'}} t}, \forall t \in \brk{\pm \frac{\llnorm{W'}}{\llnorm{W}}}.
\]
This should be contrasted with the support of $z_1$, which is the interval $\brk{\pm 1}$.
By Lemma~\ref{lem:stable_cov}, we get that $\sqrt{1 - \gamma} \le \frac{\llnorm{W'}}{\llnorm{W}} \le \frac{1}{\sqrt{1 - \gamma}}$ for $\gamma \coloneqq \frac{8 e^2 \lambda_0}{n_2}$.
Our choice of $n_2$ implies that $\gamma \le \frac{\eps}{4} \le \frac{1}{4} \implies \sqrt{1 - \gamma} \geq \sqrt{\frac{3}{4}}$.
The range that we consider for $t$ is $\abs{t} \le \sqrt{\frac{2}{3} \cdot \frac{\eps}{\eps + 16 e^2 \lambda_0}} < \sqrt{\frac{2}{3}} < \sqrt{\frac{3}{4}}$.
This implies that the values of $t$ we consider lie in the support of $T$, so the log-density ratio is finite.
With this knowledge at hand, we work towards establishing the desired upper bound on the log-density ratio.
To that end, Fact~\ref{fact:unit_sphere_proj_dist} and the standard inequality $\ln\paren{x} \le x - 1, \forall x > 0$ yield:
\begin{align}
    \ln\paren{\frac{f_{z_1}\paren{t}}{f_T\paren{t}}}
    &= \ln\paren{\frac{\llnorm{W'}}{\llnorm{W}}} + \frac{n_2 - 3}{2} \ln\paren{\frac{1 - t^2}{1 - \paren{\frac{\llnorm{W}}{\llnorm{W'}} t}^2}} \nonumber \\
    &< \frac{1}{2} \ln\paren{\frac{1}{1 - \gamma}} + \frac{n_2}{2} \ln\paren{\frac{1 - t^2}{1 - \paren{\frac{1}{\sqrt{1 - \gamma}} t}^2}} \nonumber \\
    &\le \frac{1}{2} \paren{\frac{1}{1 - \gamma} - 1} + \frac{n_2}{2} \brk{\frac{1 - t^2}{1 - \paren{\frac{1}{\sqrt{1 - \gamma}} t}^2} - 1} \nonumber \\
    &= \frac{1}{2} \paren{\frac{1}{1 - \gamma} - 1} + \frac{n_2}{2} \cdot \frac{\frac{1}{1 - \gamma} - 1}{1 - \frac{t^2}{1 - \gamma}} t^2 \nonumber \\
    &= \frac{1}{2} \paren{\frac{1}{1 - \gamma} - 1} + \frac{n_2}{2} \cdot \frac{\frac{\gamma}{1 - \gamma}}{1 - \frac{t^2}{1 - \gamma}} t^2 \nonumber \\
    &= \frac{1}{2} \paren{\frac{1}{1 - \gamma} - 1} + \frac{1}{2} \cdot \frac{n_2 \gamma}{\paren{1 - \gamma} - t^2} t^2 \nonumber \\
    &= \frac{1}{2} \paren{\frac{1}{1 - \gamma} - 1} + \frac{4 e^2 \lambda_0}{\frac{1 - \gamma}{t^2} - 1}. \label{eq:uni_log_density_ratio_cov}
\end{align}
To complete the proof, we will now show that each of the two terms of (\ref{eq:uni_log_density_ratio_cov}) is upper-bounded by $\frac{\eps}{4}$.
For the first term, we have:
\[
    \frac{1}{2} \paren{\frac{1}{1 - \gamma} - 1} \le \frac{\eps}{4} \iff \gamma \le \frac{\eps}{2 + \eps},
\]
which holds since we have $\gamma \le \frac{\eps}{4}$ and $\eps \in \brk{0, 1}$.

For the second term, we note that it is an increasing function of $t$, so we show that:
\[
    \frac{4 e^2 \lambda_0}{\frac{1 - \gamma}{\frac{2}{3} \cdot \frac{\eps}{\eps + 16 e^2 \lambda_0}} - 1} \le \frac{\eps}{4} \iff \frac{16 e^2 \lambda_0}{\eps} \le \frac{3}{2} \cdot \frac{\eps + 16 e^2 \lambda_0}{\eps} \paren{1 - \gamma} - 1 \iff \gamma \le \frac{1}{3},
\]
which again is implied by our assumptions.
\end{proof}

We can show now the main result of this section.

\begin{proposition}
\label{prop:priv_anal_uni1}
Let $\eps \in \brk{0, 1}$ and $\delta \in \brk{0, \frac{\eps}{10}}, \lambda_0 \geq 1$.
Also, assume that $n_2 \geq \useconstant{C2} \frac{\lambda_0 \log\paren{\frac{1}{\delta}}}{\eps}$ for some appropriately large absolute constant $\useconstant{C2} \geq 1$.
Finally, for $n \coloneqq n_1 + 2 n_2$, let $X, X' \in \bR^n$ be adjacent datasets and $R \subseteq \brk{n_1}$ be a representation set such that $\Psi\paren{X, R} = \Psi\paren{X', R} = 1$.
Then, for $z \sim \cU\paren{\bS^{n_2 - 1}}$, we have:
\[
    D_{e^{\frac{\eps}{2}}}\paren{\widehat{\mu}' + \sqrt{\paren{1 - \frac{1}{n_1}} n_2} W z \middle\| \widehat{\mu}' + \sqrt{\paren{1 - \frac{1}{n_1}} n_2} W' z} \le \delta.
\]
\end{proposition}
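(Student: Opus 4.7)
The plan is to chain together the three ingredients built up in Lemmas~\ref{lem:priv_anal_uni1_dpi} and~\ref{lem:priv_anal_uni1_suff_cond}, and close the argument by a standard tail bound on the first coordinate of a uniform point on the sphere. First, I invoke Lemma~\ref{lem:priv_anal_uni1_dpi} to rewrite the divergence on the left-hand side as $D_{e^{\eps/2}}\!\paren{z_1 \,\middle\|\, T}$ with $T \coloneqq \frac{\llnorm{W'}}{\llnorm{W}} z_1$; this uses precisely the hypotheses $\Psi(X,R) = \Psi(X',R) = 1$ and the sample-complexity bound on $n_2$ that activates Lemma~\ref{lem:stable_cov}.

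Next, I upper-bound the Hockey-Stick divergence by the probability that the log-density ratio is too large, as in display~(\ref{eq:uni_hs_div_ub1}):
\[
    D_{e^{\eps/2}}\paren{z_1 \,\middle\|\, T} \le \pr{t \sim z_1}{\ln\paren{\tfrac{f_{z_1}(t)}{f_T(t)}} > \tfrac{\eps}{2}}.
\]
By Lemma~\ref{lem:priv_anal_uni1_suff_cond}, every $t$ with $\abs{t} \le \tau \coloneqq \sqrt{\frac{2}{3} \cdot \frac{\eps}{\eps + 16 e^2 \lambda_0}}$ satisfies $\ln(f_{z_1}(t)/f_T(t)) \le \eps/2$. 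Consequently,
\[
    D_{e^{\eps/2}}\paren{z_1 \,\middle\|\, T} \le \pr{t \sim z_1}{\abs{t} > \tau}.
\]

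The final step is to show $\pr{}{\abs{z_1} > \tau} \le \delta$. Using Fact~\ref{fact:unit_sphere_proj_dist} (the density of $z_1$ on $[-1,1]$ is proportional to $(1-t^2)^{(n_2-3)/2}$, equivalently $z_1^2 \sim \Beta(1/2, (n_2-1)/2)$), a standard Chernoff-style computation yields a tail bound of the form $\pr{}{\abs{z_1} > r} \le 2 \exp\!\paren{-c\,n_2 r^2}$ for some absolute constant $c > 0$. Plugging in $r = \tau$ and noting that $\tau^2 = \Theta\!\paren{\eps / \lambda_0}$ for $\lambda_0 \geq 1$, we obtain
\[
    \pr{}{\abs{z_1} > \tau} \le 2 \exp\!\paren{- c' \cdot \frac{\eps\, n_2}{\lambda_0}},
\]
which is at most $\delta$ provided $n_2 \ge \useconstant{C2}\, \lambda_0 \log(1/\delta)/\eps$ for a sufficiently large constant $\useconstant{C2}$; this is exactly our standing assumption.

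The only step that requires real care is the tail bound on $z_1$: one has to choose the right estimate so that the exponent is $\Omega(n_2 r^2)$ rather than something weaker, which is where the factor of $\lambda_0$ in the sample complexity appears. Once that is in place, the three pieces fit together cleanly and the bound $D_{e^{\eps/2}}(\cdots) \le \delta$ follows immediately.
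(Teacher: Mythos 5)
Your proposal is correct and follows essentially the same route as the paper's proof: reduce via Lemma~\ref{lem:priv_anal_uni1_dpi} and Lemma~\ref{lem:priv_anal_uni1_suff_cond} to the tail event $\abs{z_1} > \tau$, then bound that tail by $2\exp\paren{-\Omega\paren{n_2 \tau^2}}$; the paper instantiates your ``standard Chernoff-style'' bound concretely through Fact~\ref{fact:beta_concentration} applied to $z_1^2 \sim \Beta\paren{\frac{1}{2}, \frac{n_2 - 1}{2}}$, verifying that the linear term of the min dominates so that the exponent is indeed $\Theta\paren{n_2 \tau^2} = \Theta\paren{\useconstant{C2} \log\paren{\frac{1}{\delta}}}$.
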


\begin{proof}
By Lemma~\ref{lem:priv_anal_uni1_dpi}, (\ref{eq:uni_hs_div_simp1}), and Lemma~\ref{lem:priv_anal_uni1_suff_cond}, we get that:
\[
    D_{e^{\frac{\eps}{2}}}\paren{\widehat{\mu}' + \sqrt{\paren{1 - \frac{1}{n_1}} n_2} W z \middle\| \widehat{\mu}' + \sqrt{\paren{1 - \frac{1}{n_1}} n_2} W' z} \le \pr{t \sim z_1}{\abs{t} > \sqrt{\frac{2}{3} \cdot \frac{\eps}{\eps + 16 e^2 \lambda_0}}}.
\]
By Fact~\ref{fact:unit_sphere_proj_dist}, we have that $z_1^2 \sim \mathrm{\Beta}\paren{\frac{1}{2}, \frac{n_2 - 1}{2}}$.
By our bound on $n_2$, we get that $n_2 > 2 \iff \frac{n_2 - 1}{2} > \frac{1}{2}$ and $n_2 > \frac{3}{2} \cdot \frac{\eps + 16 e^2 \lambda_0}{\eps}$, so Fact~\ref{fact:beta_concentration}, yields:
\begin{align}
    &\quad\ \pr{t \sim z_1}{t^2 > \frac{2}{3} \cdot \frac{\eps}{\eps + 16 e^2 \lambda_0}} \nonumber \\
    &\le 2 \exp\paren{- \useconstant{cbeta} \min\brc{\frac{\paren{n_2 - 1}^2}{2} \paren{\frac{2}{3} \cdot \frac{\eps}{\eps + 16 e^2 \lambda_0} - \frac{1}{n_2}}^2, \frac{n_2 - 1}{2} \paren{\frac{2}{3} \cdot \frac{\eps}{\eps + 16 e^2 \lambda_0} - \frac{1}{n_2}}}}. \label{eq:beta_concentration1}
\end{align}
Again by our bound on $n_2$, we have that $n_2 > 1 + \frac{9}{2} \cdot \frac{\eps + 16 e^2 \lambda_0}{\eps}$, which implies:
\begin{align*}
    \frac{\paren{n_2 - 1}^2}{2} \paren{\frac{2}{3} \cdot \frac{\eps}{\eps + 16 e^2 \lambda_0} - \frac{1}{n_2}}^2
    &\geq \frac{n_2 - 1}{2} \paren{\frac{2}{3} \cdot \frac{\eps}{\eps + 16 e^2 \lambda_0} - \frac{1}{n_2}} \\
    &\geq \frac{n_2 - 1}{6} \cdot \frac{\eps}{\eps + 16 e^2 \lambda_0} \\
    &\geq \frac{n_2}{12} \cdot \frac{\eps}{\eps + 16 e^2 \lambda_0} \\
    &\geq \frac{\useconstant{C2}}{12} \cdot \frac{\lambda_0 \log\paren{\frac{1}{\delta}}}{\cancel{\eps}} \cdot \frac{\cancel{\eps}}{\eps + 16 e^2 \lambda_0} \\
    &= \frac{\useconstant{C2}}{12} \cdot \frac{\log\paren{\frac{1}{\delta}}}{16 e^2 + \frac{\eps}{\lambda_0}} \\
    &= \frac{\useconstant{C2}}{12} \cdot \frac{\log\paren{\frac{1}{\delta}}}{16 e^2 + 1},
\end{align*}
where we used that $\lambda_0 \geq 1$ and $\eps \le 1$.

For $\useconstant{C2}$ sufficiently large, using the above to upper-bound (\ref{eq:beta_concentration1}) yields the desired result.
\end{proof}

\subsubsection{Analyzing the Second Term of (\ref{eq:weak_triangle_ineq}) for Multivariate Data}
\label{subsubsec:priv_anal_multi1}

We now proceed to show the multivariate analogue of Proposition~\ref{prop:priv_anal_uni1}.
As in the univariate setting, we will break down the proof into a number of lemmas to make the presentation more modular and accessible.
Specifically, the main steps are:
\begin{enumerate}
    \item we first show that the term of interest can be equivalently expressed as the divergence between $z_{\le d}$ and $T \coloneqq \Lambda^{- \frac{1}{2}} U^{\top} U' \Lambda'^{\frac{1}{2}} z_{\le d}$, where $\Lambda, \Lambda' \succ 0$ are diagonal matrices, and $U, U'$ are rotation matrices such that we have $\widehat{\Sigma} \coloneqq W W^{\top} = U \Lambda U^{\top}$ and $\widehat{\Sigma}' \coloneqq W' W'^{\top} = U \Lambda' U^{\top}$ (Lemma~\ref{lem:priv_anal_multi1_dpi}),
    \item then, we identify the density of $T$ (Lemma~\ref{lem:t_density}),\footnote{This requires a non-trivial amount of work compared to the univariate setting, which justifies this step's presentation in the form of an autonomous lemma.}
    \item subsequently, we will identify sufficient conditions for points $t \in \supp\paren{z_1}$ that imply that the log-density ratio $\ln\paren{\frac{f_{z_1}\paren{t}}{f_T\paren{t}}}$ is at most $\frac{\eps}{2}$ (Lemma~\ref{lem:priv_anal_multi1_suff_cond}),
    \item we will consider the matrix $A \coloneqq Q^{\top} \paren{\widehat{\Sigma}^{\frac{1}{2}} \widehat{\Sigma}'^{- 1} \widehat{\Sigma}^{\frac{1}{2}} - \id_{d \times d}} Q - \frac{\eps}{4 n_2} \id_{n_2 \times n_2}$, where $Q \in \bR^{d \times n_2}$ denotes the projection matrix that keeps the first $d$ components of a vector in $\bR^{n_2}$, and bound its trace, as well as its spectral and Frobenius norms (Lemma~\ref{lem:matrix_bounds} and Corollary~\ref{cor:matrix_bounds_supp}).
    \item finally, we will show that our sample complexity suffices for the second term of (\ref{eq:weak_triangle_ineq}) to be at most $\delta$ (Proposition~\ref{prop:priv_anal_multi1}).
\end{enumerate}
Observe that the fourth step in the above outline is completely missing from the univariate setting.
The reason this step is required here is because, as we will see over the course of the proof, Fact~\ref{fact:beta_concentration} will not be strong enough to allow us to complete the proof with optimal sample complexity.
This is in contrast to how things worked out in the univariate setting, and the proof of Proposition~\ref{prop:priv_anal_uni1} specifically.
Instead, our proof will involve a reduction to an argument that uses Gaussian random vectors, which will allow us to appeal to Fact~\ref{fact:hanson_wright}.

We note that the argument given in this section is not our original version of the argument.
Specifically, our initial proof did not rely on the aforementioned reduction, instead using Fact~\ref{fact:beta_concentration}, albeit with a slightly sub-optimal sample complexity by a $\mathrm{polylog}\paren{\lambda_0}$-factor.
We consider that proof to be of significant technical interest, which is why we have included it in Appendix~\ref{sec:priv_proof_alt}.
The appendix also includes a detailed discussion about why the univariate approach cannot be directly applied in the multivariate setting. 

Having concluded the above discussion, we start with the technical content.
We give here the first lemma from the sequence described earlier.
Its proof is analogous that of Lemma~\ref{lem:priv_anal_uni1_dpi}, but requires some extra care due to the fact that we are dealing with multivariate data.

\begin{lemma}
\label{lem:priv_anal_multi1_dpi}
Let $\eps \in \brk{0, 1}$ and $\delta \in \brk{0, \frac{\eps}{10}}, \lambda_0 \geq d$.
Let us assume that $n_2 \geq \useconstant{C2} \frac{\lambda_0 \log\paren{\frac{1}{\delta}}}{\eps}$, for some appropriately large absolute constant $\useconstant{C2} \geq 1$.
Finally, for $n \coloneqq n_1 + 2 n_2$, let $X, X' \in \bR^{n \times d}$ be adjacent datasets and $R \subseteq \brk{n_1}$ be a representation set such that $\Psi\paren{X, R} = \Psi\paren{X', R} = 1$.
Then, for $z \sim \cU\paren{\bS^{n_2 - 1}}$, we have:
\[
    D_{e^{\frac{\eps}{2}}}\paren{\widehat{\mu}' + \sqrt{\paren{1 - \frac{1}{n_1}} n_2} W z \middle\| \widehat{\mu}' + \sqrt{\paren{1 - \frac{1}{n_1}} n_2} W' z} = D_{e^{\frac{\eps}{2}}}\paren{z_{\le d} \middle\| T},
\]
where $T \coloneqq \Lambda^{- \frac{1}{2}} U^{\top} U' \Lambda'^{\frac{1}{2}} z_{\le d}$, while $\Lambda, \Lambda' \succ 0$ are diagonal matrices, and $U, U'$ are rotation matrices such that we have $\widehat{\Sigma} \coloneqq W W^{\top} = U \Lambda U^{\top}$ and $\widehat{\Sigma}' \coloneqq W' W'^{\top} = U \Lambda' U^{\top}$.
\end{lemma}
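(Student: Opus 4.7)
The plan is to mirror the strategy of Lemma~\ref{lem:priv_anal_uni1_dpi}, but to handle the higher-dimensional setting by extracting the "shape" of $W$ and $W'$ via the singular value decomposition and then exploiting the rotational invariance of $z \sim \cU(\bS^{n_2-1})$. As a first observation, the map $y \mapsto \paren{y - \widehat{\mu}'} / \sqrt{\paren{1 - 1/n_1} n_2}$ is an invertible affine map applied identically to both sides, so by Fact~\ref{fact:dpi} (applied as an equality in both directions) it suffices to prove $D_{e^{\eps/2}}(W z \| W' z) = D_{e^{\eps/2}}(z_{\le d} \| T)$. If $X$ and $X'$ differ on a datapoint fed to \StableMean\ rather than \StableCov, then $W = W'$ and both sides are $0$, so we may assume the differing coordinate lies in the portion of the dataset processed by \StableCov.

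Next I would compute the (thin or full) SVD $W = U D_W V^{\top}$ with $U \in O(d)$, $V \in O(n_2)$, and $D_W \in \bR^{d \times n_2}$ carrying the singular values in its first $d$ columns; then $W W^{\top} = U \paren{D_W D_W^{\top}} U^{\top}$ identifies $U$ with the rotation in the spectral decomposition of $\widehat{\Sigma}$ and $\Lambda \coloneqq D_W D_W^{\top}$ with the diagonal matrix of eigenvalues. Because $\Psi\paren{X, R} = 1$ and $n_2$ is chosen large enough that the hypothesis $n_2 \ge 16 e^2 \lambda_0 k$ of Lemma~\ref{lem:stable_cov} holds, the same lemma guarantees $\widehat{\Sigma} \succ 0$, so every diagonal entry of $\Lambda$ is strictly positive and $\Lambda^{-1/2}$ is well-defined; the same discussion gives $\widehat{\Sigma}' = U' \Lambda' U'^{\top}$ with $\Lambda' \succ 0$. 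Fact~\ref{fact:rotational_invariance} (rotational invariance of the uniform distribution on the sphere) yields $V^{\top} z \overset{d}{=} z$, and since $D_W z$ depends only on $z_{\le d}$, I get $W z \overset{d}{=} U \Lambda^{1/2} z_{\le d}$; an identical argument applied to $W' = U' D_{W'} V'^{\top}$ gives $W' z \overset{d}{=} U' \Lambda'^{1/2} z_{\le d}$.

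Finally, I would apply Fact~\ref{fact:dpi} once more with the invertible linear map $y \mapsto \Lambda^{-1/2} U^{\top} y$ (invertible thanks to $\Lambda \succ 0$ and $U \in O(d)$), which sends the first argument to $z_{\le d}$ and the second to $\Lambda^{-1/2} U^{\top} U' \Lambda'^{1/2} z_{\le d} = T$. This yields $D_{e^{\eps/2}}(Wz \| W'z) = D_{e^{\eps/2}}(z_{\le d} \| T)$, completing the reduction. There is no serious technical obstacle: the two non-trivial ingredients are the positivity of $\Lambda$ and $\Lambda'$ (which is precisely where we use Lemma~\ref{lem:stable_cov} and our lower bound on $n_2$) and the bookkeeping needed to align the two SVDs so that the final composed transformation matches the claimed expression for $T$; everything else is a standard DPI plus rotational invariance argument.
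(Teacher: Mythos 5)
Your proposal is correct and follows essentially the same route as the paper's own proof: reduce via the equality case of the data-processing inequality, take the SVDs of $W$ and $W'$, use rotational invariance of $z$ to absorb $V^{\top}$ and $V'^{\top}$, invoke Lemma~\ref{lem:stable_cov} (via $\Psi = 1$ and the bound on $n_2$) for $\Lambda, \Lambda' \succ 0$, and apply DPI once more with the invertible map $y \mapsto \Lambda^{-1/2} U^{\top} y$. No gaps.
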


\begin{proof}
As in Lemma~\ref{lem:priv_anal_uni1_dpi}, the equality case of Fact~\ref{fact:dpi} yields:
\begin{equation}
    D_{e^{\frac{\eps}{2}}}\paren{\widehat{\mu}' + \sqrt{\paren{1 - \frac{1}{n_1}} n_2} W z \middle\| \widehat{\mu}' + \sqrt{\paren{1 - \frac{1}{n_1}} n_2} W' z} = D_{e^{\frac{\eps}{2}}}\paren{W z \middle\| W' z}. \label{eq:multi_dpi_app1}
\end{equation}
We assume that $X$ and $X'$ differ at one of the points that are fed to \StableCov, since otherwise (\ref{eq:multi_dpi_app1}) is trivially $0$.
For multivariate data, the shape of the matrix $W$ will be $d \times n_2$.
By the second part of Fact~\ref{fact:matrix_fact} and Remark~\ref{rem:svd_comment}, we get that $W$ and $W'$ can be written in the form $W = U D V^{\top}$ and $W' = U' D' V'^{\top}$, where $U, U' \in \bR^{d \times d}$ and $V, V' \in \bR^{n_2 \times n_2}$ are rotation matrices, while $D, D' \in \bR^{d \times n_2}$ are matrices of the form $D = \paren{\diag\brc{\sigma_1, \dots, \sigma_d} \middle| 0_{d \times \paren{n_2 - d}}}$ and $D' = \paren{\diag\brc{\sigma_1', \dots, \sigma_d'} \middle| 0_{d \times \paren{n_2 - d}}}$.
However, Fact~\ref{fact:rotational_invariance} implies that $V^{\top} z \overset{d}{=} z$, yielding:
\begin{equation}
    D_{e^{\frac{\eps}{2}}}\paren{W z \middle\| W' z} = D_{e^{\frac{\eps}{2}}}\paren{U D z \middle\| U' D' z} = D_{e^{\frac{\eps}{2}}}\paren{U \Lambda^{\frac{1}{2}} z_{\le d} \middle\| U' \Lambda'^{\frac{1}{2}} z_{\le d}}, \label{eq:multi_hs_div_simp1}
\end{equation}
where $\Lambda \coloneqq D D^{\top} = \diag\brc{\sigma_1^2, \dots, \sigma_d^2}$ and $\Lambda' \coloneqq D' D'^{\top} = \diag\brc{\sigma_1'^2, \dots, \sigma_d'^2}$.

We note now that:
\[
    \widehat{\Sigma} = W W^{\top} = \paren{U D V^{\top}} \paren{U D V^{\top}}^{\top} = U D \paren{V^{\top} V} D^{\top} U^{\top} = U D D^{\top} U^{\top} = U \Lambda U^{\top},
\]
and, analogously, $\widehat{\Sigma}' = U' \Lambda' U'^{\top}$.
Our choice of $n_2$ implies that the condition $n_2 \geq 16 e^2 \lambda_0 k$ of Lemma~\ref{lem:stable_cov} is satisfied, and, since $\Psi\paren{X, R} = \Psi\paren{X', R} = 1$, we have that $\Lambda, \Lambda' \succ 0 \implies \exists \Lambda^{- 1}, \Lambda'^{- 1}$.
Thus, applying Fact~\ref{fact:dpi} (again as an equality) yields the desired result.
\end{proof}

Observe now that the multivariate analogue of (\ref{eq:uni_hs_div_ub1}) implies that:
\begin{equation}
    D_{e^{\frac{\eps}{2}}}\paren{z_{\le d} \middle\| T} \le \pr{t \sim z_{\le d}}{f_{z_{\le d}}\paren{t} > e^{\frac{\eps}{2}} f_T\paren{t}}. \label{eq:multi_hs_div_ub1}
\end{equation}
Thus, we need to identify the density of $T$.
We do so in the following lemma.

\begin{lemma}
\label{lem:t_density}
In the setting of Lemma~\ref{lem:priv_anal_multi1_dpi}, it holds that:
\[
    f_T\paren{t} \ \propto \ \sqrt{\det\paren{\widehat{\Sigma}^{\frac{1}{2}} \widehat{\Sigma}'^{- 1} \widehat{\Sigma}^{\frac{1}{2}}}} \paren{1 - s^{\top} \widehat{\Sigma}^{\frac{1}{2}} \widehat{\Sigma}'^{- 1} \widehat{\Sigma}^{\frac{1}{2}} s}^{\frac{n_2 - d}{2} - 1} \mathds{1} \brc{s^{\top} \widehat{\Sigma}^{\frac{1}{2}} \widehat{\Sigma}'^{- 1} \widehat{\Sigma}^{\frac{1}{2}} s \le 1},
\]
where $s \coloneqq s\paren{t} \coloneqq U t$ with $\llnorm{s} = \llnorm{t}$.
\end{lemma}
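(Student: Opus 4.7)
The plan is to derive $f_T$ via a straightforward change of variables from the known density of $z_{\le d}$, and then to massage the resulting expression using the relations $\widehat{\Sigma} = U \Lambda U^{\top}$ and $\widehat{\Sigma}' = U' \Lambda' U'^{\top}$ so that the final form is expressed in terms of $\widehat{\Sigma}^{1/2}\widehat{\Sigma}'^{-1}\widehat{\Sigma}^{1/2}$ and the rotated variable $s = Ut$. First I would invoke a standard fact (the one already being used in the paper via $\cU(\bS^{n_2-1})_{\le d}$) that the density of $z_{\le d}$ on $\bR^{d}$ is proportional to $\paren{1 - \llnorm{u}^{2}}^{(n_2-d)/2 - 1} \mathds{1}\brc{\llnorm{u} \le 1}$, with an implicit normalizing constant that will be absorbed into the $\propto$ in the claim.

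Next, I would set $M \coloneqq \Lambda^{-\frac{1}{2}} U^{\top} U' \Lambda'^{\frac{1}{2}}$, so that $T = M z_{\le d}$, and note that $M$ is invertible because $\Lambda, \Lambda' \succ 0$ under the hypothesis $\Psi\paren{X, R} = \Psi\paren{X', R} = 1$ (as established at the end of Lemma~\ref{lem:priv_anal_multi1_dpi}). Applying the standard change-of-variables formula (Fact~\ref{fact:transformations}) gives
\[
    f_{T}\paren{t} = \frac{1}{\abs{\det M}} f_{z_{\le d}}\paren{M^{-1} t}.
\]
For the determinant, since $U, U'$ are rotation matrices we have $\abs{\det U} = \abs{\det U'} = 1$, so $\abs{\det M} = \sqrt{\det \Lambda' / \det \Lambda}$; combined with $\det \widehat{\Sigma} = \det \Lambda$ and $\det \widehat{\Sigma}' = \det \Lambda'$ (via conjugation-invariance of the determinant), this gives $1 / \abs{\det M} = \sqrt{\det\paren{\widehat{\Sigma}^{1/2} \widehat{\Sigma}'^{-1} \widehat{\Sigma}^{1/2}}}$, matching the prefactor in the claim.

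Then I would handle the argument of $f_{z_{\le d}}$. Writing $M^{-1} = \Lambda'^{-\frac{1}{2}} U'^{\top} U \Lambda^{\frac{1}{2}}$, a direct computation yields
\[
    \llnorm{M^{-1} t}^{2} = t^{\top} \Lambda^{\frac{1}{2}} U^{\top} U' \Lambda'^{-1} U'^{\top} U \Lambda^{\frac{1}{2}} t.
\]
Using $\widehat{\Sigma}^{\frac{1}{2}} = U \Lambda^{\frac{1}{2}} U^{\top}$ (the p.s.d.\ square root, valid since $\widehat{\Sigma} \succ 0$) and $\widehat{\Sigma}'^{-1} = U' \Lambda'^{-1} U'^{\top}$, and substituting $s = Ut$, this simplifies to $s^{\top} \widehat{\Sigma}^{\frac{1}{2}} \widehat{\Sigma}'^{-1} \widehat{\Sigma}^{\frac{1}{2}} s$. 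Since $U$ is orthogonal, $\llnorm{s} = \llnorm{t}$, and the indicator condition $\llnorm{M^{-1}t} \le 1$ becomes $s^{\top} \widehat{\Sigma}^{\frac{1}{2}} \widehat{\Sigma}'^{-1} \widehat{\Sigma}^{\frac{1}{2}} s \le 1$. Plugging everything into the change-of-variables expression yields precisely the claimed formula.

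The argument is essentially a bookkeeping exercise; the main (minor) obstacle is keeping track of the two different rotations $U, U'$ and verifying that the rewriting $\llnorm{M^{-1}t}^2 = s^\top \widehat{\Sigma}^{1/2}\widehat{\Sigma}'^{-1}\widehat{\Sigma}^{1/2} s$ truly works with the symmetric square root $\widehat{\Sigma}^{1/2} = U\Lambda^{1/2}U^\top$ rather than with some asymmetric factorization. Once this identification is made, both the determinant and the quadratic form fall into place without further difficulty, and no probabilistic estimates beyond the density of $z_{\le d}$ are needed.
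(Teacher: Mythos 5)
Your proposal is correct and follows essentially the same route as the paper's proof: both apply the change-of-variables formula to $T = M z_{\le d}$ with $M = \Lambda^{-\frac{1}{2}} U^{\top} U' \Lambda'^{\frac{1}{2}}$, compute the Jacobian determinant via multiplicativity and the unit determinant of rotations, and rewrite the quadratic form $\llnorm{M^{-1} t}^2$ as $s^{\top} \widehat{\Sigma}^{\frac{1}{2}} \widehat{\Sigma}'^{-1} \widehat{\Sigma}^{\frac{1}{2}} s$ using the symmetric square root $\widehat{\Sigma}^{\frac{1}{2}} = U \Lambda^{\frac{1}{2}} U^{\top}$. The bookkeeping you highlight (the two distinct rotations and the cancellation $U^{\top} U = \id$) is exactly the content of the paper's displays (\ref{eq:obs1}) and (\ref{eq:obs2}).
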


\begin{proof}
Facts~\ref{fact:unit_sphere_proj_dist} and~\ref{fact:transformations} yield that:
\begin{align}
    f_T\paren{t}
    &= \det\paren{\Lambda'^{- \frac{1}{2}} U'^{\top} U \Lambda^{\frac{1}{2}}} f_{z_{\le d}}\paren{\Lambda'^{- \frac{1}{2}} U'^{\top} U \Lambda^{\frac{1}{2}} t} \nonumber \\
    &\ \propto \ \det\paren{\Lambda'^{- \frac{1}{2}} U'^{\top} U \Lambda^{\frac{1}{2}}} \paren{1 - \llnorm{\Lambda'^{- \frac{1}{2}} U'^{\top} U \Lambda^{\frac{1}{2}} t}^2}^{\frac{n_2 - d}{2} - 1}, \label{eq:density_complicated}
\end{align}
for all $t \in \bR^d$ satisfying $\llnorm{\Lambda'^{- \frac{1}{2}} U'^{\top} U \Lambda^{\frac{1}{2}} t}^2 \le 1$.

We work towards simplifying the above.
First, observe that:
\begin{equation}
    \llnorm{\Lambda'^{- \frac{1}{2}} U'^{\top} U \Lambda^{\frac{1}{2}} t}^2 = t^{\top} \Lambda^{\frac{1}{2}} U^{\top} U' \Lambda'^{- 1} U'^{\top} U \Lambda^{\frac{1}{2}} t = \paren{U t}^{\top} \widehat{\Sigma}^{\frac{1}{2}} \widehat{\Sigma}'^{- 1} \widehat{\Sigma}^{\frac{1}{2}} \paren{U t} = s^{\top} \widehat{\Sigma}^{\frac{1}{2}} \widehat{\Sigma}'^{- 1} \widehat{\Sigma}^{\frac{1}{2}} s, \label{eq:obs1}
\end{equation}
where we appealed to the guarantees of Lemma~\ref{lem:priv_anal_multi1_dpi} to write $\widehat{\Sigma} = U \Lambda U^{\top}, \widehat{\Sigma}' = U' \Lambda' U'^{\top}$ with $\Lambda, \Lambda' \succ 0$ (which ensures that the inverses and square roots of $\widehat{\Sigma}$ and $\widehat{\Sigma}'$ are well-defined).

Second, we have:
\begin{align}
    \det\paren{\Lambda'^{- \frac{1}{2}} U'^{\top} U \Lambda^{\frac{1}{2}}} \overset{\paren{a}}{=} \det\paren{\Lambda'^{- \frac{1}{2}}} \det\paren{\Lambda^{\frac{1}{2}}}
    \overset{\paren{b}}&{=} \sqrt{\det\paren{\Lambda'^{- 1}} \det\paren{\Lambda}} \nonumber \\
    \overset{\paren{c}}&{=} \sqrt{\det\paren{\Lambda^{\frac{1}{2}}} \det\paren{\Lambda'^{- 1}} \det\paren{\Lambda^{\frac{1}{2}}}} \nonumber \\
    \overset{\paren{d}}&{=} \sqrt{\det\paren{\widehat{\Sigma}^{\frac{1}{2}} \widehat{\Sigma}'^{- 1} \widehat{\Sigma}^{\frac{1}{2}}}}, \label{eq:obs2}
\end{align}
where $\paren{a}$ used that the determinant is multiplicative when considering products of square matrices and that rotation matrices have determinant $1$, $\paren{b}$ used that the determinant of the square root of a diagonal matrix is the square root of the determinant, $\paren{c}$ used again the multiplicative property, and $\paren{d}$ used the same properties as $\paren{a}$.

Substituting to (\ref{eq:density_complicated}) based on (\ref{eq:obs1}) and (\ref{eq:obs2}) yields the desired result about the density.
It remains to argue that $\llnorm{s} = \llnorm{t}$.
This follows directly from Fact~\ref{fact:rotational_invariance}.
\end{proof}

\begin{remark}
\label{rem:t_support}
Lemma~\ref{lem:t_density} establishes that, although $z_{\le d}$ is supported over the origin-centered unit ball, $T$ is supported over an origin-centered ellipsoid, where the orientation and the length of the axes are determined by the eigendecomposition of $\widehat{\Sigma}^{\frac{1}{2}} \widehat{\Sigma}'^{- 1} \widehat{\Sigma}^{\frac{1}{2}}$.
Lemma~\ref{lem:stable_cov} implies that the shape of this ellipsoid will not be too different from that of the unit ball, and this is something that will be leveraged significantly in Lemma~\ref{lem:matrix_bounds}.
\end{remark}

Having identified the density of $T$, and working as we did in Section~\ref{subsubsec:priv_anal_uni1}, we proceed to identify sufficient conditions for the log-density ratio $\ln\paren{\frac{f_{z_{\le d}}\paren{t}}{f_T\paren{t}}}$ to be upper-bounded by $\frac{\eps}{2}$.

\begin{lemma}
\label{lem:priv_anal_multi1_suff_cond}
In the setting of Lemmas~\ref{lem:priv_anal_multi1_dpi} and~\ref{lem:t_density}, we have that:
\[
    s^{\top} \widehat{\Sigma}^{\frac{1}{2}} \widehat{\Sigma}'^{- 1} \widehat{\Sigma}^{\frac{1}{2}} s \le \frac{1}{2} \text{ and } s^{\top} \paren{\widehat{\Sigma}^{\frac{1}{2}} \widehat{\Sigma}'^{- 1} \widehat{\Sigma}^{\frac{1}{2}} - \id} s \le \frac{\eps}{4 n_2} \implies \ln\paren{\frac{f_{z_{\le d}}\paren{t}}{f_T\paren{t}}} \le \frac{\eps}{2}.
\]
\end{lemma}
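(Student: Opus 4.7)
The plan is to substitute the explicit form of $f_T$ from Lemma~\ref{lem:t_density} and the analogous expression for $f_{z_{\le d}}$ (which is the same formula specialized to $M = \id$) into the log-ratio. Setting $M \coloneqq \widehat{\Sigma}^{\frac{1}{2}} \widehat{\Sigma}'^{-1} \widehat{\Sigma}^{\frac{1}{2}}$, $a \coloneqq \llnorm{s}^2 = \llnorm{t}^2$, and $b \coloneqq s^{\top} M s$, this splits $\ln\paren{f_{z_{\le d}}(t) / f_T(t)}$ as a ``Jacobian'' term $-\frac{1}{2} \ln\det(M)$, which depends only on $M$, plus a ``density-exponent'' term $\paren{\frac{n_2 - d}{2} - 1} \ln\frac{1 - a}{1 - b}$, which depends on $s$ through the two hypotheses. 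I would bound each term by $\eps/4$ and sum to the claimed $\eps/2$.

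For the density-exponent term, I would rewrite $\ln\frac{1 - a}{1 - b} = \ln\paren{1 + \frac{b - a}{1 - b}}$ and apply $\ln(1 + x) \le x$. The hypothesis $b \le \frac{1}{2}$ gives $\frac{1}{1 - b} \le 2$, and the hypothesis $b - a = s^{\top}(M - \id) s \le \frac{\eps}{4 n_2}$ then yields $\ln\frac{1 - a}{1 - b} \le \frac{\eps}{2 n_2}$. Multiplying by $\frac{n_2 - d}{2} - 1 \le \frac{n_2}{2}$ produces the claimed $\frac{\eps}{4}$. Boundary cases are easy: if $a \ge 1$ then $f_{z_{\le d}}(t) = 0$ and the bound is trivial, while $b \le \frac{1}{2}$ keeps $f_T(t)$ strictly positive in the regime of interest, so the logarithm is finite. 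The case $b - a < 0$ only makes the term smaller, so the argument is uniform.

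For the Jacobian term, I would invoke Lemma~\ref{lem:stable_cov}, whose hypotheses are met under the setting of Lemma~\ref{lem:priv_anal_multi1_dpi}, to get $(1 - \gamma) \id \preceq M \preceq \frac{1}{1 - \gamma} \id$ and $\mnorm{\tr}{M - \id} \le (1 + 2 \gamma) \gamma$, where $\gamma \coloneqq \frac{8 e^2 \lambda_0}{n_2}$. Writing each eigenvalue of $M$ as $1 + \mu_i$ with $\abs{\mu_i} \le \gamma/(1 - \gamma)$ and using the elementary inequality $\abs{-\ln(1 + \mu)} \le \abs{\mu} + \mu^2$ (valid for $\abs{\mu} \le \frac{1}{2}$), I would derive $\abs{-\tfrac{1}{2} \ln\det(M)} \le \tfrac{1}{2} \abs{\tr(M - \id)} + \tfrac{1}{2} \fnorm{M - \id}^2 \le \tfrac{1}{2} \mnorm{\tr}{M - \id}\paren{1 + \mnorm{\tr}{M - \id}}$, using $\fnorm{A}^2 \le \mnorm{\tr}{A}^2$. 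This is $O(\gamma)$, and the sample complexity $n_2 \geq \useconstant{C2} \lambda_0 \log(1/\delta)/\eps$ makes $\gamma$ as small as $\eps/\useconstant{C2}$ up to absolute constants, so choosing $\useconstant{C2}$ sufficiently large forces $-\frac{1}{2}\ln\det(M) \le \frac{\eps}{4}$.

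The main obstacle is the Jacobian bound: one must convert the trace-norm perturbation from Lemma~\ref{lem:stable_cov} into a log-determinant bound via a Taylor expansion of $\ln(1 + \mu)$ and verify that the resulting $O(\gamma)$ constant fits within $\eps/4$. This in turn pins down how large the absolute constant $\useconstant{C2}$ in the definition of $n_2$ needs to be, and is the only step that ``burns'' a nontrivial slack in the sample complexity. Once this is handled, the density-exponent piece is a routine application of $\ln(1+x) \le x$, and the two bounds combine to the advertised $\eps/2$.
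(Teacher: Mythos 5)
Your proposal is correct and follows essentially the same route as the paper: the same split of $\ln\paren{f_{z_{\le d}}(t)/f_T(t)}$ into a log-determinant term and a density-exponent term, each bounded by $\frac{\eps}{4}$, with the exponent term handled identically via $\ln(x) \le x - 1$ together with the two hypotheses. The only divergence is in the determinant term, where you Taylor-expand $\ln\paren{1 + \mu_i}$ eigenvalue-by-eigenvalue and control the result by $\mnorm{\tr}{M - \id}$, whereas the paper applies the AM--GM determinant bound (Fact~\ref{fact:pd_am_gam}) followed by the same trace-norm estimate from Lemma~\ref{lem:stable_cov}; both give an $O(\gamma)$ bound and are interchangeable.
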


\begin{proof}
The first of our two conditions ensures that we are considering points in the support of $T$, and thus the log-density ratio will be bounded.
Under that assumption, we work with Fact~\ref{fact:unit_sphere_proj_dist} and Lemma~\ref{lem:t_density}, and obtain:
\begin{align}
    \ln\paren{\frac{f_{z_{\le d}}\paren{t}}{f_T\paren{t}}}
    &= \ln\paren{\frac{1}{\sqrt{\det\paren{\widehat{\Sigma}^{\frac{1}{2}} \widehat{\Sigma}'^{- 1} \widehat{\Sigma}^{\frac{1}{2}}}}}} + \frac{n_2 - d - 2}{2} \ln\paren{\frac{1 - \llnorm{t}^2}{1 - s^{\top} \widehat{\Sigma}^{\frac{1}{2}} \widehat{\Sigma}'^{- 1} \widehat{\Sigma}^{\frac{1}{2}} s}} \nonumber \\
    &= \frac{1}{2} \ln\paren{\det\paren{\widehat{\Sigma}^{- \frac{1}{2}} \widehat{\Sigma}' \widehat{\Sigma}^{- \frac{1}{2}}}} + \frac{n_2 - d - 2}{2} \ln\paren{\frac{1 - \llnorm{s}^2}{1 - s^{\top} \widehat{\Sigma}^{\frac{1}{2}} \widehat{\Sigma}'^{- 1} \widehat{\Sigma}^{\frac{1}{2}} s}}. \label{eq:multi_log_density_ratio_cov1}
\end{align}
We proceed to upper-bound each term of (\ref{eq:multi_log_density_ratio_cov1}) separately.
We will show that, as a consequence of our assumptions, both terms are upper-bounded by $\frac{\eps}{4}$.
For the first term, we have by Fact~\ref{fact:pd_am_gam}:
\begin{equation}
    \frac{1}{2} \ln\paren{\det\paren{\widehat{\Sigma}^{- \frac{1}{2}} \widehat{\Sigma}' \widehat{\Sigma}^{- \frac{1}{2}}}} \le \frac{d}{2} \ln\paren{\frac{\tr\paren{\widehat{\Sigma}^{- \frac{1}{2}} \widehat{\Sigma}' \widehat{\Sigma}^{- \frac{1}{2}}}}{d}}. \label{eq:am_gm_app}
\end{equation}
By triangle inequality and Lemma~\ref{lem:stable_cov}, we get:
\begin{align}
    &\qquad\quad \abs{\tr\paren{\widehat{\Sigma}^{- \frac{1}{2}} \widehat{\Sigma}' \widehat{\Sigma}^{- \frac{1}{2}}} - d} \le \mnorm{\tr}{\widehat{\Sigma}^{- \frac{1}{2}} \widehat{\Sigma}' \widehat{\Sigma}^{- \frac{1}{2}} - \id} \le \paren{1 + 2 \gamma} \gamma \nonumber \\
    &\implies \tr\paren{\widehat{\Sigma}^{- \frac{1}{2}} \widehat{\Sigma}' \widehat{\Sigma}^{- \frac{1}{2}}} \le d + \paren{1 + 2 \gamma} \gamma \le d + \frac{3}{2} \gamma, \label{eq:trace_triangle_ineq}
\end{align}
where the last inequality used the fact that $n_2$ is large enough for us to have $\gamma \le \frac{\eps}{4} \le \frac{1}{4}$.

Upper-bounding (\ref{eq:am_gm_app}) using (\ref{eq:trace_triangle_ineq}) and the inequality $\ln\paren{x} \le x - 1, \forall x > 0$, we get:
\begin{equation}
    \frac{1}{2} \ln\paren{\det\paren{\widehat{\Sigma}^{- \frac{1}{2}} \widehat{\Sigma}' \widehat{\Sigma}^{- \frac{1}{2}}}} \le \frac{d}{2} \ln\paren{1 + \frac{3 \gamma}{2 d}} \le \frac{3}{4} \gamma, \label{eq:first_term_ub}
\end{equation}
We want the upper bound of (\ref{eq:first_term_ub}) to be at most $\frac{\eps}{4}$, which is equivalent to $\gamma \le \frac{\eps}{3}$.
However, our assumptions on $n_2$ and $\gamma$ directly imply that, completing the calculation for the first term of (\ref{eq:multi_log_density_ratio_cov1}).

We now turn to the second term of (\ref{eq:multi_log_density_ratio_cov1}).
Again using the inequality $\ln\paren{x} \le x - 1, \forall x > 0$, we get:
\begin{align*}
    \frac{n_2 - d - 2}{2} \ln\paren{\frac{1 - \llnorm{s}^2}{1 - s^{\top} \widehat{\Sigma}^{\frac{1}{2}} \widehat{\Sigma}'^{- 1} \widehat{\Sigma}^{\frac{1}{2}} s}}
    &< \frac{n_2}{2} \paren{\frac{1 - \llnorm{s}^2}{1 - s^{\top} \widehat{\Sigma}^{\frac{1}{2}} \widehat{\Sigma}'^{- 1} \widehat{\Sigma}^{\frac{1}{2}} s} - 1} \\
    &= \frac{n_2}{2} \cdot \frac{s^{\top} \paren{\widehat{\Sigma}^{\frac{1}{2}} \widehat{\Sigma}'^{- 1} \widehat{\Sigma}^{\frac{1}{2}} - \id} s}{1 - s^{\top} \widehat{\Sigma}^{\frac{1}{2}} \widehat{\Sigma}'^{- 1} \widehat{\Sigma}^{\frac{1}{2}} s} \\
    &\le n_2 s^{\top} \paren{\widehat{\Sigma}^{\frac{1}{2}} \widehat{\Sigma}'^{- 1} \widehat{\Sigma}^{\frac{1}{2}} - \id} s \\
    &\le \frac{\eps}{4},
\end{align*}
where the last two inequalities follow from applications of the assumptions $s^{\top} \widehat{\Sigma}^{\frac{1}{2}} \widehat{\Sigma}'^{- 1} \widehat{\Sigma}^{\frac{1}{2}} s \le \frac{1}{2}$ and $s^{\top} \paren{\widehat{\Sigma}^{\frac{1}{2}} \widehat{\Sigma}'^{- 1} \widehat{\Sigma}^{\frac{1}{2}} - \id} s \le \frac{\eps}{4 n_2}$, respectively.
\end{proof}

The next steps focus on upper-bounding the probability of either of the two conditions of Lemma~\ref{lem:priv_anal_multi1_suff_cond} failing.
For the first of the two conditions, it is possible to derive the bound using Fact~\ref{fact:beta_concentration}, as we did in the univariate case.
However, the second condition cannot be handled that way, if we want to get a result with optimal sample complexity (see Appendix~\ref{sec:priv_proof_alt} for a detailed discuss as to why this is the case).
For that reason, our argument will have to go through Fact~\ref{fact:hanson_wright}.
To facilitate the implementation of the argument, we give the following auxiliary lemma.

\begin{lemma}
\label{lem:matrix_bounds}
Assume we are in the setting of Lemma~\ref{lem:priv_anal_multi1_dpi}.
Let $Q \in \bR^{d \times n_2}$ be the projection matrix which, when acting on a vector in $\bR^{n_2}$, keeps its first $d$ components.
Finally, let $A \coloneqq Q^{\top} \paren{\widehat{\Sigma}^{\frac{1}{2}} \widehat{\Sigma}'^{- 1} \widehat{\Sigma}^{\frac{1}{2}} - \id_{d \times d}} Q - \frac{\eps}{4 n_2} \id_{n_2 \times n_2}$, where we assume that $\widehat{\Sigma}^{\frac{1}{2}} \widehat{\Sigma}'^{- 1} \widehat{\Sigma}^{\frac{1}{2}} \succeq \id$.
Then, we have:
\begin{itemize}
    \item $\tr\paren{A} \le \frac{3}{2} \gamma - \frac{\eps}{4} < 0$,
    \item $\llnorm{A} \le \frac{\eps}{4 n_2} \paren{\frac{128 e^2 \lambda_0}{3} - 1}$,
    \item $\fnorm{A}^2 \le \frac{d \eps^2}{16 n_2^2} \brk{\paren{\frac{48 e^2 \lambda_0}{d \eps} - 1}^2 + \paren{\frac{n_2}{d} - 1}}$.
\end{itemize}
\end{lemma}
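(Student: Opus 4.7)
The plan is to reduce the three bounds to an explicit analysis of the spectrum of $A$. The key structural observation is that $Q \in \bR^{d \times n_2}$ is an isometric embedding with $Q Q^{\top} = \id_{d \times d}$, so $Q^{\top} B Q$ (where $B \coloneqq \widehat{\Sigma}^{\frac{1}{2}} \widehat{\Sigma}'^{- 1} \widehat{\Sigma}^{\frac{1}{2}} - \id_{d \times d}$) is a rank-at-most-$d$ symmetric operator on $\bR^{n_2}$ whose nonzero eigenvalues coincide with those of $B$. Consequently, writing $\lambda_1(B) \geq \cdots \geq \lambda_d(B)$, the eigenvalues of $A$ are $\{\lambda_i(B) - \tfrac{\eps}{4 n_2}\}_{i=1}^d$ together with $-\tfrac{\eps}{4 n_2}$ with multiplicity $n_2 - d$. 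All three quantities to be bounded are unitarily invariant functions of this spectrum.

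Next, I would harvest the needed control on $\lambda_i(B)$ and $\tr(B)$ from Lemma~\ref{lem:stable_cov}. The relation $(1 - \gamma) \widehat{\Sigma} \preceq \widehat{\Sigma}' \preceq \tfrac{1}{1 - \gamma} \widehat{\Sigma}$, after inverting and conjugating by $\widehat{\Sigma}^{\frac{1}{2}}$, places each eigenvalue of $\widehat{\Sigma}^{\frac{1}{2}} \widehat{\Sigma}'^{-1} \widehat{\Sigma}^{\frac{1}{2}}$ in $[1 - \gamma, \tfrac{1}{1 - \gamma}]$. Coupled with the standing hypothesis $\widehat{\Sigma}^{\frac{1}{2}} \widehat{\Sigma}'^{-1} \widehat{\Sigma}^{\frac{1}{2}} \succeq \id$, this yields $0 \le \lambda_i(B) \le \tfrac{\gamma}{1 - \gamma} \le \tfrac{4 \gamma}{3}$, where we used $\gamma \le \tfrac{\eps}{4} \le \tfrac{1}{4}$ from the choice of $n_2$. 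Meanwhile, the trace-norm bound in Lemma~\ref{lem:stable_cov} together with $B \succeq 0$ gives $\tr(B) = \sum_i \lambda_i(B) \le (1 + 2 \gamma) \gamma \le \tfrac{3 \gamma}{2}$.

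Combining these two ingredients yields the three bullets with essentially routine arithmetic. For the trace, $\tr(A) = \tr(Q^{\top} B Q) - \tfrac{\eps}{4 n_2} \tr(\id_{n_2 \times n_2}) = \tr(B Q Q^{\top}) - \tfrac{\eps}{4} = \tr(B) - \tfrac{\eps}{4} \le \tfrac{3 \gamma}{2} - \tfrac{\eps}{4}$, and choosing $\useconstant{C2}$ large enough so that $\gamma \le \tfrac{\eps}{6}$ forces this to be strictly negative. For the spectral norm, the eigenvalue description above gives $\llnorm{A} = \max\brc{\llnorm{B} - \tfrac{\eps}{4 n_2}, \tfrac{\eps}{4 n_2}}$; the first term dominates (since $\llnorm{B}$ scales with $\lambda_0 \geq d \geq 1$ while $\tfrac{\eps}{4 n_2}$ is much smaller), so plugging in $\llnorm{B} \le \tfrac{4 \gamma}{3} = \tfrac{32 e^2 \lambda_0}{3 n_2}$ and factoring out $\tfrac{\eps}{4 n_2}$ produces the claimed form. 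For the Frobenius norm, the eigenvalue description directly gives
\[
    \fnorm{A}^2 = \sum_{i = 1}^d \paren{\lambda_i(B) - \tfrac{\eps}{4 n_2}}^2 + (n_2 - d) \paren{\tfrac{\eps}{4 n_2}}^2,
\]
and the second summand is exactly $\tfrac{d \eps^2}{16 n_2^2} \paren{\tfrac{n_2}{d} - 1}$. The first summand is at most $d \paren{\llnorm{B} - \tfrac{\eps}{4 n_2}}^2$, and using the bound on $\llnorm{B}$ again lets me factor out $\tfrac{d \eps^2}{16 n_2^2}$ to reach the stated expression.

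The only delicate point is recognizing how crucially the assumption $\widehat{\Sigma}^{\frac{1}{2}} \widehat{\Sigma}'^{-1} \widehat{\Sigma}^{\frac{1}{2}} \succeq \id$ simplifies the analysis: it pins down the \emph{sign} of every $\lambda_i(B)$, which is what allows the spectral and Frobenius bounds to take the clean one-sided form above (otherwise one would have to track $\max\{|\lambda_{\max}(B) - \tfrac{\eps}{4 n_2}|, |\lambda_{\min}(B) - \tfrac{\eps}{4 n_2}|\}$ separately and the Frobenius bookkeeping would be messier). Beyond this observation the proof is mechanical, so I do not anticipate a real obstacle — just careful use of $\gamma \le \tfrac{\eps}{4}$ to absorb the $(1 + 2 \gamma)$ and $\tfrac{1}{1 - \gamma}$ factors into the advertised constants.
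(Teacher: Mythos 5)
Your reduction to the spectrum of $A$ is sound, and your treatment of the first two bullets matches the paper's proof in substance: the trace identity via $Q Q^{\top} = \id_{d \times d}$ and cyclicity, and the spectral bound via $0 \preceq \widehat{\Sigma}^{\frac{1}{2}} \widehat{\Sigma}'^{- 1} \widehat{\Sigma}^{\frac{1}{2}} - \id \preceq \frac{\gamma}{1 - \gamma} \id$ followed by the comparison $\frac{\eps}{4 n_2} \le \frac{4 \gamma}{3} - \frac{\eps}{4 n_2}$, are exactly the paper's steps, just packaged through the eigenvalue description of $Q^{\top} B Q$ rather than through suprema of quadratic forms.

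The gap is in the Frobenius bound. Bounding the first summand by $d \paren{\llnorm{B} - \frac{\eps}{4 n_2}}^2$ and substituting $\llnorm{B} \le \frac{4 \gamma}{3} = \frac{32 e^2 \lambda_0}{3 n_2}$ yields $\frac{d \eps^2}{16 n_2^2} \paren{\frac{128 e^2 \lambda_0}{3 \eps} - 1}^2$, whereas the stated first term is $\frac{d \eps^2}{16 n_2^2} \paren{\frac{48 e^2 \lambda_0}{d \eps} - 1}^2$ --- note the $d$ in the denominator inside the square. You do not ``reach the stated expression''; your bound is larger by a factor of order $d^2$. The missing ingredient is the trace-norm control that you quoted at the outset but then never used in this step: since every $\lambda_i\paren{B} \geq 0$ and $\sum_{i} \lambda_i\paren{B} \le \paren{1 + 2 \gamma} \gamma \le \frac{3 \gamma}{2}$, the eigenvalues cannot all be of order $\gamma$ simultaneously, and the paper obtains the stated bound precisely by maximizing $\sum_i \brk{\lambda_i - \paren{1 + \frac{\eps}{4 n_2}}}^2$ subject to that trace constraint (see (\ref{eq:eig_constraint})). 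Even the crude consequence $\sum_i \lambda_i\paren{B}^2 \le \paren{\sum_i \lambda_i\paren{B}}^2 \le \frac{9 \gamma^2}{4}$, which has no $d$-dependence at all, is what one actually needs. The loss is not cosmetic: Corollary~\ref{cor:matrix_bounds_supp} requires $\frac{\tr\paren{A}^2}{\fnorm{A}^2} \gtrsim \log\paren{\frac{1}{\delta}}$ with $n_2 = \Theta\paren{\frac{\lambda_0 \log\paren{\frac{1}{\delta}}}{\eps}}$, and with your Frobenius bound this ratio degrades to $\Theta\paren{\frac{\log^2\paren{\frac{1}{\delta}}}{d}}$, which fails for large $d$. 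So the third bullet genuinely requires the trace constraint, not just the operator-norm bound on $B$.
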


\begin{proof}
We start by noting that, by definition, it must be the case that $Q = \paren{\id_{d \times d} \middle| 0_{d \times \paren{n_2 - d}}}$.
This implies that $Q Q^{\top} = \id_{d \times d}$.
Using this observation, we proceed to bound each of the three quantities of interest separately.
Starting with the trace, we have:
\begin{align*}
    \tr\paren{A}
    &= \tr\paren{Q^{\top} \paren{\widehat{\Sigma}^{\frac{1}{2}} \widehat{\Sigma}'^{- 1} \widehat{\Sigma}^{\frac{1}{2}} - \id_{d \times d}} Q - \frac{\eps}{4 n_2} \id_{n_2 \times n_2}} \\
    &= \tr\paren{\paren{\widehat{\Sigma}^{\frac{1}{2}} \widehat{\Sigma}'^{- 1} \widehat{\Sigma}^{\frac{1}{2}} - \id_{d \times d}} Q Q^{\top}} - \frac{\eps}{4 n_2} \tr\paren{\id_{n_2 \times n_2}} \\
    &= \tr\paren{\widehat{\Sigma}^{\frac{1}{2}} \widehat{\Sigma}'^{- 1} \widehat{\Sigma}^{\frac{1}{2}} - \id_{d \times d}} - \frac{\eps}{4} \\
    &\le \frac{3}{2} \gamma - \frac{\eps}{4},
\end{align*}
where, along the way, we appealed to the linearity and cyclic properties of the trace, as well as (\ref{eq:trace_triangle_ineq}).
The fact that the above bound is $< 0$ follows from our assumption about $n_2$ (recall that $\gamma \coloneqq \frac{8 e^2 \lambda_0}{n_2}$).

Moving on to the spectral norm, its definition for symmetric matrices yields:
\begin{align}
    \llnorm{A}
    &= \sup\limits_{v \in \bS^{n_2 - 1}} \abs{v^{\top} \brk{Q^{\top} \paren{\widehat{\Sigma}^{\frac{1}{2}} \widehat{\Sigma}'^{- 1} \widehat{\Sigma}^{\frac{1}{2}} - \id_{d \times d}} Q - \frac{\eps}{4 n_2} \id_{n_2 \times n_2}} v} \nonumber \\
    &= \sup\limits_{v \in \bS^{n_2 - 1}} \abs{\paren{Q v}^{\top} \paren{\widehat{\Sigma}^{\frac{1}{2}} \widehat{\Sigma}'^{- 1} \widehat{\Sigma}^{\frac{1}{2}} - \id_{d \times d}} Q v - \frac{\eps}{4 n_2}} \nonumber \\
    &= \sup\limits_{v \in \bS^{n_2 - 1}} \abs{v_{\le d}^{\top} \paren{\widehat{\Sigma}^{\frac{1}{2}} \widehat{\Sigma}'^{- 1} \widehat{\Sigma}^{\frac{1}{2}} - \id_{d \times d}} v_{\le d} - \frac{\eps}{4 n_2}} \nonumber \\
    &= \max\brc{
        \begin{array}{c}
            \sup\limits_{v \in \bS^{n_2 - 1}} \brc{v_{\le d}^{\top} \paren{\widehat{\Sigma}^{\frac{1}{2}} \widehat{\Sigma}'^{- 1} \widehat{\Sigma}^{\frac{1}{2}} - \id_{d \times d}} v_{\le d}} - \frac{\eps}{4 n_2}, \\
            \frac{\eps}{4 n_2} - \inf\limits_{v \in \bS^{n_2 - 1}} \brc{v_{\le d}^{\top} \paren{\widehat{\Sigma}^{\frac{1}{2}} \widehat{\Sigma}'^{- 1} \widehat{\Sigma}^{\frac{1}{2}} - \id_{d \times d}} v_{\le d}}
        \end{array}
    }. \label{eq:spec_equiv}
\end{align}
We focus on the first term in the above $\max$.
By assumption, we have $\widehat{\Sigma}^{\frac{1}{2}} \widehat{\Sigma}'^{- 1} \widehat{\Sigma}^{\frac{1}{2}} - \id_{d \times d} \succeq 0$.
In this case, the worst-case upper bound is obtained when $\llnorm{v_{\le d}} = 1$ and $v_{> d} = 0$.
Consequently, in the following, we can treat $v$ as a vector in $\bR^d$ instead.
We have:
\begin{equation}
    \sup\limits_{v \in \bS^{d - 1}} \brc{v^{\top} \paren{\widehat{\Sigma}^{\frac{1}{2}} \widehat{\Sigma}'^{- 1} \widehat{\Sigma}^{\frac{1}{2}} - \id_{d \times d}} v} - \frac{\eps}{4 n_2} = \sup\limits_{v \in \bS^{d - 1}} \brc{v^{\top} \widehat{\Sigma}^{\frac{1}{2}} \widehat{\Sigma}'^{- 1} \widehat{\Sigma}^{\frac{1}{2}} v} - \paren{1 + \frac{\eps}{4 n_2}} \le \frac{4}{3} \gamma - \frac{\eps}{4 n_2}, \label{eq:spec_equiv1}
\end{equation}
where relied on Lemma~\ref{lem:stable_cov}, Fact~\ref{fact:sd_ord} and our bounds on $n_2$ and $\eps$ to argue that:
\begin{align}
    \paren{1 - \gamma} \widehat{\Sigma} \preceq \widehat{\Sigma}' \preceq \frac{1}{1 - \gamma} \widehat{\Sigma}
    &\iff \paren{1 - \gamma} \widehat{\Sigma}^{- 1} \preceq \widehat{\Sigma}'^{- 1} \preceq \frac{1}{1 - \gamma} \widehat{\Sigma}^{- 1} \nonumber \\
    &\iff \paren{1 - \gamma} \id \preceq \widehat{\Sigma}^{\frac{1}{2}} \widehat{\Sigma}'^{- 1} \widehat{\Sigma}^{\frac{1}{2}} \preceq \frac{1}{1 - \gamma} \id, \label{eq:sd_order_constraint}
\end{align}
and that $\gamma \le \frac{\eps}{4} \le \frac{1}{4}$.

For the other term in (\ref{eq:spec_equiv}), the assumption $\widehat{\Sigma}^{\frac{1}{2}} \widehat{\Sigma}'^{- 1} \widehat{\Sigma}^{\frac{1}{2}} \succeq \id$ implies that the worst-case upper bound is obtained when $v_{\le d} = 0$ and $\llnorm{v_{> d}} = 1$.
This yields the upper bound:
\begin{equation}
    \frac{\eps}{4 n_2} - \inf\limits_{v \in \bS^{n_2 - 1}} \brc{v_{\le d}^{\top} \paren{\widehat{\Sigma}^{\frac{1}{2}} \widehat{\Sigma}'^{- 1} \widehat{\Sigma}^{\frac{1}{2}} - \id_{d \times d}} v_{\le d}} \le \frac{\eps}{4 n_2}. \label{eq:spec_equiv2}
\end{equation}
Comparing (\ref{eq:spec_equiv1}) with (\ref{eq:spec_equiv2}), we get $\frac{\eps}{4 n_2} \le \frac{4}{3} \gamma - \frac{\eps}{4 n_2} \iff \eps \le \frac{64}{3} e^2 \lambda_0$, which holds by our assumptions on the range of $\eps$ and $\lambda_0$.
Going back to (\ref{eq:spec_equiv}), we get:
\[
    \llnorm{A} \le \frac{4}{3} \gamma - \frac{\eps}{4 n_2} = \frac{\eps}{4 n_2} \paren{\frac{16}{3} \cdot \frac{n_2 \gamma}{\eps} - 1} = \frac{\eps}{4 n_2} \paren{\frac{128 e^2 \lambda_0}{3} - 1}.
\]
Finally, we prove the upper bound on $\fnorm{A}^2$.
Let $\brc{\lambda_i}_{i \in \brk{d}}$ be the spectrum of $\widehat{\Sigma}^{\frac{1}{2}} \widehat{\Sigma}'^{- 1} \widehat{\Sigma}^{\frac{1}{2}}$.
The definition of the Frobenius norm, along with the linearity and cyclic properties of the trace, yield:
\begin{align}
    \fnorm{A}^2
    &= \tr\paren{\brk{Q^{\top} \paren{\widehat{\Sigma}^{\frac{1}{2}} \widehat{\Sigma}'^{- 1} \widehat{\Sigma}^{\frac{1}{2}} - \id_{d \times d}} Q - \frac{\eps}{4 n_2} \id_{n_2 \times n_2}}^2} \nonumber \\
    &= \tr\paren{Q^{\top} \paren{\widehat{\Sigma}^{\frac{1}{2}} \widehat{\Sigma}'^{- 1} \widehat{\Sigma}^{\frac{1}{2}} - \id_{d \times d}} Q Q^{\top} \paren{\widehat{\Sigma}^{\frac{1}{2}} \widehat{\Sigma}'^{- 1} \widehat{\Sigma}^{\frac{1}{2}} - \id_{d \times d}} Q} \nonumber \\
    &\quad\ - \frac{\eps}{2 n_2} \tr\paren{Q^{\top} \paren{\widehat{\Sigma}^{\frac{1}{2}} \widehat{\Sigma}'^{- 1} \widehat{\Sigma}^{\frac{1}{2}} - \id_{d \times d}} Q} + \frac{\eps^2}{16 n_2^2} \tr\paren{\id_{n_2 \times n_2}} \nonumber \\
    &= \tr\paren{\paren{\widehat{\Sigma}^{\frac{1}{2}} \widehat{\Sigma}'^{- 1} \widehat{\Sigma}^{\frac{1}{2}} - \id_{d \times d}}^2} - \frac{\eps}{2 n_2} \tr\paren{\widehat{\Sigma}^{\frac{1}{2}} \widehat{\Sigma}'^{- 1} \widehat{\Sigma}^{\frac{1}{2}} - \id_{d \times d}} + \frac{\eps^2}{16 n_2} \nonumber \\
    &= \sum\limits_{i \in \brk{d}} \paren{\lambda_i - 1}^2 - \frac{\eps}{2 n_2} \sum\limits_{i \in \brk{d}} \paren{\lambda_i - 1} + \frac{\eps^2}{16 n_2} \nonumber \\
    &= \sum\limits_{i \in \brk{d}} \brk{\paren{\lambda_i - 1}^2 - \frac{\eps}{2 n_2} \paren{\lambda_i - 1} + \frac{\eps^2}{16 n_2^2}} - \frac{d \eps^2}{16 n_2^2} + \frac{\eps^2}{16 n_2} \nonumber \\
    &= \sum\limits_{i \in \brk{d}} \brk{\lambda_i - \paren{1 + \frac{\eps}{16 n_2}}}^2 + \frac{\eps^2}{16 n_2^2} \paren{n_2 - d}. \label{eq:frob_norm}
\end{align}
Our goal now is to derive a worst-case upper bound on (\ref{eq:frob_norm}).
By assumption, we have that $\lambda_i \geq 1, \forall i \in \brk{d}$.
Due of this, Lemmas~\ref{lem:stable_cov} and~\ref{lem:dist_inverse} imply:
\begin{align}
    \mnorm{\tr}{\widehat{\Sigma}^{- \frac{1}{2}} \widehat{\Sigma}' \widehat{\Sigma}^{- \frac{1}{2}} - \id} \le \gamma \paren{1 + 2 \gamma}
    &\implies \mnorm{\tr}{\widehat{\Sigma}^{\frac{1}{2}} \widehat{\Sigma}'^{- 1} \widehat{\Sigma}^{\frac{1}{2}} - \id} \le \gamma \paren{1 + 2 \gamma} \nonumber \\
    &\iff \sum\limits_{i \in \brk{d}} \paren{\lambda_i - 1} \le \gamma \paren{1 + 2 \gamma} \le \frac{3}{2} \gamma, \label{eq:eig_constraint}
\end{align}
where the last inequality used the assumption that $n_2$ is large enough to ensure $\gamma \le \frac{\eps}{4} \le \frac{1}{4}$.

Combining (\ref{eq:frob_norm}) and (\ref{eq:eig_constraint}), along with the assumption $\lambda_i \geq 1, \forall i \in \brk{d}$, we get a constrained maximization problem that is solvable exactly via the KKT conditions.
The upper bound we obtain that way is:
\begin{align*}
    \fnorm{A}^2
    &\le d \brk{\paren{1 + \frac{3 \gamma}{2 d}} - \paren{1 + \frac{\eps}{4 n_2}}}^2 + \frac{\eps^2}{16 n_2^2} \paren{n_2 - d} \\
    &= d \paren{\frac{3}{2 d} \cdot \frac{8 e^2 \lambda_0}{n_2} - \frac{\eps}{4 n_2}}^2 + \frac{\eps^2}{16 n_2^2} \paren{n_2 - d} \\
    &= \frac{d}{n_2^2} \paren{\frac{12 e^2 \lambda_0}{d} - \frac{\eps}{4}}^2 + \frac{\eps^2}{16 n_2^2} \paren{n_2 - d} \\
    &= \frac{d \eps^2}{16 n_2^2} \brk{\paren{\frac{48 e^2 \lambda_0}{d \eps} - 1}^2 + \paren{\frac{n_2}{d} - 1}},
\end{align*}
completing the calculation.
\end{proof}

As a consequence of the previous lemma, we obtain the following corollary:

\begin{corollary}
\label{cor:matrix_bounds_supp}
In the setting of Lemmas~\ref{lem:priv_anal_multi1_dpi} and~\ref{lem:matrix_bounds}, we have $\min\brc{\frac{- \tr\paren{A}}{\llnorm{A}}, \frac{\tr\paren{A}^2}{\fnorm{A}^2}} \geq \frac{3 \useconstant{C2}}{256 e^2} \log\paren{\frac{1}{\delta}}$.
\end{corollary}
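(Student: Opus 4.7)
The plan is to substitute the three bounds from Lemma~\ref{lem:matrix_bounds} directly and show that, once we fix the sample complexity $n_2 \geq \useconstant{C2} \lambda_0 \log(1/\delta)/\eps$ and recall $\gamma = 8e^2\lambda_0/n_2$, each ratio reduces to an elementary algebraic inequality in $\useconstant{C2}$, $\eps$, $\lambda_0$, and $d$. The first preparatory step is to lower bound $-\tr(A)$. Since Lemma~\ref{lem:matrix_bounds} gives $\tr(A) \le \tfrac{3}{2}\gamma - \tfrac{\eps}{4}$, and our lower bound on $n_2$ forces $\gamma \le 8e^2 \eps/(\useconstant{C2}\log(1/\delta)) \le \eps/12$ (for $\useconstant{C2}$ chosen large enough), I obtain $-\tr(A) \ge \eps/8$. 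This single estimate is reused in both ratios.

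For the spectral ratio, I combine the lower bound $-\tr(A) \ge \eps/8$ with the Lemma~\ref{lem:matrix_bounds} bound $\llnorm{A} \le \tfrac{\eps}{4n_2}\paren{\tfrac{128e^2\lambda_0}{3}-1} \le \tfrac{32 e^2 \lambda_0\, \eps}{3 n_2}$ to get
\[
\frac{-\tr(A)}{\llnorm{A}} \;\ge\; \frac{\eps/8}{32 e^2 \lambda_0 \eps / (3n_2)} \;=\; \frac{3 n_2}{256\, e^2\, \lambda_0} \;\ge\; \frac{3\useconstant{C2}\log(1/\delta)}{256 e^2 \eps} \;\ge\; \frac{3\useconstant{C2}}{256 e^2} \log\paren{\tfrac{1}{\delta}},
\]
where the last step uses $\eps \le 1$. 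This matches the target exactly.

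For the Frobenius ratio, I use $\tr(A)^2 \ge \eps^2/64$ and bound the two summands in Lemma~\ref{lem:matrix_bounds}'s formula for $\fnorm{A}^2$ separately. The first summand $\tfrac{d\eps^2}{16 n_2^2}\paren{\tfrac{48 e^2\lambda_0}{d\eps}-1}^2$ is at most $(48 e^2 \lambda_0)^2/(16 d n_2^2)$, which after plugging in $n_2 \ge \useconstant{C2}\lambda_0 \log(1/\delta)/\eps$ cancels the $\lambda_0^2$ and yields a quantity of order $\eps^2/(\useconstant{C2}^2 \log^2(1/\delta))$; dividing $\eps^2/64$ by this gives a term of order $\useconstant{C2}^2 \log^2(1/\delta)$, which dominates the required $\useconstant{C2} \log(1/\delta)$ once $\useconstant{C2}\log(1/\delta)$ is sufficiently large. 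The second summand $\tfrac{\eps^2}{16 n_2}$ is at most $\eps^3/(16 \useconstant{C2}\log(1/\delta))$ (using $\lambda_0 \ge 1$), and dividing $\eps^2/64$ by it produces $\useconstant{C2}\log(1/\delta)/(4\eps)$, which exceeds $\tfrac{3\useconstant{C2}}{256 e^2}\log(1/\delta)$ because $\eps \le 1$. Both terms can then be absorbed (e.g., by adjusting the implicit constant hidden in ``$\useconstant{C2}$ sufficiently large'') to give the claim for the full $\fnorm{A}^2$.

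The proof is essentially routine bookkeeping: there is no conceptual obstacle, only the care required to keep track of how $\useconstant{C2}$, $\eps$, $\lambda_0$, and $d$ interact, and to ensure that the implicit constant in Algorithm~\ref{alg:sampling}'s choice of $\useconstant{C2}$ is large enough to absorb the numerical factors coming out of both ratios simultaneously. The one point that slightly non-trivially uses the hypotheses is that $\lambda_0 \ge d$ (inherited from Lemma~\ref{lem:priv_anal_multi1_dpi}), which ensures the first Frobenius summand is at worst $O(\eps^2 /(\useconstant{C2}^2 \log^2(1/\delta)))$ rather than something growing with $\lambda_0/d$.
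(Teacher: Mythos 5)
Your proposal is correct and follows essentially the same route as the paper: substitute the three bounds from Lemma~\ref{lem:matrix_bounds}, use $n_2 \geq \useconstant{C2} \frac{\lambda_0 \log\paren{\frac{1}{\delta}}}{\eps}$ together with $\eps \le 1$ and $\lambda_0 \geq d \geq 1$, and check that the generous coefficient $\frac{3}{256 e^2}$ leaves ample numerical slack (the paper bounds the full Frobenius denominator at once rather than splitting it into two summands, but that is a cosmetic difference). One small caution: since $\useconstant{C2}$ appears on both sides of the target inequality, you cannot literally ``absorb'' the factor-of-two loss from the harmonic-mean split by enlarging $\useconstant{C2}$; fortunately your explicit coefficients ($\frac{1}{8}$ versus $\frac{3}{256 e^2}$) already cover it without any adjustment.
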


\begin{proof}
We will show separately for each term that it lower-bounded by the target bound.
For the first term, Lemma~\ref{lem:matrix_bounds}, as well as our bound on $n_2$ and the definition of $\gamma$, imply:
\begin{align}
    \frac{- \tr\paren{A}}{\llnorm{A}} \geq \frac{\frac{\eps}{4} - \frac{3}{2} \gamma}{\frac{\eps}{4 n_2} \paren{\frac{128 e^2 \lambda_0}{3}} - 1} = \frac{n_2 - \frac{6 n_2 \gamma}{\eps}}{\frac{128 e^2 \lambda_0}{3} - 1} = \frac{n_2 - \frac{48 e^2 \lambda_0}{\eps}}{\frac{128 e^2 \lambda_0}{3} - 1}
    &\geq \frac{\useconstant{C2} \frac{\lambda_0 \log\paren{\frac{1}{\delta}}}{\eps} - \frac{48 e^2 \lambda_0}{\eps}}{\frac{128 e^2 \lambda_0}{3}} \nonumber \\
    \overset{\paren{a}}&{\geq} \frac{\frac{\useconstant{C2}}{2} \cdot \frac{\lambda_0 \log\paren{\frac{1}{\delta}}}{\eps}}{\frac{128 e^2 \lambda_0}{3}} \nonumber \\
    &= \frac{3 \useconstant{C2}}{256 e^2} \cdot \frac{\log\paren{\frac{1}{\delta}}}{\eps} \nonumber \\
    \overset{\paren{b}}&{\geq} \frac{3 \useconstant{C2}}{256 e^2} \log\paren{\frac{1}{\delta}}, \label{eq:lb1}
\end{align}
where $\paren{a}$ relies on the assumption that $\useconstant{C2}$ is appropriately large, and $\paren{b}$ uses that $\eps \le 1$.

For the second term, working similarly to before yields:
\begin{align}
    \frac{\tr\paren{A}^2}{\fnorm{A}^2} \geq \frac{\paren{\frac{\eps}{4} - \frac{3}{2} \gamma}^2}{\frac{d \eps^2}{16 n_2^2} \brk{\paren{\frac{48 e^2 \lambda_0}{d \eps} - 1}^2 + \paren{\frac{n_2}{d} - 1}}}
    &= \frac{\paren{n_2 - \frac{6 n_2 \gamma}{\eps}}^2}{d \brk{\paren{\frac{48 e^2 \lambda_0}{d \eps} - 1}^2 + \paren{\frac{n_2}{d} - 1}}} \nonumber \\
    &= \frac{\paren{n_2 - \frac{48 e^2 \lambda_0}{\eps}}^2}{d \brk{\paren{\frac{48 e^2 \lambda_0}{d \eps} - 1}^2 + \paren{\frac{n_2}{d} - 1}}}. \label{eq:frob_ratio1}
\end{align}
We bound the numerator and the denominator of (\ref{eq:frob_ratio1}) separately.
For the numerator, by our bound on $n_2$, we get that:
\begin{equation}
    n_2 - \frac{48 e^2 \lambda_0}{\eps} \geq \frac{1}{2} n_2. \label{eq:num_lb}
\end{equation}
For the denominator, again by our bound on $n_2$, we have:
\begin{equation}
    \paren{\frac{48 e^2 \lambda_0}{d \eps} - 1}^2 + \paren{\frac{n_2}{d} - 1} \le \paren{\frac{48 e^2}{\useconstant{C2}} \cdot \frac{n_2}{d \log\paren{\frac{1}{\delta}}}}^2 + \frac{n_2}{d}. \label{eq:denom_ub}
\end{equation}
Bounding (\ref{eq:frob_ratio1}) using (\ref{eq:num_lb}) and (\ref{eq:denom_ub}), we get:
\begin{equation}
    \frac{\tr\paren{A}^2}{\fnorm{A}^2} \geq \frac{\paren{\frac{1}{2} n_2}^2}{d \brk{\paren{\frac{48 e^2}{\useconstant{C2}} \cdot \frac{n_2}{d \log\paren{\frac{1}{\delta}}}}^2 + \frac{n_2}{d}}} \geq \frac{\frac{1}{4}}{\paren{\frac{48 e^2}{\useconstant{C2}}}^2 \cdot \frac{1}{d \log^2\paren{\frac{1}{\delta}}} + \frac{1}{n_2 d}}. \label{eq:frob_ratio2}
\end{equation}
We have again for the denominator:
\begin{align}
    &\qquad\quad n_2 \geq \useconstant{C2} \frac{\lambda_0 \log\paren{\frac{1}{\delta}}}{\eps} \nonumber \\
    &\iff \frac{1}{n_2 d} \le \frac{\eps}{\useconstant{C2} \lambda_0 d \log\paren{\frac{1}{\delta}}} \le \frac{1}{\useconstant{C2} d^2 \log\paren{\frac{1}{\delta}}} \nonumber \\
    &\implies \paren{\frac{48 e^2}{\useconstant{C2}}}^2 \cdot \frac{1}{d \log^2\paren{\frac{1}{\delta}}} + \frac{1}{n_2 d} \le \brk{\paren{\frac{48 e^2}{\useconstant{C2}}}^2 + \frac{1}{\useconstant{C2}}} \frac{1}{d \log\paren{\frac{1}{\delta}}} \le \frac{2}{\useconstant{C2}} \cdot \frac{1}{d \log\paren{\frac{1}{\delta}}}, \label{eq:denom_ub2}
\end{align}
where the last inequality again relied on $\useconstant{C2}$ being appropriately large.

Thus, lower-bounding (\ref{eq:frob_ratio2}) using (\ref{eq:denom_ub2}) yields:
\begin{equation}
    \frac{\tr\paren{A}^2}{\fnorm{A}^2} \geq \frac{\useconstant{C2}}{8} d \log\paren{\frac{1}{\delta}}. \label{eq:lb2}
\end{equation}
Taking into account (\ref{eq:lb1}) and (\ref{eq:lb2}) yields the desired result.
\end{proof}

We now have everything we need to establish the main result of this section.

\begin{proposition}
\label{prop:priv_anal_multi1}
Let $\eps \in \brk{0, 1}$ and $\delta \in \brk{0, \frac{\eps}{10}}, \lambda_0 \geq d$.
Let us assume that $n_2 \geq \useconstant{C2} \frac{\lambda_0 \log\paren{\frac{1}{\delta}}}{\eps}$, for some appropriately large absolute constant $\useconstant{C2} \geq 1$.
Finally, for $n \coloneqq n_1 + 2 n_2$, let $X, X' \in \bR^{n \times d}$ be adjacent datasets and $R \subseteq \brk{n_1}$ be a representation set such that $\Psi\paren{X, R} = \Psi\paren{X', R} = 1$.
Then, for $z \sim \cU\paren{\bS^{n_2 - 1}}$, we have:
\[
    D_{e^{\frac{\eps}{2}}}\paren{\widehat{\mu}' + \sqrt{\paren{1 - \frac{1}{n_1}} n_2} W z \middle\| \widehat{\mu}' + \sqrt{\paren{1 - \frac{1}{n_1}} n_2} W' z} \le \delta.
\]
\end{proposition}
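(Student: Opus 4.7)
The plan is to chain together the three reduction lemmas already established (Lemmas~\ref{lem:priv_anal_multi1_dpi},~\ref{lem:t_density}, and~\ref{lem:priv_anal_multi1_suff_cond}) to convert the Hockey-Stick divergence into a probability over the random unit vector $z \sim \cU\paren{\bS^{n_2 - 1}}$, and then bound that probability by a union bound over the two sufficient conditions of Lemma~\ref{lem:priv_anal_multi1_suff_cond}. Concretely, setting $M \coloneqq \widehat{\Sigma}^{\frac{1}{2}} \widehat{\Sigma}'^{- 1} \widehat{\Sigma}^{\frac{1}{2}}$ and using the multivariate analogue of (\ref{eq:uni_hs_div_ub1}), I would reduce the target divergence to the probability that $s^{\top} M s > \frac{1}{2}$ or $s^{\top}\paren{M - \id}s > \frac{\eps}{4 n_2}$, where $s = U t$ and $t = z_{\le d}$; by rotational invariance of $z$ (Fact~\ref{fact:rotational_invariance}), we may equivalently bound the probabilities of the analogous events with $s$ replaced by $z_{\le d}$. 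It then suffices to show each event occurs with probability at most $\frac{\delta}{2}$.

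For the first event, the spectral bound $\llnorm{M} \le \frac{1}{1 - \gamma} \le \frac{4}{3}$ from Lemma~\ref{lem:stable_cov} (combined with our assumption $\gamma \le \frac{\eps}{4} \le \frac{1}{4}$) gives $z_{\le d}^{\top} M z_{\le d} \le \frac{4}{3} \llnorm{z_{\le d}}^2$, so the event implies $\llnorm{z_{\le d}}^2 \geq \frac{3}{8}$. Since $\llnorm{z_{\le d}}^2 \sim \mathrm{Beta}\paren{\frac{d}{2}, \frac{n_2 - d}{2}}$ has mean $\frac{d}{n_2}$, which is at most a small constant by our choice of $n_2 \geq \useconstant{C2} \frac{\lambda_0 \log(1/\delta)}{\eps} \geq \useconstant{C2} \frac{d \log(1/\delta)}{\eps}$, Fact~\ref{fact:beta_concentration} gives the required bound, exactly as in the univariate proof of Proposition~\ref{prop:priv_anal_uni1}.

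For the second event, the main obstacle is that a naive application of Fact~\ref{fact:beta_concentration} (treating the quadratic form as a scalar random variable supported on the sphere) loses a $\mathrm{polylog}(\lambda_0)$ factor in the sample complexity, because it cannot exploit the small Frobenius norm of $M - \id$ established by the trace bound in Lemma~\ref{lem:stable_cov}. The fix is a Gaussian reduction: write $z \overset{d}{=} g/\llnorm{g}$ for $g \sim \cN\paren{0, \id_{n_2}}$, so that multiplying through by $\llnorm{g}^2 > 0$ shows the event $z_{\le d}^{\top}\paren{M - \id}z_{\le d} > \frac{\eps}{4 n_2}$ (recall $\llnorm{z}^2 = 1$) is equivalent to
\[
    g^{\top} \brk{Q^{\top} \paren{M - \id_{d \times d}} Q - \frac{\eps}{4 n_2} \id_{n_2 \times n_2}} g > 0, \qquad \text{i.e.,} \qquad g^{\top} A g > 0,
\]
where $A$ is precisely the matrix of Lemma~\ref{lem:matrix_bounds}. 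Since $\bE\brk{g^{\top} A g} = \tr\paren{A} < 0$ by that lemma, Fact~\ref{fact:hanson_wright} (Hanson–Wright) gives
\[
    \pr{}{g^{\top} A g > 0} = \pr{}{g^{\top} A g - \tr\paren{A} > - \tr\paren{A}} \le 2 \exp\paren{- c \min\brc{\frac{\tr\paren{A}^2}{\fnorm{A}^2}, \frac{- \tr\paren{A}}{\llnorm{A}}}},
\]
and Corollary~\ref{cor:matrix_bounds_supp} lower-bounds the minimum by $\frac{3 \useconstant{C2}}{256 e^2} \log\paren{\frac{1}{\delta}}$, which is $\geq \frac{\log(2/\delta)}{c}$ for $\useconstant{C2}$ sufficiently large, yielding probability at most $\frac{\delta}{2}$ as required.

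One small wrinkle is that Lemma~\ref{lem:matrix_bounds} assumes $M \succeq \id$, whereas Lemma~\ref{lem:stable_cov} only gives $\paren{1 - \gamma} \id \preceq M \preceq \paren{1 - \gamma}^{-1} \id$. This is handled by noting that the event $z_{\le d}^{\top}\paren{M - \id} z_{\le d} > \frac{\eps}{4 n_2}$ is monotone in the positive-semidefinite order on $M - \id$, so we may replace $M$ by $M' \coloneqq U \paren{\Lambda_M \vee \id} U^{\top}$ (where $M = U \Lambda_M U^{\top}$ is the eigendecomposition) without decreasing the probability; $M'$ satisfies the hypothesis of Lemma~\ref{lem:matrix_bounds} and still obeys the trace and operator bounds inherited from $M$ via Lemma~\ref{lem:stable_cov}. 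The remaining bookkeeping — combining the two probability bounds by a union bound to obtain the claimed $\delta$ — is routine.
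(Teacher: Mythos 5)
Your proposal is correct and follows essentially the same route as the paper's proof: the same reduction via Lemmas~\ref{lem:priv_anal_multi1_dpi} and~\ref{lem:priv_anal_multi1_suff_cond} to a union of two events, the same Beta-concentration bound for the first event, and the same Gaussian-normalization reduction to $g^{\top} A g > 0$ followed by Hanson--Wright and Corollary~\ref{cor:matrix_bounds_supp} for the second. Your explicit monotonicity argument for the $M \succeq \id$ assumption is a slightly more careful rendering of the same without-loss-of-generality step the paper takes.
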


\begin{proof}
The proof will follow a structure similar to that of Proposition~\ref{prop:priv_anal_uni1}, i.e., establishing the desired upper bound amounts to upper-bounding the probability of the conditions identified in Lemma~\ref{lem:priv_anal_multi1_suff_cond}.
Indeed, by Lemma~\ref{lem:priv_anal_multi1_dpi}, (\ref{eq:multi_hs_div_simp1}), and Lemma~\ref{lem:priv_anal_multi1_suff_cond}, we get that:
\begin{align}
    &\quad\ D_{e^{\frac{\eps}{2}}}\paren{\widehat{\mu}' + \sqrt{\paren{1 - \frac{1}{n_1}} n_2} W z \middle\| \widehat{\mu}' + \sqrt{\paren{1 - \frac{1}{n_1}} n_2} W' z} \nonumber \\
    &\le D_{e^{\frac{\eps}{2}}}\paren{z_{\le d} \middle\| T} \nonumber\\
    &\le \pr{t \sim z_{\le d}}{f_{z_{\le d}}\paren{t} > e^{\frac{\eps}{2}} f_T\paren{t}} \nonumber\\
    &\le \pr{s \sim z_{\le d}}{\brc{s^{\top} \widehat{\Sigma}^{\frac{1}{2}} \widehat{\Sigma}'^{- 1} \widehat{\Sigma}^{\frac{1}{2}} s > \frac{1}{2}} \cup \brc{s^{\top} \paren{\widehat{\Sigma}^{\frac{1}{2}} \widehat{\Sigma}'^{- 1} \widehat{\Sigma}^{\frac{1}{2}} - \id} s > \frac{\eps}{4 n_2}}} \nonumber \\
    &\le \pr{s \sim z_{\le d}}{s^{\top} \widehat{\Sigma}^{\frac{1}{2}} \widehat{\Sigma}'^{- 1} \widehat{\Sigma}^{\frac{1}{2}} s > \frac{1}{2}} + \pr{s \sim z_{\le d}}{s^{\top} \paren{\widehat{\Sigma}^{\frac{1}{2}} \widehat{\Sigma}'^{- 1} \widehat{\Sigma}^{\frac{1}{2}} - \id} s > \frac{\eps}{4 n_2}}, \label{eq:union_bound_ub}
\end{align}
where the fact that $s \sim z_{\le d}$ follows from the definition of $s$ (see Lemma~\ref{lem:t_density}) and Fact~\ref{fact:uniform_rot_inv}.

We will analyze each of the terms of (\ref{eq:union_bound_ub}) separately.
For the first term, observe that, by Fact~\ref{fact:unit_sphere_proj_dist}, $\llnorm{s}^2 \sim \Beta\paren{\frac{d}{2}, \frac{n_2 - d}{2}}$.
Thus, by the previous and Fact~\ref{fact:beta_concentration}, we get:
\begin{align}
    &\quad\ \pr{s \sim z_{\le d}}{s^{\top} \widehat{\Sigma}^{\frac{1}{2}} \widehat{\Sigma}'^{- 1} \widehat{\Sigma}^{\frac{1}{2}} s > \frac{1}{2}} \nonumber \\
    &\le \pr{s \sim z_{\le d}}{\llnorm{s}^2 > \frac{3}{8}} \nonumber \\
    &\le 2 \exp\paren{- \useconstant{cbeta} \min\brc{\frac{\paren{n_2 - d}^2}{2 d} \paren{\frac{3}{8} - \frac{d}{n_2}}^2, \frac{n_2 - d}{2} \paren{\frac{3}{8} - \frac{d}{n_2}}}} \nonumber \\
    \overset{\paren{a}}&{=} 2 \exp\paren{- \useconstant{cbeta} \frac{n_2 - d}{2} \paren{\frac{3}{8} - \frac{d}{n_2}}} \nonumber \\
    \overset{\paren{b}}&{\le} \frac{\delta}{2}, \label{eq:term_bound1}
\end{align}
where $\paren{a}$ and $\paren{b}$ are implied by our bound on $n_2$ (assuming $\useconstant{C2}$ to be large enough).

We now turn our attention to the second term of (\ref{eq:union_bound_ub}).
To analyze this term, we will make two assumptions, without loss of generality.
The first assumption is that $z$ is generated according to the method of Fact~\ref{fact:random_unit_vector_gen}.
We note that this is a reasonable assumption to make.
Indeed, it is always possible to replace $z$ in $D_{e^{\frac{\eps}{2}}}\paren{\widehat{\mu}' + \sqrt{\paren{1 - \frac{1}{n_1}} n_2} W z \middle\| \widehat{\mu}' + \sqrt{\paren{1 - \frac{1}{n_1}} n_2} W' z}$ with a $z'$ that has been generated according to that method, and the value of the divergence term would still be the same, since $z$ and $z'$ have the same density function.
Moreover, it is always within our power as algorithm designers to assume that, when $z$ is drawn in Line~\ref{ln:post_ptr1}, it is generated by a subroutine that implements the method (especially given the fact that the only resources we are looking to minimize are sample and time complexity, but not randomness complexity).
\begin{comment}
\footnote{Fact~\ref{fact:random_unit_vector_gen} describes one way of generating a uniformly distributed unit vector.
However, it is not true that all such vectors are produced by normalizing a vector with independent standard Gaussian components.
The reason for this has to do with the fact that random vectors are defined as mappings from a sample space $\Omega$ to $\bR^d$, and it is always possible to come up with a pathological space $\Omega$ for which the previous fails.
For a better understanding of this, we point readers to~\cite{LucaMac20}.
There, it is shown that, although every sum of independent Bernoullis follows a Binomial distribution, it is possible for a Binomial to not be decomposable into a sum of independent Bernoullis.}
\end{comment}

The second assumption we will make is that $\widehat{\Sigma}^{\frac{1}{2}} \widehat{\Sigma}'^{- 1} \widehat{\Sigma}^{\frac{1}{2}} \succeq \id$.
Again, this is a reasonable assumption because, if $\widehat{\Sigma}^{\frac{1}{2}} \widehat{\Sigma}'^{- 1} \widehat{\Sigma}^{\frac{1}{2}}$ had an eigenvalue that is $< 1$, this would have a negative contribution to the LHS of $s^{\top} \paren{\widehat{\Sigma}^{\frac{1}{2}} \widehat{\Sigma}'^{- 1} \widehat{\Sigma}^{\frac{1}{2}} - \id} s > \frac{\eps}{4 n_2}$, thus making it harder to satisfy the inequality (i.e., the conditions imposed on $s$ would have to be more restrictive).
Thus, our assumption can only lead to an increase in the probability of the condition being satisfied, which justifies our choice.

Under the above two assumptions, we note that there will exist a random vector $G \sim \cN\paren{0, 1}^{\otimes n_2}$ such that $s_i = \frac{G_i}{\llnorm{G}}, \forall i \in \brk{d}$.
This follows from the fact that $s \sim z_{\le d}$.
Additionally, for $Q \coloneqq \paren{\id_{d \times d} \middle| 0_{d \times \paren{n_2 - d}}}$ (as in Lemma~\ref{lem:matrix_bounds}), we get that $s = Q \frac{G}{\llnorm{G}}$.
We substitute based on this in the second term of (\ref{eq:union_bound_ub}), and use the notation $A \coloneqq Q^{\top} \paren{\widehat{\Sigma}^{\frac{1}{2}} \widehat{\Sigma}'^{- 1} \widehat{\Sigma}^{\frac{1}{2}} - \id_{d \times d}} Q - \frac{\eps}{4 n_2} \id_{n_2 \times n_2}$ (again as in Lemma~\ref{lem:matrix_bounds} and Corollary~\ref{cor:matrix_bounds_supp}).
Then, Fact~\ref{fact:hanson_wright} and Corollary~\ref{cor:matrix_bounds_supp} yield:
\begin{align}
    &\quad\ \pr{s \sim z_{\le d}}{s^{\top} \paren{\widehat{\Sigma}^{\frac{1}{2}} \widehat{\Sigma}'^{- 1} \widehat{\Sigma}^{\frac{1}{2}} - \id} s > \frac{\eps}{4 n_2}} \nonumber \\
    &= \pr{G \sim \cN\paren{0, 1}^{\otimes n_2}}{\frac{G^{\top}}{\llnorm{G}} Q^{\top} \paren{\widehat{\Sigma}^{\frac{1}{2}} \widehat{\Sigma}'^{- 1} \widehat{\Sigma}^{\frac{1}{2}} - \id} Q \frac{G}{\llnorm{G}} > \frac{\eps}{4 n_2}} \nonumber \\
    &= \pr{G \sim \cN\paren{0, 1}^{\otimes n_2}}{G^{\top} Q^{\top} \paren{\widehat{\Sigma}^{\frac{1}{2}} \widehat{\Sigma}'^{- 1} \widehat{\Sigma}^{\frac{1}{2}} - \id} Q G > \frac{\eps}{4 n_2} \llnorm{G}^2} \nonumber \\
    &= \pr{G \sim \cN\paren{0, 1}^{\otimes n_2}}{G^{\top} A G > 0} \nonumber \\
    &= \pr{G \sim \cN\paren{0, 1}^{\otimes n_2}}{G^{\top} A G - \tr\paren{A} > - \tr\paren{A}} \nonumber \\
    &\le \pr{G \sim \cN\paren{0, 1}^{\otimes n_2}}{\abs{G^{\top} A G - \tr\paren{A}} \geq - \tr\paren{A}} \nonumber \\
    &\le 2 \exp\paren{- \useconstant{chs} \min\brc{\frac{\tr\paren{A}^2}{\fnorm{A}^2}, \frac{- \tr\paren{A}}{\llnorm{A}}}} \nonumber \\
    &\le 2 \exp\paren{- \frac{3 \useconstant{C2} \useconstant{chs}}{256 e^2} \log\paren{\frac{1}{\delta}}} \nonumber \\
    &\le \frac{\delta}{2}, \label{eq:term_bound2}
\end{align}
where, in the last inequality, we assume that $\useconstant{C2}$ is appropriately large.

Upper-bounding (\ref{eq:union_bound_ub}) using (\ref{eq:term_bound1}) and (\ref{eq:term_bound2}) completes the proof.
\end{proof}

\subsubsection{Analyzing the First Term of (\ref{eq:weak_triangle_ineq})}
\label{subsubsec:priv_anal2}

We now analyze the first term of (\ref{eq:weak_triangle_ineq}).
As in the previous section, the proof will be broken down into a number of intermediate steps, which we outline below:
\begin{enumerate}
    \item we start by showing that the Hockey-Stick divergence term of interest can be equivalently expressed as the divergence between $z_{\le d}$ and $z_{\le d} + \ell$, where $\ell \coloneqq \sqrt{\frac{n_1}{n_1 - 1} \cdot \frac{1}{n_2}} \Lambda^{- \frac{1}{2}} U^{\top} \paren{\widehat{\mu}' - \widehat{\mu}}$\footnote{$\Lambda$ and $U$ come from the spectral decomposition of $\widehat{\Sigma}$.}
    and $\llnorm{\ell}$ is ``small'' (Lemma~\ref{lem:priv_anal2_dpi}),
    \item subsequently, we identify sufficient conditions for points $t \in \supp\paren{z_{\le d}}$ to imply that log-density ratio $\ln\paren{\frac{f_{z_{\le d}}\paren{t}}{f_{z_{\le d} + \ell}\paren{t}}}$ is at most $\frac{\eps}{2}$ (Lemma~\ref{lem:priv_anal2_suff_cond}),
    \item finally, we will show that our sample complexity suffices for the first term of (\ref{eq:weak_triangle_ineq}) to be at most $\delta$ (Proposition~\ref{prop:priv_anal2}).
\end{enumerate}
The argument is structured similarly to Lemma $4.4$ from~\cite{GhaziHKM23}.
However, differences arise along the way, due to the fact that the corresponding proof in~\cite{GhaziHKM23} used that datapoints are truncated, whereas we have to appeal to Lemma~\ref{lem:stable_mean}.
Additionally, the original proof of Lemma $4.4$ given in~\cite{GhaziHKM23} contains a bug in the step where it is argued that the sample complexity bound suffices for the log-density ratio to be upper-bounded by $\eps$ (the step corresponding to Lemma~\ref{lem:priv_anal2_suff_cond}).
The bug is fixable (with a minor adjustment in the final sample complexity), and has been acknowledged in private communication with the authors of that work.
For all the above reasons, we present the proof in full detail.

We start with the lemma that implements the first step.

\begin{lemma}
\label{lem:priv_anal2_dpi}
Let $\eps \in \brk{0, 1}$ and $\delta \in \brk{0, \frac{\eps}{10}}, \lambda_0 \geq d$.
Also, assume that $n_1 \geq \useconstant{C1} \frac{\sqrt{\lambda_0} \log\paren{\frac{1}{\delta}}}{\eps}$ and $n_2 \geq \useconstant{C2} \frac{\lambda_0 \log\paren{\frac{1}{\delta}}}{\eps}$ for appropriately large absolute constants $\useconstant{C1}, \useconstant{C2} \geq 1$.
Finally, for $n \coloneqq n_1 + 2 n_2$, let $X, X' \in \bR^{n \times d}$ be adjacent datasets and $R \subseteq \brk{n_1}$ be a representation set such that $\Psi\paren{X, R} = \Psi\paren{X', R} = 1$.
Then, for $z \sim \cU\paren{\bS^{n_2 - 1}}$, we have:
\[
    D_{e^{\frac{\eps}{2}}}\paren{\widehat{\mu} + \sqrt{\paren{1 - \frac{1}{n_1}} n_2} W z \middle\| \widehat{\mu}' + \sqrt{\paren{1 - \frac{1}{n_1}} n_2} W z} = D_{e^{\frac{\eps}{2}}}\paren{z_{\le d} \middle\| z_{\le d} + \ell},
\]
for $\ell \coloneqq \sqrt{\frac{n_1}{n_1 - 1} \cdot \frac{1}{n_2}} \Lambda^{- \frac{1}{2}} U^{\top} \paren{\widehat{\mu}' - \widehat{\mu}}$ where $\Lambda \succ 0$ is a diagonal matrix and $U$ is a rotation matrix such that $\widehat{\Sigma} = U \Lambda U^{\top}$.
Furthermore, we have $\llnorm{\ell} \le \frac{\sqrt{114 \lambda_0} e}{n_1 \sqrt{n_2}}$.
\end{lemma}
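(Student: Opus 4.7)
The plan is to derive the claimed identity via two successive applications of the Data-Processing Inequality (Fact~\ref{fact:dpi}) through bijective transformations (which turn the DPI into equality), and then bound $\llnorm{\ell}$ through Lemma~\ref{lem:stable_mean}. Set $c \coloneqq \sqrt{\paren{1 - \frac{1}{n_1}} n_2}$ for brevity.

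First I would translate both arguments of the divergence by $-\widehat{\mu}$, which is bijective, to obtain $D_{e^{\eps/2}}\paren{c W z \middle\| c W z + \paren{\widehat{\mu}' - \widehat{\mu}}}$. Next, following the same SVD setup as in Lemma~\ref{lem:priv_anal_multi1_dpi}, write $W = U D V^{\top}$ with $D = \paren{\diag\brc{\sigma_1, \dots, \sigma_d} \middle| 0_{d \times \paren{n_2 - d}}}$, so that $\widehat{\Sigma} = U \Lambda U^{\top}$ with $\Lambda = \diag\brc{\sigma_i^2}$. By Fact~\ref{fact:rotational_invariance}, $V^{\top} z \overset{d}{=} z$, and hence $W z \overset{d}{=} U \Lambda^{\frac{1}{2}} z_{\le d}$. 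Substituting into the divergence and then applying DPI one more time through the bijective map $x \mapsto c^{-1} \Lambda^{-\frac{1}{2}} U^{\top} x$ (which is well-defined because $\widehat{\Sigma} \succ 0$ by Lemma~\ref{lem:stable_cov}, available since the PTR step passed) strips off the factor $c U \Lambda^{\frac{1}{2}}$ and produces the claimed identity with $\ell = c^{-1} \Lambda^{-\frac{1}{2}} U^{\top} \paren{\widehat{\mu}' - \widehat{\mu}}$, which matches the stated expression after collecting the constant $c^{-1} = \sqrt{\frac{n_1}{\paren{n_1 - 1} n_2}}$.

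For the norm bound, the identity $U \Lambda^{-1} U^{\top} = \widehat{\Sigma}^{-1}$ immediately gives $\llnorm{\ell}^2 = \frac{n_1}{\paren{n_1 - 1} n_2} \mnorm{\widehat{\Sigma}}{\widehat{\mu}' - \widehat{\mu}}^2$. I would then invoke Lemma~\ref{lem:stable_mean} with $\Sigma_1 = \widehat{\Sigma}$ and $\Sigma_2 = \widehat{\Sigma}'$ (noting that the $\widehat{\mu}'$ in the target divergence is the true algorithm output on $X'$, hence uses $\widehat{\Sigma}'$), whose proximity hypothesis follows from Lemma~\ref{lem:stable_cov}. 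The conditions $\Psi\paren{X, R} = \Psi\paren{X', R} = 1$ provide the PTR \Pass\ (implying $\Score_2, \Score_2' < k$ via Lemma~\ref{lem:bhs_ptr}) and degree-representativeness of $R$, while the hypotheses $n_1 \geq \useconstant{C1} \frac{\sqrt{\lambda_0} \log\paren{\frac{1}{\delta}}}{\eps}$ and $n_2 \geq \useconstant{C2} \frac{\lambda_0 \log\paren{\frac{1}{\delta}}}{\eps}$ (for sufficiently large $\useconstant{C1}, \useconstant{C2}$) cover the remaining size constraints $n_1 \geq 32 e^2 k$ and $\gamma \le \frac{1}{2 k}$. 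This yields $\mnorm{\widehat{\Sigma}}{\widehat{\mu}' - \widehat{\mu}}^2 \le \paren{1 + 2\gamma} \cdot \frac{38 e^2 \lambda_0}{n_1^2} \le \frac{57 e^2 \lambda_0}{n_1^2}$ since $\gamma \le \frac{1}{4}$, and combining with $\frac{n_1}{n_1 - 1} \le 2$ delivers the claimed $\llnorm{\ell} \le \frac{\sqrt{114 \lambda_0} e}{n_1 \sqrt{n_2}}$.

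The derivation is largely mechanical; the only subtlety that deserves care is that the $\widehat{\mu}'$ appearing in the target divergence is computed with $\widehat{\Sigma}'$ (not with $\widehat{\Sigma}$), so Lemma~\ref{lem:stable_mean} must be applied with two distinct (but close) covariance hypotheses, and Lemma~\ref{lem:stable_cov} is what certifies their proximity so the $\paren{1 - \gamma}\Sigma_1 \preceq \Sigma_2 \preceq \Sigma_1 / \paren{1 - \gamma}$ assumption is met. Everything else amounts to bookkeeping and applying inequalities implied by the sample complexity bounds.
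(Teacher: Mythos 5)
Your proposal is correct and follows essentially the same route as the paper: the identity is obtained by the same translation-plus-SVD reduction used in Lemma~\ref{lem:priv_anal_multi1_dpi} (with the DPI applied as an equality through invertible maps), and the norm bound comes from Lemma~\ref{lem:stable_mean} applied with $\Sigma_1 = \widehat{\Sigma}$, $\Sigma_2 = \widehat{\Sigma}'$, whose proximity hypothesis is certified by Lemma~\ref{lem:stable_cov}, together with $\frac{n_1}{n_1-1} \le 2$ and $\gamma \le \frac{1}{4}$. The constants work out exactly as you compute.
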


\begin{proof}
We start by repeating a number of arguments from the beginning of the proof of Lemma~\ref{lem:priv_anal_multi1_dpi}, namely the DPI-based argument of (\ref{eq:multi_dpi_app1}), and the SVD-based argument of (\ref{eq:multi_hs_div_simp1}).
This yields:
\begin{align}
    &\quad\ D_{e^{\frac{\eps}{2}}}\paren{\widehat{\mu} + \sqrt{\paren{1 - \frac{1}{n_1}} n_2} W z \middle\| \widehat{\mu}' + \sqrt{\paren{1 - \frac{1}{n_1}} n_2} W z} \nonumber \\
    &= D_{e^{\frac{\eps}{2}}}\paren{z_{\le d} \middle\| \sqrt{\frac{n_1}{n_1 - 1} \cdot \frac{1}{n_2}} \Lambda^{- \frac{1}{2}} U^{\top} \paren{\widehat{\mu}' - \widehat{\mu}} + z_{\le d}}, \label{eq:multi1_dpi_app}
\end{align}
where $U \in \bR^{d \times d}$ is a rotation matrix and $\Lambda \in \bR^{d \times d}$ is a diagonal positive-definite matrix such that $U \Lambda U^{\top} = W W^{\top} = \widehat{\Sigma}$.
We set $\ell \coloneqq \sqrt{\frac{n_1}{n_1 - 1} \cdot \frac{1}{n_2}} \Lambda^{- \frac{1}{2}} U^{\top} \paren{\widehat{\mu}' - \widehat{\mu}}$.
By our bounds on $n_1$ and $n_2$, and the assumption that $\Psi\paren{X, R} = \Psi\paren{X', R} = 1$, the conditions of Lemma~\ref{lem:stable_mean} are satisfied, so the lemma yields:
\begin{align*}
    \llnorm{\ell} = \sqrt{\frac{n_1}{n_1 - 1} \cdot \frac{1}{n_2}} \llnorm{\Lambda^{- \frac{1}{2}} U^{\top} \paren{\widehat{\mu}' - \widehat{\mu}}}
    \overset{\paren{a}}&{=} \sqrt{\frac{n_1}{n_1 - 1} \cdot \frac{1}{n_2}} \mnorm{\Sigma}{\widehat{\mu}' - \widehat{\mu}} \\
    &\le \sqrt{\frac{n_1}{n_1 - 1}} \cdot \frac{\sqrt{\paren{1 + 2 \gamma} 38 \lambda_0} e}{n_1 \sqrt{n_2}} \\
    \overset{\paren{b}}&{\le} \frac{\sqrt{114 \lambda_0} e}{n_1 \sqrt{n_2}},
\end{align*}
where $\paren{a}$ used the observation that:
\[
    \llnorm{\Lambda^{- \frac{1}{2}} U^{\top} \paren{\widehat{\mu}' - \widehat{\mu}}} = \sqrt{\paren{\widehat{\mu}' - \widehat{\mu}}^{\top} U \Lambda^{- 1} U^{\top} \paren{\widehat{\mu}' - \widehat{\mu}}} = \sqrt{\paren{\widehat{\mu}' - \widehat{\mu}}^{\top} \widehat{\Sigma}^{- 1} \paren{\widehat{\mu}' - \widehat{\mu}}} = \mnorm{\widehat{\Sigma}}{\widehat{\mu}' - \widehat{\mu}},
\]
and $\paren{b}$ used our bounds on $n_1, n_2$, and $\eps$ to argue that $n_1 \geq 2 \iff \frac{n_1}{n_1 - 1} \le 2$ and $\gamma \le \frac{\eps}{4} \le \frac{1}{4}$.
\end{proof}

As with the other term of (\ref{eq:weak_triangle_ineq}), we must now reason about the quantity:
\begin{equation}
    D_{e^{\frac{\eps}{2}}}\paren{z_{\le d} \middle\| z_{\le d} + \ell} \le \pr{t \sim z_{\le d}}{f_{z_{\le d}}\paren{t} > e^{\frac{\eps}{2}} f_{z_{\le d} + \ell}\paren{t}}. \label{eq:multi_hs_div_ub2}
\end{equation}
Thus, we need to identify the density of $z_{\le d} + \ell$, and identify sufficient conditions for the log-density ratio to satisfy $\ln\paren{\frac{f_{z_{\le d}}\paren{t}}{f_{z_{\le d} + \ell}\paren{t}}} \le \frac{\eps}{2}$.
We do so in the following lemma:

\begin{lemma}
\label{lem:priv_anal2_suff_cond}
In the setting of Lemma~\ref{lem:priv_anal2_dpi}, we have that:
\[
    \llnorm{t} \le 0.9 \text{ and } \abs{\iprod{t, \ell}} \le \frac{1}{50} \cdot \frac{\eps}{n_2} \le 0.01 \implies \ln\paren{\frac{f_{z_{\le d}}\paren{t}}{f_{z_{\le d} + \ell}\paren{t}}} \le \frac{\eps}{2}.
\]
\end{lemma}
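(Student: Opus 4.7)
My plan is to reduce the log-density ratio to a small algebraic expression in $\langle t, \ell \rangle$, $\|\ell\|^2$, and $1 - \|t - \ell\|^2$, and then bound each piece using the hypotheses together with the bound $\|\ell\| \le \frac{\sqrt{114 \lambda_0} e}{n_1 \sqrt{n_2}}$ from Lemma~\ref{lem:priv_anal2_dpi}. Since $z_{\le d} + \ell$ is just a translate of $z_{\le d}$, by Fact~\ref{fact:transformations} we have $f_{z_{\le d} + \ell}(t) = f_{z_{\le d}}(t - \ell)$, and then by Fact~\ref{fact:unit_sphere_proj_dist} both densities are proportional to $(1 - \|\cdot\|^2)^{(n_2 - d - 2)/2}$ on the unit ball, so the normalizing constants cancel and
\[
    \ln\paren{\frac{f_{z_{\le d}}(t)}{f_{z_{\le d} + \ell}(t)}} = \frac{n_2 - d - 2}{2} \ln\paren{\frac{1 - \|t\|^2}{1 - \|t - \ell\|^2}}.
\]

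The first thing to check is that $t - \ell$ lies in the support $\{u : \|u\| \le 1\}$, so the ratio is finite. This follows because $\|t - \ell\| \le \|t\| + \|\ell\| \le 0.9 + \|\ell\|$, and the sample complexity bound on $n_1$ (together with $n_2 \geq C_2 \lambda_0 \log(1/\delta)/\eps$ and $\lambda_0 \ge d \ge 1$) makes $\|\ell\|$ much smaller than $0.05$, so in particular $\|t-\ell\| \le 0.95$ and $1 - \|t - \ell\|^2 \geq 0.09$.

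Next I apply the elementary inequality $\ln x \le x - 1$ and expand:
\[
    \frac{n_2 - d - 2}{2} \ln\paren{\frac{1 - \|t\|^2}{1 - \|t - \ell\|^2}}
    \le \frac{n_2}{2} \cdot \frac{\|t - \ell\|^2 - \|t\|^2}{1 - \|t - \ell\|^2}
    = \frac{n_2}{2} \cdot \frac{-2\iprod{t, \ell} + \|\ell\|^2}{1 - \|t - \ell\|^2}.
\]
Now use $|\iprod{t, \ell}| \le \frac{\eps}{50 n_2}$ from the hypothesis, and use the bound on $\|\ell\|^2$ coming from Lemma~\ref{lem:priv_anal2_dpi} together with $n_1 \geq \useconstant{C1} \sqrt{\lambda_0} \log(1/\delta)/\eps$ to obtain $\|\ell\|^2 \le \frac{114 e^2 \eps^2}{\useconstant{C1}^2 \log^2(1/\delta)\, n_2} \le \frac{\eps}{50 n_2}$ once $\useconstant{C1}$ is chosen sufficiently large (using $\eps \le 1$ and $\log(1/\delta) \geq 1$). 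Combining these two estimates and using $1 - \|t - \ell\|^2 \geq 0.09$ gives an upper bound of at most $\frac{n_2}{2} \cdot \frac{3\eps/(50 n_2)}{0.09} \le \eps/2$, as desired.

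The main (mild) obstacle is just the simultaneous juggling of constants: verifying that the assumed bounds $\|t\| \le 0.9$ and $|\iprod{t,\ell}| \le \eps/(50 n_2)$, combined with the sample complexity, really are tight enough to yield the constant $\eps/2$ rather than some larger multiple. In particular, the $\|t\| \le 0.9$ hypothesis is what gives us a strictly positive lower bound on $1 - \|t-\ell\|^2$ (without it the denominator could blow up); and the separation between $\|\ell\| = O(\eps/(\log(1/\delta)\sqrt{n_2}))$ and $\iprod{t,\ell} = O(\eps/n_2)$ scales correctly because the $\|\ell\|^2$ term already carries a factor of $1/n_2$ once we ask $\useconstant{C1}$ to be a large enough absolute constant. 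No other result is needed beyond Facts~\ref{fact:unit_sphere_proj_dist} and~\ref{fact:transformations} and the bound on $\|\ell\|$ that Lemma~\ref{lem:priv_anal2_dpi} already established.
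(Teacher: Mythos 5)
Your proposal is correct and follows essentially the same route as the paper's proof: translate the density via Fact~\ref{fact:transformations}, verify $t-\ell$ stays in the support, apply $\ln x \le x-1$, and bound the resulting ratio $\frac{n_2}{2}\cdot\frac{\llnorm{\ell}^2 + 2\abs{\iprod{t,\ell}}}{1-\llnorm{t-\ell}^2}$ using the hypothesis on $\iprod{t,\ell}$ and the bound on $\llnorm{\ell}$ from Lemma~\ref{lem:priv_anal2_dpi}. The only differences are cosmetic constant choices (you lower-bound $1-\llnorm{t-\ell}^2$ by $0.09$ via the triangle inequality where the paper gets $0.1$ by expanding the square), and both arrive at a final bound comfortably below $\frac{\eps}{2}$.
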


\begin{proof}
We start by identifying the density of $z_{\le d} + \ell$.
Fact~\ref{fact:transformations} yields that $f_T\paren{t} = f_{z_{\le d}}\paren{t - \ell}$.
Additionally, by the conclusion of Lemma~\ref{lem:priv_anal2_dpi} and our bounds on $n_1, n_2$, and $\delta$, we get:
\begin{equation}
    \llnorm{\ell} \le \frac{1}{5} \cdot \frac{\eps}{\sqrt{n_2}} \le 0.1. \label{eq:abs_l_ineq}
\end{equation}
We now work towards upper-bounding the log-density ratio $\ln\paren{\frac{f_{z_{\le d}}\paren{t}}{f_{z_{\le d} + \ell}\paren{t}}}$ using our assumptions.
We highlight that $z_{\le d}$ is supported on the unit ball in $d$ dimensions, whereas $z_{\le d} + \ell$ is supported on the ball of radius $1$ centered at the point $\ell$ (which (\ref{eq:abs_l_ineq}) implies has small distance from the origin).
Observe that, by our assumptions and (\ref{eq:abs_l_ineq}), we have:
\begin{equation}
    \llnorm{t - \ell}^2 = \llnorm{t}^2 - 2 \iprod{t, \ell} + \llnorm{\ell}^2 \le \llnorm{t}^2 + 2 \abs{\iprod{t, \ell}} + \llnorm{\ell}^2 \le 0.81 + 2 \cdot 0.01 + 0.01 < 0.9. \label{eq:l2_dist_ub}
\end{equation}
The above implies that our conditions ensure that $t$ will be within the support of $z_{\le d} + \ell$, so $\ln\paren{\frac{f_{z_{\le d}}\paren{t}}{f_{z_{\le d} + \ell}\paren{t}}}$ will be finite.
Building on this observation, Fact~\ref{fact:unit_sphere_proj_dist} and the standard inequality $\ln\paren{x} \le x - 1, \forall x > 0$ yield for the log-density ratio:
\begin{align*}
    \ln\paren{\frac{f_{z_{\le d}}\paren{t}}{f_{z_{\le d} + \ell}\paren{t}}} = \frac{n_2 - d - 2}{2} \ln\paren{\frac{1 - \llnorm{t}^2}{1 - \llnorm{t - \ell}^2}}
    &< \frac{n_2}{2} \paren{\frac{1 - \llnorm{t}^2}{1 - \llnorm{t - \ell}^2} - 1} \\
    &= \frac{n_2}{2} \cdot \frac{\llnorm{\ell}^2 - 2 \iprod{t, \ell}}{1 - \llnorm{t - \ell}^2} \\
    &\le \frac{n_2}{2} \cdot \frac{\llnorm{\ell}^2 + 2 \abs{\iprod{t, \ell}}}{1 - \llnorm{t - \ell}^2} \\
    \overset{\paren{a}}&{\le} \frac{n_2}{2} \cdot 10 \paren{\frac{1}{25} \cdot \frac{\eps^2}{n_2} + \frac{1}{25} \cdot \frac{\eps}{n_2}} \\
    \overset{\paren{b}}&{\le} \frac{n_2}{2} \cdot 10 \cdot \frac{2}{25} \cdot \frac{\eps}{n_2} \\
    &= \frac{10}{25} \eps \\
    &\le \frac{\eps}{2},
\end{align*}
where $\paren{a}$ used (\ref{eq:abs_l_ineq}) and (\ref{eq:l2_dist_ub}), as well as our assumption about $\abs{\iprod{t, \ell}}$, and $\paren{b}$ used that $\eps \le 1$.
\end{proof}

We conclude by proving the main result of the section.

\begin{proposition}
\label{prop:priv_anal2}
Let $\eps \in \brk{0, 1}$ and $\delta \in \brk{0, \frac{\eps}{10}}, \lambda_0 \geq d$.
Also, assume that $n_1 \geq \useconstant{C1} \frac{\sqrt{\lambda_0} \log\paren{\frac{1}{\delta}}}{\eps}$ and $n_2 \geq \useconstant{C2} \frac{\lambda_0 \log\paren{\frac{1}{\delta}}}{\eps}$ for appropriately large absolute constants $\useconstant{C1}, \useconstant{C2} \geq 1$.
Finally, let $X, X' \in \bR^{n \times d}$ be adjacent datasets and $R \subseteq \brk{n_1}$ such that $\Psi\paren{X, R} = \Psi\paren{X', R} = 1$.
Then, for $z \sim \cU\paren{\bS^{n_2 - 1}}$, we have:
\[
    D_{e^{\frac{\eps}{2}}}\paren{\widehat{\mu} + \sqrt{\paren{1 - \frac{1}{n_1}} n_2} W z \middle\| \widehat{\mu}' + \sqrt{\paren{1 - \frac{1}{n_1}} n_2} W z} \le \delta.
\]
\end{proposition}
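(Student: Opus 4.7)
The plan is to leverage the two preparatory lemmas of this section, Lemmas~\ref{lem:priv_anal2_dpi} and~\ref{lem:priv_anal2_suff_cond}, and reduce the desired bound to a union bound over two concentration events that can each be handled with tools already developed in this section. First, I would invoke Lemma~\ref{lem:priv_anal2_dpi} to rewrite the divergence of interest as $D_{e^{\frac{\eps}{2}}}\paren{z_{\le d} \middle\| z_{\le d} + \ell}$, where $\llnorm{\ell} \le \frac{\sqrt{114 \lambda_0} e}{n_1 \sqrt{n_2}}$. Mirroring (\ref{eq:multi_hs_div_ub2}) and then applying Lemma~\ref{lem:priv_anal2_suff_cond}, it would suffice to show:
\[
    \pr{t \sim z_{\le d}}{\llnorm{t} > 0.9} + \pr{t \sim z_{\le d}}{\abs{\iprod{t, \ell}} > \frac{\eps}{50 n_2}} \le \delta.
\]

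Second, I would bound each of the two probabilities by $\frac{\delta}{2}$. For the first, Fact~\ref{fact:unit_sphere_proj_dist} implies $\llnorm{t}^2 \sim \Beta\paren{\frac{d}{2}, \frac{n_2 - d}{2}}$ with mean $\frac{d}{n_2}$. Since $n_2 \geq \useconstant{C2} \frac{\lambda_0 \log\paren{\frac{1}{\delta}}}{\eps} \geq \useconstant{C2} \frac{d \log\paren{\frac{1}{\delta}}}{\eps}$, the gap $0.81 - \frac{d}{n_2}$ is a positive constant, and Fact~\ref{fact:beta_concentration} yields a tail bound of the form $\exp\paren{- \Omega\paren{n_2 - d}}$, which is at most $\frac{\delta}{2}$ for $\useconstant{C2}$ sufficiently large (this is structurally analogous to the bound on (\ref{eq:term_bound1}) in the proof of Proposition~\ref{prop:priv_anal_multi1}).

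For the inner-product term, I would exploit the rotational symmetry of $z$: letting $\widetilde{\ell} \in \bR^{n_2}$ be the vector obtained by padding $\ell$ with zeros in the last $n_2 - d$ coordinates, we have $\iprod{z_{\le d}, \ell} = \iprod{z, \widetilde{\ell}}$, and by Fact~\ref{fact:rotational_invariance}, $\iprod{z, \widetilde{\ell}} \overset{d}{=} \llnorm{\widetilde{\ell}} z_1 = \llnorm{\ell} z_1$. Hence the probability in question equals $\pr{}{z_1^2 > \frac{\eps^2}{2500 n_2^2 \llnorm{\ell}^2}}$. Combining $\llnorm{\ell}^2 \le \frac{114 e^2 \lambda_0}{n_1^2 n_2}$ with $n_1 \geq \useconstant{C1} \frac{\sqrt{\lambda_0} \log\paren{\frac{1}{\delta}}}{\eps}$ shows that this threshold exceeds the Beta mean $\frac{1}{n_2}$ by a factor of order $\log^2\paren{\frac{1}{\delta}}$, so another application of Fact~\ref{fact:beta_concentration} (essentially (\ref{eq:beta_concentration1}) with different parameters) yields a bound of $\exp\paren{- \Omega\paren{\log^2\paren{\frac{1}{\delta}}}} \le \frac{\delta}{2}$ for $\useconstant{C1}$ sufficiently large; a union bound completes the argument.

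The main obstacle I anticipate is bookkeeping constants tightly enough to confirm that the inner-product bound is driven by $n_1$ rather than $n_2$. This asymmetry between the ``mean'' term handled here and the ``covariance'' term treated in Proposition~\ref{prop:priv_anal_multi1} is essential: the factor $n_1^{-2}$ inside $\llnorm{\ell}^2$ comes from the stability guarantee of Lemma~\ref{lem:stable_mean}, and the corresponding square-root saving is what ultimately allows $n_1$ to be taken as small as $\widetilde{\cO}\paren{\sqrt{\lambda_0}/\eps}$ instead of $\widetilde{\cO}\paren{\lambda_0/\eps}$.
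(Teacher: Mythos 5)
Your proposal is correct and follows essentially the same route as the paper's proof: the same reduction via Lemmas~\ref{lem:priv_anal2_dpi} and~\ref{lem:priv_anal2_suff_cond} to a union bound over the two events, the same Beta tail bound for $\llnorm{t}^2$, and the same rotational-symmetry reduction of $\iprod{t,\ell}$ to a scaled $z_1$ (the paper phrases it via Facts~\ref{fact:uniform_rot_inv} and~\ref{fact:unit_sphere_proj_dist} rather than explicit zero-padding, but the computation and the resulting $\exp\paren{-\Omega\paren{\log^2\paren{\frac{1}{\delta}}}}$ bound driven by $n_1$ are identical). Your closing remark about the $n_1^{-2}$ factor from Lemma~\ref{lem:stable_mean} being what permits $n_1 = \widetilde{\cO}\paren{\sqrt{\lambda_0}/\eps}$ matches the paper's parameter choices exactly.
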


\begin{proof}
We work as in the proofs of Propositions~\ref{prop:priv_anal_uni1} and~\ref{prop:priv_anal_multi1}.
Lemmas~\ref{lem:priv_anal2_dpi} and~\ref{lem:priv_anal2_suff_cond}, as well as (\ref{eq:multi_hs_div_ub2}), yield:
\begin{align}
    &\quad\ D_{e^{\frac{\eps}{2}}}\paren{\widehat{\mu} + \sqrt{\paren{1 - \frac{1}{n_1}} n_2} W z \middle\| \widehat{\mu}' + \sqrt{\paren{1 - \frac{1}{n_1}} n_2} W z} \nonumber \\
    &= D_{e^{\frac{\eps}{2}}}\paren{z_{\le d} \middle\| z_{\le d} + \ell} \nonumber \\
    &\le \pr{t \sim z_{\le d}}{f_{z_{\le d}}\paren{t} > e^{\frac{\eps}{2}} f_{z_{\le d} + \ell}\paren{t}} \nonumber \\
    &\le \pr{t \sim z_{\le d}}{\brc{\llnorm{t} > 0.9} \cup \brc{\abs{\iprod{\ell, t}} > \frac{1}{50} \cdot \frac{\eps}{n_2}}} \nonumber \\
    &\le \pr{t \sim z_{\le d}}{\llnorm{t} > 0.9} + \pr{t \sim z_{\le d}}{\abs{\iprod{\ell, t}} > \frac{1}{50} \cdot \frac{\eps}{n_2}} \nonumber \\
    &= \pr{t \sim z_{\le d}}{\llnorm{t} > 0.9} + \pr{t \sim z_{\le d}}{\abs{\iprod{\frac{\ell}{\llnorm{\ell}}, t}} > \frac{1}{50} \cdot \frac{\eps}{n_2} \cdot \frac{1}{\llnorm{\ell}}} \nonumber \\
    &\le \pr{t \sim z_{\le d}}{\llnorm{t} > 0.9} + \pr{t \sim z_{\le d}}{\abs{\iprod{\frac{\ell}{\llnorm{\ell}}, t}} > \frac{1}{50} \cdot \frac{\eps}{n_2} \cdot \frac{n_1 \sqrt{n_2}}{\sqrt{114 \lambda_0} e}} \nonumber \\
    &\le \pr{t \sim z_{\le d}}{\llnorm{t} > 0.9} + \pr{t \sim z_{\le d}}{\abs{\iprod{\frac{\ell}{\llnorm{\ell}}, t}} > \frac{\useconstant{C1}}{50 \sqrt{114}} \cdot \frac{\log\paren{\frac{1}{\delta}}}{\sqrt{n_2}}}, \label{eq:union_bound_ub1}
\end{align}
where the last inequality used our bound on $n_1$.

We handle each term of (\ref{eq:union_bound_ub1}) as we did in the conclusion of Proposition~\ref{prop:priv_anal_uni1}.
For the first term, Fact~\ref{fact:unit_sphere_proj_dist} yields that $\llnorm{t}^2 \sim \Beta\paren{\frac{d}{2}, \frac{n_2 - d}{2}}$.
Additionally, our bound on $n_2$ implies that $n_2 > 4 d$.
Thus, we get from Fact~\ref{fact:beta_concentration}:
\begin{align}
    \pr{t \sim z_{\le d}}{\llnorm{t} > 0.9}
    &= \pr{t \sim z_{\le d}}{\llnorm{t}^2 > 0.81} \nonumber \\
    &\le 2 \exp\paren{- \useconstant{cbeta} \min\brc{\frac{\paren{n_2 - d}^2}{2 d} \paren{0.81 - \frac{d}{n_2}}^2, \frac{n_2 - d}{2} \paren{0.81 - \frac{d}{n_2}}}} \nonumber \\
    &= 2 \exp\paren{- \useconstant{cbeta} \frac{n_2 - d}{2} \paren{0.81 - \frac{d}{n_2}}} \nonumber \\
    &\le 2 \exp\paren{- \frac{\useconstant{cbeta} n_2}{10}} \nonumber \\
    &\le \frac{\delta}{2}. \label{eq:ub_term1}
\end{align}
For the second term of (\ref{eq:union_bound_ub1}), Facts~\ref{fact:uniform_rot_inv} and~\ref{fact:unit_sphere_proj_dist} imply that $\iprod{\frac{\ell}{\llnorm{\ell}}, t} \sim z_1$, yielding that $\iprod{\frac{\ell}{\llnorm{\ell}}, t}^2 \sim \Beta\paren{\frac{1}{2}, \frac{n_2 - 1}{2}}$.
Thus, by our bound on $n_2$ and $\delta$, Fact~\ref{fact:beta_concentration} yields:
\begin{align}
    &\quad\ \pr{t \sim z_{\le d}}{\abs{\iprod{\frac{\ell}{\llnorm{\ell}}, t}} > \frac{\useconstant{C1}}{50 \sqrt{114}} \cdot \frac{\log\paren{\frac{1}{\delta}}}{\sqrt{n_2}}} \nonumber \\
    &= \pr{t \sim z_{\le d}}{\iprod{\frac{\ell}{\llnorm{\ell}}, t}^2 > \frac{\useconstant{C1}^2}{285000} \cdot \frac{\log^2\paren{\frac{1}{\delta}}}{n_2}} \nonumber \\
    &\le 2 \exp\paren{- \useconstant{cbeta} \min\brc{\frac{\paren{n_2 - 1}^2}{2} \paren{\frac{\useconstant{C1}^2}{285000} \cdot \frac{\log^2\paren{\frac{1}{\delta}}}{n_2} - \frac{1}{n_2}}^2, \frac{n_2 - 1}{2} \paren{\frac{\useconstant{C1}^2}{285000} \cdot \frac{\log^2\paren{\frac{1}{\delta}}}{n_2} - \frac{1}{n_2}}}} \nonumber \\
    &= 2 \exp\paren{- \useconstant{cbeta} \frac{n_2 - 1}{2} \paren{\frac{\useconstant{C1}^2}{285000} \cdot \frac{\log^2\paren{\frac{1}{\delta}}}{n_2} - \frac{1}{n_2}}} \nonumber \\
    &\le 2 \exp\paren{- \useconstant{cbeta} \frac{n_2 - 1}{2} \cdot \frac{\useconstant{C1}^2}{570000} \cdot \frac{\log^2\paren{\frac{1}{\delta}}}{n_2}} \nonumber \\
    &= 2 \exp\paren{- \useconstant{C1}^2 \useconstant{cbeta} \frac{n_2 - 1}{n_2} \cdot \frac{\log^2\paren{\frac{1}{\delta}}}{1140000}} \nonumber \\
    &\le 2 \exp\paren{- \frac{\useconstant{C1}^2 \useconstant{cbeta}}{2280000} \log^2\paren{\frac{1}{\delta}}} \nonumber \\
    &\le \frac{\delta}{2}. \label{eq:ub_term2}
\end{align}
Upper-bounding (\ref{eq:union_bound_ub1}) using (\ref{eq:ub_term1}) and (\ref{eq:ub_term2}) yields the desired result.
\end{proof}

\subsubsection{Completing the Proof of the Privacy Guarantee}
\label{subsubsec:priv_guarantee}

Thanks to the work undertaken in the previous sections, we can now complete the proof of the privacy guarantee.
We start with a statement that completes the calculation we initiated at the start of Section~\ref{subsec:priv_proof}.

\begin{corollary}
\label{cor:weak_triangle_ineq_full_ub}
Let $\eps \in \brk{0, 1}$ and $\delta \in \brk{0, \frac{\eps}{10}}, \lambda_0 \geq d$.
Also, assume that $n_1 \geq \useconstant{C1} \frac{\sqrt{\lambda_0} \log\paren{\frac{1}{\delta}}}{\eps}$ and $n_2 \geq \useconstant{C2} \frac{\lambda_0 \log\paren{\frac{1}{\delta}}}{\eps}$ for appropriately large absolute constants $\useconstant{C1}, \useconstant{C2} \geq 1$.
Finally, for $n \coloneqq n_1 + 2 n_2$, let $X, X' \in \bR^{n \times d}$ be adjacent datasets and $R \subseteq \brk{n_1}$ such that $\Psi\paren{X, R} = \Psi\paren{X', R} = 1$.
Then, for $z \sim \cU\paren{\bS^{n_2 - 1}}$, we have:
\[
    D_{e^{\eps}}\paren{\widehat{\mu} + \sqrt{\paren{1 - \frac{1}{n_1}} n_2} W z \middle\| \widehat{\mu}' + \sqrt{\paren{1 - \frac{1}{n_1}} n_2} W' z} \le \delta.
\]
\end{corollary}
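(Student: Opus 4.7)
The plan is to invoke the weak triangle inequality for Hockey--Stick divergence that was already set up in equation~(\ref{eq:weak_triangle_ineq}) and then apply the two bounds proved in Sections~\ref{subsubsec:priv_anal_multi1} and~\ref{subsubsec:priv_anal2} (Propositions~\ref{prop:priv_anal_multi1} and~\ref{prop:priv_anal2}, respectively) to the two summands. Concretely, from Lemma~\ref{lem:hs_div_triangle_ineq} applied in~(\ref{eq:weak_triangle_ineq}) we have, using that $e^{\eps/2}\le e\le 2$ for $\eps\le 1$,
\[
D_{e^{\eps}}\paren{\widehat{\mu} + \sqrt{\paren{1 - \tfrac{1}{n_1}} n_2}\, W z \,\middle\|\, \widehat{\mu}' + \sqrt{\paren{1 - \tfrac{1}{n_1}} n_2}\, W' z}
\le 2\,(\mathrm{I}) + 2\,(\mathrm{II}),
\]
where (\textrm{I}) is the first term of~(\ref{eq:weak_triangle_ineq}) (only the mean changes) and (\textrm{II}) is the second term (only the covariance-related matrix changes).

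The second step is to absorb the factor of~$2$ into the constants~$\useconstant{C1}, \useconstant{C2}$ by re-running Propositions~\ref{prop:priv_anal2} and~\ref{prop:priv_anal_multi1} with the target bound $\delta/4$ in place of $\delta$. Since both propositions are stated for arbitrary $\delta \in [0, \eps/10]$ and produce a bound on the Hockey-Stick divergence of the form $\delta$ whenever $n_1 \ge \useconstant{C1}\frac{\sqrt{\lambda_0}\log(1/\delta)}{\eps}$ and $n_2\ge \useconstant{C2}\frac{\lambda_0 \log(1/\delta)}{\eps}$, replacing $\delta$ by $\delta/4$ only increases the required constants by a multiplicative factor (indeed, $\log(4/\delta)\le 2\log(1/\delta)$ for $\delta\le 1/2$, and the case $\delta > 1/2$ is trivial since $\delta/10 \ge \eps/100$ and the conclusion is vacuous). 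So, choosing $\useconstant{C1}$ and $\useconstant{C2}$ slightly larger than what Propositions~\ref{prop:priv_anal_multi1} and~\ref{prop:priv_anal2} require ensures that each of (\textrm{I}) and (\textrm{II}) is at most $\delta/4$.

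Combining the two pieces yields $2(\mathrm{I}) + 2(\mathrm{II}) \le \delta/2 + \delta/2 = \delta$, which is the claim. I do not foresee a main obstacle here: the corollary is purely a bookkeeping step that knits together the two propositions via the triangle inequality~(\ref{eq:weak_triangle_ineq}). The only subtlety worth flagging in writing is to confirm that the hypotheses of both propositions are simultaneously satisfied under the assumptions of the corollary, namely $\Psi(X,R)=\Psi(X',R)=1$, $\lambda_0 \ge d$, the sample-complexity bounds on $n_1$ and $n_2$, and the range assumptions on $\eps,\delta$; all of these are identical to (or implied by) the corollary's hypotheses, so nothing extra needs to be checked.
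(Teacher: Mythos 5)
Your proposal is correct and takes essentially the same route as the paper: the paper's proof likewise sets $\delta \to \frac{\delta}{4}$ in Propositions~\ref{prop:priv_anal_multi1} and~\ref{prop:priv_anal2} and concludes directly from the factor-of-$2$ decomposition in (\ref{eq:weak_triangle_ineq}). One small slip worth fixing: the justification ``$e^{\eps/2} \le e \le 2$'' is false since $e > 2$; the correct statement (and the one underlying the factor of $2$ in (\ref{eq:weak_triangle_ineq})) is $e^{\eps/2} \le e^{1/2} < 2$ for $\eps \le 1$.
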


\begin{proof}
We set $\delta \to \frac{\delta}{4}$ in Propositions~\ref{prop:priv_anal_multi1} and~\ref{prop:priv_anal2}.
Then, the result follows directly from (\ref{eq:weak_triangle_ineq}).
\end{proof}

In all of the above, we have been assuming that $\Psi\paren{X, R} = \Psi\paren{X', R} = 1$.
At this point, we will also take into account the effect that privately checking this has on the privacy guarantees.
This leads us to the main theorem of this section, which brings the proof of the privacy guarantee full circle.

\begin{theorem}
\label{thm:priv_guarantee}
Let $\eps \in \brk{0, 1}$ and $\delta \in \brk{0, \frac{\eps}{10}}, \lambda_0 \geq d$.
Also, assume that $n = \cO\paren{\frac{\lambda_0 \log\paren{\frac{1}{\delta}}}{\eps}}$.
Then, Algorithm~\ref{alg:sampling} is $\paren{\eps, \delta}$-DP.
\end{theorem}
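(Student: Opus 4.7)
The plan is to decompose Algorithm~\ref{alg:sampling} to fit the framework of Lemma~\ref{lem:composition_halting}. I will take $M_1$ to consist of Lines~\ref{ln:init}--\ref{ln:ptr}, with $M_1$ returning \Fail\ when the PTR step rejects and otherwise passing its internal state (including $W$, $v$, $\widehat{\mu}$, $\widehat{\Sigma}$, and the random reference set $R$) to $M_2$; $M_2$ consists of Lines~\ref{ln:post_ptr1}--\ref{ln:post_ptr2} (drawing $z$ uniformly from $\bS^{n_2 - 1}$ and outputting $Z$). As the deterministic predicate $\Psi$ on the dataset required by the composition lemma, I would take $\Psi\paren{X} = \mathds{1}\brc{\max\brc{\Score_1\paren{X}, \Score_2\paren{X}} < \frac{2 \log\paren{\frac{6}{\delta}}}{\eps / 3} + 4}$. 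By Lemma~\ref{lem:bhs_ptr}(3), whenever $\Psi\paren{X} = 0$ the PTR test in $M_1$ deterministically outputs \Fail, so $M_1$ outputs \Fail, as required by Lemma~\ref{lem:composition_halting}.

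The first task is to show that $M_1$ is $\paren{\eps/3, \delta/6}$-DP. Lemma~\ref{lem:stable_conditions_ptr}, applied using our lower bounds on $n_1$ and $n_2$, establishes that $X \mapsto \max\brc{\Score_1, \Score_2}$ has sensitivity at most $2$ on neighboring datasets (with the shared random subset $R$ coupled across the two executions); Lemma~\ref{lem:bhs_ptr}(1) then implies the PTR invocation is $\paren{\eps/3, \delta/6}$-DP, and post-processing carries this over to all of $M_1$.

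The more substantive task is to show $M_2$ is $\paren{2\eps/3, 5\delta/6}$-DP under condition $\Psi$. I fix any adjacent $X, X'$ both satisfying $\Psi$ and couple $R$ and $z$ between the two executions. Corollary~\ref{cor:weak_triangle_ineq_full_ub}, instantiated with parameters $\eps \to 2\eps/3, \delta \to \delta/2$, handles the heart of the argument: \emph{conditional} on the event $\cE = \brc{R \text{ is degree-representative for both } X \text{ and } X'}$, it yields $D_{e^{2\eps/3}}\paren{Z \middle\| Z'} \le \delta/2$. To bound $\Pr\paren{\cE^c}$, the choice $M = 6 k + \cl{18 \log\paren{\frac{16 n}{\delta}}}$ suffices via a standard hypergeometric concentration argument (implicit in~\cite{BrownHS23}): for any fixed dataset, a uniformly drawn subset of size $M$ is degree-representative with probability at least $1 - \delta / \paren{6 n}$, and a union bound over the $n$ indices of $X$ together with the single differing index in $X'$ gives $\Pr\paren{\cE^c} \le \delta/3$. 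Absorbing this via joint convexity of the Hockey-Stick divergence (Fact~\ref{fact:joint_convexity}), using that $\cE$ depends only on the shared coupled $R$ so the mixture weights on the two sides are identical, yields $D_{e^{2\eps/3}}\paren{M_2\paren{X} \middle\| M_2\paren{X'}} \le \delta/2 + \delta/3 = 5\delta/6$.

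Applying Lemma~\ref{lem:composition_halting} with $\paren{\eps_1, \delta_1} = \paren{\eps/3, \delta/6}$ and $\paren{\eps_2, \delta_2} = \paren{2\eps/3, 5\delta/6}$ then delivers the full $\paren{\eps, \delta}$-DP guarantee. The chief obstacle is the handling of the random reference set $R$: it must effectively be fixed in order to invoke Corollary~\ref{cor:weak_triangle_ineq_full_ub}, yet its randomness is also what supplies degree-representativity, so the bad-$R$ event must be isolated and absorbed into $\delta_2$ while preserving the coupling $R = R'$ used throughout Section~\ref{subsec:priv_proof}. The final sample complexity $n = \cO\paren{\frac{\lambda_0 \log\paren{\frac{1}{\delta}}}{\eps}}$ is dominated by the $n_2 = \Omega\paren{\frac{\lambda_0 \log\paren{\frac{1}{\delta}}}{\eps}}$ requirement from Corollary~\ref{cor:weak_triangle_ineq_full_ub}, with $n_1 = \Omega\paren{\frac{\sqrt{\lambda_0} \log\paren{\frac{1}{\delta}}}{\eps}}$ being of lower order in $\lambda_0$.
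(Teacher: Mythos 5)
Your proposal is correct and follows essentially the same route as the paper's proof: decompose via Lemma~\ref{lem:composition_halting}, use Lemmas~\ref{lem:bhs_ptr} and~\ref{lem:stable_conditions_ptr} for the PTR stage, invoke Corollary~\ref{cor:weak_triangle_ineq_full_ub} for the post-PTR stage, and absorb the failure of degree-representativity of $R$ (controlled by Fact~\ref{fact:hyp_geom_conc} through the choice of $M$) into the $\delta$ budget. The only difference is bookkeeping — the paper allocates $\delta/6 + \delta/6 + 2\delta/3$ where you allocate $\delta/6 + (\delta/3 + \delta/2)$ — and both sum to $\delta$.
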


\begin{proof}
The proof follows the same structure as the argument in the proof of Lemma $15$ in~\cite{BrownHS23}.
First, due to our choice of $M$ in Line~\ref{ln:init}, the set $R$ sampled in Line~\ref{ln:repr_set} must be degree-representative, except with probability $\frac{\delta}{6}$ (see Fact~\ref{fact:hyp_geom_conc}).
Under this condition, Lemmas~\ref{lem:bhs_ptr} and~\ref{lem:stable_conditions_ptr} imply that $\cM_{\mathrm{PTR}}^{\paren{\frac{\eps}{3}, \frac{\delta}{6}}}$ satisfies $\paren{\frac{\eps}{3}, \frac{\delta}{6}}$-DP.
Finally, under both of the above conditions, by setting $\eps \to \frac{2}{3} \eps$ and $\delta \to \frac{2}{3} \delta$ in Corollary~\ref{cor:weak_triangle_ineq_full_ub}, we get that the steps in Lines~\ref{ln:post_ptr1}-\ref{ln:post_ptr2} of Algorithm~\ref{alg:sampling} satisfy $\paren{\frac{2}{3} \eps, \frac{2}{3} \delta}$-DP.
Given the above, the desired result follows directly from Lemma~\ref{lem:composition_halting}.
\end{proof}

\subsection{Putting Everything Together}
\label{subsec:wrap}

Having established the utility and privacy guarantees in Theorems~\ref{thm:uti_guarantee} and~\ref{thm:priv_guarantee}, respectively, it remains to put the two together, which will allow us to identify the final sample complexity of our algorithm, thus completing the proof of Theorem~\ref{thm:main_thm_formal}.
We do so in this section.

\begin{proof}[Proof of Theorem~\ref{thm:main_thm_formal}]
We assume that we have $n$ samples, of which the first $n_1$ are fed to \StableCov, and the last $2 n_2$ are fed to \StableMean.
By Theorem~\ref{thm:priv_guarantee}, we get that $n_1 \geq \Theta\paren{\frac{\sqrt{\lambda_0} \log\paren{\frac{1}{\delta}}}{\eps}}$ and $n_2 \geq \Theta\paren{\frac{\lambda_0 \log\paren{\frac{1}{\delta}}}{\eps}}$, i.e., it must be the case that $n \geq \Theta\paren{\frac{\lambda_0 \log\paren{\frac{1}{\delta}}}{\eps}}$.
By Theorem~\ref{thm:uti_guarantee}, we get that $\lambda_0 \geq \Theta\paren{d + \sqrt{d \log\paren{\frac{3 n}{\alpha}}} + \log\paren{\frac{n}{\alpha}}}$ and $n_2 \geq \Theta\paren{d + \log\paren{\frac{1}{\alpha}}}$.
The latter condition is trivially satisfied by the bound on $n_2$ implied by Theorem~\ref{thm:priv_guarantee}.
Thus, it remains now to identify a bound on $n$ such that:
\[
    n \geq \Theta\paren{\frac{\lambda_0 \log\paren{\frac{1}{\delta}}}{\eps}} \geq \Theta\paren{\frac{\paren{d + \sqrt{d \ln\paren{\frac{3 n}{\alpha}}} + \ln\paren{\frac{n}{\alpha}}} \log\paren{\frac{1}{\delta}}}{\eps}}.
\]
The above is satisfied when $n = \cO\paren{\frac{\log\paren{\frac{1}{\delta}}}{\eps} \paren{d + \log\paren{\frac{\log\paren{\frac{1}{\delta}}}{\alpha \eps}}}}$.
\end{proof}

\begin{remark}
\label{rem:known_cov}
We note that it is possible to also solve the case where our samples come from a Gaussian with \emph{known covariance matrix} (e.g., $\cN\paren{\mu, \id}$) using a variant of Algorithm~\ref{alg:sampling}.
Indeed, we would not need to use \StableCov, and \StableMean\ would receive $\id$ as input in the place of $\widehat{\Sigma}$.
Because of that our bounds would only involve $n_1$, and we would get a final sample complexity of $n = \cO\paren{\frac{\log\paren{\frac{1}{\delta}}}{\eps} \sqrt{d + \log\paren{\frac{\log\paren{\frac{1}{\delta}}}{\alpha \eps}}}}$.
This roughly matches the corresponding result of~\cite{GhaziHKM23}. 
\end{remark}

\section*{Acknowledgments}
\label{sec:ack}

The authors would like to thank Gavin Brown for helpful comments and discussions.

\bibliographystyle{alpha}
\bibliography{biblio.bib}

\newcommand{\etalchar}[1]{$^{#1}$}
\begin{thebibliography}{KMS{\etalchar{+}}22b}

\bibitem[AAK21]{AdenAliAK21}
Ishaq {Aden-Ali}, Hassan Ashtiani, and Gautam Kamath.
\newblock On the sample complexity of privately learning unbounded high-dimensional gaussians.
\newblock In {\em Proceedings of the 32nd International Conference on Algorithmic Learning Theory}, ALT '21, pages 185--216. JMLR, Inc., 2021.

\bibitem[AGSV20]{AxelrodGSV20}
Brian Axelrod, Shivam Garg, Vatsal Sharan, and Gregory Valiant.
\newblock Sample amplification: Increasing dataset size even when learning is impossible.
\newblock In {\em Proceedings of the 37th International Conference on Machine Learning}, ICML '20, pages 442--451. JMLR, Inc., 2020.

\bibitem[AKM{\etalchar{+}}25]{AgarwalKMMSU25}
Sushant Agarwal, Gautam Kamath, Mahbod Majid, Argyris Mouzakis, Rose Silver, and Jonathan Ullman.
\newblock Private mean estimation with person-level differential privacy.
\newblock In {\em Proceedings of the 2025 Annual ACM-SIAM Symposium on Discrete Algorithms (SODA)}, SODA '25, pages 2819--2880. SIAM, 2025.

\bibitem[AKT{\etalchar{+}}23]{AlabiKTVZ23}
Daniel Alabi, Pravesh~K Kothari, Pranay Tankala, Prayaag Venkat, and Fred Zhang.
\newblock Privately estimating a {G}aussian: Efficient, robust and optimal.
\newblock In {\em Proceedings of the 55th Annual ACM Symposium on the Theory of Computing}, STOC '23. ACM, 2023.

\bibitem[AL22]{AshtianiL22}
Hassan Ashtiani and Christopher Liaw.
\newblock Private and polynomial time algorithms for learning {G}aussians and beyond.
\newblock In {\em Proceedings of the 35th Annual Conference on Learning Theory}, COLT '22, pages 1075--1076, 2022.

\bibitem[ALNP23]{AumullerLNP23}
Martin Aum{\"u}ller, Christian~Janos Lebeda, Boel Nelson, and Rasmus Pagh.
\newblock Plan: variance-aware private mean estimation.
\newblock {\em arXiv preprint arXiv:2306.08745}, 2023.

\bibitem[AUZ23]{AsiUZ23}
Hilal Asi, Jonathan Ullman, and Lydia Zakynthinou.
\newblock From robustness to privacy and back.
\newblock {\em arXiv preprint arXiv:2302.01855}, 2023.

\bibitem[BBC{\etalchar{+}}23]{BenDavidBCKS23}
Shai {Ben-David}, Alex Bie, Cl\'ement~L. Canonne, Gautam Kamath, and Vikrant Singhal.
\newblock Private distribution learning with public data: The view from sample compression.
\newblock In {\em Advances in Neural Information Processing Systems 36}, NeurIPS '23, pages 7184--7215. Curran Associates, Inc., 2023.

\bibitem[BDKU20]{BiswasDKU20}
Sourav Biswas, Yihe Dong, Gautam Kamath, and Jonathan Ullman.
\newblock Coinpress: Practical private mean and covariance estimation.
\newblock In {\em Advances in Neural Information Processing Systems 33}, NeurIPS '20, pages 14475--14485. Curran Associates, Inc., 2020.

\bibitem[BGS{\etalchar{+}}21]{BrownGSUZ21}
Gavin Brown, Marco Gaboardi, Adam Smith, Jonathan Ullman, and Lydia Zakynthinou.
\newblock Covariance-aware private mean estimation without private covariance estimation.
\newblock In {\em Advances in Neural Information Processing Systems 34}, NeurIPS '21. Curran Associates, Inc., 2021.

\bibitem[BHS23]{BrownHS23}
Gavin Brown, Samuel~B Hopkins, and Adam Smith.
\newblock Fast, sample-efficient, affine-invariant private mean and covariance estimation for subgaussian distributions.
\newblock In {\em Proceedings of the 36th Annual Conference on Learning Theory}, COLT '23, pages 5578--5579, 2023.

\bibitem[BKS22]{BieKS22}
Alex Bie, Gautam Kamath, and Vikrant Singhal.
\newblock Private estimation with public data.
\newblock In {\em Advances in Neural Information Processing Systems 35}, NeurIPS '22. Curran Associates, Inc., 2022.

\bibitem[BKSW19]{BunKSW19}
Mark Bun, Gautam Kamath, Thomas Steinke, and Zhiwei~Steven Wu.
\newblock Private hypothesis selection.
\newblock In {\em Advances in Neural Information Processing Systems 32}, NeurIPS '19, pages 156--167. Curran Associates, Inc., 2019.

\bibitem[BKZ23]{BieKZ23}
Alex Bie, Gautam Kamath, and Guojun Zhang.
\newblock Private {GAN}s, revisited.
\newblock {\em Transactions on Machine Learning Research}, 2023.

\bibitem[BS19]{BunS19}
Mark Bun and Thomas Steinke.
\newblock Average-case averages: Private algorithms for smooth sensitivity and mean estimation.
\newblock In {\em Advances in Neural Information Processing Systems 32}, NeurIPS '19, pages 181--191. Curran Associates, Inc., 2019.

\bibitem[BTGT18]{BassilyTT18}
Raef Bassily, Om~Thakkar, and Abhradeep Guha~Thakurta.
\newblock Model-agnostic private learning.
\newblock In {\em Advances in Neural Information Processing Systems 31}, NeurIPS '18, pages 7102--7112. Curran Associates, Inc., 2018.

\bibitem[BWW{\etalchar{+}}19]{BeaulieuJonesWWLBBG19}
Brett~K {Beaulieu-Jones}, Zhiwei~Steven Wu, Chris Williams, Ran Lee, Sanjeev~P Bhavnani, James~Brian Byrd, and Casey~S Greene.
\newblock Privacy-preserving generative deep neural networks support clinical data sharing.
\newblock {\em Circulation: Cardiovascular Quality and Outcomes}, 12(7):e005122, 2019.

\bibitem[CBV{\etalchar{+}}21]{CaoBVFK21}
Tianshi Cao, Alex Bie, Arash Vahdat, Sanja Fidler, and Karsten Kreis.
\newblock Don't generate me: Training differentially private generative models with sinkhorn divergence.
\newblock In {\em Advances in Neural Information Processing Systems 34}, NeurIPS '21, pages 12480--12492. Curran Associates, Inc., 2021.

\bibitem[CN25]{CheuN25}
Albert Cheu and Debanuj Nayak.
\newblock Differentially private multi-sampling from distributions.
\newblock In {\em Proceedings of the 36th International Conference on Algorithmic Learning Theory}, ALT '25, pages 289--314. JMLR, Inc., 2025.

\bibitem[CWZ21]{CaiWZ21}
T~Tony Cai, Yichen Wang, and Linjun Zhang.
\newblock The cost of privacy: Optimal rates of convergence for parameter estimation with differential privacy.
\newblock {\em The Annals of Statistics}, 49(5):2825--2850, 2021.

\bibitem[DCVK23]{DockhornCVK23}
Tim Dockhorn, Tianshi Cao, Arash Vahdat, and Karsten Kreis.
\newblock Differentially private diffusion models.
\newblock {\em Transactions on Machine Learning Research}, 2023.

\bibitem[DF18]{DworkF18}
Cynthia Dwork and Vitaly Feldman.
\newblock Privacy-preserving prediction.
\newblock In {\em Proceedings of the 31st Annual Conference on Learning Theory}, COLT '18, pages 1693--1702, 2018.

\bibitem[DFM{\etalchar{+}}20]{DuFMBG20}
Wenxin Du, Canyon Foot, Monica Moniot, Andrew Bray, and Adam Groce.
\newblock Differentially private confidence intervals.
\newblock {\em arXiv preprint arXiv:2001.02285}, 2020.

\bibitem[DL09]{DworkL09}
Cynthia Dwork and Jing Lei.
\newblock Differential privacy and robust statistics.
\newblock In {\em Proceedings of the 41st Annual ACM Symposium on the Theory of Computing}, STOC '09, pages 371--380. ACM, 2009.

\bibitem[DMNS06]{DworkMNS06}
Cynthia Dwork, Frank McSherry, Kobbi Nissim, and Adam Smith.
\newblock Calibrating noise to sensitivity in private data analysis.
\newblock In {\em Proceedings of the 3rd Conference on Theory of Cryptography}, TCC '06, pages 265--284, Berlin, Heidelberg, 2006. Springer.

\bibitem[GHKM23]{GhaziHKM23}
Badih Ghazi, Xiao Hu, Ravi Kumar, and Pasin Manurangsi.
\newblock On differentially private sampling from gaussian and product distributions.
\newblock In {\em Advances in Neural Information Processing Systems 36}, NeurIPS '23, pages 77783--77809. Curran Associates, Inc., 2023.

\bibitem[HJSP23]{HarderJSP23}
Frederik Harder, Milad Jalali, Danica~J Sutherland, and Mijung Park.
\newblock Pre-trained perceptual features improve differentially private image generation.
\newblock {\em Transactions on Machine Learning Research}, 2023.

\bibitem[HKM22]{HopkinsKM22}
Samuel~B Hopkins, Gautam Kamath, and Mahbod Majid.
\newblock Efficient mean estimation with pure differential privacy via a sum-of-squares exponential mechanism.
\newblock In {\em Proceedings of the 54th Annual ACM Symposium on the Theory of Computing}, STOC '22, pages 1406--1417. ACM, 2022.

\bibitem[HKMN23]{HopkinsKMN23}
Samuel~B Hopkins, Gautam Kamath, Mahbod Majid, and Shyam Narayanan.
\newblock Robustness implies privacy in statistical estimation.
\newblock In {\em Proceedings of the 55th Annual ACM Symposium on the Theory of Computing}, STOC '23. ACM, 2023.

\bibitem[HLY21]{HuangLY21}
Ziyue Huang, Yuting Liang, and Ke~Yi.
\newblock Instance-optimal mean estimation under differential privacy.
\newblock In {\em Advances in Neural Information Processing Systems 34}, NeurIPS '21. Curran Associates, Inc., 2021.

\bibitem[HT10]{HardtT10}
Moritz Hardt and Kunal Talwar.
\newblock On the geometry of differential privacy.
\newblock In {\em Proceedings of the 42nd Annual ACM Symposium on the Theory of Computing}, STOC '10, pages 705--714. ACM, 2010.

\bibitem[KDH23]{KuditipudiDH23}
Rohith Kuditipudi, John Duchi, and Saminul Haque.
\newblock A pretty fast algorithm for adaptive private mean estimation.
\newblock In {\em Proceedings of the 36th Annual Conference on Learning Theory}, COLT '23, pages 2511--2551, 2023.

\bibitem[KLSU19]{KamathLSU19}
Gautam Kamath, Jerry Li, Vikrant Singhal, and Jonathan Ullman.
\newblock Privately learning high-dimensional distributions.
\newblock In {\em Proceedings of the 32nd Annual Conference on Learning Theory}, COLT '19, pages 1853--1902, 2019.

\bibitem[KMR{\etalchar{+}}23]{KamathMRSSU23}
Gautam Kamath, Argyris Mouzakis, Matthew Regehr, Vikrant Singhal, Thomas Steinke, and Jonathan Ullman.
\newblock A bias-variance-privacy trilemma for statistical estimation.
\newblock {\em arXiv preprint arXiv:2301.13334}, 2023.

\bibitem[KMS22a]{KamathMS22}
Gautam Kamath, Argyris Mouzakis, and Vikrant Singhal.
\newblock New lower bounds for private estimation and a generalized fingerprinting lemma.
\newblock In {\em Advances in Neural Information Processing Systems 35}, NeurIPS '22, pages 24405--24418. Curran Associates, Inc., 2022.

\bibitem[KMS{\etalchar{+}}22b]{KamathMSSU22}
Gautam Kamath, Argyris Mouzakis, Vikrant Singhal, Thomas Steinke, and Jonathan Ullman.
\newblock A private and computationally-efficient estimator for unbounded gaussians.
\newblock In {\em Proceedings of the 35th Annual Conference on Learning Theory}, COLT '22, pages 544--572, 2022.

\bibitem[KMV22]{KothariMV22}
Pravesh~K Kothari, Pasin Manurangsi, and Ameya Velingker.
\newblock Private robust estimation by stabilizing convex relaxations.
\newblock In {\em Proceedings of the 35th Annual Conference on Learning Theory}, COLT '22, pages 723--777, 2022.

\bibitem[KV18]{KarwaV18}
Vishesh Karwa and Salil Vadhan.
\newblock Finite sample differentially private confidence intervals.
\newblock In {\em Proceedings of the 9th Conference on Innovations in Theoretical Computer Science}, ITCS '18, pages 44:1--44:9, Dagstuhl, Germany, 2018. Schloss Dagstuhl--Leibniz-Zentrum fuer Informatik.

\bibitem[LGK{\etalchar{+}}24]{LinGKNY24}
Zinan Lin, Sivakanth Gopi, Janardhan Kulkarni, Harsha Nori, and Sergey Yekhanin.
\newblock Differentially private synthetic data via foundation model apis 1: Images.
\newblock In {\em Proceedings of the 12th International Conference on Learning Representations}, ICLR '24, 2024.

\bibitem[LKKO21]{LiuKKO21}
Xiyang Liu, Weihao Kong, Sham Kakade, and Sewoong Oh.
\newblock Robust and differentially private mean estimation.
\newblock In {\em Advances in Neural Information Processing Systems 34}, NeurIPS '21. Curran Associates, Inc., 2021.

\bibitem[LKO22]{LiuKO22}
Xiyang Liu, Weihao Kong, and Sewoong Oh.
\newblock Differential privacy and robust statistics in high dimensions.
\newblock In {\em Proceedings of the 35th Annual Conference on Learning Theory}, COLT '22, pages 1167--1246, 2022.

\bibitem[LM00]{LaurentM00}
Beatrice Laurent and Pascal Massart.
\newblock Adaptive estimation of a quadratic functional by model selection.
\newblock {\em The Annals of Statistics}, 28(5):1302--1338, 2000.

\bibitem[Nar23]{Narayanan23}
Shyam Narayanan.
\newblock Better and simpler lower bounds for differentially private statistical estimation.
\newblock {\em arXiv preprint arXiv:2310.06289}, 2023.

\bibitem[PAE{\etalchar{+}}17]{PapernotAEGT17}
Nicolas Papernot, Mart{\'\i}n Abadi, Ulfar Erlingsson, Ian Goodfellow, and Kunal Talwar.
\newblock Semi-supervised knowledge transfer for deep learning from private training data.
\newblock In {\em Proceedings of the 5th International Conference on Learning Representations}, ICLR '17, 2017.

\bibitem[PH24]{PortellaH24}
Victor~S Portella and Nick Harvey.
\newblock Lower bounds for private estimation of gaussian covariance matrices under all reasonable parameter regimes.
\newblock {\em arXiv preprint arXiv:2404.17714}, 2024.

\bibitem[Ran21]{Ranosova21}
Hedvika Ranosova.
\newblock Spherically {S}ymmetric {M}easures.
\newblock {\em Bachelor’s thesis, Department of Probability and Mathematical Statistics, Charles University}, 2021.

\bibitem[RSSS21]{RaskhodnikovaSSS21}
Sofya Raskhodnikova, Satchit Sivakumar, Adam Smith, and Marika Swanberg.
\newblock Differentially private sampling from distributions.
\newblock In {\em Advances in Neural Information Processing Systems 34}, NeurIPS '21, pages 28983--28994. Curran Associates, Inc., 2021.

\bibitem[Ska13]{Skala13}
Matthew Skala.
\newblock Hypergeometric tail inequalities: ending the insanity.
\newblock {\em arXiv preprint arXiv:1311.5939}, 2013.

\bibitem[SV16]{SasonV2016}
Igal Sason and Sergio Verdu.
\newblock $f$-divergence inequalities.
\newblock {\em IEEE Transactions on Information Theory}, 62(11):5973–6006, 2016.

\bibitem[XLB{\etalchar{+}}24]{XieLBGYINJZLLY24}
Chulin Xie, Zinan Lin, Arturs Backurs, Sivakanth Gopi, Da~Yu, Huseyin~A Inan, Harsha Nori, Haotian Jiang, Huishuai Zhang, Yin~Tat Lee, Bo~Li, and Sergey Yekhanin.
\newblock Differentially private synthetic data via foundation model apis 2: Text.
\newblock {\em arXiv preprint arXiv:2403.01749}, 2024.

\bibitem[XLW{\etalchar{+}}18]{XieLWWZ18}
Liyang Xie, Kaixiang Lin, Shu Wang, Fei Wang, and Jiayu Zhou.
\newblock Differentially private generative adversarial network.
\newblock {\em arXiv preprint arXiv:1802.06739}, 2018.

\bibitem[ZZ20]{ZhangZ20}
Anru Zhang and Yuchen Zhou.
\newblock On the non-asymptotic and sharp lower tail bounds of random variables.
\newblock {\em Stat}, 9(1):e314, 2020.

\end{thebibliography}

\appendix
\renewcommand{\thesection}{\Alph{section}}

\section{Facts from Linear Algebra}
\label{sec:lin_alg_facts}

In this appendix, we state a number of useful facts from linear algebra.
We start with a fact about the semi-definite ordering.

\begin{fact}
\label{fact:sd_ord}
Let $A, B \in \bR^{n \times n}$ be symmetric positive-definite matrices.
Then, $A \preceq B \iff \id \preceq A^{- \frac{1}{2}} B A^{- \frac{1}{2}} \iff B^{- \frac{1}{2}} A B^{- \frac{1}{2}} \preceq \id$.
\end{fact}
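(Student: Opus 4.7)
The plan is to prove both equivalences by the standard congruence-transformation trick, exploiting the fact that positive-definiteness of $A$ and $B$ guarantees that $A^{1/2}, A^{-1/2}, B^{1/2}, B^{-1/2}$ are all well-defined, symmetric, and invertible.

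First I would recall the definitional unpacking: $A \preceq B$ is equivalent to $x^{\top}(B-A)x \geq 0$ for every $x \in \bR^n$, i.e., to $B - A \succeq 0$. The central observation is that congruence by an invertible matrix $M$ preserves the positive semidefinite cone in both directions: $N \succeq 0 \iff M^{\top} N M \succeq 0$ whenever $M$ is invertible, because the map $x \mapsto M x$ is a bijection on $\bR^n$. This is the only tool we need.

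For the first equivalence, I would apply congruence by $M \coloneqq A^{-1/2}$ (symmetric and invertible since $A \succ 0$) to the matrix $B - A$. This gives $A^{-1/2}(B-A)A^{-1/2} = A^{-1/2} B A^{-1/2} - \id$, so $B - A \succeq 0$ holds if and only if $A^{-1/2} B A^{-1/2} - \id \succeq 0$, which is exactly $\id \preceq A^{-1/2} B A^{-1/2}$. For the second equivalence, I would instead apply congruence by $M \coloneqq B^{-1/2}$, which transforms $B - A$ into $\id - B^{-1/2} A B^{-1/2}$, yielding the claim $A \preceq B \iff B^{-1/2} A B^{-1/2} \preceq \id$ by the same reasoning.

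Since each step is a biconditional, chaining the two gives the full three-way equivalence stated in the fact. There is no real obstacle here; the only subtlety worth flagging is making sure the reader sees that $A, B \succ 0$ is used precisely to guarantee invertibility (and hence the bijection underlying the congruence trick)—without strict positive-definiteness the square-root inverses would not exist, and the argument would fail. The proof is therefore two short paragraphs of substitution, and no norm or spectral decomposition needs to be invoked.
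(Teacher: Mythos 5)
Your proof is correct: the congruence argument ($N \succeq 0 \iff M^{\top} N M \succeq 0$ for invertible $M$, applied with $M = A^{-1/2}$ and $M = B^{-1/2}$) is the standard and complete justification of this fact, which the paper states without proof. Nothing is missing.
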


Next, we have a fact about matrix norms.
Specifically, we recall the property known as \emph{rotational invariance}.

\begin{fact}
\label{fact:rotational_invariance}
Let $A \in \bR^{n \times m}$.
For any rotation matrix $U \in \bR^{n \times n}$, we have $\llnorm{U A} = \llnorm{A}, \fnorm{U A} = \fnorm{A}$, and $\mnorm{\tr}{U A} = \mnorm{\tr}{A}$.
The above also holds for $A U$, for any rotation matrix $U \in \bR^{m \times m}$.
\end{fact}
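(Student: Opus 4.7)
The plan is to observe that all three norms in question—spectral, Frobenius, and trace—are functions solely of the singular values of the matrix, and singular values are invariant under left- or right-multiplication by a rotation matrix. So the proof reduces to recalling these characterizations and a one-line computation with $U^{\top} U = \id$.

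More concretely, I would first recall from the preliminaries section (Section~\ref{subsec:math_prelim}) that the singular values of a matrix $A \in \bR^{n \times m}$ are the square roots of the eigenvalues of $A^{\top} A$. In terms of these singular values $\sigma_1(A) \geq \sigma_2(A) \geq \dots$, one has $\llnorm{A} = \sigma_1(A)$, $\fnorm{A} = \sqrt{\sum_i \sigma_i(A)^2}$, and $\mnorm{\tr}{A} = \sum_i \sigma_i(A)$. Thus to establish the claim for all three norms simultaneously, it suffices to show that $UA$ and $A$ have identical multisets of singular values, and likewise for $AU$.

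For the left-multiplication case, the key computation is
\[
    (UA)^{\top}(UA) = A^{\top} U^{\top} U A = A^{\top} A,
\]
where the last equality uses that $U$ is a rotation matrix so $U^{\top} U = \id_{n \times n}$. Hence $UA$ and $A$ have the same Gram matrix $A^{\top} A$, and in particular the same eigenvalues of this matrix, i.e., the same singular values. For the right-multiplication case, the analogous computation is
\[
    (AU)(AU)^{\top} = A U U^{\top} A^{\top} = A A^{\top},
\]
using $U U^{\top} = \id_{m \times m}$. Since the nonzero eigenvalues of $A^{\top}A$ and $A A^{\top}$ coincide (and the multisets of singular values are determined by either), $AU$ and $A$ have the same singular values as well.

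There is no real obstacle here; the statement is entirely standard and the proof is essentially a two-line calculation once one invokes the singular-value characterization of each norm. The only mild subtlety is making sure to handle both sides ($UA$ and $AU$), which requires using $U^{\top}U = \id$ in one case and $UU^{\top} = \id$ in the other, both of which hold for square orthogonal matrices. I would present this as a single short proof that treats all three norms uniformly, rather than verifying each norm separately.
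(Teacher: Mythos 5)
Your proof is correct. The paper states this as a standard fact in Appendix~\ref{sec:lin_alg_facts} without providing a proof, and your singular-value argument is exactly the standard justification: all three norms are defined in Section~\ref{subsec:math_prelim} as functions of the singular values, and left/right multiplication by a rotation preserves them. One tiny stylistic remark: for the $AU$ case, since the paper defines singular values via the eigenvalues of $M^{\top}M$, the most direct computation is $(AU)^{\top}(AU) = U^{\top}(A^{\top}A)U$, which is similar to $A^{\top}A$ and hence has the same spectrum; your detour through $(AU)(AU)^{\top} = AA^{\top}$ also works because zero singular values contribute nothing to any of the three norms, but it requires that extra remark about matching nonzero eigenvalues.
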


We now introduce basic matrix factorizations, namely the \emph{spectral decomposition} for symmetric matrices, and the \emph{singular value decomposition (SVD)} for general matrices.

\begin{fact}
\label{fact:matrix_fact}
Let $A \in \bR^{n \times m}$.
Then, $A$ can be factorized as follows:
\begin{itemize}
    \item If $n = m$ and $A = A^{\top}$, we can write $A = S \Lambda S^{\top}$, where $\Lambda \coloneqq \diag\brc{\lambda_1, \dots, \lambda_n}$ and $\brc{\lambda_i}_{i \in \brk{n}}$ are the eigenvalues of $A$, and $S S^{\top} = \id$.
    For the special case of PSD matrices, $S$ can be assumed to be a rotation matrix.
    The previous can equivalently be written as $A = \sum\limits_{i \in \brk{n}} \lambda_i s_i s_i^{\top}$, where $s_i \in \bR^n$ are the vectors representing the columns of $S$.
    
    \item If $n \geq m$, we can write $A = U D V^{\top}$, where $D \coloneqq \paren{\diag\brc{\sigma_1, \dots, \sigma_m} \middle| 0_{m \times \paren{n - m}}}^{\top} \in \bR^{n \times m}$ and $\brc{\sigma_i}_{i \in \brk{m}}$ are the singular values of $A$, and $U \in \bR^{n \times n}, V \in \bR^{m \times m}$ are rotation matrices.
    The previous can equivalently be written as $A = \sum\limits_{i \in \brk{m}} \sigma_i u_i v_i^{\top}$, where $u_i \in \bR^n$ and $v_i \in \bR^m$ are the vectors representing the columns of $U$ and $V$, respectively.
\end{itemize}
\end{fact}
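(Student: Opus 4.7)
The plan is to establish the two factorizations in sequence, using the spectral decomposition to derive the SVD. For the first part, I would argue by induction on $n$. The base case $n = 1$ is trivial. For the inductive step, I first need to produce a real eigenpair. I would do this variationally: the function $x \mapsto x^{\top} A x$ is continuous on the unit sphere $\bS^{n-1}$ and attains its maximum at some unit vector $s_1$ by compactness. A Lagrange-multiplier argument (or differentiating the Rayleigh quotient) yields $A s_1 = \lambda_1 s_1$ with $\lambda_1 = s_1^{\top} A s_1 \in \bR$.

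Next I would show that the orthogonal complement $s_1^{\perp}$ is $A$-invariant: for $w \in s_1^{\perp}$, symmetry gives $\iprod{A w, s_1} = \iprod{w, A s_1} = \lambda_1 \iprod{w, s_1} = 0$. Restricting $A$ to $s_1^{\perp}$ yields a symmetric operator on an $(n-1)$-dimensional space, so the inductive hypothesis supplies an orthonormal basis $s_2, \dots, s_n$ diagonalizing it with real eigenvalues $\lambda_2, \dots, \lambda_n$. Stacking the $s_i$'s as columns of $S$ yields $S S^{\top} = \id$ and $A = S \Lambda S^{\top} = \sum_{i \in [n]} \lambda_i s_i s_i^{\top}$. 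In the PSD case, all eigenvalues are non-negative, and if $\det S = -1$ we can flip the sign of any column to turn $S$ into a rotation without changing the decomposition.

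For the SVD, given $A \in \bR^{n \times m}$ with $n \geq m$, I would apply the spectral decomposition to the symmetric PSD matrix $A^{\top} A \in \bR^{m \times m}$ (PSD since $x^{\top} A^{\top} A x = \llnorm{A x}^2 \geq 0$). This yields $A^{\top} A = V \Lambda V^{\top}$ with $V$ a rotation and $\Lambda = \diag\{\sigma_1^2, \dots, \sigma_m^2\}$, $\sigma_i \geq 0$. For indices with $\sigma_i > 0$, set $u_i \coloneqq A v_i / \sigma_i \in \bR^n$, and verify orthonormality by the direct computation $\iprod{u_i, u_j} = v_i^{\top} A^{\top} A v_j / (\sigma_i \sigma_j) = \sigma_j^2 \iprod{v_i, v_j} / (\sigma_i \sigma_j) = \delta_{ij}$. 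Extend $\{u_i\}_{\sigma_i > 0}$ to a full orthonormal basis of $\bR^n$ via Gram--Schmidt, arranging the $u_i$'s as columns of $U$, and verify $A = U D V^{\top}$ by checking the identity $A v_j = \sigma_j u_j$ on the basis $\{v_j\}$ (noting that $A v_j = 0$ whenever $\sigma_j = 0$, since then $\llnorm{A v_j}^2 = v_j^{\top} A^{\top} A v_j = 0$).

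The only mildly delicate point is bookkeeping when some singular values vanish: one must verify that $A v_j = 0$ in that case so that the choice of $u_j$ is free, and then extending partially to a full orthonormal basis of $\bR^n$ is immediate since the relevant subspace has dimension at most the number of nonzero singular values. As in the symmetric case, a final sign flip can ensure that $U$ and $V$ are rotations rather than merely orthogonal, if desired. Both decompositions are standard, and the arguments above can be found in any linear algebra reference; the proof is included here only for completeness.
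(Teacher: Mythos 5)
The paper states this as a standard fact and supplies no proof of its own, so there is no in-paper argument to compare against; your write-up is the standard textbook derivation (variational extraction of an eigenpair plus induction on the orthogonal complement for the spectral theorem, then the SVD obtained from the spectral decomposition of $A^{\top} A$, with the zero singular values handled separately), and it is correct. It is also exactly the level of care the statement needs, since the only delicate points are the ones you flag: the $A v_j = 0$ bookkeeping when $\sigma_j = 0$ and the extension of $\brc{u_i}$ to a full orthonormal basis of $\bR^n$.

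One small caveat concerns your closing sentence that ``a final sign flip can ensure that $U$ and $V$ are rotations.'' For the SVD this is not always achievable: if $n = m$ and $A$ is invertible with $\det\paren{A} < 0$, then $\det\paren{A} = \det\paren{U} \det\paren{D} \det\paren{V}$ with $\det\paren{D} = \prod_i \sigma_i > 0$ forces $\det\paren{U} \det\paren{V} = -1$, and every sign flip that preserves $A = U D V^{\top}$ (flipping $u_i$ and $v_i$ together) preserves this product, so one of $U, V$ must remain a reflection. The freedom you invoke exists precisely when $n > m$ (spare columns of $U$) or when some $\sigma_i = 0$. This is really a slight overstatement in the Fact as written rather than in your proof, and it is harmless for the paper, which only applies the SVD to matrices $W \in \bR^{d \times n_2}$ with $n_2 > d$ (after the transposition of Remark~\ref{rem:svd_comment}), where both factors can indeed be taken to be rotations; but it is worth noting if ``rotation matrix'' is read strictly as an orthogonal matrix of determinant $+1$.
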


\begin{remark}
\label{rem:svd_comment}
The statement given above for the SVD is phrased in terms of matrices whose number of rows is lower-bounded by the number of columns.
However, the decomposition is valid even without this assumption and, given a matrix that does not satisfy this, we can obtain its SVD by taking its transpose, and then applying the previous statement.
\end{remark}

We include one standard inequality here that upper-bounds the determinant of a positive-definite matrix in terms of its trace.
The inequality follows directly from AM-GM.

\begin{fact}
\label{fact:pd_am_gam}
Let $A \in \bR^{n \times n}$ be a positive-definite matrix.
Then, we have $\det\paren{A} \le \paren{\frac{\tr\paren{A}}{n}}^n$.
\end{fact}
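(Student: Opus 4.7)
The plan is to reduce the inequality to the classical AM-GM inequality applied to the eigenvalues of $A$. Since $A$ is positive-definite, Fact~\ref{fact:matrix_fact} guarantees a spectral decomposition $A = S \Lambda S^{\top}$ where $\Lambda = \diag\brc{\lambda_1, \dots, \lambda_n}$ and $S$ is a rotation matrix. Positive-definiteness of $A$ forces all the $\lambda_i$ to be strictly positive, which is exactly the hypothesis needed to invoke AM-GM below.

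From this decomposition, I would read off clean formulas for both sides of the inequality in terms of the eigenvalues. For the determinant, multiplicativity together with $\det\paren{S} \det\paren{S^{\top}} = \det\paren{S S^{\top}} = \det\paren{\id} = 1$ gives $\det\paren{A} = \det\paren{\Lambda} = \prod_{i \in \brk{n}} \lambda_i$. For the trace, cyclicity combined with $S^{\top} S = \id$ yields $\tr\paren{A} = \tr\paren{S \Lambda S^{\top}} = \tr\paren{\Lambda S^{\top} S} = \tr\paren{\Lambda} = \sum_{i \in \brk{n}} \lambda_i$.

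With these identifications in place, the statement reduces to the classical AM-GM inequality applied to the positive reals $\lambda_1, \dots, \lambda_n$, which asserts $\paren{\prod_{i \in \brk{n}} \lambda_i}^{1/n} \le \frac{1}{n} \sum_{i \in \brk{n}} \lambda_i$. Raising both sides to the $n$-th power gives $\det\paren{A} \le \paren{\tr\paren{A}/n}^n$, which is the desired bound. There is no real obstacle here; the only role of the positive-definiteness assumption is to guarantee that all eigenvalues lie in the domain where AM-GM applies (the statement in fact extends to positive semi-definite $A$ by continuity, with both sides possibly vanishing).
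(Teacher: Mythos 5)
Your proof is correct and matches the paper's approach: the paper simply notes that the inequality "follows directly from AM-GM," and your argument — diagonalizing $A$, identifying $\det\paren{A}$ and $\tr\paren{A}$ with the product and sum of the (positive) eigenvalues, and applying AM-GM — is exactly the intended derivation.
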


Finally, we have a lemma which establishes that, if a symmetric positive-definite matrix $A$ has small distance from the identity matrix, the same holds for $A^{- 1}$ as well.

\begin{lemma}
\label{lem:dist_inverse}
Let $A \in \bR^{n \times n}$ be a symmetric positive-definite matrix with $A \succeq \id$.
Then, it holds that $\mnorm{\tr}{A - \id} \le \alpha \implies \mnorm{\tr}{A^{- 1} - \id} \le \alpha$,
\end{lemma}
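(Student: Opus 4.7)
The plan is to reduce everything to a statement about the eigenvalues of $A$ and then observe a simple per-eigenvalue inequality. Since $A$ is symmetric positive-definite, Fact~\ref{fact:matrix_fact} lets us write $A = U \Lambda U^{\top}$ for a rotation $U$ and $\Lambda = \diag\brc{\lambda_1, \dots, \lambda_n}$. The assumption $A \succeq \id$ is equivalent to $\lambda_i \geq 1$ for every $i$. Note that $A$ and $A^{-1}$ are diagonalized by the same $U$, so $A - \id = U\paren{\Lambda - \id}U^{\top}$ and $A^{-1} - \id = U\paren{\Lambda^{-1} - \id}U^{\top}$.

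By rotational invariance of the trace norm (Fact~\ref{fact:rotational_invariance}) applied on both sides, and using that the trace norm of a diagonal matrix is the sum of the absolute values of its diagonal entries, I would obtain
\[
    \mnorm{\tr}{A - \id} = \sum\limits_{i \in \brk{n}} \abs{\lambda_i - 1} = \sum\limits_{i \in \brk{n}} \paren{\lambda_i - 1},
\]
where the second equality uses $\lambda_i \geq 1$. Similarly,
\[
    \mnorm{\tr}{A^{-1} - \id} = \sum\limits_{i \in \brk{n}} \abs{\frac{1}{\lambda_i} - 1} = \sum\limits_{i \in \brk{n}} \paren{1 - \frac{1}{\lambda_i}},
\]
since $\lambda_i \geq 1$ forces $1/\lambda_i \le 1$.

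The key per-eigenvalue observation is that for $\lambda_i \geq 1$,
\[
    1 - \frac{1}{\lambda_i} = \frac{\lambda_i - 1}{\lambda_i} \le \lambda_i - 1,
\]
because dividing the positive quantity $\lambda_i - 1$ by $\lambda_i \geq 1$ can only decrease it. Summing over $i$ gives $\mnorm{\tr}{A^{-1} - \id} \le \mnorm{\tr}{A - \id} \le \alpha$, which is the claim.

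I do not anticipate any real obstacle here; the only thing one needs to be careful about is invoking the simultaneous diagonalizability of $A$ and $A^{-1}$ (so that the eigendecomposition cleanly gives the trace norm of both $A - \id$ and $A^{-1} - \id$), and recording that the hypothesis $A \succeq \id$ is exactly what removes the absolute values and makes the scalar inequality $1 - 1/\lambda \le \lambda - 1$ available.
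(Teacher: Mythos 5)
Your proof is correct. The reduction is the same as the paper's: diagonalize $A = U\Lambda U^{\top}$, note that $A^{-1}$ is diagonalized by the same $U$, use rotational invariance of the trace norm, and use $A \succeq \id$ to drop the absolute values, so that everything becomes a statement about $\sum_i (\lambda_i - 1)$ and $\sum_i (1 - 1/\lambda_i)$. Where you diverge is the final step: the paper sets up the constrained maximization of $\sum_i (1 - 1/\lambda_i)$ subject to $\sum_i(\lambda_i - 1) \le \alpha$ and $\lambda_i \ge 1$, and solves it via the KKT conditions to get the slightly sharper bound $\frac{\alpha n}{n + \alpha} \le \alpha$. You instead observe the pointwise inequality $1 - \frac{1}{\lambda} = \frac{\lambda - 1}{\lambda} \le \lambda - 1$ for $\lambda \ge 1$ and sum. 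Your route is more elementary and entirely self-contained -- it needs no optimization machinery and no extraneous restriction on $\alpha$ (the paper's displayed program carries a condition $\alpha < \frac{1}{2}$ that is not in the lemma statement) -- at the cost of the marginally weaker (but fully sufficient) constant $\alpha$ in place of $\frac{\alpha n}{n+\alpha}$. Since the lemma only claims the bound $\alpha$, your argument proves exactly what is stated.
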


\begin{proof}
We only prove the first case, because the proof for the second is analogous.
Since the matrix$A$ is symmetric and positive-definite, by Fact~\ref{fact:matrix_fact}, it can be written in the form $A = U \Lambda U^{\top}$, where $\Lambda$ is the eigenvalue matrix of $A$, and $U$ is a rotation matrix.
This implies that we have $A^{- 1} = U \Lambda^{- 1} U^{\top}$.
Thus, by Fact~\ref{fact:rotational_invariance}, we get $\mnorm{\tr}{A - \id} = \mnorm{\tr}{\Lambda - \id}$ and $\mnorm{\tr}{A^{- 1} - \id} = \mnorm{\tr}{\Lambda^{- 1} - \id}$.
Let $\brc{\lambda_i}_{i \in \brk{n}}$ be the spectrum of $A$.
By the previous and our assumptions that $A \succeq \id$ and $\mnorm{\tr}{A - \id} \le \alpha$, we can reduce the problem of upper-bounding $\mnorm{\tr}{A^{- 1} - \id}$ to the following constrained maximization problem:
\begin{align*}
    \max_{\lambda_i} \quad & \sum\limits_{i \in \brk{n}} \paren{1 - \frac{1}{\lambda_i}} \\
       \textrm{s.t.} \quad & \sum\limits_{i \in \brk{n}} \paren{\lambda_i - 1} \le a < \frac{1}{2} \\
                     \quad & \lambda_i \geq 1, \forall i \in \brk{n},
\end{align*}
The above can be solved exactly via an application of the KKT conditions, yielding the bound $\fnorm{A^{- 1} - \id} \le \frac{\alpha n}{n + \alpha} \le \alpha$.
\end{proof}

\section{Facts from Probability \& Statistics}
\label{sec:prob_facts}

In this appendix, we expand upon the background from probability and statistics that was given in Section~\ref{subsec:math_prelim}.
We start with a number of results about the densities and concentration properties of distributions of interest.
These results are also used at various points in~\cite{BrownHS23} and~\cite{GhaziHKM23}.

\begin{fact}[See~\cite{Skala13}]
\label{fact:hyp_geom_conc}
Suppose an urn contains $N$ balls and exactly $k$ of them are black.
Let random variable $y$ be the number of black balls selected when drawing $n$ uniformly at random from the urn without replacement.
Then, for all $t \geq 0$, we have:
\[
    \pr{}{\abs{\frac{y}{n} - \frac{k}{N}} \geq t} \le 2 e^{- 2 t^2 n}.
\]
\end{fact}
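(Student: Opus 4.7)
The plan is to reduce the hypergeometric concentration bound to Hoeffding's inequality for i.i.d.\ Bernoullis via a classical convex-ordering argument. First, write $y = \sum_{i = 1}^n X_i$ where $X_i \in \brc{0, 1}$ indicates whether the $i$-th ball drawn without replacement is black. Each $X_i$ has marginal distribution $\Bern\paren{k/N}$, but the $X_i$ are dependent (indeed, negatively correlated), so a direct application of the standard Hoeffding argument for independent summands is not available.

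The key tool to invoke is a classical theorem of Hoeffding (1963): for any convex $\phi \colon \bR \to \bR$, if $X_1, \dots, X_n$ are drawn without replacement from a finite population and $Y_1, \dots, Y_n$ are drawn with replacement from the same population, then
\[
    \ex{}{\phi\paren{\sum_{i = 1}^n X_i}} \le \ex{}{\phi\paren{\sum_{i = 1}^n Y_i}}.
\]
Informally, this says that sampling without replacement is dominated in the convex order by sampling with replacement; intuitively, removing balls only reduces the variance of the resulting sum.

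Next, apply this reduction with $\phi\paren{x} \coloneqq e^{\lambda x}$ for $\lambda > 0$, obtaining that the moment generating function of $y$ is bounded above by that of a sum of $n$ i.i.d.\ $\Bern\paren{k/N}$ variables. Feeding this into the standard Chernoff-Hoeffding calculation for sums of bounded i.i.d.\ random variables yields the one-sided bound $\pr{}{y/n - k/N \geq t} \le e^{-2 t^2 n}$. The matching lower tail follows by applying the same argument to the complementary indicators $1 - X_i$ (which are also drawn without replacement from a population with proportion $1 - k/N$), and a union bound over the two tails produces the factor of $2$.

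The main obstacle is Hoeffding's convex-ordering theorem itself, whose proof is nontrivial and requires one to express each without-replacement sum as a suitable average that can be dominated, term by term, by its with-replacement analogue. However, since it is a classical result whose statement and proof are well-documented (and is precisely the tool used in~\cite{Skala13}), it is reasonable to invoke it as a black box; the remainder of the argument is then a routine Chernoff-Hoeffding computation with the optimal choice $\lambda = 2t$ (after the usual Hoeffding lemma bound $\ex{}{e^{\lambda (Y_i - \ex{}{Y_i})}} \le e^{\lambda^2 / 8}$ for Bernoulli summands).
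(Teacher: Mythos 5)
The paper states this as a fact and cites it to~\cite{Skala13} without giving a proof; your argument is precisely the standard one behind that reference (Hoeffding's convex-ordering theorem reducing sampling without replacement to sampling with replacement, followed by the usual Chernoff--Hoeffding computation and a union bound over the two tails), and it is correct. The only nit is in the final parenthetical: after the bound $\ex{}{e^{\lambda\paren{Y_i - \ex{}{Y_i}}}} \le e^{\lambda^2/8}$, the exponent $-\lambda n t + n\lambda^2/8$ is minimized at $\lambda = 4t$, not $\lambda = 2t$, which still yields $e^{-2t^2 n}$.
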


\begin{fact}[Lemma 1 of~\cite{LaurentM00}]
\label{fact:hd_gaussian_tail}
If $X \sim \chi^2\paren{k}$, and $\beta \in \brk{0, 1}$, then:
\[
    \pr{}{X - k \geq 2 \sqrt{k \ln\paren{\frac{1}{\beta}}} + 2 \ln\paren{\frac{1}{\beta}}} \le \beta.
\]
Equivalently, the above can be written as:
\[
    \pr{}{X \geq t} \le e^{- \frac{\paren{\sqrt{2 t - k} - \sqrt{k}}^2}{4}}, \forall t \geq k.
\]
Thus, if $Y \in \bR^d$ and $Y \sim \cN\paren{0, \id}$, then $\pr{}{\llnorm{Y}^2 \geq t} \le e^{- \frac{\paren{\sqrt{2 t - d} - \sqrt{d}}^2}{4}}, \forall t \geq d$.
\end{fact}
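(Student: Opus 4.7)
The plan is to prove this standard chi-squared tail bound via the classical Chernoff method. The starting point is the moment generating function $\ex{}{e^{\lambda X}} = \paren{1 - 2 \lambda}^{- k / 2}$, valid for $\lambda \in \paren{0, \frac{1}{2}}$, which follows from independence of the $k$ squared standard Gaussian summands comprising $X$ (each contributing a factor of $\paren{1 - 2 \lambda}^{- 1 / 2}$, a standard one-dimensional Gaussian integral).

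Applying Markov's inequality to $e^{\lambda X}$ yields the generic bound $\pr{}{X \geq t} \le \exp\paren{- \lambda t - \frac{k}{2} \ln\paren{1 - 2 \lambda}}$, which I would optimize exactly by taking the derivative in $\lambda$. The first-order condition $\frac{k}{1 - 2 \lambda} = t$ places the optimum at $\lambda^* = \paren{t - k} / \paren{2 t}$ whenever $t > k$, with $1 - 2 \lambda^* = k / t$; a second-derivative check confirms this is a minimum. Substituting back produces the tight Cram\'er--Chernoff bound
\[
    \pr{}{X \geq t} \le \exp\paren{- \frac{t - k}{2} + \frac{k}{2} \ln\paren{\frac{t}{k}}}.
\]

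The key step is to specialize this inequality to $t \coloneqq k + 2 \sqrt{k x} + 2 x$, where $x \coloneqq \ln\paren{\frac{1}{\beta}}$, and argue that the exponent is at most $- x$. Setting $a \coloneqq \sqrt{x / k}$ gives $t / k = 1 + 2 a + 2 a^2$, and after clearing denominators the target inequality reduces to $2 a \geq \ln\paren{1 + 2 a + 2 a^2}$, equivalently $e^{2 a} \geq 1 + 2 a + 2 a^2$. This last estimate is immediate from the Taylor expansion of $e^{2 a}$ at the origin, since all terms of order $\geq 3$ are non-negative for $a \geq 0$. This yields the first displayed inequality of the statement.

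The remaining claims are purely algebraic. For the equivalence between the two formulations, I would solve $t = k + 2 \sqrt{k x} + 2 x$ for $\sqrt{x}$ by completing the square, obtaining $\sqrt{x} = \paren{\sqrt{2 t - k} - \sqrt{k}} / 2$, and substitute into $e^{- x}$. For the final assertion, I would observe that for $Y \sim \cN\paren{0, \id}$ in $\bR^d$, the squared norm $\llnorm{Y}^2 = \sum_{i \in \brk{d}} Y_i^2$ is a sum of $d$ independent squared standard Gaussians, hence $\llnorm{Y}^2 \sim \chi^2\paren{d}$, and the bound follows by taking $k = d$ in the second formulation. The main obstacle is purely bookkeeping: managing the exact Chernoff optimization cleanly and tracking the substitutions between the two parametrizations. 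All genuine analytic content condenses into the elementary Taylor-series comparison $e^{2 a} \geq 1 + 2 a + 2 a^2$.
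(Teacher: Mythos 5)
Your proof is correct. The paper does not prove this statement at all---it imports it verbatim as Lemma 1 of Laurent--Massart---so there is nothing to compare against except to note that your Chernoff-method derivation is the standard route to this bound: the MGF, the optimizer $\lambda^* = (t-k)/(2t)$, the reduction to $e^{2a} \geq 1 + 2a + 2a^2$, the quadratic-formula conversion between the two parametrizations, and the identification $\llnorm{Y}^2 \sim \chi^2\paren{d}$ all check out.
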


\begin{fact}[Hanson-Wright Inequality]
\label{fact:hanson_wright}
Let $X \in \bR^d$ be a random vector with $X \sim \cN\paren{0, \id}$.
There exists an absolute constant $\useconstant{chs} > 0$ such that, for any non-zero symmetric matrix $A \in \bR^{d \times d}$:\newconstant{chs}
\[
    \pr{}{\abs{X^{\top} A X - \ex{}{X^{\top} A X}} \geq t} \le 2 \exp\paren{- \useconstant{chs} \min\brc{\frac{t^2}{\fnorm{A}^2}, \frac{t}{\llnorm{A}}}}, \forall t \geq 0,
\]
where $\ex{}{X^{\top} A X} = \tr\paren{A}$.
\end{fact}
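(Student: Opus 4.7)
The plan is to diagonalize $A$ and reduce the quadratic form to a weighted sum of centered chi-squared random variables, then run a Chernoff argument with an MGF bound that naturally produces the two-regime tail.

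First, I would invoke Fact~\ref{fact:matrix_fact} to write $A = U \Lambda U^\top$ with $\Lambda = \diag\paren{\lambda_1, \dots, \lambda_d}$ and $U$ a rotation matrix. Set $Y \coloneqq U^\top X$; by rotational invariance of the isotropic Gaussian, $Y \sim \cN\paren{0, \id}$, and in particular the coordinates $Y_i$ are i.i.d.\ standard normals. Expanding $X^\top A X = Y^\top \Lambda Y = \sum_i \lambda_i Y_i^2$ and using $\ex{}{Y_i^2} = 1$ yields both $\ex{}{X^\top A X} = \tr\paren{A}$ and the decomposition $W \coloneqq X^\top A X - \tr\paren{A} = \sum_i \lambda_i \paren{Y_i^2 - 1}$, together with the identifications $\sum_i \lambda_i^2 = \fnorm{A}^2$ and $\max_i \abs{\lambda_i} = \llnorm{A}$.

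Second, I would compute the moment generating function of $W$. For any $s$ with $2 s \llnorm{A} < 1$, independence gives $\ex{}{e^{s W}} = \prod_i \frac{e^{- s \lambda_i}}{\sqrt{1 - 2 s \lambda_i}}$. The elementary Taylor-series bound $- \tfrac{1}{2} \ln\paren{1 - 2 u} - u \leq \frac{u^2}{1 - 2 \abs{u}}$, valid for $2 \abs{u} < 1$, applied termwise with $u = s \lambda_i$ and summed, then yields
\[
    \ln \ex{}{e^{s W}} \leq \frac{s^2 \fnorm{A}^2}{1 - 2 s \llnorm{A}}.
\]

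Third, I would apply Markov's inequality to $e^{s W}$ and optimize $s$. Restricting to $s \leq \tfrac{1}{4 \llnorm{A}}$ forces $1 - 2 s \llnorm{A} \geq \tfrac{1}{2}$, so the exponent becomes $- s t + 2 s^2 \fnorm{A}^2$. Setting $s \coloneqq \min\brc{\tfrac{t}{4 \fnorm{A}^2}, \tfrac{1}{4 \llnorm{A}}}$ and doing casework on which term of the minimum is active then yields
\[
    \pr{}{W \geq t} \leq \exp\paren{- \useconstant{chs} \min\brc{\tfrac{t^2}{\fnorm{A}^2}, \tfrac{t}{\llnorm{A}}}},
\]
for an absolute constant $\useconstant{chs} > 0$. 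Repeating the argument with $A$ replaced by $- A$ (which preserves both $\fnorm{\cdot}$ and $\llnorm{\cdot}$) handles the lower tail, and a union bound produces the factor of $2$ in the statement.

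The main obstacle is simultaneously keeping $2 s \llnorm{A}$ bounded away from $1$ (so that the MGF is finite and the Taylor bound applies) and extracting the correct two-regime dependence. The cleanest way to handle this is to split into the \emph{Frobenius regime} $t \leq \fnorm{A}^2 / \llnorm{A}$ and the \emph{operator regime} $t > \fnorm{A}^2 / \llnorm{A}$: in the first, the interior optimum $s^\star = \tfrac{t}{4 \fnorm{A}^2}$ is feasible and yields the sub-Gaussian tail $\exp\paren{- t^2 / \paren{8 \fnorm{A}^2}}$; in the second, the boundary choice $s = \tfrac{1}{4 \llnorm{A}}$ is forced and yields the sub-exponential tail $\exp\paren{- t / \paren{8 \llnorm{A}}}$, after which the two bounds combine into the advertised minimum.
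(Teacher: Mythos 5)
Your proof is correct. Note, however, that the paper does not prove this statement at all: it is stated as Fact~\ref{fact:hanson_wright} in the appendix and cited as the (well-known) Hanson--Wright inequality, so there is no argument in the paper to compare against. What you have written is the standard self-contained derivation of the Gaussian case: diagonalize $A$, use rotational invariance to reduce $X^{\top} A X - \tr\paren{A}$ to a weighted sum $\sum_i \lambda_i \paren{Y_i^2 - 1}$ of independent centered $\chi^2$ variables, bound the log-MGF by $\frac{s^2 \fnorm{A}^2}{1 - 2 s \llnorm{A}}$ via the termwise series estimate, and run Chernoff with the capped choice $s = \min\brc{\frac{t}{4 \fnorm{A}^2}, \frac{1}{4 \llnorm{A}}}$. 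All the individual steps check out: the MGF identity $\ex{}{e^{u Y_i^2}} = \paren{1 - 2u}^{-\frac{1}{2}}$ holds under your constraint $2 s \llnorm{A} < 1$, the inequality $-\frac{1}{2}\ln\paren{1 - 2u} - u \le \frac{u^2}{1 - 2\abs{u}}$ follows from the power series, and the two-regime casework gives the stated bound with $\useconstant{chs} = \frac{1}{8}$; replacing $A$ by $-A$ and union-bounding yields the two-sided statement. This is exactly the argument one finds in standard references (e.g.\ Vershynin or Boucheron--Lugosi--Massart, specialized to Gaussians where no decoupling is needed), and it would serve as a valid proof if the paper chose to include one.
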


\begin{fact}[Folklore - see Fact $3.4$ from~\cite{KamathLSU19}]
\label{fact:gaussian_spec_conc}
Let $X \coloneqq \paren{X_1, \dots, X_n} \sim \cN\paren{0, \id}^{\otimes n}$ and $\widehat{\Sigma} \coloneqq \frac{1}{n} \sum\limits_{i \in \brk{n}} X_i X_i^{\top}$.
Then, except with probability $\beta$, we have:
\[
    \paren{1 - \cO\paren{\sqrt{\frac{d + \log\paren{\frac{1}{\beta}}}{n}}}} \id \preceq \widehat{\Sigma} \preceq \paren{1 + \cO\paren{\sqrt{\frac{d + \log\paren{\frac{1}{\beta}}}{n}}}} \id.
\]
\end{fact}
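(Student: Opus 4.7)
The plan is to prove the equivalent spectral-norm statement $\llnorm{\widehat{\Sigma} - \id} = \cO\paren{\sqrt{\paren{d + \log\paren{1/\beta}}/n}}$, which by Fact~\ref{fact:sd_ord} immediately yields the claimed two-sided semidefinite ordering. The strategy is the classical net-plus-tail-bound approach: reduce the spectral norm to a supremum of scalar quadratic forms over a finite net on the sphere, apply $\chi^2$-concentration pointwise, and union bound over the net.

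First, I would use the variational characterization for the symmetric matrix $M \coloneqq \widehat{\Sigma} - \id$: $\llnorm{M} = \sup_{v \in \bS^{d - 1}} \abs{v^{\top} M v}$. By a standard volumetric covering argument, there exists a $\frac{1}{4}$-net $\cN_0 \subseteq \bS^{d - 1}$ of cardinality $\abs{\cN_0} \le 9^d$ such that $\llnorm{M} \le 2 \max_{v \in \cN_0} \abs{v^{\top} M v}$, reducing the task to uniform control over finitely many scalar quadratic forms.

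Next, for any fixed $v \in \bS^{d - 1}$, rotation invariance of the isotropic Gaussian gives $v^{\top} X_i \sim \cN\paren{0, 1}$ i.i.d., so $n \cdot v^{\top} \widehat{\Sigma} v = \sum_{i \in \brk{n}} \paren{v^{\top} X_i}^2 \sim \chi^2\paren{n}$. Applying two-sided $\chi^2$-concentration (Fact~\ref{fact:hd_gaussian_tail} together with its lower-tail analogue from~\cite{LaurentM00}), except with probability $\beta'$:
\[
    \abs{v^{\top} \paren{\widehat{\Sigma} - \id} v} = \cO\paren{\sqrt{\frac{\log\paren{1/\beta'}}{n}} + \frac{\log\paren{1/\beta'}}{n}}.
\]
Setting $\beta' \coloneqq \beta / \paren{2 \cdot 9^d}$ and union bounding over $v \in \cN_0$ gives, with probability at least $1 - \beta$:
\[
    \llnorm{\widehat{\Sigma} - \id} \le 2 \max_{v \in \cN_0} \abs{v^{\top} \paren{\widehat{\Sigma} - \id} v} = \cO\paren{\sqrt{\frac{d + \log\paren{1/\beta}}{n}} + \frac{d + \log\paren{1/\beta}}{n}} = \cO\paren{\sqrt{\frac{d + \log\paren{1/\beta}}{n}}},
\]
where the final equality absorbs the linear term into the square-root term in the non-trivial regime $n \gtrsim d + \log\paren{1/\beta}$, outside of which the target bound is vacuous anyway.

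I do not anticipate any genuine obstacle in carrying this out. The only technical care required is complementing the one-sided statement of Fact~\ref{fact:hd_gaussian_tail} with a matching lower-tail $\chi^2$ bound (available in~\cite{LaurentM00}) so that both signs of $v^{\top} M v$ are controlled, and verifying the standard factor-of-$2$ net-approximation loss for quadratic forms of symmetric matrices. An alternative route via matrix Bernstein would yield the same bound but would obscure the elementary $\chi^2$ structure at play here.
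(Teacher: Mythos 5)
The paper does not prove this fact at all --- it is cited as folklore from~\cite{KamathLSU19} --- so there is no in-paper argument to compare against. Your $\tfrac14$-net / $\chi^2$-tail / union-bound argument is the standard proof of this statement and is correct as outlined: the net cardinality $9^d$, the factor-$2$ loss for symmetric quadratic forms, and the two-sided Laurent--Massart bound (the paper's Fact~\ref{fact:hd_gaussian_tail} only states the upper tail, so you are right to flag the need for its lower-tail companion) all check out. Two cosmetic remarks: the conversion from $\llnorm{\widehat{\Sigma} - \id} \le t$ to $(1-t)\id \preceq \widehat{\Sigma} \preceq (1+t)\id$ follows directly from the eigenvalue characterization of the spectral norm for symmetric matrices rather than from Fact~\ref{fact:sd_ord}; and in the regime $n \ll d$ it is only the \emph{lower} Loewner bound that becomes vacuous (the upper bound genuinely requires $n \gtrsim d + \log(1/\beta)$, since $\llnorm{\widehat{\Sigma}}$ grows like $d/n$ there), which is consistent with how the fact is invoked in Lemma~\ref{lem:utility2}.
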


\begin{fact}[Theorem $2$ from~\cite{Ranosova21}]
\label{fact:unit_sphere_proj_dist}
Let $v \sim \cU\paren{\bS^{n - 1}}$.
Then, the density of $v_{\le i} \sim \cU\paren{\bS^{n - 1}}_{\le i}$ is:
\[
    f\paren{x} \ \propto \ 
    \left\{
    \begin{array}{ll}
         \paren{1 - \llnorm{x}^2}^{\frac{n - i}{2} - 1}, & \llnorm{x} \le 1 \\
         0,                                              & \llnorm{x} > 1
    \end{array}
    \right..
\]
Additionally, $\llnorm{v_{\le i}}^2 \sim \Beta\paren{\frac{i}{2}, \frac{n - i}{2}}$.
\end{fact}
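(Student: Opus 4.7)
My plan is to prove Fact~\ref{fact:unit_sphere_proj_dist} by directly parameterizing the sphere so that the first $i$ coordinates are decoupled from the rest, then reading off both claims from the resulting surface area element.

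The key parameterization is $\phi \colon B^i \times \bS^{n-i-1} \to \bS^{n-1}$ given by $\phi(x,\omega) \coloneqq (x,\sqrt{1-\llnorm{x}^2}\,\omega)$, where $B^i$ denotes the open unit ball of $\bR^i$; this covers $\bS^{n-1}$ up to a set of measure zero, which is sufficient for computing densities. I would then compute the pushforward Gram matrix: the $x$-block works out to $\id + \frac{x x^{\top}}{1-\llnorm{x}^2}$, whose determinant is $\frac{1}{1-\llnorm{x}^2}$ by the matrix determinant lemma; the $\omega$-block is the spherical metric on $\bS^{n-i-1}$ scaled by $1-\llnorm{x}^2$, contributing a factor $(1-\llnorm{x}^2)^{n-i-1}$; and the off-diagonal blocks vanish because the relevant cross terms are proportional to $\langle \omega,\omega'\rangle$ for $\omega'$ tangent to $\bS^{n-i-1}$ at $\omega$ (and tangent vectors to a sphere are orthogonal to the base point). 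Taking the square root of the full determinant, the surface element decomposes as $d\sigma_{n-1} = (1-\llnorm{x}^2)^{(n-i-2)/2}\,dx\,d\sigma_{n-i-1}(\omega)$.

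Integrating out $\omega$ and normalizing by $\abs{\bS^{n-1}}$ then immediately gives the first claim: $f_{v_{\le i}}(x) \propto (1-\llnorm{x}^2)^{(n-i)/2 - 1}$ for $\llnorm{x}\le 1$ and zero outside. For the Beta distribution claim, I would switch to polar coordinates $r = \llnorm{x}$ in $\bR^i$, picking up the Jacobian factor $r^{i-1}$, and then substitute $s = r^2$ (so $ds = 2 r\,dr$). This converts the radial density into one proportional to $s^{i/2 - 1}(1-s)^{(n-i)/2 - 1}$, which is precisely the $\Beta(i/2,(n-i)/2)$ density up to normalization.

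The main obstacle is the Gram matrix computation, in particular verifying that the off-diagonal blocks truly vanish; the decisive observation is that vectors tangent to $\bS^{n-i-1}$ at $\omega$ are orthogonal to $\omega$, which cancels the cross terms arising from differentiating $\sqrt{1-\llnorm{x}^2}\,\omega$ in each $x_j$ direction. As a sanity check for the Beta claim, there is a shorter route via the Gaussian representation $v = G / \llnorm{G}$ with $G\sim\cN(0,\id_n)$: then $\llnorm{v_{\le i}}^2 = \llnorm{G_{\le i}}^2 / (\llnorm{G_{\le i}}^2 + \llnorm{G_{> i}}^2)$ is the ratio of an independent $\chi^2$ with $i$ degrees of freedom to its sum with an independent $\chi^2$ with $n-i$ degrees of freedom, which is a textbook $\Beta(i/2,(n-i)/2)$ variable. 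The parameterization approach, however, has the advantage of yielding both claims from a single calculation.
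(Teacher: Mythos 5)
Your proof is correct. Note that the paper itself does not prove this statement -- it is imported verbatim as Theorem 2 of \cite{Ranosova21} -- so there is no in-paper argument to compare against; your derivation is a valid self-contained replacement for the citation. The computation checks out: the Gram matrix of $\phi(x,\omega) = (x, \sqrt{1-\llnorm{x}^2}\,\omega)$ is block-diagonal precisely because tangent vectors to $\bS^{n-i-1}$ at $\omega$ are orthogonal to $\omega$, the $x$-block determinant $\frac{1}{1-\llnorm{x}^2}$ combines with the $(1-\llnorm{x}^2)^{n-i-1}$ from the scaled spherical block to give $d\sigma_{n-1} = (1-\llnorm{x}^2)^{(n-i-2)/2}\,dx\,d\sigma_{n-i-1}(\omega)$, and the polar-coordinate substitution $s=r^2$ correctly turns $r^{i-1}(1-r^2)^{(n-i)/2-1}\,dr$ into the $\Beta(\frac{i}{2},\frac{n-i}{2})$ density. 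The chart misses only the measure-zero subsphere $\brc{v : v_{>i}=0}$, so the "up to measure zero" caveat is legitimate, and your Gaussian sanity check ($\chi^2_i/(\chi^2_i+\chi^2_{n-i})$) independently confirms the Beta claim.
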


\begin{fact}[Theorem $8$ from~\cite{ZhangZ20}]
\label{fact:beta_concentration}
Let $X \sim \Beta\paren{\alpha, \beta}$ with $0 < \alpha < \beta$.
Then, there exists an absolute constant $\useconstant{cbeta} \in \paren{0, 1}$ such that:$\newconstant{cbeta}$
\[
    \pr{}{X \geq \frac{\alpha}{\alpha + \beta} + x} \le 2 e^{- \useconstant{cbeta} \min\brc{\frac{\beta^2 x^2}{\alpha}, \beta x}}, \forall x \geq 0.
\]
\end{fact}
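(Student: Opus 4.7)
The plan is to derive this tail bound through the classical Gamma-ratio representation of the Beta distribution combined with a Bernstein-type concentration inequality, reducing the problem to concentration of a weighted difference of independent centered Gamma variables.

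Setting $p \coloneqq \alpha/(\alpha+\beta)$, I would couple $X$ with a pair of independent Gamma variables $U \sim \Gamma\paren{\alpha, 1}$ and $V \sim \Gamma\paren{\beta, 1}$ via the identity $X \overset{d}{=} U/(U+V)$. Writing $a \coloneqq 1 - p - x$ and $b \coloneqq p + x$, the event $\brc{X \geq p + x}$ becomes the linear event $\brc{aU - bV \geq 0}$. A direct computation gives $\ex{}{aU - bV} = a\alpha - b\beta = -x\paren{\alpha+\beta}$, so the target event is an upward deviation of magnitude $x\paren{\alpha+\beta}$ above the mean of $aU - bV$.

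The key step is to apply a one-sided Bernstein tail bound to the centered sub-exponential random variable $aU - bV - \ex{}{aU - bV}$. Centered $\Gamma\paren{\alpha, 1} - \alpha$ is sub-exponential with variance proxy of order $\alpha$ and scale parameter of order $1$, both of which follow from the closed-form bound $\log\ex{}{e^{\lambda\paren{U-\alpha}}} = -\alpha\log\paren{1-\lambda} - \lambda\alpha \leq \alpha\lambda^2/\paren{1-\lambda}$ valid for $\lambda < 1$. The weighted combination $aU - bV$ therefore inherits a variance proxy of order $a^2\alpha + b^2\beta$ and scale parameter of order $\max\brc{a, b}$, and standard Bernstein yields
\[
    \pr{}{X \geq p + x} \leq 2\exp\paren{- \Omega\paren{1} \cdot \min\brc{\frac{x^2\paren{\alpha+\beta}^2}{a^2\alpha+b^2\beta},\ \frac{x\paren{\alpha+\beta}}{\max\brc{a, b}}}}.
\]

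Finally, I would simplify using the hypothesis $\alpha < \beta$. Since $a + b = 1$, $\max\brc{a, b} \leq 1$; together with $\alpha+\beta \geq \beta$ this gives a scale term of order $\beta x$. For the variance term, $a \leq 1$ immediately gives $a^2\alpha \leq \alpha$, while $\paren{\alpha+\beta}^2 \geq \beta^2$ handles the numerator. The main obstacle I expect is precisely the bookkeeping for $b^2\beta$: the crude estimate $b \leq 1$ only gives $b^2\beta \leq \beta$, which loses a factor of $\beta/\alpha$ and fails to match the statement. The resolution is a short case split on which branch of the $\min$ is active: in the variance-dominated regime (small $x$), $b \approx \alpha/\paren{\alpha+\beta}$ and hence $b^2\beta \lesssim \alpha^2/\beta \leq \alpha$, while in the scale-dominated regime the other term already controls the rate. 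An alternative, and arguably cleaner, route is to appeal directly to the sharp non-asymptotic sub-Gamma log-MGF analysis of the Beta distribution carried out in~\cite{ZhangZ20}, whose careful bookkeeping is exactly what makes their Theorem~$8$ tight with the stated constants.
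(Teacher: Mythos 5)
The paper does not prove this statement: it is imported verbatim as Theorem~$8$ of~\cite{ZhangZ20} and used as a black box, so there is no internal proof to compare against. Your self-contained derivation via the Gamma-ratio representation is a legitimate and essentially correct alternative route. The reduction $X \overset{d}{=} U/(U+V)$ with $U \sim \Gamma\paren{\alpha,1}$, $V \sim \Gamma\paren{\beta,1}$, the identity $\ex{}{aU - bV} = -x\paren{\alpha+\beta}$, and the sub-gamma MGF control of the centered Gammas are all sound, and you have correctly isolated the only delicate point, namely that the crude bound $b \le 1$ on the coefficient of $V$ loses a factor of $\beta/\alpha$ in the variance proxy. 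Two small refinements would make the sketch airtight. First, handle the degenerate range $x \geq 1 - \frac{\alpha}{\alpha+\beta}$ separately: there $a \le 0$ and your linear event is not a right-tail event, but the claim is trivial since $\pr{}{X \geq 1} = 0$. Second, the case split is cleaner if done on $x \le p$ versus $x > p$ (with $p = \frac{\alpha}{\alpha+\beta}$) rather than on which branch of the $\min$ is active: for $x \le p$ one has $b^2\beta \le 4p^2\beta \le 4\alpha^2/\beta \le 4\alpha$, so the variance term is at least $\frac{x^2\beta^2}{5\alpha}$ up to constants; for $x > p$ one has $b \le 2x$, and comparing $\alpha$ with $4x^2\beta$ in the denominator yields either $\frac{x^2\beta^2}{2\alpha}$ or $\frac{\beta}{8} \geq \frac{\beta x}{8}$. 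Either way the exponent is at least an absolute constant times $\min\brc{\frac{\beta^2 x^2}{\alpha}, \beta x}$, matching the stated bound (the leading factor $2$ absorbs the one-sided Bernstein constant). What the citation route buys is sharp constants and a statement valid as stated; what your route buys is a transparent, elementary proof from first principles.
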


We will also need a standard result about transformations of random vectors.

\begin{fact}
\label{fact:transformations}
Let $X \in \bR^d$ be a random vector with density $f_X$.
Also, for an invertible function $g \colon \bR^d \to \bR^d$, let $Y \coloneqq g\paren{X}$.
Then, we have for the density of $Y$:
\[
    f_Y\paren{y} = f_X\paren{g^{- 1}\paren{y}} \abs{\det\paren{J_{g^{- 1}}}},
\]
where $J_{g^{- 1}}$ denotes the Jacobian matrix of the inverse transform $g^{- 1}$.
\end{fact}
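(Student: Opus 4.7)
The plan is to verify the claimed density by appealing to the characterization of $f_Y$ through integration against Borel sets, and to reduce the computation to a standard multivariate change of variables. Concretely, the density $f_Y$ is (a.e.) uniquely determined by the requirement that $\int_B f_Y(y)\,dy = \pr{}{Y \in B}$ for every Borel set $B \subseteq \bR^d$, so it suffices to prove this identity with the right-hand side $f_X\paren{g^{-1}\paren{y}}\,\abs{\det\paren{J_{g^{-1}}\paren{y}}}$ substituted for $f_Y\paren{y}$.

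First I would rewrite the probability in terms of $X$: using injectivity of $g$, we have $\pr{}{Y \in B} = \pr{}{g\paren{X} \in B} = \pr{}{X \in g^{-1}\paren{B}} = \rmint_{g^{-1}\paren{B}} f_X\paren{x}\,dx$. Then, assuming $g^{-1}$ is continuously differentiable with nonvanishing Jacobian determinant on the relevant region, the multivariate change of variables theorem (applied with substitution $x = g^{-1}\paren{y}$, so that the volume element becomes $dx = \abs{\det\paren{J_{g^{-1}}\paren{y}}}\,dy$) gives
\[
    \rmint_{g^{-1}\paren{B}} f_X\paren{x}\,dx = \rmint_{B} f_X\paren{g^{-1}\paren{y}}\,\abs{\det\paren{J_{g^{-1}}\paren{y}}}\,dy.
\]
Combining the two equalities and quantifying over all Borel $B$ yields the stated formula for $f_Y$.

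The main obstacle, if one is being careful, is the regularity required of $g$ for the change of variables to apply: one needs $g$ (equivalently, $g^{-1}$) to be at least a $C^1$-diffeomorphism, or to satisfy the hypotheses of a measure-theoretic change of variables theorem (e.g., a bi-Lipschitz homeomorphism). In every usage of this fact in the present paper, however, $g$ is an affine map of the form $g\paren{x} = A x + b$ with $A$ invertible (see, e.g., the application in Lemma~\ref{lem:t_density} and the derivation of $f_T$), so the hypotheses are trivially satisfied and the Jacobian factor collapses to the constant $\abs{\det\paren{A^{-1}}}$, making the formula immediate in context.
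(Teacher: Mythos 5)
Your argument is correct and is the standard proof of this change-of-variables fact: identify $f_Y$ via $\int_B f_Y = \pr{}{Y \in B} = \int_{g^{-1}(B)} f_X$ and substitute $x = g^{-1}(y)$. The paper states this as an unproven folklore fact, so there is no proof to compare against; you also rightly flag the regularity hypothesis on $g$ and correctly observe that every application in the paper uses an invertible affine map, for which the Jacobian factor is the constant $\abs{\det\paren{A^{-1}}}$ and the hypotheses hold trivially.
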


As a direct consequence of the above and Fact~\ref{fact:rotational_invariance}, we get the following result, which establishes that the uniform distribution over the unit sphere is \emph{rotationally invariant}.

\begin{fact}
\label{fact:uniform_rot_inv}
Let $v \sim \cU\paren{\bS^{n - 1}}$.
Then, for any rotation matrix $U \in \bR^{n \times n}$, we have $U v \overset{d}{=} v$.
\end{fact}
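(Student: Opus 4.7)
The statement is the rotational invariance of the uniform distribution on $\bS^{n - 1}$, and the cleanest route is via the Gaussian representation of uniform spherical vectors combined with the rotational invariance of the standard multivariate Gaussian. First, I would write $v \overset{d}{=} W / \llnorm{W}$ for $W \sim \cN\paren{0, \id_n}$. This representation of $v$ is justified by spherical symmetry of the Gaussian density: $W / \llnorm{W}$ is rotationally invariant on $\bS^{n - 1}$, and the uniform distribution is the unique rotationally invariant probability measure on the sphere.

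Next, I would establish $U v \overset{d}{=} v$ directly from this representation. Since $U$ is a rotation matrix, the linear transformation of a Gaussian gives $U W \sim \cN\paren{0, U U^{\top}} = \cN\paren{0, \id_n}$, so $U W \overset{d}{=} W$. Combining this with Fact~\ref{fact:rotational_invariance} applied to the vector norm (yielding $\llnorm{U W} = \llnorm{W}$) gives:
\[
    U v \overset{d}{=} \frac{U W}{\llnorm{W}} = \frac{U W}{\llnorm{U W}} \overset{d}{=} \frac{W}{\llnorm{W}} \overset{d}{=} v,
\]
which is exactly the desired conclusion.

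An alternative derivation, apparently the one the ``direct consequence'' phrasing hints at, would apply Fact~\ref{fact:transformations} to the map $g\paren{v} = U v$, using $g^{- 1}\paren{y} = U^{\top} y$, the identity $\abs{\det\paren{U^{\top}}} = 1$, and the norm preservation $\llnorm{U^{\top} y} = \llnorm{y}$ from Fact~\ref{fact:rotational_invariance} to conclude that the density of $U v$ coincides with that of $v$. The only (minor) obstacle with this second approach is that the distribution of $v$ is supported on the measure-zero set $\bS^{n - 1} \subset \bR^n$, so Fact~\ref{fact:transformations} does not apply verbatim; one must either reinterpret it against the surface measure on $\bS^{n - 1}$, or work via the projected densities of Fact~\ref{fact:unit_sphere_proj_dist}. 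The Gaussian route I prefer pushes forward an honest absolutely continuous distribution on $\bR^n$, and hence avoids any such formality.
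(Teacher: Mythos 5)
Your proof is correct, but it takes a different route from the paper's. The paper offers no standalone argument: it declares the fact to be ``a direct consequence'' of Fact~\ref{fact:transformations} (the change-of-variables formula) and Fact~\ref{fact:rotational_invariance} (norm invariance under rotations), i.e., exactly the ``alternative derivation'' you sketch and then set aside. You instead go through the Gaussian representation $v \overset{d}{=} W / \llnorm{W}$ with $W \sim \cN\paren{0, \id_n}$, use $U W \overset{d}{=} W$ and $\llnorm{U W} = \llnorm{W}$, and conclude. Both arguments are sound. What your route buys is precisely the point you raise: Fact~\ref{fact:transformations} is stated for densities with respect to Lebesgue measure on $\bR^n$, while $\cU\paren{\bS^{n - 1}}$ is supported on a Lebesgue-null set, so the paper's one-line justification implicitly reinterprets the formula against the surface measure (or routes through the projected densities of Fact~\ref{fact:unit_sphere_proj_dist}); your Gaussian argument pushes forward an absolutely continuous law on $\bR^n$ and sidesteps this entirely. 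Two minor caveats on your version: the representation $v \overset{d}{=} W / \llnorm{W}$ is essentially Fact~\ref{fact:random_unit_vector_gen}, which the paper states only \emph{after} this fact, so in the paper's logical ordering you are borrowing a later statement (harmless, since you justify it independently); and your justification of that representation via ``the uniform distribution is the unique rotation-invariant measure on the sphere'' already yields the fact directly (the pushforward of a rotation-invariant measure under a rotation is rotation-invariant, hence uniform by uniqueness), making the Gaussian detour strictly speaking unnecessary once uniqueness is granted. Neither caveat affects correctness.
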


We now present a result that describes one method for generating random vectors that are uniformly distributed over the unit sphere.

\begin{fact}
\label{fact:random_unit_vector_gen}
Let $X \coloneqq \paren{X_1, \dots, X_d} \sim \cN\paren{0, 1}^{\otimes d}$.
We have $\frac{X}{\llnorm{X}} \sim \cU\paren{\bS^{d - 1}}$.
\end{fact}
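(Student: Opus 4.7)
The plan is to exploit the rotational symmetry of the standard multivariate Gaussian. The core observation is that for any rotation matrix $U \in \bR^{d \times d}$, the random vector $U X$ has distribution $\cN\paren{0, U U^{\top}} = \cN\paren{0, \id}$, so $U X \overset{d}{=} X$. Since rotations preserve norms ($\llnorm{U X} = \llnorm{X}$) and $X = 0$ with probability zero, the map $x \mapsto x / \llnorm{x}$ is well-defined almost surely, and we obtain $U \cdot \frac{X}{\llnorm{X}} = \frac{U X}{\llnorm{U X}} \overset{d}{=} \frac{X}{\llnorm{X}}$. That is, the distribution of $\frac{X}{\llnorm{X}}$ on $\bS^{d - 1}$ is rotationally invariant.

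To conclude, I would invoke the standard fact that the only rotationally invariant probability measure on $\bS^{d - 1}$ is $\cU\paren{\bS^{d - 1}}$, which follows from uniqueness of Haar measure on $SO\paren{d}$ together with the transitivity of its action on the sphere. Alternatively, and perhaps more in the spirit of the elementary facts collected in this appendix, one can bypass this by an explicit polar-coordinates change of variables: writing $x = r \omega$ with $r \coloneqq \llnorm{x}$ and $\omega \coloneqq \frac{x}{\llnorm{x}}$, the Jacobian factor $r^{d - 1}$ combined with the density $f_X\paren{x} \ \propto \ e^{- \llnorm{x}^2 / 2}$ yields a joint density for $\paren{R, \Omega} \coloneqq \paren{\llnorm{X}, \frac{X}{\llnorm{X}}}$ that factorizes as a function of $r$ alone times a constant in $\omega$. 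This simultaneously establishes that $R$ and $\Omega$ are independent and that $\Omega$ has constant density on the sphere, hence $\Omega \sim \cU\paren{\bS^{d - 1}}$.

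There is no real obstacle here, as this is a classical fact; the only mild care is justifying that the ``constant density'' produced by the polar-coordinate computation matches the uniform surface measure on $\bS^{d - 1}$ as used throughout Section~\ref{subsec:math_prelim}, which is immediate from the definition.
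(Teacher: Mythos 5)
Your proof is correct. The paper states this as an unproved classical fact (it appears in the appendix of background results with no argument attached), so there is no authorial proof to compare against; both of your routes — rotational invariance of $\cN\paren{0, \id}$ combined with uniqueness of the rotation-invariant probability measure on $\bS^{d - 1}$, or the explicit polar-coordinates factorization showing independence of $\llnorm{X}$ and $\frac{X}{\llnorm{X}}$ with the latter having constant density — are standard and complete justifications, with the polar-coordinates version being the more self-contained of the two.
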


We conclude our facts about distributions with a lemma that generalizes the stability property of the Gaussian distribution.
Namely, stability says that any sum of independent Gaussians or product of a Gaussian with a number also follows a Gaussian distribution.
We extend this to random linear combinations when the weights come from a distribution over the unit sphere.

\begin{lemma} 
\label{lem:stability_gaussian}
Let $\cD \in \Delta\paren{\bS^{n - 1}}$, and $a \sim \cD$.
Then, given $X \coloneqq \paren{X_1, \dots, X_n} \sim \cN\paren{0, \id}^{\otimes n}$ that is independent of $a$, we have $\sum\limits_{i \in \brk{n}} a_i X_i \overset{d}{=} \cN\paren{0, \id}$.
\end{lemma}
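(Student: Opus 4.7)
The plan is to reduce the statement to the classical stability property of Gaussians by conditioning on the random weight vector $a$. First, I would fix $a = (a_1, \dots, a_n) \in \bS^{n-1}$ and observe that, because $X_1, \dots, X_n$ are independent of $a$, the conditional law of $\sum_{i \in [n]} a_i X_i$ given $a$ is the law of a fixed linear combination of independent $\cN(0, \id)$ vectors, which is itself Gaussian.

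Next, I would compute the conditional mean and covariance. Linearity of expectation gives conditional mean zero, and independence of the $X_i$'s together with $\|a\|_2 = 1$ gives conditional covariance $\sum_{i \in [n]} a_i^2 \id = \id$. Hence the conditional distribution of $\sum_{i \in [n]} a_i X_i$ given $a$ is exactly $\cN(0, \id)$, and in particular it does not depend on $a$.

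Finally, since the conditional law is the same constant law $\cN(0, \id)$ almost surely in $a$, the unconditional law must coincide with $\cN(0, \id)$ as well. If a fully rigorous justification of this last step is desired, I would use the characteristic function: for any $t \in \bR^d$, conditioning on $a$ and then using the independence of $a$ and $X$ yields
\[
    \ex{}{\exp\!\paren{i \iprod{t, \sum_{i \in [n]} a_i X_i}}} = \ex{a}{\exp\!\paren{-\tfrac{1}{2} \llnorm{t}^2 \llnorm{a}^2}} = \exp\!\paren{-\tfrac{1}{2} \llnorm{t}^2},
\]
which is the characteristic function of $\cN(0, \id)$, so Lévy's continuity/uniqueness theorem finishes the proof.

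There is no real obstacle here; the only point requiring any care is to make sure that the independence assumption between $a$ and $X$ is used explicitly when pulling the expectation inside the characteristic function (or, equivalently, when invoking the tower property on the conditional distribution). Everything else is a routine application of Gaussian stability together with $a \in \bS^{n-1}$.
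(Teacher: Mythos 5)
Your proposal is correct and takes essentially the same approach as the paper: condition on the realization of $a$, apply classical Gaussian stability together with $\llnorm{a} = 1$ to see that the conditional law is $\cN\paren{0, \id}$ regardless of $a$, and then marginalize (the paper does this by averaging densities, you by the tower property/characteristic functions, which is an equivalent justification of the same step).
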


\begin{proof}
Let $X \coloneqq \sum\limits_{i \in \brk{n}} a_i \cN\paren{0, \id}$.
Let $a_0$ be a fixed realization of $a$.
Observe that:
\[
    \paren{X \middle| a = a_0} = \sum\limits_{i \in \brk{n}} a_{0, i} \cN\paren{0, \id} \overset{d}{=} \sum\limits_{i \in \brk{n}} \cN\paren{0, a_{0, i}^2 \id} \overset{d}{=} \cN\paren{0, \paren{\sum\limits_{i \in \brk{n}} a_{0, i}^2} \id} \overset{d}{=} \cN\paren{0, \id}.
\]
Now, the density of $X$ is a weighted average of the densities of $\paren{X \middle| \alpha}$ for all fixed realizations $a_0$ of $a$, where the weighting is performed according to the density of $\cD$.
Thus, we have:
\[
    f_X\paren{x} = \rmint\limits_{\supp\paren{\cD}} f_{\cD}\paren{a_0} f_{\paren{X \middle| a = a_0}}\paren{x} \, da_0 = \rmint\limits_{\supp\paren{\cD}} f_{\cD}\paren{a_0} \frac{1}{\sqrt{2 \pi}} e^{- \frac{x^2}{2}} \, da_0 = \frac{1}{\sqrt{2 \pi}} e^{- \frac{x^2}{2}}, \forall x \in \bR,
\]
which yields $X \overset{d}{=} \cN\paren{0, 1}$, as desired.
\end{proof}

Having presented all the necessary statements about distributions, we continue with a number of properties of $f$-divergences, namely the \emph{Data-Processing Inequality (DPI)} and \emph{joint convexity}.

\begin{fact}
\label{fact:dpi}
Let $P, Q \in \Delta\paren{\cX}$ and $X \sim P, Y \sim Q$, and let $f$ be a function satisfying the conditions of Definition~\ref{def:f_div}.
For any function $g \colon \cX \to \cY$ (deterministic or randomized), we get:
\[
    D_f\paren{g\paren{X} \middle\| g\paren{Y}} \le D_f\paren{X \middle\| Y}.
\]
If $g$ is invertible, the above holds as an equality.
\end{fact}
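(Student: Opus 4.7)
The plan is to derive the Data-Processing Inequality via a single application of Jensen's inequality to the convex generator $f$, handling deterministic maps first and then reducing the randomized case to that.

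For deterministic $g \colon \cX \to \cY$, let $P'$ and $Q'$ denote the pushforward measures of $P$ and $Q$ under $g$. The key identity is that the Radon--Nikodym derivative of the pushforwards can be written as a $Q$-conditional expectation:
\[
    \frac{dP'}{dQ'}\paren{y} = \ex{X \sim Q}{\frac{dP}{dQ}\paren{X} \,\middle|\, g\paren{X} = y}.
\]
Since $f$ is convex, Jensen's inequality applied pointwise in $y$ yields
\[
    f\paren{\frac{dP'}{dQ'}\paren{y}} \le \ex{X \sim Q}{f\paren{\frac{dP}{dQ}\paren{X}} \,\middle|\, g\paren{X} = y}.
\]
Integrating both sides against $Q'$ and invoking the tower property (which recovers the unconditional $Q$-expectation by first conditioning on $g\paren{X}$ and then averaging under $Q'$) gives $D_f\paren{P' \middle\| Q'} \le D_f\paren{P \middle\| Q}$, which is DPI for deterministic maps.

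For randomized $g$, I would represent it as a deterministic map $h \colon \cX \times \cU \to \cY$ driven by auxiliary randomness $U \sim \mu$ independent of both $X$ and $Y$. The laws of $g\paren{X}$ and $g\paren{Y}$ then coincide with the pushforwards of $P \otimes \mu$ and $Q \otimes \mu$ under $h$, so the deterministic case already established yields $D_f\paren{g\paren{X} \middle\| g\paren{Y}} \le D_f\paren{P \otimes \mu \middle\| Q \otimes \mu}$. Because the product density ratio $\frac{d\paren{P \otimes \mu}}{d\paren{Q \otimes \mu}}\paren{x, u} = \frac{dP}{dQ}\paren{x}$ does not depend on $u$, Fubini's theorem shows this last divergence equals $D_f\paren{P \middle\| Q}$. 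For the equality statement when $g$ is invertible, I apply the just-proven inequality in both directions: the forward direction gives $D_f\paren{g\paren{X} \middle\| g\paren{Y}} \le D_f\paren{X \middle\| Y}$, while applying DPI to $g^{- 1}$ acting on $g\paren{X}$ and $g\paren{Y}$ gives the reverse, forcing equality.

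The main obstacle is making the conditional-expectation representation of the pushforward Radon--Nikodym derivative rigorous on an abstract measurable space, especially over sets where $Q'$ assigns no mass (which would require a regular conditional distribution argument or a disintegration theorem). In the concrete settings used throughout the paper, where distributions have smooth densities on $\bR^d$ and the functions $g$ of interest are either smooth bijections (rotations, affine maps) or measurable projections, this technicality dissolves and the argument can be carried out directly with Lebesgue integrals. The assumptions on $f$ in Definition~\ref{def:f_div} (convexity, finiteness on $\paren{0, \infty}$, and the limit at $0$) ensure that Jensen's inequality applies without subtle boundary issues for the relevant density ratios.
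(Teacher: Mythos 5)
Your proof is correct: the conditional-expectation representation of the pushforward density ratio plus conditional Jensen's inequality is the standard argument for the $f$-divergence data-processing inequality, your reduction of the randomized case to the deterministic one via auxiliary independent randomness is sound, and the two-sided application of DPI to $g$ and $g^{-1}$ correctly yields equality for (bimeasurable) invertible $g$. The paper states this as a known Fact without proof, so there is nothing to compare against; your argument would serve as a complete justification, and your closing remark correctly identifies that the only technical care needed (disintegration/regular conditional distributions) is moot in the paper's setting of densities on $\bR^d$ with rotations, affine maps, and coordinate projections.
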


\begin{fact}
\label{fact:joint_convexity}
Let $P, Q \in \Delta\paren{\cX}$ and $X \sim P, Y \sim Q$, and let $f$ be a function satisfying the conditions of Definition~\ref{def:f_div}.
We assume that there exists a $\lambda \in \paren{0, 1}$ and distributions $P_1, P_2 \in \Delta\paren{\cX}$ and $Q_1, Q_2 \in \Delta\paren{\cX}$ such that $P = \lambda P_1 + \paren{1 - \lambda} P_2$ and $Q = \lambda Q_1 + \paren{1 - \lambda} Q_2$, i.e., $P$ and $Q$ are mixtures of $\paren{P_1, P_2}$ and $\paren{Q_1, Q_2}$, respectively, with shared mixing weights $\paren{\lambda, 1 - \lambda}$.
Then, we have:
\[
    D_f\paren{P \middle\| Q} \le \lambda D_f\paren{P_1 \middle\| Q_1} + \paren{1 - \lambda} D_f\paren{P_2 \middle\| Q_2}.
\]
\end{fact}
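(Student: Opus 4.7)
The plan is to prove joint convexity of $D_f$ by reducing to the pointwise joint convexity of the \emph{perspective function} of $f$, namely the map $g(a, b) \coloneqq b f(a / b)$ defined for $b > 0$, and then integrating over $\cX$. Letting $p, q, p_1, p_2, q_1, q_2$ denote the densities of $P, Q, P_1, P_2, Q_1, Q_2$, respectively, with respect to a common dominating measure, the density identities $p = \lambda p_1 + (1 - \lambda) p_2$ and $q = \lambda q_1 + (1 - \lambda) q_2$ give
\[
    D_f\paren{P \middle\| Q} = \rmint\limits_{\cX} q\paren{x} \, f\paren{\frac{p\paren{x}}{q\paren{x}}} \, dx = \rmint\limits_{\cX} g\paren{p\paren{x}, q\paren{x}} \, dx = \rmint\limits_{\cX} g\paren{\lambda p_1\paren{x} + \paren{1 - \lambda} p_2\paren{x}, \lambda q_1\paren{x} + \paren{1 - \lambda} q_2\paren{x}} \, dx,
\]
so if we establish that $g\paren{\lambda a_1 + \paren{1 - \lambda} a_2, \lambda b_1 + \paren{1 - \lambda} b_2} \le \lambda g\paren{a_1, b_1} + \paren{1 - \lambda} g\paren{a_2, b_2}$ pointwise and then integrate, the conclusion follows immediately from linearity.

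The key step is the joint convexity of $g$, which can be shown by direct algebraic manipulation using only the convexity of $f$. Fix $\paren{a_1, b_1}, \paren{a_2, b_2}$ with $b_1, b_2 > 0$, set $B \coloneqq \lambda b_1 + \paren{1 - \lambda} b_2$ and define the auxiliary weights $\mu_1 \coloneqq \lambda b_1 / B, \mu_2 \coloneqq \paren{1 - \lambda} b_2 / B$, which satisfy $\mu_1, \mu_2 \geq 0$ and $\mu_1 + \mu_2 = 1$. A short calculation yields the crucial identity
\[
    \frac{\lambda a_1 + \paren{1 - \lambda} a_2}{B} = \mu_1 \cdot \frac{a_1}{b_1} + \mu_2 \cdot \frac{a_2}{b_2},
\]
so by convexity of $f$ applied to this convex combination, and then multiplying both sides by $B$, we obtain $g\paren{\lambda a_1 + \paren{1 - \lambda} a_2, B} \le \lambda b_1 f\paren{a_1 / b_1} + \paren{1 - \lambda} b_2 f\paren{a_2 / b_2} = \lambda g\paren{a_1, b_1} + \paren{1 - \lambda} g\paren{a_2, b_2}$, as required.

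The main subtlety that remains is handling the boundary where some of the $q_i\paren{x}$ vanish, since $g$ was defined only for positive second argument. Because of the hypothesis $\supp\paren{P} \subseteq \supp\paren{Q}$ and the analogous implicit assumption $\supp\paren{P_i} \subseteq \supp\paren{Q_i}$ (without which the RHS could be $\infty$ and the inequality is trivial), the problematic integrand values at such $x$ can be handled by the standard $f$-divergence conventions: using $\lim_{x \to 0^+} f\paren{x} = 0$ from Definition~\ref{def:f_div}, one sets $g\paren{0, 0} \coloneqq 0$, and wherever $q\paren{x} = 0$ on a set of positive measure we also have $q_1\paren{x} = q_2\paren{x} = 0$ (since both are nonnegative and their convex combination is $0$), so such $x$ contribute $0$ to both sides. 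Apart from this bookkeeping, the argument is essentially just convexity of $f$ combined with Fubini/linearity of the integral, so no deep technical obstacle arises; the only thing one must be careful about is that the perspective transformation preserves convexity.
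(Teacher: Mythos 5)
Your proof is correct, and it is the standard argument: the paper itself states Fact~\ref{fact:joint_convexity} without proof (it is a classical property of $f$-divergences), so there is no in-paper argument to compare against. Writing the integrand as the perspective function $g\paren{a, b} = b f\paren{\nicefrac{a}{b}}$, verifying its joint convexity via the reweighting $\mu_i$ (your identity and the subsequent application of convexity of $f$ check out), and integrating pointwise is exactly the textbook route. One small imprecision in the bookkeeping: your pointwise lemma is proved only for $b_1, b_2 > 0$, and the case that actually falls outside it is not $q\paren{x} = 0$ (there, under $\supp\paren{P} \subseteq \supp\paren{Q}$ and $\supp\paren{P_i} \subseteq \supp\paren{Q_i}$, all six densities vanish a.e.\ and both sides contribute $0$) but rather $q\paren{x} > 0$ with, say, $q_1\paren{x} = 0$; there $p_1\paren{x} = 0$ a.e.\ as well, and with the convention $g\paren{0, 0} \coloneqq 0$ the pointwise inequality holds with equality by direct inspection, so the argument goes through. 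If one wanted the fact without the support assumptions on the components $\paren{P_i, Q_i}$, one would instead extend $g$ lower semicontinuously by $g\paren{a, 0} \coloneqq a \lim_{t \to \infty} \nicefrac{f\paren{t}}{t}$, under which the extended perspective function remains jointly convex and the same proof applies; for the way the fact is used in Theorem~\ref{thm:uti_guarantee} this generality is not needed.
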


We now introduce a property which is specific to the Hockey-Stick divergence.
As implied by Definition~\ref{def:hs_div}, $D_{e^{\eps}}\paren{\cdot \middle\| \cdot}$ does not satisfy a triangle inequality in the general case (and is thus not a distance metric).
However, the following property that we will state can be interpreted as a weak form of the triangle inequality.
We note that the property has appeared in the literature before (see Inequality ($404$) in~\cite{SasonV2016}), but it is stated under the assumption that we have distributions $P, R, Q \in \Delta\paren{\cX}$ with $\supp\paren{P} \subseteq \supp\paren{R} \subseteq \supp\paren{Q}$ (presumably because~\cite{SasonV2016} only mentions $\max\brc{x - e^{\eps}, 0}$ as a generator of $D_{e^{\eps}}$).
This assumption is not actually necessary, which is why we include a proof of the claim.
Our proof is elementary and relies on the equivalent form of the Hockey-Stick divergence that uses $\frac{1}{2} \abs{x - e^{\eps}} - \frac{1}{2} \paren{e^{\eps} - 1}$ as a generator.

\begin{lemma}
\label{lem:hs_div_triangle_ineq}
Let $P, R, Q \in \Delta\paren{\cX}$ with respective probability density functions $p, r, q$.
Then, for any $\eps_1, \eps_2 \geq 0$, we have:
\[
    D_{e^{\eps_1 + \eps_2}}\paren{P \middle\| Q} \le D_{e^{\eps_1}}\paren{P \middle\| R} + e^{\eps_1} D_{e^{\eps_2}}\paren{R \middle\| Q}.
\]
\end{lemma}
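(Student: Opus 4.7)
The plan is to work directly with the second (integral) form of the Hockey-Stick divergence given in Definition~\ref{def:hs_div}, namely
\[
    D_{e^{\eps}}\paren{P \middle\| Q} = \frac{1}{2} \rmint\limits_{\cX} \abs{p\paren{x} - e^{\eps} q\paren{x}} \, dx - \frac{1}{2} \paren{e^{\eps} - 1},
\]
and reduce the claimed inequality to the pointwise triangle inequality for the absolute value. Concretely, first I would expand the right-hand side $D_{e^{\eps_1}}\paren{P \middle\| R} + e^{\eps_1} D_{e^{\eps_2}}\paren{R \middle\| Q}$ using the formula above and pull the factor $e^{\eps_1}$ into the second integral. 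This gives an expression of the form
\[
    \frac{1}{2} \rmint\limits_{\cX} \abs{p\paren{x} - e^{\eps_1} r\paren{x}} \, dx + \frac{1}{2} \rmint\limits_{\cX} \abs{e^{\eps_1} r\paren{x} - e^{\eps_1 + \eps_2} q\paren{x}} \, dx - C,
\]
and the key calculation is to simplify the constant $C$ coming from the two $-\frac{1}{2}(e^{\eps}-1)$ terms: the telescoping $-\frac{1}{2}\paren{e^{\eps_1} - 1} - \frac{e^{\eps_1}}{2}\paren{e^{\eps_2} - 1}$ collapses exactly to $-\frac{1}{2}\paren{e^{\eps_1 + \eps_2} - 1}$, which is the constant that appears in $D_{e^{\eps_1 + \eps_2}}\paren{P \middle\| Q}$.

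Once the constants match, the inequality reduces to showing
\[
    \rmint\limits_{\cX} \abs{p\paren{x} - e^{\eps_1 + \eps_2} q\paren{x}} \, dx \le \rmint\limits_{\cX} \abs{p\paren{x} - e^{\eps_1} r\paren{x}} \, dx + \rmint\limits_{\cX} \abs{e^{\eps_1} r\paren{x} - e^{\eps_1 + \eps_2} q\paren{x}} \, dx,
\]
which follows from applying the triangle inequality inside the integrand via the trivial insertion $p - e^{\eps_1+\eps_2} q = \paren{p - e^{\eps_1} r} + \paren{e^{\eps_1} r - e^{\eps_1+\eps_2} q}$ and integrating.

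There is no real obstacle here beyond bookkeeping; the proof is one line of pointwise triangle inequality wrapped in two lines of arithmetic to verify the constants telescope. The only subtlety worth flagging in the writeup is that this approach avoids any support assumptions on $P, R, Q$: unlike the $\max\brc{x - e^{\eps}, 0}$ generator (where one typically imposes $\supp\paren{P} \subseteq \supp\paren{R} \subseteq \supp\paren{Q}$ to avoid infinities), the $\frac{1}{2}\abs{x - e^{\eps}} - \frac{1}{2}\paren{e^{\eps} - 1}$ form yields a finite integrand for arbitrary densities, so the argument goes through in full generality as stated.
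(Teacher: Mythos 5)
Your proof is correct and is essentially identical to the paper's: both use the integral form of the Hockey-Stick divergence with generator $\frac{1}{2}\abs{x - e^{\eps}} - \frac{1}{2}\paren{e^{\eps} - 1}$, insert the intermediate term $e^{\eps_1} r\paren{x}$ pointwise, apply the triangle inequality, and verify that the additive constants telescope. Your closing remark about why this generator avoids support assumptions is also exactly the point the paper makes when motivating its inclusion of the proof.
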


\begin{proof}
By Definition~\ref{def:hs_div} and the triangle inequality we get:
\begin{align*}
    D_{e^{\eps_1 + \eps_2}}\paren{P \middle\| Q}
    &= \frac{1}{2} \rmint\limits_{\cX} \abs{p\paren{x} - e^{\eps_1 + \eps_2} q\paren{x}} \, dx - \frac{1}{2} \paren{e^{\eps_1 + \eps_2} - 1} \\
    &\le \frac{1}{2} \rmint\limits_{\cX} \paren{\abs{p\paren{x} - e^{\eps_1} r\paren{x}} + \abs{e^{\eps_1} r\paren{x} -  e^{\eps_1 + \eps_2} q\paren{x}}} \, dx - \frac{1}{2} \paren{e^{\eps_1 + \eps_2} - 1} \\
    &= \brk{\frac{1}{2} \rmint\limits_{\cX} \abs{p\paren{x} - e^{\eps_1} r\paren{x}} \, dx - \frac{1}{2} \paren{e^{\eps_1} - 1}} + e^{\eps_1} \rmint\limits_{\cX} \abs{r\paren{x} -  e^{\eps_2} q\paren{x}} \, dx \\
    &\quad\ + \frac{1}{2} \paren{e^{\eps_1} - 1} - \frac{1}{2} \paren{e^{\eps_1 + \eps_2} - 1} \\
    &= D_{e^{\eps_1}}\paren{P \middle\| R} + e^{\eps_1} \brk{\rmint\limits_{\cX} \abs{r\paren{x} - e^{\eps_2} q\paren{x}} \, dx - \frac{1}{2} \paren{e^{\eps_2} - 1}} \\
    &= D_{e^{\eps_1}}\paren{P \middle\| R} + e^{\eps_1} D_{e^{\eps_2}}\paren{R \middle\| Q}.
\end{align*}
\end{proof}

\section{Details of Algorithm~\ref{alg:bhs23}}
\label{sec:bhs23_details}

In this appendix, we include all the details related to the subroutines of Algorithm~\ref{alg:bhs23} which we omitted from the main body.

We start by giving the pseudocode for \StableCov\ (Algorithm~\ref{alg:stablecov}).
As we have discussed in Section~\ref{sec:bhs23}, the algorithm's function is to perform a soft outlier removal process, which will result in a sequence of weights.
These can be used to construct a weighted version of the empirical covariance which, in addition to being accurate, also has stability guarantees which prove to be useful in guaranteeing privacy.
The algorithm makes crucial of use of the subroutine \LargestGS\ (Algorithm~\ref{alg:largest_gs}), which is given separately.

\begin{algorithm}[H]
    \caption{Stable Covariance}\label{alg:stablecov}
    \hspace*{\algorithmicindent} \textbf{Input:} Dataset $X \coloneqq \paren{X_1, \dots, X_n}^{\top} \in \bR^{n \times d}$; outlier threshold $\lambda_0 \geq 1$; discretization parameter $k \in \N$. \\
    \hspace*{\algorithmicindent} \textbf{Output:} $W \in \bR^{d \times n}, \Score \in \N$.
    \begin{algorithmic}[1]
    \Procedure{\StableCov$_{\lambda_0, k}$}{$X$}
        \State Let $m \gets \fl{\frac{n}{2}}$.
        \For{$i \in \brk{m}$}\Comment{Pair and rescale.}
            \State Let $Y_i \gets \frac{1}{\sqrt{2}} \paren{X_i - X_{i + m}}$.
        \EndFor
        \For{$\ell \in \brc{0, \dots, 2 k}$}
            \State Let $S_{\ell} \gets \LargestGS_{e^{\frac{\ell}{k}} \lambda_0}\paren{Y}$.\Comment{Algorithm~\ref{alg:largest_gs}}
        \EndFor
        \State Let $\Score \gets \min\brc{k, \min\limits_{0 \le \ell \le k}\brc{m - \abs{S_{\ell}} + \ell}}$.
        \For{$i \in \brk{m}$}
            \State Let $w_i \gets \frac{1}{k m} \sum\limits_{\ell = k + 1}^{2 k} \mathds{1}\brc{i \in S_{\ell}}$.
        \EndFor
        \State Let $W \gets \paren{\sqrt{w_1} Y_1, \dots, \sqrt{w_m} Y_m}$.
        \State \Return $\paren{W, \Score}$.
    \EndProcedure
    \end{algorithmic}
\end{algorithm}

We now give the pseudocode for \LargestGS.
The algorithm starts by constructing the standard empirical covariance of the whole input dataset $Y$.
Then, the algorithm calculates the Mahalanobis norm of all the points $Y_i$ measured with respect to the empirical covariance.
Due to the fact that Algorithm~\ref{alg:stablecov} starts by performing a centering operation to the dataset, the points that have large Mahalanobis norm are the outliers.
These are removed, the empirical covariance of the remaining points is calculated, and the previous check is repeated with those points.
The process is repeated iteratively until a fixpoint is reached, i.e., all outliers have been removed.

\begin{algorithm}[H]
    \caption{Largest Good Subset}\label{alg:largest_gs}
    \hspace*{\algorithmicindent} \textbf{Input:} Dataset $Y \coloneqq \paren{Y_1, \dots, Y_m}^{\top} \in \bR^{m \times d}$; outlier threshold $\lambda$. \\
    \hspace*{\algorithmicindent} \textbf{Output:} $S \subseteq \brk{m}$.
    \begin{algorithmic}[1]
    \Procedure{\LargestGS$_{\lambda}$}{$Y$}
        \State Let $S \gets \brk{m}$.
        \Repeat
            \State Let $A \gets \frac{1}{m} \sum\limits_{j \in S} Y_i Y_i^{\top}$.
            \State Let $\Out \gets \brc{i \in S \colon \llnorm{A^{- \frac{1}{2}} Y_i}^2 > \lambda}$.\Comment{For $A$: singular, define $\llnorm{A^{- \frac{1}{2}} v} = \infty, \forall v \in \bR^d$.}
            \State $S \gets S \setminus \Out$
        \Until
    \EndProcedure
    \end{algorithmic}
\end{algorithm}

We now give the pseudocode for \StableMean\ (Algorithm~\ref{alg:stablemean}).
The algorithm works similarly to \StableCov, but has some minor differences which we highlight here.
First, the algorithm receives the output of \StableCov\ as input, which it uses to rescale the points and measure their pair-wise distances accordingly.
The goal of this is to result in a dataset that comes from a distribution that is approximately isotropic.
Second, the algorithm also receives a reference set $R$.
This is a subset of the dataset that the algorithm uses to identify outliers.
Instead of computing all pair-wise distances between points in the dataset, the algorithm will identify outliers by only considering distances of elements of the dataset from points in the set reference set.
The purpose of this aspect of the algorithm is to ensure computational efficiency.
Similarly to \StableCov, this process leverages a subroutine which is given separately (\LargestCore\ - Algorithm~\ref{alg:largest_core}).

\begin{algorithm}[H]
    \caption{Stable Mean}\label{alg:stablemean}
    \hspace*{\algorithmicindent} \textbf{Input:} Dataset $X \coloneqq \paren{X_1, \dots, X_n}^{\top} \in \bR^{n \times d}$; covariance $\widehat{\Sigma}$; outlier threshold $\lambda_0 \geq 1$; discretization parameter $k \in \N$; reference set $R \subseteq \brk{n}$. \\
    \hspace*{\algorithmicindent} \textbf{Output:} $v \in \bR^d, \Score \in \N$.
    \begin{algorithmic}[1]
    \Procedure{\StableMean$_{\widehat{\Sigma}, \lambda_0, k, R}$}{$X$}
        \For{$\ell \in \brc{0, \dots, 2 k}$}
            \State Let $S_{\ell} \gets \LargestCore_{\widehat{\Sigma}, \lambda, \abs{R} - \ell, R}\paren{X}$.
        \EndFor
        \State Let $\Score \gets \min\brc{k, \min\limits_{0 \le \ell \le k}\brc{n - \abs{S_{\ell}} + \ell}}$.
        \For{$i \in \brk{n}$}
            \State Let $c_i \gets \sum\limits_{\ell = k + 1}^{2 k} \mathds{1}\brc{i \in S_{\ell}}$.
        \EndFor
        \State Let $Z \gets \sum\limits_{i \in \brk{n}} c_i$.\Comment{Normalizing constant.}
        \For{$i \in \brk{n}$}
            \State Let $v_i \gets \frac{c_i}{Z}$.\Comment{Set $v_i \gets 0$ if $Z = 0$.}
        \EndFor
        \State \Return $\paren{v, \Score}$.
    \EndProcedure
    \end{algorithmic}
\end{algorithm}

We conclude by giving the pseudocode for \LargestCore.
The algorithm identifies the points in the dataset that have ``small'' Mahalanobis distance (measured with respect to the output $\widehat{\Sigma}$ of \StableCov) from at least $\tau$ points in the reference set $R$.

\begin{algorithm}[H]
    \caption{Largest Core}\label{alg:largest_core}
    \hspace*{\algorithmicindent} \textbf{Input:} Dataset $X \coloneqq \paren{X_1, \dots, X_m}^{\top} \in \bR^{n \times d}$; outlier threshold $\lambda$. \\
    \hspace*{\algorithmicindent} \textbf{Output:} $S \subseteq \brk{n}$.
    \begin{algorithmic}[1]
    \Procedure{\LargestCore$_{\widehat{\Sigma}, \lambda, \tau, R}$}{$X$}
        \For{$i \in \brk{n}$}
            \State $N_i \gets \brc{j \in R \colon \mnorm{\widehat{\Sigma}}{X_i - X_j}^2 \le \lambda}$.
        \EndFor
        \State Let $S \gets \brc{i \in \brk{n} \colon \abs{N_i} \geq \tau}$.
        \State \Return $S$.
    \EndProcedure
    \end{algorithmic}
\end{algorithm}

\section{Alternative Version of Proposition~\ref{prop:priv_anal_multi1}}
\label{sec:priv_proof_alt}

In this appendix, we present our original version and proof of Proposition~\ref{prop:priv_anal_multi1}.
We start by sketching a na\"ive approach, and explaining why it fails, thus motivating the proof that we will present here.

The steps up to Lemma~\ref{lem:priv_anal_multi1_suff_cond} are the same as in Section~\ref{subsubsec:priv_anal_multi1}.
Our goal from this point on is to identify a sufficient condition that implies:
\begin{equation}
    s^{\top} \paren{\widehat{\Sigma}^{\frac{1}{2}} \widehat{\Sigma}'^{- 1} \widehat{\Sigma}^{\frac{1}{2}} - \id} s \le \frac{\eps}{4 n_2}. \label{eq:goal}
\end{equation}
Observe that, by (\ref{eq:sd_order_constraint}), we have:
\begin{align}
    \paren{1 - \gamma} \id \preceq \widehat{\Sigma}^{\frac{1}{2}} \widehat{\Sigma}'^{- 1} \widehat{\Sigma}^{\frac{1}{2}} \preceq \frac{1}{1 - \gamma} \id
    &\iff - \gamma \id \preceq \widehat{\Sigma}^{\frac{1}{2}} \widehat{\Sigma}'^{- 1} \widehat{\Sigma}^{\frac{1}{2}} - \id \preceq \frac{\gamma}{1 - \gamma} \id \nonumber \\
    &\implies \llnorm{\widehat{\Sigma}^{\frac{1}{2}} \widehat{\Sigma}'^{- 1} \widehat{\Sigma}^{\frac{1}{2}} - \id} \le \frac{\gamma}{1 - \gamma}. \label{eq:stable_cov_ineq1}
\end{align}
Based on the above, a na\"ive approach involves observing that, by the definition of the spectral norm for symmetric matrices and the fact that $\gamma \coloneqq \frac{8 e^2 \lambda_0}{n_2} \le \frac{1}{4}$, we have:
\begin{equation}
    s^{\top} \paren{\widehat{\Sigma}^{\frac{1}{2}} \widehat{\Sigma}'^{- 1} \widehat{\Sigma}^{\frac{1}{2}} - \id} s \le \llnorm{\widehat{\Sigma}^{\frac{1}{2}} \widehat{\Sigma}'^{- 1} \widehat{\Sigma}^{\frac{1}{2}} - \id} \llnorm{s}^2 \le \frac{\gamma}{1 - \gamma} \llnorm{s}^2 \le \frac{4}{3} \gamma \llnorm{s}^2 = \frac{32 e^2 \lambda_0}{3 n_2} \llnorm{s}^2. \label{eq:quad_form_ub}
\end{equation}
Thus, a sufficient condition that would lead to (\ref{eq:goal}) being satisfied is $\llnorm{s}^2 \le \frac{3 \eps}{128 e^2 \lambda_0}$.
However, issues arise when upper-bounding the probability of this condition failing.
Specifically, Fact~\ref{fact:unit_sphere_proj_dist} yields $\llnorm{s}^2 \sim \Beta\paren{\frac{d}{2}, \frac{n_2 - d}{2}}$, so the natural next step here is to apply Fact~\ref{fact:beta_concentration}.
However, that requires $\frac{3 \eps}{128 e^2 \lambda_0} > \frac{d}{n_2} \iff n_2 \geq \frac{128 e^2 \lambda_0 d}{3 \eps}$.
Since $\lambda_0 \geq d$, this leads to a sample complexity bound that is sub-optimal in the dimension dependence by a quadratic factor, which is clearly prohibitive.

The failure of the na\"ive approach described above prompts us to re-examine (\ref{eq:quad_form_ub}).
Observe that the way we leveraged the definition of the spectral norm in the upper bound is not tight in general.
Indeed, for the inequality to be tight, it must be the case that $s$ is parallel to the eigenvector of $\widehat{\Sigma}^{\frac{1}{2}} \widehat{\Sigma}'^{- 1} \widehat{\Sigma}^{\frac{1}{2}} - \id$ that corresponds to the largest in absolute value eigenvalue.
This significantly restricts the realizations of $s$, which is a factor that was not taken into account in our analysis.

At this point, we recall the second consequence of Lemma~\ref{lem:stable_cov}.
We denote the spectrum of $\widehat{\Sigma}^{\frac{1}{2}} \widehat{\Sigma}'^{- 1} \widehat{\Sigma}^{\frac{1}{2}}$ by $\brc{\lambda_i}_{i \in \brk{d}}$.
Without loss of generality, we can assume that $\lambda_i \geq 1, \forall i \in \brk{d}$.
Indeed, having an eigenvalue that is $< 1$ would make it easier to satisfy (\ref{eq:goal}), since the corresponding term would contribute negatively to the LHS.
Then, (\ref{eq:eig_constraint}) yields:
\begin{equation}
    \sum\limits_{i \in \brk{d}} \paren{\lambda_i - 1} \le \gamma \paren{1 + 2 \gamma}. \label{eq:eig_constraint_restated}
\end{equation}
By our bound on $n_2$, we know that $1 + 2 \gamma = \Theta\paren{1}$, yielding $\sum\limits_{i \in \brk{d}} \paren{\lambda_i - 1} \le \Theta\paren{\gamma}$.
Thus, the above implies that not all terms $\lambda_i - 1$ can be $\approx \gamma$.
Two extreme cases would involve having either one large term $\lambda_i - 1 = \Theta\paren{\gamma}$ and all the other terms being negligible, or having most (if not all) terms being $\Theta\paren{\frac{\gamma}{d}}$.
We stress that using the definition of the spectral norm as we did in (\ref{eq:quad_form_ub}) is tight only in the latter case, the former requiring us to reason about the angle between $s$ and the eigenvector that corresponds to the eigenvalue for which $\lambda_i - 1 = \Theta\paren{\gamma}$.\footnote{We note that this observation is standard, and can be interpreted as a strong form of the pigeonhole principle.
That is, consider an example where we have numbers $x_1, \dots, x_n > 0$ such that $\sum\limits_{i \in \brk{n}} x_i = m$.
Then, the two extreme cases are that there must either be multiple numbers which are $\Omega\paren{\frac{m}{n}}$ or one number which is $\Omega\paren{m}$).
Recently, this trick has appeared in the context of the privacy literature in the proof of Lemma $4.9$ in~\cite{AgarwalKMMSU25}.}

Based on the above intuition, our proof will try to interpolate between the two previous regimes.
We ignore the smallest eigenvalues (these would contribute little to $s^{\top} \paren{\widehat{\Sigma}^{\frac{1}{2}} \widehat{\Sigma}'^{- 1} \widehat{\Sigma}^{\frac{1}{2}} - \id} s$ anyway), and partition the rest into sets that consist of eigenvalues whose magnitude is within a constant factor of each other.
Then, we get different conditions on different components of $s$, depending what the magnitude of the corresponding eigenvalue is.\footnote{By components of $s$ here we refer to its projections along the eigenbasis of $\widehat{\Sigma}^{\frac{1}{2}} \widehat{\Sigma}'^{- 1} \widehat{\Sigma}^{\frac{1}{2}}$.
Implicit here is a change of basis argument.}
This leads to a sample complexity of $\widetilde{\cO}\paren{\lambda_0}$ and no explicit dependence on $d$, thus resolving the previous issue.

We break the proof into two statements.
The first statement can be considered to be analogous to Lemma~\ref{lem:priv_anal_multi1_suff_cond}.
Indeed, we showed in Lemma~\ref{lem:priv_anal_multi1_suff_cond} that:
\[
    s^{\top} \widehat{\Sigma}^{\frac{1}{2}} \widehat{\Sigma}'^{- 1} \widehat{\Sigma}^{\frac{1}{2}} s \le \frac{1}{2} \text{ and } s^{\top} \paren{\widehat{\Sigma}^{\frac{1}{2}} \widehat{\Sigma}'^{- 1} \widehat{\Sigma}^{\frac{1}{2}} - \id} s \le \frac{\eps}{4 n_2} \implies \ln\paren{\frac{f_{z_{\le d}}\paren{t}}{f_T\paren{t}}} \le \frac{\eps}{2}.
\]
Now, we will identify a set of sufficient conditions on $s$ that imply $s^{\top} \paren{\widehat{\Sigma}^{\frac{1}{2}} \widehat{\Sigma}'^{- 1} \widehat{\Sigma}^{\frac{1}{2}} - \id} s \le \frac{\eps}{4 n_2}$.
In accordance with the sketch given above, we set $\ell_{\max} \coloneqq \cl{\log\paren{\frac{8 n_2 \gamma \paren{1 + 2 \gamma}}{\eps}}} = \cl{\log\paren{\frac{64 e^2 \lambda_0 \paren{1 + 2 \gamma}}{\eps}}}$ (recall that $\gamma \coloneqq \frac{8 e^2 \lambda_0}{n_2} \le \frac{\eps}{4} \le\frac{1}{4}$), and define the following partition of the indices $i \in \brk{d}$:
\begin{align*}
    \cB_{\ell}
    &\coloneqq \brc{i \in \brk{d} \colon \frac{\gamma \paren{1 + 2 \gamma}}{2^{\ell}} < \lambda_i - 1 \le \frac{\gamma \paren{1 + 2 \gamma}}{2^{\ell - 1}}}, \forall \ell \in \brk{\ell_{\max}}, \\
    \cB_0
    &\coloneqq \brk{d} \setminus \bigcup\limits_{\ell \in \brk{\ell_{\max}}} \cB_{\ell}.
\end{align*}
We now give the formal statement and proof of our lemma.

\begin{lemma}
\label{lem:priv_anal_multi1_suff_cond_alt}
Let $\brc{v_i}_{i \in \brk{d}}$ be an orthonormal set of eigenvectors that correspond to the spectrum $\brc{\lambda_i}_{i \in \brk{d}}$ of $\widehat{\Sigma}^{\frac{1}{2}} \widehat{\Sigma}'^{- 1} \widehat{\Sigma}^{\frac{1}{2}}$.
Then, for vectors $s \in \bR^d$ with $\llnorm{s} \le 1$, we have:
\[
    \sum\limits_{i \in \cB_{\ell}} \iprod{s, v_i}^2 \le \frac{2^{\ell - 1} \eps}{96 e^2 \lambda_0 \ell_{\max}}, \forall \ell \in \brk{\ell_{\max}} \implies s^{\top} \paren{\widehat{\Sigma}^{\frac{1}{2}} \widehat{\Sigma}'^{- 1} \widehat{\Sigma}^{\frac{1}{2}} - \id} s \le \frac{\eps}{4 n_2}.
\]
\end{lemma}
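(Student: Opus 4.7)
Since $\{v_i\}_{i\in[d]}$ is an orthonormal basis of $\bR^d$, we can expand $s = \sum_{i\in[d]} \iprod{s,v_i} v_i$, and diagonalizing $\widehat{\Sigma}^{\frac{1}{2}} \widehat{\Sigma}'^{-1} \widehat{\Sigma}^{\frac{1}{2}}$ yields
\[
    s^\top\paren{\widehat{\Sigma}^{\frac{1}{2}} \widehat{\Sigma}'^{-1} \widehat{\Sigma}^{\frac{1}{2}} - \id} s = \sum_{i\in[d]} \paren{\lambda_i - 1}\iprod{s,v_i}^2.
\]
The plan is to split this sum according to the dyadic partition $\cB_0, \cB_1, \dots, \cB_{\ell_{\max}}$ and show that the tail $\cB_0$ contributes at most $\frac{\eps}{8 n_2}$ and that each of the $\ell_{\max}$ dyadic bands contributes at most $\frac{\eps}{8 n_2 \ell_{\max}}$.

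First I would handle the tail $\cB_0$. By the choice $\ell_{\max} = \cl{\log(8 n_2 \gamma(1+2\gamma)/\eps)}$, any $i \in \cB_0$ satisfies $\lambda_i - 1 \le \gamma(1+2\gamma)/2^{\ell_{\max}} \le \eps/(8 n_2)$. Combined with $\sum_{i\in\cB_0}\iprod{s,v_i}^2 \le \llnorm{s}^2 \le 1$, this immediately gives
\[
    \sum_{i\in\cB_0}\paren{\lambda_i - 1}\iprod{s,v_i}^2 \le \frac{\eps}{8 n_2}.
\]
Next, for each $\ell \in [\ell_{\max}]$, every $i \in \cB_\ell$ satisfies $\lambda_i - 1 \le \gamma(1+2\gamma)/2^{\ell-1}$, so combined with the hypothesis $\sum_{i\in\cB_\ell}\iprod{s,v_i}^2 \le 2^{\ell-1}\eps/(96 e^2 \lambda_0 \ell_{\max})$ we get
\[
    \sum_{i\in\cB_\ell}\paren{\lambda_i-1}\iprod{s,v_i}^2 \le \frac{\gamma(1+2\gamma)}{2^{\ell-1}}\cdot \frac{2^{\ell-1}\eps}{96 e^2 \lambda_0 \ell_{\max}} = \frac{\gamma(1+2\gamma)\,\eps}{96 e^2 \lambda_0 \ell_{\max}}.
\]
Substituting $\gamma = 8e^2 \lambda_0/n_2$ and using $1 + 2\gamma \le 3/2$ (which holds since $\gamma \le \eps/4 \le 1/4$) the per-band contribution simplifies to at most $\frac{\eps}{8 n_2 \ell_{\max}}$, and summing over the $\ell_{\max}$ bands yields a total band contribution of at most $\frac{\eps}{8 n_2}$.

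Adding the tail and band contributions gives $\frac{\eps}{8 n_2} + \frac{\eps}{8 n_2} = \frac{\eps}{4 n_2}$, which is the desired bound. The whole argument is essentially a bookkeeping exercise once the dyadic decomposition is set up; the only substantive point — and the only place where I expect to have to be careful — is making sure the choice of $\ell_{\max}$ correctly synchronizes the tail bound $\gamma(1+2\gamma)/2^{\ell_{\max}} \le \eps/(8 n_2)$ with the per-band budget $2^{\ell-1}\eps/(96 e^2 \lambda_0 \ell_{\max})$ so that the geometric series telescopes cleanly and no extra $\lambda_0$ or $\log$ factors survive. No additional structural ideas beyond the eigenbasis expansion and dyadic partition are needed — in particular the global trace bound $\sum_i (\lambda_i-1) \le \gamma(1+2\gamma)$ from Lemma~\ref{lem:stable_cov} is only used implicitly, through the fact that every band $\cB_\ell$ with $\ell \ge 1$ has $\lambda_i - 1 \le \gamma(1+2\gamma)$ on it.
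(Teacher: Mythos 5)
Your proposal is correct and follows essentially the same route as the paper's proof: expand in the eigenbasis, bound the $\cB_0$ contribution by $\frac{\eps}{8 n_2}$ using $\lambda_i - 1 \le \gamma\paren{1 + 2 \gamma} / 2^{\ell_{\max}} \le \frac{\eps}{8 n_2}$ together with $\llnorm{s}^2 \le 1$, and use the hypothesis band-by-band with the $1 + 2 \gamma \le \frac{3}{2}$ slack to absorb the constants. The only difference is cosmetic bookkeeping (you close each band separately rather than summing the reweighted series first, as the paper does), and your arithmetic checks out.
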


\begin{proof}
By Fact~\ref{fact:matrix_fact}, the target inequality can be written as:
\begin{equation}
    \sum\limits_{i \in \brk{d}} \paren{\lambda_i - 1} \iprod{v_i, s}^2 \le \frac{\eps}{4 n_2}. \label{eq:result_to_prove}
\end{equation}
We will leverage the assumption to upper-bound the LHS, and show that the resulting bound does not exceed the RHS.
To obtain the upper bound, we note that, for every $i \in \cB_0$, we must have $\lambda_i - 1 \le \frac{\eps}{8 n_2}$.
Using this observation and the definition of the sets $\cB_{\ell}, \forall \ell \in \brk{\ell_{\max}}$, we get:
\begin{align}
    \qquad\quad \sum\limits_{i \in \brk{d}} \paren{\lambda_i - 1} \iprod{v_i, s}^2
    &= \sum\limits_{\ell \in \brk{\ell_{\max}}} \sum\limits_{i \in \cB_{\ell}} \paren{\lambda_i - 1} \iprod{s, v_i}^2 + \sum\limits_{i \in \cB_0} \paren{\lambda_i - 1} \iprod{s, v_i}^2 \nonumber \\
    &\le \sum\limits_{\ell \in \brk{\ell_{\max}}} \sum\limits_{i \in \cB_{\ell}} \paren{\lambda_i - 1} \iprod{s, v_i}^2 + \frac{\eps}{8 n_2} \sum\limits_{i \in \cB_0} \iprod{s, v_i}^2 \nonumber \\
    &\le \gamma \paren{1 + 2 \gamma} \sum\limits_{\ell \in \brk{\ell_{\max}}} \sum\limits_{i \in \cB_{\ell}} \frac{\iprod{s, v_i}^2}{2^{\ell - 1}} + \frac{\eps}{8 n_2}, \label{eq:result_to_prove_help}
\end{align}
where we used the fact that $\brc{v_i}_{i \in \brk{d}}$ is an orthonormal basis of $\bR^d$ to argue that:
\[
    \sum\limits_{i \in \cB_0} \iprod{s, v_i}^2 \le \sum\limits_{i \in \brk{d}} \iprod{s, v_i}^2 = \llnorm{s}^2 \le 1.
\]
Thus, based on (\ref{eq:result_to_prove_help}), to satisfy (\ref{eq:result_to_prove}), it suffices to have:
\[
    \sum\limits_{\ell \in \brk{\ell_{\max}}} \sum\limits_{i \in \cB_{\ell}} \frac{\iprod{s, v_i}^2}{2^{\ell - 1}} \le \frac{\eps}{8 \gamma \paren{1 + 2 \gamma} n_2} = \frac{\eps}{64 e^2 \lambda_0 \paren{1 + 2 \gamma}},
\]
where the equality used the fact that $\gamma \coloneqq \frac{8 e^2 \lambda_0}{n_2}$.

Our assumption suffices for this to hold.
Indeed, we have:
\[
    \sum\limits_{\ell \in \brk{\ell_{\max}}} \sum\limits_{i \in \cB_{\ell}} \iprod{s, v_i}^2 \le \frac{2^{\ell - 1} \eps}{96 e^2 \lambda_0} \le \frac{\eps}{64 e^2 \lambda_0 \paren{1 + 2 \gamma}},
\]
where the last inequality follows from the assumption that $\gamma \le \frac{\eps}{4} \le \frac{1}{4}$.
\end{proof}

We are now ready to prove our original version of Proposition~\ref{prop:priv_anal_multi1}.
The proof has virtually the same structure as for that proposition.
However, instead of upper-bounding the probability of $s^{\top} \paren{\widehat{\Sigma}^{\frac{1}{2}} \widehat{\Sigma}'^{- 1} \widehat{\Sigma}^{\frac{1}{2}} - \id} s > \frac{\eps}{4 n_2}$ occurring (which would require the argument of Section~\ref{subsubsec:priv_anal_multi1}), we will instead use Fact~\ref{fact:beta_concentration} to upper-bound the probability of the event that corresponds to the assumption in Lemma~\ref{lem:priv_anal_multi1_suff_cond_alt} failing.

\begin{proposition}
\label{prop:priv_anal_multi1_alt}
Let $\eps \in \brk{0, 1}$ and $\delta \in \brk{0, \frac{\eps}{10}}, \lambda_0 \geq d$.
Let us assume that:
\[
    n_2 \geq \useconstant{C2} \frac{\lambda_0 \log\paren{\lambda_0} \paren{\log\paren{\log\paren{\lambda_0}} + \log\paren{\frac{1}{\delta}}}}{\eps},
\]
for some appropriately large absolute constant $\useconstant{C2} \geq 1$.
Finally, let $X, X' \in \bR^{n \times d}$ be adjacent datasets such that $\Psi\paren{X} = \Psi\paren{X'} = \Pass$.
Then, for $z \sim \cU\paren{\bS^{n_2 - 1}}$, we have:
\[
    D_{e^{\frac{\eps}{2}}}\paren{\widehat{\mu}' + \sqrt{\paren{1 - \frac{1}{n_1}} n_2} W z \middle\| \widehat{\mu}' + \sqrt{\paren{1 - \frac{1}{n_1}} n_2} W' z} \le \delta.
\]
\end{proposition}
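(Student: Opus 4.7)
The plan is to leverage the groundwork already laid in Lemmas~\ref{lem:priv_anal_multi1_dpi},~\ref{lem:t_density}, and~\ref{lem:priv_anal_multi1_suff_cond}, together with the new sufficient condition from Lemma~\ref{lem:priv_anal_multi1_suff_cond_alt}, and then control each of the relevant events through Fact~\ref{fact:beta_concentration} rather than Fact~\ref{fact:hanson_wright}. Specifically, by Lemma~\ref{lem:priv_anal_multi1_dpi} and (\ref{eq:multi_hs_div_ub1}), it suffices to upper bound $\Pr_{t \sim z_{\le d}}[f_{z_{\le d}}(t) > e^{\eps/2} f_T(t)]$. Lemmas~\ref{lem:priv_anal_multi1_suff_cond} and~\ref{lem:priv_anal_multi1_suff_cond_alt} together reduce this, via a union bound, to upper bounding the probability of the ``norm'' event $\{s^\top \widehat{\Sigma}^{1/2} \widehat{\Sigma}'^{-1} \widehat{\Sigma}^{1/2} s > 1/2\}$ plus, for each $\ell \in [\ell_{\max}]$, the ``bucket'' event $B_\ell \coloneqq \{\sum_{i \in \cB_\ell} \iprod{s, v_i}^2 > 2^{\ell-1}\eps/(96 e^2 \lambda_0 \ell_{\max})\}$. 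As in the proof of Proposition~\ref{prop:priv_anal_multi1}, I may assume $\widehat{\Sigma}^{1/2} \widehat{\Sigma}'^{-1} \widehat{\Sigma}^{1/2} \succeq \id$ without loss of generality.

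For the norm event, (\ref{eq:sd_order_constraint}) gives $s^\top \widehat{\Sigma}^{1/2} \widehat{\Sigma}'^{-1} \widehat{\Sigma}^{1/2} s \le \|s\|^2/(1-\gamma)$, and since $\|s\|^2 \sim \Beta(d/2, (n_2-d)/2)$ by Fact~\ref{fact:unit_sphere_proj_dist} and $d/n_2 \ll 1$ under the hypothesized $n_2$, Fact~\ref{fact:beta_concentration} handles this probability by a $\beta x$-type bound with room to spare. The main step is the bucket events. The key combinatorial input is that (\ref{eq:eig_constraint_restated}) together with the definition of $\cB_\ell$ forces $|\cB_\ell| \cdot \gamma(1+2\gamma)/2^\ell < \gamma(1+2\gamma)$, i.e., $d_\ell \coloneqq |\cB_\ell| < 2^\ell$. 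This is the structural trade-off promised in the appendix's preamble: buckets with high eigenvalue gap must be small. Since $z_{\le d}$ is rotationally invariant within $\bR^d$ (its density depends only on $\|t\|$), Fact~\ref{fact:uniform_rot_inv} applied to a rotation carrying $\{v_i\}_{i \in \cB_\ell}$ to the first $d_\ell$ standard basis vectors yields $\sum_{i \in \cB_\ell} \iprod{s, v_i}^2 \overset{d}{=} \|z_{\le d_\ell}\|^2 \sim \Beta(d_\ell/2, (n_2 - d_\ell)/2)$ by Fact~\ref{fact:unit_sphere_proj_dist}.

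With threshold $T_\ell \coloneqq 2^{\ell-1}\eps/(96 e^2 \lambda_0 \ell_{\max})$ and mean $d_\ell/n_2 \le 2^\ell/n_2$, I want $T_\ell - d_\ell/n_2 \ge T_\ell/2$, which is equivalent to $n_2 \geq 4 \cdot 2^\ell / T_\ell = 768 e^2 \lambda_0 \ell_{\max}/\eps$; this is subsumed by the hypothesis on $n_2$. Consequently, Fact~\ref{fact:beta_concentration} yields
\[
    \Pr[B_\ell] \le 2 \exp\paren{- \useconstant{cbeta} \min\brc{\frac{(n_2 - d_\ell)^2}{2 d_\ell} (T_\ell/2)^2, \frac{n_2 - d_\ell}{2} \cdot \frac{T_\ell}{2}}}.
\]
The second term inside the minimum, which is the binding one given our regime, is at least $\Omega(n_2 T_\ell) = \Omega(n_2 \eps/(\lambda_0 \ell_{\max}))$ uniformly in $\ell$ (using $2^\ell \ge 1$), and the hypothesized $n_2$ makes this at least $\Theta(\log\log(\lambda_0) + \log(1/\delta)) \ge \log(4(\ell_{\max}+1)/\delta)$ once the constant $\useconstant{C2}$ is sufficiently large. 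A union bound over $\ell \in [\ell_{\max}]$ (with $\ell_{\max} = \cO(\log(\lambda_0/\eps))$) plus the norm event completes the estimate, each term contributing at most $\delta/(2(\ell_{\max}+1))$, for a total of at most $\delta$.

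The main obstacle is ensuring that the bucket analysis is uniform across $\ell$: the threshold $T_\ell$ shrinks as $\ell$ decreases while the mean $d_\ell/n_2$ could in principle be comparable, so one must carefully pair the geometric scaling of $T_\ell$ with the structural bound $d_\ell < 2^\ell$ to keep $x = T_\ell - d_\ell/n_2$ bounded below by $T_\ell/2$ for every $\ell$ simultaneously. Once this alignment is in place, the $\ell_{\max}$ factor in the sample complexity (giving the $\log(\lambda_0)$ overhead relative to Proposition~\ref{prop:priv_anal_multi1}) arises both from absorbing the union-bound cost in $\log\log(\lambda_0)$ and from the presence of $\ell_{\max}$ in the denominator of $T_\ell$.
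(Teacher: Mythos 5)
Your proposal is correct and follows essentially the same route as the paper's proof: the same reduction via Lemmas~\ref{lem:priv_anal_multi1_dpi},~\ref{lem:priv_anal_multi1_suff_cond}, and~\ref{lem:priv_anal_multi1_suff_cond_alt} to a norm event plus per-bucket events, the same identification of $\sum_{i \in \cB_\ell} \iprod{s, v_i}^2$ as a $\Beta\paren{\abs{\cB_\ell}/2, \paren{n_2 - \abs{\cB_\ell}}/2}$ variable combined with the structural bound $\abs{\cB_\ell} \le 2^{\ell}$ from (\ref{eq:eig_constraint_restated}), and the same application of Fact~\ref{fact:beta_concentration} with a union bound over $\ell \in \brk{\ell_{\max}}$. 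The only deviations are harmless constant-factor differences in the intermediate arithmetic (e.g., requiring $n_2 \geq 768 e^2 \lambda_0 \ell_{\max}/\eps$ where $384$ would suffice).
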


\begin{proof}
As in the proof of Proposition~\ref{prop:priv_anal_multi1}, we have:
\begin{align}
    &\quad\ D_{e^{\frac{\eps}{2}}}\paren{\widehat{\mu}' + \sqrt{\paren{1 - \frac{1}{n_1}} n_2} W z \middle\| \widehat{\mu}' + \sqrt{\paren{1 - \frac{1}{n_1}} n_2} W' z} \nonumber \\
    &\le D_{e^{\frac{\eps}{2}}}\paren{z_{\le d} \middle\| T} \nonumber\\
    &\le \pr{t \sim z_{\le d}}{f_{z_{\le d}}\paren{t} > e^{\frac{\eps}{2}} f_T\paren{t}} \nonumber\\
    &\le \pr{s \sim z_{\le d}}{\brc{s^{\top} \widehat{\Sigma}^{\frac{1}{2}} \widehat{\Sigma}'^{- 1} \widehat{\Sigma}^{\frac{1}{2}} s > \frac{1}{2}} \cup \brc{\exists \ell \in \brk{\ell_{\max}} \colon \sum\limits_{i \in \cB_{\ell}} \iprod{s, v_i}^2 > \frac{2^{\ell - 1} \eps}{96 e^2 \lambda_0 \ell_{\max}}}} \nonumber \\
    &\le \pr{s \sim z_{\le d}}{s^{\top} \widehat{\Sigma}^{\frac{1}{2}} \widehat{\Sigma}'^{- 1} \widehat{\Sigma}^{\frac{1}{2}} s > \frac{1}{2}} + \sum\limits_{\ell \in \brk{\ell_{\max}}} \pr{s \sim z_{\le d}}{\sum\limits_{i \in \cB_{\ell}} \iprod{s, v_i}^2 > \frac{2^{\ell - 1} \eps}{96 e^2 \lambda_0 \ell_{\max}}}. \label{eq:union_bound_ub_alt}
\end{align}
The process to upper-bound the first term is exactly the same as the one we followed in the proof of Proposition~\ref{prop:priv_anal_multi1}, so we point readers to (\ref{eq:term_bound1}), and do not repeat it here.
It remains to show that the sum over $\ell$ in (\ref{eq:union_bound_ub_alt}) is upper-bounded by $\frac{\delta}{2}$.
To establish that, we will show that individual terms of the sum are upper-bounded by $\frac{\delta}{2 \ell_{\max}}$.

We note that, for any $\ell \in \brk{\ell_{\max}}$, the sum $\sum\limits_{i \in \cB_{\ell}} \iprod{s, v_i}^2$ is distributed as $\Beta\paren{\frac{\abs{\cB_{\ell}}}{2}, \frac{n_2 - \abs{\cB_{\ell}}}{2}}$.
This follows from Facts~\ref{fact:unit_sphere_proj_dist} and~\ref{fact:uniform_rot_inv}.
Additionally, by the definition of the sets $\cB_{\ell}$, it must be the case that $\abs{\cB_{\ell}} \le 2^{\ell}, \forall \ell \in \brk{\ell_{\max}}$, since otherwise (\ref{eq:eig_constraint_restated}) would be violated.
Observe that the mean of the distribution $\Beta\paren{\frac{\abs{\cB_{\ell}}}{2}, \frac{n_2 - \abs{\cB_{\ell}}}{2}}$ is $\frac{\abs{\cB_{\ell}}}{n_2}$.
Based on the previous remark, the mean is upper-bounded by $\frac{2^{\ell}}{n_2}$ for all $\ell \in \brk{\ell_{\max}}$.
By our sample complexity bound, this is at most $\frac{2^{\ell - 1} \eps}{192 e^2 \lambda_0 \ell_{\max}}, \forall \ell \in \brk{\ell_{\max}}$.
Thus, we can appeal to Fact~\ref{fact:beta_concentration}, and obtain the bound:
\begin{align*}
    &\quad\ \pr{s \sim z_{\le d}}{\sum\limits_{i \in \cB_{\ell}} \iprod{s, v_i}^2 > \frac{2^{\ell - 1} \eps}{96 e^2 \lambda_0 \ell_{\max}}} \\
    &\le 2 \exp\paren{- \useconstant{cbeta} \min\brc{\frac{\paren{n_2 - 2^{\ell}}^2}{2 \cdot 2^{\ell}} \paren{\frac{2^{\ell - 1} \eps}{96 e^2 \lambda_0 \ell_{\max}} - \frac{2^{\ell}}{n_2}}^2, \frac{n_2 - 2^{\ell}}{2} \paren{\frac{2^{\ell - 1} \eps}{96 e^2 \lambda_0 \ell_{\max}} - \frac{2^{\ell}}{n_2}}}} \\
    &\le 2 \exp\paren{- \useconstant{cbeta} \frac{n_2 - 2^{\ell}}{2} \cdot \frac{2^{\ell - 1} \eps}{192 e^2 \lambda_0 \ell_{\max}}} \\
    &\le \frac{\delta}{2 \ell_{\max}},
\end{align*}
where the last inequality follows from our bound on $n_2$.
\end{proof}

\end{document}